\patchcmd\WF@putfigmaybe{\lower\intextsep}{}{}{\fail}%
\tikzset{vertex/.style={draw=black, fill=black, circle, inner sep=0.4mm}}
\tikzset{->-/.style={decoration={
			markings,
			mark=at position .7 with {\arrow{>}}},postaction={decorate}}}
	\let\Cref\crtCref
	\let\cref\crtcref
\Crefname{algocf}{Algorithm}{Algorithms}
\newtheoremstyle{smallcapstheorem} %
{\topsep}                    %
{\topsep}                    %
{\itshape}                   %
{}                           %
{\sffamily}                   %
{.}                          %
{.5em}                       %
{}  %
\theoremstyle{smallcapstheorem}
\newtheorem{theorem}{Theorem}[section]
\newtheorem{lemma}[theorem]{Lemma}
\newtheorem{proposition}[theorem]{Proposition}
\newtheorem{claim}{Claim}
\newtheorem{corollary}[theorem]{Corollary}
\newtheorem{open}{Open Problem}
\newtheoremstyle{smallcapsdefinition} %
{\topsep}                    %
{\topsep}                    %
{}                   %
{}                           %
{\sffamily}                   %
{.}                          %
{.5em}                       %
{}  %
\theoremstyle{smallcapsdefinition}
\newtheorem{definition}{Definition}
	\newtheorem{example}{Example}
	\newenvironment{examplebox}[3][0]
	{%
	  \begin{example}[#2]%
	  \label{ex:#3}%
	}
	{%
	  \end{example}%
	}
	\newenvironment{tikzbackground}{}{}
	\newenvironment{tikzbackground}{%
		\begin{scope}[on background layer]%
	}{%
		\end{scope}%
	}
	\newcommand{\defemph}[1]{\emph{#1}}
	\newcommand{\defemph}[1]{\textcolor{blue!75!black}{\emph{#1}}}
\newcommand{\pref}{\succcurlyeq}
\renewcommand{\top}{\operatorname{top}}
\newcommand{\rank}{\operatorname{rank}}
\newcommand{\calO}{{\mathcal{O}}}
\newcommand{\Gammasp}{{\Gamma_\textrm{SP}}}
\newcommand{\Gammasc}{{\Gamma_\textrm{SC}}}
\newcommand{\problemname}[1]{\textup{\textsc{#1}}}
\renewcommand*{\le}{\leqslant}
\renewcommand*{\leq}{\leqslant}
\renewcommand*{\ge}{\geqslant}
\renewcommand*{\geq}{\geqslant}
\newcommand{\problem}[3]{\bigskip\begin{tabular}{lp{0.7\textwidth}}
		\toprule
		\multicolumn{2}{l}{\textsc{#1}} \\
		\midrule
		\textbf{Instance:} & #2 \\
		\textbf{Question:} & #3 \\
		\bottomrule
	\end{tabular}\bigskip}
\title{Preference Restrictions \\ in Computational Social Choice: \\[3pt] A Survey}
\author{
	\textsf{\:\:\:\textbf{Edith Elkind}\:\:\:} \\
	\textsf{\clap{Northwestern University}}
	\and 
	\textsf{\textbf{Martin Lackner}} \\
	\textsf{WU Wien}
	\and 
	\textsf{\textbf{Dominik Peters}} \\
	\textsf{CNRS}
}
\date{\textsf{Version v2: March 2025}}
\begin{document}

\maketitle
\thispagestyle{empty}

\hrule

\begin{abstract}
	Social choice becomes easier on restricted preference domains such as single-peaked, single-crossing, and Euclidean preferences.
	Many impossibility theorems disappear, the structure makes it easier to reason about preferences, and computational problems can be solved more efficiently.
	In this survey, we give a thorough overview of many classic and modern restricted preference domains and explore their properties and applications.
	We do this from the viewpoint of computational social choice, letting computational problems drive our interest, but we include a comprehensive discussion of the economics and social choice literatures as well.
	Particular focus areas of our survey include algorithms for recognizing whether preferences belong to a particular preference domain, and algorithms for winner determination of voting rules that are hard to compute if preferences are unrestricted.
\end{abstract}

\iflatexml\else

\hrule

\shorttoc{}{1}
\clearpage
\setcounter{tocdepth}{3}
\linespread{1.05}
\tableofcontents
\linespread{1.0}
\clearpage
\newpage
\fi

\section{Introduction}
\label{sec:intro}

Social choice studies aggregation of preferences with the aim of making group decisions. An 
important part of social choice is voting theory, which designs voting rules that identify 
high-quality societal decisions, and formulates desirable properties 
of such rules. Computational social choice builds on this theory by studying computational 
tasks that arise in voting contexts. Among these tasks, the most important one 
is winner determination, i.e., computing the output of a voting rule given voters' preferences. 
While for many well-known voting rules
winner determination is a straightforward matter of counting ballots and adding numbers, for other rules identifying the winner(s) 
is computationally challenging,
especially when the decision space is combinatorial in nature. Other computational tasks 
concern elicitation, processing and analysis of preference data.

Famously, voting theory is riddled with impossibility theorems. 
\citeauthor{arrow1950difficulty}'s \citeyearpar{arrow1950difficulty} impossibility theorem 
concerning rank aggregation is the best-known, but for the purposes of voting, a theorem of 
\citet{gibbard1973manipulation} and \citet{satterthwaite1975strategy} is perhaps more problematic. 
This theorem shows that when there are three or more alternatives, every non-trivial voting rule can 
be manipulated by strategic voters who misrepresent their preferences. The underlying problem 
was identified much earlier, by \citet{Cond85a}. He observed that even when individual voters 
are fully rational (in the sense of having transitive preferences), their collective judgment 
may be irrational and contain cyclic (paradoxical) preferences.

\begin{examplebox}
	{Condorcet cycles: Pairwise majority judgments can be cyclic.}
	{condorcet}
	\begin{minipage}{0.15\linewidth}
		\begin{tabular}{ccc}
			\toprule
			$v_1$ & $v_2$ & $v_3$ \\
			\midrule
			$a$ & $b$ & $c$ \\
			$b$ & $c$ & $a$ \\
			$c$ & $a$ & $b$ \\
			\bottomrule
		\end{tabular}
	\end{minipage}
	\hfill
	\begin{minipage}{0.8\linewidth}
		Three voters rank the alternatives $a$, $b$, and $c$ from the most-preferred to the least-preferred. A majority of voters prefers $a$ over $b$, a majority prefers $b$ over $c$, and a majority prefers $c$ over $a$. Thus, the majority judgment contains a cycle.
	\end{minipage}
\end{examplebox}

Condorcet cycles are the primary source of paradoxes in voting: in their absence,
there is a clear winning 
alternative and the majority judgment is well-behaved. Some empirical results suggest that Condorcet
cycles are rare in real elections \citep{niemi1970occurrence,gehrlein1983condorcet,feld1992bigbadcycle,radcliff1994collective,
van1998empirical}. Thus, it is natural to look for contexts in which we can be assured that 
paradoxical majorities are avoided and in which the impossibility theorems do not apply.

The first, and most famous, such context was identified by \citet{black1948rationale} (and 
independently discovered by \citealp{arr:b:sc}). Black considers the case where the 
alternatives to be ranked lie on a one-dimensional axis; this occurs, for example, 
when we vote over the 
value of some numerical quantity. In those cases, it is natural to assume that votes will be 
\emph{single-peaked}, i.e., voters prefer values that are close to their favorite value. When 
preferences follow this pattern, it can be shown that Condorcet cycles are impossible. In 
addition, there are voting rules that cannot be manipulated.

Following \citet{black1948rationale}, social choice theorists have identified other domains of 
preferences that lead to similar positive results. Among others, such domains include 
generalizations of single-peakedness to allow for tree-shaped alternative spaces, the 
single-crossing property which assumes that voters (rather than alternatives) form a 
one-dimensional spectrum, and value restricted domains, obtained by simply forbidding the 
Condorcet cycle from \Cref{ex:condorcet} to occur as a subprofile.

Starting with the seminal work of \citet{bartholdi1989voting}, computer scientists began to 
realize that important voting-related computational problems may be algorithmically challenging, in 
the sense of being NP-complete or even harder. For example, it is difficult to determine the 
winner in elections that use the voting rules proposed by \citet{Dodg76a} or  
\citet{kemeny1959mathematics}. In computational complexity theory, when faced with a formally 
intractable problem, a common approach is to look for ``islands of tractability''. These are 
restricted classes of inputs for which polynomial-time algorithms exist. In the context of 
voting, a natural place to look for such islands is to start with classes that had already 
been suggested by social choice theorists in their quest to circumvent impossibility theorems. 
This research program was initiated by \citet{walsh2007uncertainty} and turned out to be fruitful: 
in many cases restrictions like single-peakedness do make computation easier.

Interestingly, while purely social choice-theoretic issues (such as manipulability or majority cycles)
vanish as soon as we assume that voters' preferences belong to a suitable restricted domain, many of the
algorithms for voting-related problems require the knowledge of the respective structural relationship among
voters and alternatives (such as the order of candidates witnessing that the profile is single-peaked).
This means that, to make use of these algorithms, one also needs an efficient procedure to discover
whether a given preference profile has the required structural property and to find a respective witness.
Consequently, the problem of designing such procedures has received a considerable amount of attention, too, 
resulting in polynomial-time algorithms
for recognizing preferences that belong to several prominent restricted domains.
These recognition algorithms are also useful for gaining a deeper understanding 
of a preference data set, by identifying unexpected structure underlying the preferences. 
This can make the data set more interpretable.

\subsection*{Contents of This Survey}

In this survey, we provide an overview of research on preference restrictions. 
While we focus on contributions of computational social choice, we briefly discuss
the most important results from `pure' social choice theory as well. 
For many results, we have included proofs or detailed proof sketches, to provide a convenient 
reference. Throughout, we state open questions that provide important avenues for future research.

\begin{itemize}
\item \Cref{sec:def} begins the survey by defining major preference domains that have been studied. We point out useful equivalent definitions and logical relationships. We also discuss implications on the structure of the majority relation, and the existence of strategyproof voting rules.

\item \Cref{sec:recog} studies recognition algorithms that decide, for a given preference 
profile, whether it is a member of one of the preference domains defined in \Cref{sec:def}. 
This topic is a major focus of this survey, and in many cases we give full details of the best-known algorithms for these tasks. In several cases, the analysis of these algorithms further illuminates properties of the different domains; in particular we can sometimes characterize all possible witnesses for membership of a given profile in a given domain.

\item \Cref{sec:subprofiles} gives an overview of several papers that have found characterizations of preference domains in terms of forbidden subprofiles.

\item \Cref{sec:problems} investigates the winner determination problem under the assumption 
that voters' preferences are drawn from a restricted domain. This section starts by 
considering several famous single-winner voting rules that are known to be NP-hard
to evaluate for unrestricted preferences, and then moves on to computationally 
challenging multi-winner rules. Multi-winner rules elect a committee of candidates, 
and this class of rules has been intensely studied in recent years.

\item 
\Cref{sec:manip} focuses on the computational complexity of various 
problems concerning manipulation and control. This literature has found that some types of manipulative 
attacks on elections are computationally difficult to pull off (in the worst case), 
which might provide some protection against these attacks. However, subsequent work showed that many 
of those hardness results do not hold when preferences are structured. 

\item
In \Cref{sec:further-topics} we briefly discuss other contexts
in which restricted preference domains have been considered, including 
weak orders (and specifically dichotomous preferences) and preference elicitation. 
We also mention results on counting structured profiles and 
computing the probability of random profiles being structured.

\item \Cref{sec:conclusion} concludes the survey by listing several directions 
for future research.
\end{itemize}

\subsection*{Technical Contributions}

In writing this survey, we have included some new results that are not published elsewhere. 
We feel that this survey is an appropriate place to describe these results.

\begin{itemize}
\item In \Cref{sec:recog:sp}, we present a linear-time algorithm for recognizing 
whether a preference profile is single-peaked. While this algorithm is directly based on 
the algorithms proposed by \citet{doignon1994polynomial} and \citet{escoffier2008single}, 
we believe that our version is the simplest to implement, and our analysis of this algorithm 
is the clearest one available.

\item In \Cref{sec:recog:sc}, we describe an algorithm that recognizes whether a 
preference profile is single-crossing. This algorithm runs in time $O(mn \log m)$, and is  
faster than all previously known algorithms.

\item In \Cref{sec:def:multidim-sp}, we discuss multidimensional extensions of the single-peaked domain. We identify a new variant of an existing definition, and prove \Cref{prop:multisp-triviality}, which shows that all profiles with at most $2^{2^{d-1}}$ alternatives are $d$-dimensional single-peaked.
\end{itemize}

\subsection*{Feedback Welcome!}

This is a draft version. In assembling this survey, we will doubtlessly have 
omitted relevant work, and introduced errors. We would 
appreciate any feedback and suggestions you might have for us. 
Please email to mail@dominik-peters.de.
We hope you enjoy reading the survey!
 
\newpage
\section{Preliminaries and Notation}
\label{sec:prel}

We write $[n]$ to denote the set $\{1,\dots,n\}$.

\paragraph{Relations and Profiles}
Let $A$ be a finite set of $m$ \emph{alternatives} (or \emph{candidates}).
Our survey will deal with ways to analyze \emph{preferences} over this set. Formally,  
we will consider binary relations over $A$, i.e., subsets of $A\times A$. 
Given a binary relation $R$ and alternatives $x \in A$ and 
$y \in A$, we interpret $(x,y)\in R$ to mean that $x$ is favored over $y$ by $R$. 
We will often use symbols such as $\pref$ and $\succ$ to denote binary relations, 
and write $a\pref b$ instead of $(a, b)\in\, \pref$ in this case.

A \emph{weak order} is a binary relation $\pref$ over $A$ that is reflexive, 
complete and transitive. A weak order can be used to describe 
a preference relation: we have $x \pref y$ if $x$ is weakly preferred to $y$. 
It could be that both $x\pref y$ and $y\pref x$, in which case
$\pref$ is indifferent between $x$ and $y$. 
A weak order with no indifferences is a \emph{linear order}. 
In other words, a linear order is a weak order that, in addition, is 
antisymmetric, so that $x\pref y$ and $y\pref x$ imply $x = y$.
A linear order is a \emph{ranking} of the 
alternatives, and thus there are $|A|!$ many different linear orders on $A$.
In what follows, we will often use the terms `preferences', 
`(preference) ordering', `(preference) order' and `(preference) ranking'
interchangeably.

If $\pref$ is a weak order, then we use 
$\succ$ to refer to its \emph{strict part}, that is, $x \succ y$ if and only if 
$x \pref y$ but $y \not\pref x$. When 
$\pref$ is a linear order, then $\succ$ is its irreflexive part. 
Whenever we use a strict relation symbol such as 
$\succ$, $>$, or $\lhd$, we are referring to the strict part of an order, 
so that these relations are always irreflexive.

Given a weak order $\pref$ over $A$ and two disjoint sets $A_1,A_2\subseteq A$, 
we write $A_1\pref A_2$ if for all alternatives $a\in A_1$ and $b\in A_2$ it holds that $a\pref b$, 
i.e., every element of $A_1$ is weakly preferred to every element of $A_2$.
If $\pref$ is a binary relation, then the relation $\pref'$ is the \emph{reverse} of $\pref$ if 
$a\pref b \Leftrightarrow b\pref' a$.

Typically, we will be interested in \emph{collections} of orders, 
and notions of what it means for such a collection to be structured. 
Such collections are referred to as \emph{preference profiles}, or simply \emph{profiles}.

\begin{definition}
 A \defemph{profile} $P = (v_1,\dots,v_n)$ over $A$ is a list of linear orders over $A$.
 The elements of $N = \{1,\dots,n\}$ are called \defemph{voters}, and we associate voter $i\in N$ 
 with the order $v_i$, which we call the \defemph{vote} of voter $i$. For convenience, 
 we write $a \succ_i b$ whenever $(a,b)\in v_i$, i.e., when voter $i$ 
 strictly prefers alternative $a$ to alternative $b$.
\end{definition}

We will always write $m$ for the number of alternatives and $n$ for the number of voters.
We denote the \emph{rank} of alternative $a$ in the vote of voter $i$ by $\rank_i(a)$;
we have $\rank_i(a) = |\{b: b\succ_i a\}|+1$.
Further, we write $\top(v_i)$ for the alternative that is ranked first in the linear order 
$v_i$, i.e., $\rank_i(\top(v_i)) = 1$.

\paragraph{Majority Relation and Condorcet Winners}
When there are only two alternatives $x$ and $y$, then following May's seminal analysis \citep{mayAxiomatic1952}, 
we should compare them by checking whether there is a majority of voters 
who prefer $x$ to $y$, in which case $x$ beats $y$ in a majority comparison. 
More generally, we can analyze a given profile in a pairwise fashion, considering pairs 
$\{x,y\}\subseteq A$ of alternatives separately. 
This approach gives rise to the idea of the majority relation.
Formally, given a profile $P$ over $A$, we define its \emph{majority relation} 
$\pref_{\text{maj}}$ to be a relation over $A$ such that
\[ 
a \pref_{\text{maj}} b \iff |i \in N : a \succ_i b| \ge |i \in N : b \succ_i a|. 
\]
We write $a \succ_{\text{maj}} b$ if $a \pref_{\text{maj}} b$ and not $b \pref_{\text{maj}} a$. 
Thus, $a \succ_{\text{maj}} b$ if and only if a strict majority of voters has $a \succ b$ 
in the profile $P$.
Alternative $a$ is a \emph{weak Condorcet winner} if $a \pref_{\text{maj}} b$ 
for all $b\in A\setminus\{a\}$ and a \emph{strong Condorcet winner} if $a \succ_{\text{maj}} b$ 
for all $b\in A\setminus\{a\}$.
As is well known, the relation $\succ_{\text{maj}}$ need not be transitive, a feature that leads to a lot of trouble. If the majority relation is not transitive, we say that $\pref_{\text{maj}}$ contains a \emph{Condorcet cycle}, as in the following example.

\begin{examplebox}
	{Condorcet winners and Condorcet cycles.}
	{cond-winners-and-cycles}
	\begin{minipage}{0.30\linewidth}
	\begin{tabular}{cccccc}
	\toprule
	$v_1$ & $v_2$ & $v_3$ & $v_4$ & $v_5$ & $v_6$ \\
	\midrule
	$a$ & $a$ & $e$ & $b$ & $d$ & $b$ \\
	$c$ & $b$ & $c$ & $a$ & $e$ & $e$ \\
	$d$ & $d$ & $d$ & $c$ & $c$ & $c$ \\
	$e$ & $e$ & $a$ & $d$ & $b$ & $d$ \\		
	$b$ & $c$ & $b$ & $e$ & $a$ & $a$ \\
	\bottomrule
	\end{tabular}
	\end{minipage}
	\hfill
	\begin{minipage}{0.67\linewidth}
	In profile $P=(v_1,\dots,v_6)$ there are two weak Condorcet winners, namely, 
	alternatives $a$ and $b$, as it holds that $\{a,b\} \succ_{\text{maj}} \{c,d,e\}$, 
	$a \pref_{\text{maj}} b$, and $b \pref_{\text{maj}} a$. The majority relation is, 
	however, not transitive: it holds that 
	$c \succ_{\text{maj}} d \succ_{\text{maj}} e  \succ_{\text{maj}} c$, a Condorcet cycle. 
	Also, as there are two weak Condorcet winners, it follows that there is no strong 
	Condorcet winner.
	\end{minipage}
\end{examplebox}

\paragraph{Alternative and Voter Deletion}
Suppose that $A' \subseteq A$ and let $v$ be a linear order. Then $v$ can be restricted to $A'$ 
in a natural way: define $v|_{A'} := v \cap (A' \times A')$, i.e., intuitively, we remove all 
candidates in $A\setminus A'$. Similarly, we can restrict a profile $P$ by restricting each 
vote in it: $P|_{A'} := (v_1|_{A'}, \dots, v_n|_{A'})$. We say that $P|_{A'}$ is obtained from 
$P$ by \emph{alternative deletion}. Similarly, if $N' \subseteq N$, the profile 
$P' = (v_i : i \in N')$ is said to be obtained from $P$ by \emph{voter deletion}.

\paragraph{Algorithms and Computational Complexity}

We assume that the reader is familiar with basic concepts of algorithm analysis 
(e.g., big-$\calO$ notation and runtime analysis) and computational complexity 
(e.g., NP-hardness and reductions).
 
\newpage
\section{Domain Restrictions}
\label{sec:def}

\begin{figure}[!ht]
	\begin{center}
		\begin{tikzpicture}[scale=.7,shift={(7,2)}]
			\node (cond) at (3,5) {weak Condorcet winner exists};
			\node (sct) at (-3,2) {single-crossing on trees};
			\node (vr) at (0,3.5) {value-restricted};
			\node (spt) at (6,3.5) {single-peaked on trees};
			\node (sc) at (-3,.5) {single-crossing};
			\node (sp) at (2,0.5) {single-peaked};
			\node (eucl) at (0,-1) {1-Euclidean};
			\draw (cond) -- (vr) -- (sp) %
			(sp) -- (eucl)
			(sc) -- (eucl)
			(spt) -- (sp)
			(sct) -- (sc)
			(sct) -- (vr)
			(spt) -- (cond);
		\end{tikzpicture}
	\end{center}
	\caption{Some domain restrictions that guarantee a weak Condorcet winner, and inclusion relationships.}\label{fig:domains}
\end{figure}

In this section, we will present the definitions of several prominent domain restrictions, and 
discuss some of their basic properties. Within the economics and social choice literature, 
domain restrictions were found to be interesting primarily because of their implications for 
the structure of the majority relation. For example, the majority relation of a single-peaked 
profile cannot have a Condorcet cycle. Also, some domain restrictions admit 
well-behaved voting rules: e.g., for the single-peaked domain we have the median
voter rule, which is strategyproof.

As an overview, in \Cref{fig:domains} we display the relationship among 
domain restrictions that guarantee a weak Condorcet winner.
The 1-Euclidean domain is the most restrictive domain that we consider here, 
whereas the value-restricted and single-peaked on trees domains are very general.
The figure does not mention the multidimensional single-peaked domain, 
the $d$-Euclidean domain for $d\ge 2$, 
and the domain of profiles that are single-peaked on a circle, 
because these domains do not rule out the presence of Condorcet cycles.

Before we start our discussion of specific domain restrictions, let us take a meta-level 
view, and define what we formally mean by a \emph{domain restriction} (which we also call a \emph{restricted domain} or a \emph{preference domain}). Most abstractly, a 
domain restriction is a property of a profile. Typically, this property imposes some 
structure on this profile. For most of this survey, we will consider notions of structure only 
in the context of profiles of \emph{linear} orders. Thus, except for some remarks in 
\Cref{sec:further-topics}, we will not consider preferences that have indifferences. 
Hence, our formal definition of a domain restriction is as follows.

\begin{definition}
	A \defemph{domain restriction} is a set of profiles of linear orders.
\end{definition}

For example, the domain of single-peaked profiles is the set of profiles $P$ for which 
there exists an axis $\lhd$ such that $P$ is single-peaked on $\lhd$ (we will discuss the meaning of single-peakedness just below in \Cref{sec:def:sp}).

Our definition of a domain restriction is a departure from much of the discussion of domain 
restrictions in the social choice literature. There, a domain typically refers to a set of allowed
\emph{votes}, rather than a set of profiles. Under this view, a profile is 
structured if each voter reports a linear order that belongs to the respective domain. 
Our notion of a domain is 
more general; in particular, we allow domain restrictions that are not Cartesian products of 
restrictions on individual voters' preferences. 
To appreciate this distinction, observe that
the domain of profiles that are single-peaked on some \emph{fixed} axis $\lhd$ 
is a Cartesian product, whereas the domain of all single-peaked profiles is not.

An important property that is satisfied by many---but not all---domain restrictions 
is closure under deleting voters and alternatives. A domain with this property 
is called \emph{hereditary}; for such domains, taking a subprofile of a structured profile 
yields another structured profile.

\begin{definition}\label{def:hereditary}
A domain restriction $X$ is \defemph{hereditary} if for every profile $P$ in $X$ 
and every profile $P'$ that can be obtained from $P$ by voter and alternative deletion 
it holds that $P'$ belongs to~$X$. 
\end{definition}

As we will see, the domain of single-peaked profiles is hereditary, and the same is true for 
Euclidean or single-crossing profiles. A counter-example is the domain of profiles that admit a 
weak Condorcet winner.

\begin{examplebox}
	{The property of having a weak Condorcet winner is not hereditary.}
	{condorcet-not-hereditary}
	\begin{minipage}{0.20\linewidth}
	\begin{tabular}{cccc}
	\toprule
	$v_1$ & $v_2$ & $v_3$ & $v_4$ \\
	\midrule
	$a$ & $a$ & $b$ & $c$ \\
	$b$ & $b$ & $c$ & $a$ \\
	$c$ & $c$ & $a$ & $b$ \\
	\bottomrule
	\end{tabular}
	\end{minipage}
	\hfill
	\begin{minipage}{0.77\linewidth}
	Alternative $a$ is a weak Condorcet winner. However, if we remove $v_1$, 
        we obtain the Condorcet cycle 
        $a \succ_{\text{maj}} b \succ_{\text{maj}} c \succ_{\text{maj}} a$, 
        and consequently this reduced profile does not have a weak Condorcet winner.
	\end{minipage}
\end{examplebox}

Hereditary domain restrictions are of particular importance for characterizations 
via forbidden subprofiles, as we will see in \Cref{sec:subprofiles}.

\subsection{Condorcet Winners}
\label{sec:def:condorcet-winners}
The concept of a Condorcet winner dates back to the 18th century;  
it was proposed by the Marquis de Condorcet, who argued that a voting rule should 
elect a Condorcet winner whenever it exists. However, it is well-known that Condorcet 
winners may fail to exist, since there are profiles whose majority 
relation contains cycles. Thus, this concept defines an interesting domain restriction: 
let $\mathcal D_{\text{Condorcet}}$ denote the set of profiles 
that have a (strong) Condorcet winner. For an odd number of voters, 
all the other domain restrictions 
that we consider in this survey (such as, e.g., single-peaked profiles) 
are \emph{subdomains} of $\mathcal D_{\text{Condorcet}}$, 
because each of these restrictions guarantees the existence of a strong Condorcet 
winner as long as the number of voters is odd. However, while 
$\mathcal D_{\text{Condorcet}}$ is a large domain compared to the other 
domains considered here, it is somewhat less interesting to us, because it does not imply much 
\emph{structure} in individuals' preferences, and because it does not imply tractability 
results for the winner determination problems of many important voting rules that we consider 
in \Cref{sec:problems}. Nevertheless, in this section, we will briefly discuss some 
properties of $\mathcal D_{\text{Condorcet}}$ 
since they have implications for the subdomains we will study in more detail later on.

There exists an obvious voting rule for the profiles in $\mathcal D_{\text{Condorcet}}$: 
the \emph{Condorcet rule}, which elects the Condorcet winner of the profile. Formally, a 
\emph{voting rule} is a function $f : \mathcal D \to A$ that maps each profile from some domain 
$\mathcal D$ to a unique winning alternative. We write 
$f_{\text{Condorcet}} : \mathcal D_{\text{Condorcet}} \to A$ for the Condorcet rule,
where $f_{\text{Condorcet}}(P)$ is the Condorcet winner of $P \in \mathcal D_{\text{Condorcet}}$. 
An important, but 
difficult to achieve, property of a voting rule is \emph{strategyproofness}, which requires 
that voters cannot manipulate the voting rule by misrepresenting their preferences. 
Formally, we say that a voting rule $f$ is \emph{manipulable} if there exist profiles 
$P = (\succ_1, \dots, \succ_i, \dots, \succ_n)$ and 
$P' = (\succ_1, \dots, \succ_i', \dots, \succ_n)$, both members of the domain, that only differ in the preferences of the 
$i$th voter, and such that $f(P') \succ_i f(P)$. 
Thus, if $\succ_i$ is the truthful preference order of voter $i$, this 
voter can obtain a strictly preferred outcome by submitting the non-truthful preference 
order $\succ_i'$ instead. If a voting rule is not manipulable, then it is 
\emph{strategyproof}.

While \citet{gibbard1973manipulation} and \citet{satterthwaite1975strategy} showed that any 
(surjective) voting rule defined on the domain of \emph{all} preference profiles must be 
dictatorial, there are restricted domains where strategyproofness can be achieved. The domain 
of profiles admitting a Condorcet winner is an example; in fact, one can check that the 
Condorcet rule is strategyproof.

\begin{proposition}
	\label{prop:condorcet-sp}
	The Condorcet rule $f_{\textup{Condorcet}}$ is strategyproof.
\end{proposition}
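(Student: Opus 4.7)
The plan is to argue by contradiction using the pairwise majority margin between the alleged manipulator's old and new winners. Suppose voter $i$ can gain from misreporting: let $P$ be the truthful profile with (strong) Condorcet winner $a = f_{\textup{Condorcet}}(P)$, let $P'$ be the manipulated profile, differing only in voter $i$'s vote, with Condorcet winner $b = f_{\textup{Condorcet}}(P')$, and assume $b \succ_i a$, where $\succ_i$ is $i$'s truthful preference. I need to derive a contradiction.

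For any profile $Q$, define the majority margin of $a$ over $b$ as
\[ d(Q) = |\{j : a \succ^Q_j b\}| - |\{j : b \succ^Q_j a\}|. \]
Since $a$ is the strong Condorcet winner of $P$, we have $d(P) \geq 1$; since $b$ is the strong Condorcet winner of $P'$, we have $d(P') \leq -1$. Hence $d(P) - d(P') \geq 2$. On the other hand, $P$ and $P'$ differ only in voter $i$'s vote, and each voter contributes $+1$ or $-1$ to $d$ according to whether they rank $a$ above $b$ or below. Thus $d(P) - d(P') \in \{-2, 0, +2\}$, and so we must have $d(P) - d(P') = 2$, which forces voter $i$ to rank $a$ above $b$ in $P$ and below $b$ in $P'$.

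The final step is to observe that this contradicts the assumption that $i$'s truthful preference has $b \succ_i a$: in $P$, voter $i$ reports truthfully and hence ranks $b$ above $a$, not $a$ above $b$. This closes the contradiction and establishes strategyproofness. I do not expect a real obstacle here; the only subtlety is keeping track of strict versus weak majorities, which is handled automatically by working with the integer margin $d$ and noting that a single-voter deviation can shift $d$ by at most $2$.
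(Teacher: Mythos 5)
Your proof is correct and is essentially the paper's argument in quantitative dress: the paper observes that the strict majority preferring $a$ to $b$ in $P$ cannot contain voter $i$ (since $b \succ_i a$) and therefore survives unchanged into $P'$, while you reach the same contradiction by tracking the pairwise margin $d$ and noting a single deviation can shift it by at most $2$. Both hinge on the identical key fact that the truthful vote of the would-be manipulator already ranks $b$ above $a$, so no genuinely different idea is involved.
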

\begin{proof}
	For a contradiction, suppose there are profiles 
	$P = (\succ_1, \dots, \succ_i, \dots, \succ_n)$ and 
	$P' = (\succ_1, \dots$, $ \succ_i'$, $\dots$, $\succ_n)$ such that
	\[ f_{\text{Condorcet}}(P) = a, \quad f_{\text{Condorcet}}(P') = b,\quad \text{ and } b\succ_i a.  \]
	Since $a$ is the Condorcet winner at $P$, 
	there is a strict majority $N' \subseteq N$ of voters who prefer $a$ to $b$ in $P$. 
	As $b\succ_i a$, we have $i \not\in N'$, so all voters in $N'$ also prefer $a$ to $b$ in $P'$, 
	forming a strict majority. This is a contradiction with $b$ being the Condorcet winner at $P'$.
\end{proof}

A similar argument also establishes that $f_{\text{Condorcet}}$ is resistant to manipulation by  
\emph{groups of voters} \citep[Lemma~10.3]{mou:b:axioms}.

It turns out that $f_{\text{Condorcet}}$ is essentially the only voting rule defined on 
$\mathcal D_{\text{Condorcet}}$ that is strategyproof, at least if the number $n$ of voters is 
odd.

\begin{theorem}[\citealp{CaKe02,CaKe16}]
	Suppose that the number of voters is odd. Let $f$ be a non-dictatorial and surjective 
	voting rule defined on $\mathcal D_{\textup{Condorcet}}$. Then $f$ is strategyproof 
	if and only if $f = f_{\textup{Condorcet}}$.
\end{theorem}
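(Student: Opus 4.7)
The ``if'' direction is exactly Proposition~\ref{prop:condorcet-sp}. For the ``only if'' direction, suppose $f$ is strategyproof, surjective, and non-dictatorial on $\mathcal D_{\text{Condorcet}}$, and suppose for contradiction that some $P \in \mathcal D_{\text{Condorcet}}$ has Condorcet winner $a$ but $f(P) = b \neq a$. The plan is to transform $P$ into a highly structured profile and then extract a dictator from it.

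The engine of the argument is a Maskin-style monotonicity lemma implied by strategyproofness: if $P, P' \in \mathcal D_{\text{Condorcet}}$ differ only in voter $i$'s vote and the upper contour set of $c := f(P)$ under $\succ_i'$ is contained in its upper contour set under $\succ_i$, then $f(P') = c$. Strategyproofness applied at $P$ gives $c \pref_i f(P')$, and applied at $P'$ gives $f(P') \pref_i' c$; combined with the containment of upper contour sets, these two inequalities pin down $f(P') = c$. Iterating this across voters allows several simultaneous such changes.

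Using this lemma, I would reduce $P$ to a canonical profile $P^*$. Since $n$ is odd, $S := \{i : a \succ_i b\}$ is a strict majority. For each $i \in S$, raise $a$ to the top of $\succ_i$; since $a$ already sat above $b$, the upper contour set of $b$ does not change, and $a$ remains the Condorcet winner, so the lemma keeps $f = b$. Then, for each $i \notin S$, raise $b$ to the top of $\succ_i$; this only shrinks the upper contour set of $b$, and $a$ is still the Condorcet winner because the voters in $S$ continue to top-rank $a$. The resulting profile $P^*$ has $a$ on top for all $i \in S$ and $b$ on top for all $i \notin S$, with Condorcet winner $a$ and yet $f(P^*) = b$.

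The final step is to convert this ``wrong'' profile $P^*$ into a dictator. The plan is to restrict attention to the subdomain $\mathcal Q \subseteq \mathcal D_{\text{Condorcet}}$ of profiles in which every voter top-ranks either $a$ or $b$; monotonicity should collapse the behaviour of $f$ on $\mathcal Q$ into a strategyproof two-alternative rule, which must be the majority rule, the anti-majority rule, a constant, or a dictatorship. Surjectivity excludes the constant, $P^*$ rules out the majority rule, and comparing with a third alternative together with the Condorcet-winner constraint excludes the anti-majority rule; what remains is a dictator on $\mathcal Q$, which then propagates to a global dictator on $\mathcal D_{\text{Condorcet}}$ via further monotonic transformations, contradicting non-dictatoriality. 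The main obstacle is this last propagation step: because $\mathcal D_{\text{Condorcet}}$ is not a Cartesian product of single-voter domains, one cannot invoke Gibbard--Satterthwaite as a black box, and every intermediate profile used in the propagation has to be verified to still admit a Condorcet winner. This domain-preservation bookkeeping is the delicate heart of the proof.
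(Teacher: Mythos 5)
The survey does not actually prove this theorem: it states it with a citation to Campbell and Kelly, and only the ``if'' direction is established in the text, as \Cref{prop:condorcet-sp}. So your proposal has to stand on its own. Its first half does: the monotonicity lemma you derive from strategyproofness is correct (and you rightly insist that both profiles lie in $\mathcal D_{\textup{Condorcet}}$), and the reduction of a hypothetical counterexample $P$ to the canonical profile $P^*$ --- a strict majority top-ranking the Condorcet winner $a$, the remaining voters top-ranking $b=f(P)$, with $f(P^*)=b$ --- goes through, since each single-voter lift keeps $a$ a strong Condorcet winner and weakly shrinks the upper contour set of $b$.

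The second half has a genuine gap. Your case analysis rests on the claim that a strategyproof rule on two alternatives must be the majority rule, the anti-majority rule, a constant, or a dictatorship. That classification is false: the strategyproof onto rules for two alternatives are exactly the monotone ``voting by committees'' rules, a large class containing weighted majorities, quota rules, and oligarchies, none of which is excluded by your three eliminations (and anti-majority, which you list, is not strategyproof at all). Ruling out every non-dictatorial monotone rule other than simple majority is exactly where the third alternative and the structure of $\mathcal D_{\textup{Condorcet}}$ must be exploited; this is the substantive content of Campbell--Kelly's argument --- delicate enough that their original proof required a published correction --- and it is absent here. Two further steps are asserted rather than proved: that $f$ restricted to $\mathcal Q$ takes values in $\{a,b\}$ and depends only on which of $a,b$ each voter top-ranks, and the propagation of a dictator on $\mathcal Q$ to a dictator on all of $\mathcal D_{\textup{Condorcet}}$, which you yourself flag as the delicate heart of the proof without carrying it out. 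As written, the proposal is a plausible strategy outline rather than a proof.
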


\citet{CaKe15} show an analog of their theorem for an even number of voters if one strengthens 
non-dictatorship to anonymity, and surjectivity to neutrality. 
\citet[Theorem~3.6]{peters2019thesis} proves a different version for an even number of voters 
using anonymity and Pareto optimality.

Notice that the Campbell--Kelly theorem may not hold for domains that are smaller than 
$\mathcal D_{\textup{Condorcet}}$. For example, as we will discuss below, there are many more 
strategyproof voting rules defined for single-peaked profiles only.

\subsection{Single-Peaked Preferences} 
\label{sec:def:sp}

Let us now turn to what is arguably the most famous domain restriction: the single-peaked domain.
Consider a situation where voters need to decide among different possible quantities
of a numerical measure: it might be a parliament deciding on a tax rate, a firm's board deciding
on the price for a new product, or housemates deciding on the optimal setting of the thermostat. 
As \citet{black1948rationale} noted, in such situations it is reasonable to expect that each 
decision maker first identifies her optimum value of the measure under consideration, and that 
she is less and less happy the further the chosen quantity deviates from her optimum. For 
example, it would be surprising if a politician whose most preferred income tax rate is 20\% 
had 70\% as his second-favorite tax rate, and 40\% as his third-favorite tax rate. Rather, we 
expect his preference curve to have a single peak like the solid lines in \Cref{fig:sp-ex}. 
Another situation where preferences can be expected to have this shape is in a political 
election where the candidates can be placed on a left-to-right spectrum, 
with left-wing voters preferring left-wing candidates, 
and right-wing voters preferring right-wing candidates.

\begin{figure}
	\centering
	\begin{minipage}{0.2\textwidth}
		\centering
		\begin{tabular}{>{\color{green!50!black}}c>{\color{blue}}cc}
		\toprule
		$v_1$ & $v_2$ & $v_3$ \\
		\midrule
		$b$ & $d$ & $c$  \\
		$a$ & $e$ & $e$  \\
		$c$ & $f$ & $d$  \\
		$d$ & $c$ & $b$  \\
		$e$ & $b$ & $f$  \\
		$f$ & $g$ & $a$  \\
		$g$ & $a$ & $g$  \\						
		\bottomrule
		\end{tabular}
	\end{minipage}
	\quad
	\begin{minipage}{0.55\textwidth}
		\centering
		\begin{tikzpicture}[yscale=0.65,xscale=0.95]
		
		\def\xmin{1}
		\def\xmax{7}
		\def\ymin{0}
		\def\ymax{7}
		
		\draw[step=1cm,black!20,very thin] (\xmin,\ymin) grid (\xmax,\ymax);

		\draw[->] (\xmin -0.5,\ymin) -- (\xmax+0.5,\ymin) node[right] {};
		\foreach \x/\xtext in {1/a, 2/b, 3/c, 4/d, 5/e, 6/f, 7/g}
		\draw[shift={(\x,\ymin)}] (0pt,2pt) -- (0pt,-2pt) node[below] {$\strut\xtext$};
		\foreach \x/\xtext in {1, 2,3,4,5,6}
		\node[below] at (\x+0.5,\ymin) {$\strut\lhd$};  
		
		\foreach \x/\y in {4/7,5/6,3/4, 6/5, 2/3, 7/2, 1/1}
		\node[fill=blue, circle, inner sep=0.6mm] at (\x,\y) {};
		
		\draw[thick,blue] (1,1)--(2,3)--(3,4)--(4,7) -- (5,6)--(6,5)--(7,2);
		
		\foreach \x/\y in {4/4,5/3, 6/2, 3/5, 2/7, 7/1, 1/6}
		\node[fill=green!50!black, circle, inner sep=0.6mm] at (\x,\y) {};
		
		\draw[thick,green!50!black] (1,6)--(2,7)--(3,5)--(4,4) -- (5,3)--(6,2)--(7,1);  
		
		\foreach \x/\y in {1/2,2/4,3/7, 4/5, 5/6, 6/3, 7/1}
		\node at (\x,\y) {$\star$};
		
		\draw[dashed,black] (1,2)--(2,4)--(3,7)--(4,5) -- (5,6)--(6,3)--(7,1);
		
		\end{tikzpicture}
	\end{minipage}
	\caption{Votes $v_1$ and $v_2$, shown as solid lines, are single-peaked with respect to 
		$\lhd$. The vote $v_3$, depicted as a dashed line, is not single-peaked with respect 
		to $\lhd$. While a profile consisting of $v_2$ and $v_3$ is single-peaked 
		($d$ and $e$ have to be flipped on $\lhd$), there is no ordering of the candidates 
		for which the profile $(v_1,v_3)$ is single-peaked.
	}
	\label{fig:sp-ex}
\end{figure}

\paragraph{Definition}
As we will see, single-peaked profiles have many desirable properties. Let us start by giving a 
formal definition. See \Cref{fig:sp-ex} for an example.
\begin{definition}
Let $P$ be a profile over $A$ and let $\lhd$ be a linear order over $A$.
A linear order $v_i$ over $A$ is \defemph{single-peaked on $\lhd$} if 
for every pair of alternatives $a,b\in A$ with $\top(v_i) \lhd b\lhd a$ or 
$a\lhd b\lhd \top(v_i)$ we have $b\succ_i a$.
A profile $P$ over $A$ is \defemph{single-peaked with respect to $\lhd$} if every vote in $P$ 
is single-peaked on $\lhd$, and it is \defemph{single-peaked} if 
there exists a linear order $\lhd$ over $A$ such that $P$ is single-peaked with respect to $\lhd$.
\label{def:sp}
\end{definition}

In what follows we will often refer to a linear order $\lhd$ over $A$ that witnesses the 
single-peaked property as an \emph{axis}.

This definition requires that, as we move away from a voter's most-preferred alternative either 
to the left or to the right, the voter becomes less and less enthusiastic. Thus, plotting 
a ``preference curve'', which depicts the voter's preference intensity, as in \Cref{fig:sp-ex}, 
yields a single-peaked shape, giving rise to this domain restriction's name \citep{black1948rationale}.

\begin{wrapfigure}[6]{r}{0.22\linewidth}
	\centering
	\begin{tikzpicture}[yscale=0.65,xscale=0.95]
	\def\xmin{1}
	\def\xmax{3}
	\def\ymin{0}
	\def\ymax{2}
	
	\draw[step=1cm,black!20,very thin] (\xmin,\ymin) grid (\xmax,\ymax);

	\draw[->] (\xmin -0.5,\ymin) -- (\xmax+0.5,\ymin) node[right] {};
	\foreach \x/\xtext in {1/a, 2/b, 3/c}
	\draw[shift={(\x,\ymin)}] (0pt,2pt) -- (0pt,-2pt) node[below] {$\strut\xtext$};
	\foreach \x/\xtext in {1,2}
	\node[below] at (\x+0.5,\ymin) {$\strut\lhd$};  
	
	\foreach \x/\y in {3/2, 2/0, 1/1}
	\node[fill=blue, circle, inner sep=0.6mm] at (\x,\y) {};
	
	\draw[thick,blue] (1,1)--(2,0)--(3,2);
	\end{tikzpicture}
	\vspace{-24pt}
	\caption{A valley.}
	\label{fig:valley}
\end{wrapfigure}
To reason about this domain restriction, it is useful to have several alternative ways 
of thinking about it. For example, a feature of the preference curves shown in 
\Cref{fig:sp-ex} is that they do not contain any valleys, as shown 
in \Cref{fig:valley}
on the right.
Formally, given an axis $\lhd$, a vote $\succ_i$ has a \emph{valley} 
if there is a triple $a \lhd b \lhd c$ of alternatives such that $a \succ_i b$ and $c\succ_i b$. 
\Cref{prop:sp-equiv} below shows that a linear order has no valleys with respect to $\lhd$
if and only if it is single-peaked on $\lhd$. It also shows that it suffices to check that there are no `local' valleys where the alternatives $a$, $b$, $c$ are next to each other on the axis $\lhd$. This condition is
used by some of the dynamic programming algorithms 
presented later in this survey (see, e.g., \Cref{thm:alt-del-fixedorder}).

One can also view single-peakedness as a \emph{convexity} condition. 
Given an axis $\lhd$, a subset $A'\subseteq A$ of alternatives is \emph{convex} if and only if 
it is an interval of the axis $\lhd$, that is, 
for any triple of alternatives $a, b, c$ with $a\lhd b \lhd c$ and $a,c\in A'$ 
we have $b\in A'$. By looking at \Cref{fig:sp-ex}, we can see that if we take any 
\emph{prefix} of the votes $v_1$ and $v_2$, then this prefix is an interval of $\lhd$. For 
example, the four most-preferred alternatives in $v_2$ are $\{d,e,f,c\}$, and this set forms an 
interval of $\lhd$. Again, \Cref{prop:sp-equiv} establishes that
all prefixes of a vote are intervals of $\lhd$ 
if and only if this vote is single-peaked on $\lhd$.

\begin{proposition}\label{prop:sp-equiv}
	Let $\succ$ be a vote over $A$, and
        let $\lhd$ be a linear order over $A$. The following conditions are equivalent:
	\begin{enumerate}[(1)]
		\item The vote $\succ$ is single-peaked on $\lhd$.
		\item For all $a,b,c\in A$ such that $a \lhd b \lhd c$ and these alternatives form an interval of $\lhd$, 
        we do not have both $a \succ b$ and $c \succ b$, that is, there is no `local' valley
		in $\succ$. 
		\item For all $a,b,c\in A$ such that $a \lhd b \lhd c$, 
		we do not have both $a \succ b$ and $c \succ b$, that is, there is no valley
		in $\succ$. 
		\item For each $c \in A$, the set $\{a\in A : a \pref c\}$ is an interval on $\lhd$.
	\end{enumerate}
\end{proposition}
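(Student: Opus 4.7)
The plan is to close the circle of implications $(1) \Rightarrow (3) \Rightarrow (2) \Rightarrow (1)$ and then separately establish $(3) \Leftrightarrow (4)$. The implications $(1) \Rightarrow (3)$ and $(3) \Rightarrow (2)$ are essentially direct: $(3) \Rightarrow (2)$ is immediate since forbidding all valleys forbids local ones, while for $(1) \Rightarrow (3)$ I would fix a putative valley $a \lhd b \lhd c$ with $a \succ b$ and $c \succ b$ and do a short case analysis on the position of the peak $t = \top(\succ)$ relative to $b$: if $t \lhd b$, then $t \lhd b \lhd c$ forces $b \succ c$ by the definition of single-peakedness; if $b \lhd t$, then $a \lhd b \lhd t$ forces $b \succ a$; if $t = b$, both $b \succ a$ and $b \succ c$ hold trivially. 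Each case contradicts the valley assumption.

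The main obstacle is $(2) \Rightarrow (1)$, which needs to upgrade a purely local condition to a global monotonicity statement. The idea is to propagate preferences outward from the peak $t$ along the axis by induction. Concretely, I would prove the following claim by induction on the $\lhd$-distance from $t$: whenever $t \pref_{\lhd} x \lhd y$ with $x$ and $y$ consecutive on $\lhd$, we have $x \succ y$ (and symmetrically on the left of $t$). The base case considers the triple $(t, x, y)$ where $x$ is immediately to the right of $t$: condition $(2)$ rules out the local valley, and since $t \succ x$ (as $t$ is the top), we must have $x \succ y$. For the inductive step, let $w$ be the immediate $\lhd$-predecessor of $x$; by the inductive hypothesis $w \succ x$, so the no-local-valley condition applied to $(w,x,y)$ forces $x \succ y$. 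Transitivity then yields $b \succ a$ for any $t \lhd b \lhd a$, and the symmetric argument handles $a \lhd b \lhd t$, recovering \Cref{def:sp}.

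Finally, I would handle the equivalence $(3) \Leftrightarrow (4)$ via the contrapositive in both directions. For $(3) \Rightarrow (4)$: if some lower contour set $\{a : a \pref c\}$ failed to be an interval of $\lhd$, there would be $a_1 \lhd b \lhd a_2$ with $a_1, a_2 \pref c$ but $c \succ b$; transitivity gives $a_1 \succ b$ and $a_2 \succ b$, producing a valley. For $(4) \Rightarrow (3)$: given a valley $a \lhd b \lhd c$ with $a \succ b$ and $c \succ b$, the linearity of $\succ$ forces either $a \succ c$ or $c \succ a$. In the first case the set $\{x : x \pref c\}$ contains $a$ and $c$ but not $b$; in the second, $\{x : x \pref a\}$ contains $a$ and $c$ but not $b$. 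Either way one of the lower contour sets fails to be an interval, contradicting $(4)$. Combining this equivalence with the cycle from the previous paragraphs yields the four-way equivalence.
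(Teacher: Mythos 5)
Your proof is correct, but it closes the loop along a genuinely different path than the paper. The paper's chain is $(1)\rightarrow(2)$ (proving the stronger $(3)$ directly), then $(2)\rightarrow(3)$ via a \emph{descent} argument on valley width (any valley of width $w>1$ contains a valley of width at most $w-1$, hence eventually a local one), then $(3)\rightarrow(4)$ by the same contrapositive you use, and finally $(4)\rightarrow(1)$ by observing that the interval $\{c : c\pref a\}$ contains both $a$ and the peak and hence everything between them on $\lhd$. You instead prove $(2)\rightarrow(1)$ directly by an \emph{ascent}: an outward induction from the peak showing that preference decreases across every consecutive pair of the axis, followed by chaining via transitivity; and you then attach $(4)$ separately by proving both $(3)\Rightarrow(4)$ and $(4)\Rightarrow(3)$, the latter being a step the paper does not need (it goes $(4)\rightarrow(1)$ instead). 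Your case split in $(4)\Rightarrow(3)$ on whether $a\succ c$ or $c\succ a$, choosing the contour set of the less-preferred endpoint, is a clean way to exhibit the non-interval witness. The trade-off: the paper's width-descent localizes a violation, which matches the certifying flavour of the recognition algorithms later in the survey; your peak-outward induction makes the global monotonicity picture of single-peakedness explicit and lets you bypass the $(4)\rightarrow(1)$ argument entirely. Both are elementary and complete.
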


\begin{proof}
$(1)\rightarrow(2)$: Assume $\succ$ is single-peaked on $\lhd$. We will actually prove the stronger condition $(3)$. Assume towards a contradiction that there exists
a triple $a,b,c\in A$ with $a \lhd b \lhd c$, $a \succ b$ and $c \succ b$.
Let $x$ be the top-ranked candidate in $\succ$; note that $x\neq b$.
If $x \lhd b$ then $x \lhd b\lhd c$ and $c\succ b$; 
if $b \lhd x$ then $a \lhd b\lhd x$ and $a\succ b$.
In either case we obtain a contradiction with the assumption 
that $\succ$ is single-peaked on $\lhd$.

$(2)\rightarrow(3)$: We show that if $\succ$ has a valley, then it also has a local valley.
Define the \emph{width} of a valley $a\lhd b\lhd c$ as the number of alternatives
that appear between $a$ and $c$ on $\lhd$; a valley is local if and only if its width
is $1$. Now, suppose $a, b, c$ form a valley of width $w>1$ in $\succ$.
Assume without loss of generality that $a$ and $b$ are non-adjacent on $\lhd$, 
i.e., $a\lhd a' \lhd b\lhd c$ for some $b'\in A$. But then either $a'\succ b$, 
in which case $a', b, c$ form a valley of width at most $w-1$, 
or $b\succ a'$, in which case $a, a', b$ form a valley of width at most $w-1$; 
proceeding in this fashion, we arrive at a local valley. 

$(3)\rightarrow(4)$: Assume towards a contradiction that there exists 
an alternative $c$ such that $S=\{a\in A : a \pref c\}$ is not an interval of $\lhd$, 
i.e., there exist alternatives $d,e,f\in A$ such that $d\in S$, $e\notin S$, $f\in S$, 
and $d \lhd e\lhd f$.
Consequently, $d\pref c \succ e$ and $f\pref c \succ e$, which contradicts Condition (3).

$(4)\rightarrow(1)$: Let $a,b\in A$ with $x \lhd b\lhd a$ or $a\lhd b\lhd x$,
where $x$ is the top-ranked candidate in $\succ$.
By assumption, $S=\{c\in A : c \pref a\}$ is an interval of $\lhd$.
By construction of $S$ we have $a\in S$, $x\in S$ and hence $b\in S$.
Thus, $b\succ a$, and  therefore $\succ$ is single-peaked on $\lhd$.
\end{proof}

\Cref{def:sp} follows the spirit of the original definition due to \citet{black1948rationale}. 
Condition (3) is essentially how \citet{arr:b:sc} formulated single-peakedness. Condition (4), 
which is stated in terms of the connectedness of prefixes 
(also known as \emph{upper contour sets}) is often an 
elegant way of reasoning about single-peaked preferences, and it generalizes well to other 
settings (see \Cref{sec:def:spt} and \Cref{sec:further-topics}).

Consider a profile $P$ over $A$ that is single-peaked with respect to an axis $\lhd$,
and a triple of distinct alternatives $A'=\{a, b, c\}\subseteq A$.
The no-valley property implies that there
exist an alternative $x\in A'$ such that no voter ranks $x$ below all other
alternatives in $A'$: indeed, if $a\lhd b\lhd c$ then a vote $\succ$ 
with $a\succ b$, $c\succ b$ would have a valley at $b$. In particular, 
there can be at most two different alternatives in $A$ that are ranked last by some voter. This is also easy to conclude from the convexity 
property: only the endpoints of the axis can be ranked last. This observation
can often be used to show that a given profile is not single-peaked.

\paragraph{Majority Relation}
The most famous property of single-peaked profiles is that they admit a Condorcet winner, and 
that the Condorcet winner is the top alternative of the \emph{median voter}, 
i.e., the voter in the middle when one orders the voters according to the position 
of their top-ranked alternatives on $\lhd$. 
This property is often used by political scientists and in public choice to reason 
about decision over quantities such as tax rates.

\begin{proposition}[Median Voter Theorem]
	\label{prop:median-voter-theorem}
	Every single-peaked profile has a weak Condorcet winner.
\end{proposition}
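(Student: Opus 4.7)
The plan is to identify a weak Condorcet winner explicitly as the peak of a median voter and then verify the defining inequality using the no-valley characterization from \Cref{prop:sp-equiv}.

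First I would fix an axis $\lhd$ witnessing that $P$ is single-peaked, and relabel the voters $1,\dots,n$ so that their peaks are sorted along $\lhd$, i.e., $\top(v_1) \mathrel{\unlhd} \top(v_2) \mathrel{\unlhd} \cdots \mathrel{\unlhd} \top(v_n)$, where ties on equal peaks are broken arbitrarily. Let $k = \lceil n/2 \rceil$ and set $m^\star := \top(v_k)$; this is the peak of the median voter. The claim will be that $m^\star$ is a weak Condorcet winner.

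Next, I would take an arbitrary $b \in A \setminus \{m^\star\}$ and, without loss of generality, assume $m^\star \lhd b$ (the case $b \lhd m^\star$ is symmetric, with the roles of the two halves of voters swapped). I then claim that every voter $i \in \{1,\dots,k\}$ prefers $m^\star$ to $b$. Indeed, for such $i$ we have $\top(v_i) \mathrel{\unlhd} m^\star \lhd b$. If $\top(v_i) = m^\star$ the preference $m^\star \succ_i b$ is immediate; otherwise $\top(v_i) \lhd m^\star \lhd b$, and \Cref{def:sp} applied to the pair $m^\star, b$ gives $m^\star \succ_i b$ directly. Since $k = \lceil n/2 \rceil \ge n/2$, the set of such voters forms a weak majority, so $m^\star \pref_{\text{maj}} b$. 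As $b$ was arbitrary, $m^\star$ is a weak Condorcet winner.

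I do not anticipate a genuine obstacle here; the proof is essentially a bookkeeping argument around a median, and all the work has already been packaged into \Cref{def:sp} (or, equivalently, the no-valley condition of \Cref{prop:sp-equiv}(3)). The only mild care needed is handling the parity of $n$, which is why we use $\lceil n/2 \rceil$ and merely claim a \emph{weak} majority (a strong Condorcet winner need not exist, as in \Cref{ex:cond-winners-and-cycles}), and remembering to treat the symmetric case $b \lhd m^\star$ by using the voters $\{k, k+1, \dots, n\}$ instead of $\{1, \dots, k\}$.
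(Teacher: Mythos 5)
Your proof is correct and follows essentially the same route as the paper's: sort voters by peak position, take the peak of a median voter, and verify the weak-majority comparison separately for alternatives on each side using single-peakedness. The only cosmetic difference is the median index ($\lceil n/2\rceil$ versus the paper's $\lceil (n+1)/2\rceil$), which for even $n$ selects the other of the two median voters; both choices work, and your counting ($k \ge n/2$ on one side, $n-k+1 > n/2$ on the other) is sound.
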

\begin{proof}
	In what follows, given an axis $\lhd$ and two votes $v$, $v'$, we write 
        $\top(v)\unlhd \top(v')$ to denote that either $\top(v)\lhd \top(v')$ or $\top(v)=\top(v')$.
	First, we reorder the votes in $P = (v_1,\dots,v_n)$ so that 
	$\top(v_1) \unlhd \top(v_2) \unlhd \cdots \unlhd \top(v_n)$, 
	set $\ell := \lceil \frac{n+1}{2}\rceil$, and let 
	$c := \top(v_\ell)$ be the top-ranked alternative of (one of) 
	the median voter(s). We claim that $c$ is a weak Condorcet winner. 
	To see this, consider any other alternative $a\in A$.
	\begin{itemize}
		\item If $a \lhd c$, then the voters $\ell, \ell + 1, \dots, n$ all prefer $c$ to $a$, and these voters form a weak majority.
		\item If $c \lhd a$, then the voters $1,\dots,\ell - 1, \ell$ all prefer $c$ to $a$, and these voters form a weak majority.
	\end{itemize}
	Hence, $c$ is a weak Condorcet winner, as required.
\end{proof}
If the number of voters in $P$ is odd, then there is a unique Condorcet winner. 
If the number of voters in $P$ is even, then the set of weak Condorcet winners 
is the interval of $\lhd$ with endpoints given by the top choices of voters 
$\lfloor \frac{n+1}{2}\rfloor$ and $\lceil \frac{n+1}{2}\rceil$.

\begin{examplebox}
	{Median voters}
	{sp-median}
	Consider a single-peaked profile with the axis $a\lhd b\lhd c\lhd d\lhd e$.
	We have six voters with $\top(v_1)=\top(v_2)=a$, $\top(v_3)=b$, $\top(v_4)=\top(v_5)=d$, and $\top(v_6)=e$.
	Consequently, it holds that \[\top(v_1)\unlhd \top(v_2)\unlhd \top(v_3) \unlhd \top(v_4) \unlhd\top(v_5)\unlhd\top(v_6).\]
	This profile has three weak Condorcet winners: alternatives $b$, $c$, and $d$.
    Indeed, since $\lfloor \frac{n+1}{2}\rfloor=3$ and $\lceil \frac{n+1}{2}\rceil=4$, all alternatives in the interval $[\top(v_3), \top(v_4)]=\{b, c, d\}$ are weak Condorcet winners.
	\begin{center}
 		\begin{tikzpicture}[yscale=0.65,xscale=1.5]
		
		\def\xmin{1}
		\def\xmax{5}
		\def\ymin{0}
		\def\ymax{5}

		\draw[->] (\xmin -0.5,\ymin) -- (\xmax+0.5,\ymin) node[right] {};
		\foreach \x/\xtext/\vtext in {1/a/{v_1,v_2}, 2/b/{v_3}, 3/c/{}, 4/d/{v_4,v_5}, 5/e/v_6}
		\draw[shift={(\x,\ymin)}] (0pt,2pt) -- (0pt,-2pt) node[below] {$\strut\xtext$} node[above] {$\strut\vtext$};
		\foreach \x/\xtext in {1, 2,3,4}
		\node[below] at (\x+0.5,\ymin) {$\strut\lhd$};  

		\end{tikzpicture}
	\end{center}
\end{examplebox}

As we will now see, the single-peaked property constrains the majority relation $\succ_\text{maj}$ 
even further: it implies that $\succ_\text{maj}$ is transitive as long as the number of
voters is odd. We give two different 
proofs, one of which works by repeated application of the Median Voter Theorem.

\begin{corollary}
	\label{cor:sp-transitive}
	If $P$ is a single-peaked profile with an odd number of voters then its 
	majority relation is transitive.
\end{corollary}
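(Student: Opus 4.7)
The plan is to reduce transitivity to a statement about triples and then invoke the Median Voter Theorem.

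First I would observe that since $n$ is odd, the majority relation $\succ_{\text{maj}}$ is a \emph{tournament}: for every pair of distinct alternatives $a,b$ exactly one of $a\succ_{\text{maj}} b$ or $b\succ_{\text{maj}} a$ holds (ties are impossible with an odd electorate). A standard fact about tournaments is that they are transitive if and only if they contain no directed $3$-cycle, so it suffices to rule out triples $a,b,c$ with $a\succ_{\text{maj}} b\succ_{\text{maj}} c\succ_{\text{maj}} a$.

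For this, I would use the (easy-to-verify) fact that single-peakedness is preserved under alternative deletion: if $P$ is single-peaked with respect to $\lhd$, then for any $A'\subseteq A$ the restricted profile $P|_{A'}$ is single-peaked with respect to $\lhd|_{A'}$, since Condition~(3) of \Cref{prop:sp-equiv} (no valley) is inherited by subsets. Moreover, majority comparisons between any two alternatives $x,y$ depend only on the votes restricted to $\{x,y\}$, so the majority relation of $P|_{A'}$ is just $\succ_{\text{maj}}$ restricted to $A'$.

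Now fix any triple $\{a,b,c\}\subseteq A$ and consider $P|_{\{a,b,c\}}$. This is a single-peaked profile with an odd number of voters, so by the Median Voter Theorem (\Cref{prop:median-voter-theorem}) it has a weak Condorcet winner; with $n$ odd, no pairwise majority comparison can be a tie, so this weak Condorcet winner is in fact a strict Condorcet winner $x\in\{a,b,c\}$. Then $x$ beats the other two alternatives in majority, which forbids a $3$-cycle on $\{a,b,c\}$. Since this holds for every triple, the tournament $\succ_{\text{maj}}$ is acyclic, hence transitive.

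The main obstacle, such as it is, is the bookkeeping around ``odd $n$'' and the tournament-versus-strict-order distinction: one must carefully note that oddness is what promotes the weak Condorcet winner from the Median Voter Theorem into a strict one, and that for tournaments acyclicity on triples suffices for full transitivity. Everything else is immediate from the hereditary nature of single-peakedness under alternative deletion and from \Cref{prop:median-voter-theorem}.
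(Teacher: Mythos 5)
Your proof is correct, and it takes a route that sits between the paper's two proofs rather than coinciding with either. The survey gives (i) an induction on $m$ that repeatedly strips off the Condorcet winner, and (ii) a proof by contradiction that, given a majority cycle on $a\lhd b\lhd c$, uses the fact that two strict majorities intersect to exhibit a single voter with $c\succ_i a\succ_i b$, i.e.\ a valley at $b$. You share the skeleton of (ii) — oddness makes $\succ_{\text{maj}}$ a tournament, and a tournament is transitive iff it has no directed $3$-cycle — but you then dispose of the $3$-cycle differently: restrict to the triple, invoke heredity of single-peakedness under alternative deletion, and apply the Median Voter Theorem (\Cref{prop:median-voter-theorem}) to the restricted profile to obtain a strict Condorcet winner on $\{a,b,c\}$, which is incompatible with a cycle. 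What the paper's valley argument buys is a completely self-contained contradiction that also makes visible the forbidden Condorcet subprofile (tying into value restriction, \Cref{prop:vr-transitive}); what your argument buys is modularity and generality: it uses \Cref{prop:median-voter-theorem} and heredity as black boxes, and the identical reasoning shows that \emph{any} hereditary domain guaranteeing a weak Condorcet winner on every subprofile has a transitive majority relation for odd $n$ (this is essentially how the survey later argues transitivity for group-separable preferences). Your bookkeeping on the odd-$n$ points — ties are impossible, so the weak Condorcet winner of the triple is strict, and acyclicity on triples suffices for tournaments — is all in order.
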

\begin{proof}[Proof by induction]
	By induction on $m$, the number of alternatives. The result is obvious for $m = 1$. 
	If $m > 1$, let $c$ be the (unique) Condorcet winner of $P$. Let $P' = P_{A\setminus \{c\}}$. 
	Since $P$ is single-peaked, $P'$ is single-peaked as well. By the inductive hypothesis, 
        the majority relation of $P'$ is transitive; denote it by $\succ'$. Define $\succ$ by 
	setting $a \succ b$ if and only if (i) $a \succ' b$ and $a,b\in A \setminus\{c\}$ or 
	(ii) $a=c$ and $b\in A\setminus\{c\}$. Then $\succ$ is the majority relation of 
	$P$, and it is transitive.
\end{proof}
\begin{proof}[Proof by contradiction]
	Suppose for a contradiction that the majority relation is not transitive, so that 
	$a \succ_{\text{maj}} b \succ_{\text{maj}} c \succ_{\text{maj}} a$ for some alternatives 
	$a,b,c$. Assume without loss of generality that $a\lhd b\lhd c$.
	Note that any two strict majorities intersect in at least one voter. 
	Thus, since $a \succ_{\text{maj}} b$ and $c \succ_{\text{maj}} a$, there exists a voter $i$ 
	with $c \succ_i a \succ_i b$, a contradiction with the no-valley property.
\end{proof}

\Cref{cor:sp-transitive} depends on the assumption that the number of voters is odd. 
In the following example with four voters, the relation $\pref_{\text{maj}}$ is not transitive. 
However, our second proof can be adjusted to show that its strict part $\succ_{\text{maj}}$ is always transitive. A relation whose strict part is transitive is called \emph{quasi-transitive}, and hence we can say that the majority relation of any single-peaked profile is quasi-transitive.

\begin{examplebox}
	{The weak majority relation of a single-peaked profile may not be transitive for even $n$}
	{sp-quasi-transitive}
	\begin{minipage}{0.20\linewidth}
		\begin{tabular}{cccc}
			\toprule
			$v_1$ & $v_2$ & $v_3$ & $v_4$ \\
			\midrule
			$b$ & $b$ & $c$ & $c$ \\
			$a$ & $a$ & $b$ & $b$ \\
			$c$ & $c$ & $a$ & $a$ \\
			\bottomrule
		\end{tabular}
	\end{minipage}
	\hfill
	\begin{minipage}{0.74\linewidth}
		This profile is single-peaked with respect to $a\lhd b \lhd c$.
		The majority relation~ $\pref_{\text{maj}}$ is not transitive as 
	        it satisfies $a \pref_{\text{maj}} c$ and $c \pref_{\text{maj}} b$, 
		yet all voters prefer $b$ to $a$. 
		Still, alternatives $b$ and $c$ are weak Condorcet winners, 
		and the strict majority relation $\succ_{\text{maj}}$ is transitive.
	\end{minipage}
\end{examplebox}

A domain restriction is said to have the \emph{representative voter property} 
\citep{rothstein1991representative} if for every structured profile $P$ it holds that 
its majority relation $\succ_{\text{maj}}$ is transitive, 
and there is some \emph{representative} voter $v_i$ in $P$ 
whose preferences coincide with $\succ_{\text{maj}}$. However, as the following example shows, 
the single-peaked domain does not have this property even if the number of voters is odd.

\begin{examplebox}
	{Single-peaked profiles may not have the representative voter property}
	{sp-not-repr-voter}
	\begin{minipage}{0.15\linewidth}
		\begin{tabular}{ccc}
			\toprule
			$v_1$ & $v_2$ & $v_3$ \\
			\midrule
			$a$ & $b$ & $c$ \\
			$b$ & $c$ & $b$ \\
			$c$ & $d$ & $a$ \\
			$d$ & $a$ & $d$ \\
			\bottomrule
		\end{tabular}
	\end{minipage}
	\hfill
	\begin{minipage}{0.8\linewidth}
		This profile is single-peaked with respect to $a\lhd b \lhd c \lhd d$.
		The majority relation of this profile satisfies $b \succ_{\text{maj}} c \succ_{\text{maj}} a \succ_{\text{maj}} d$, but no voter has this preference ranking.
	\end{minipage}
\end{examplebox}

The reason is that when we repeatedly applied the Median Voter Theorem in the first proof of 
\Cref{cor:sp-transitive}, different voters may have appeared in the 
median position at different stages of the process.
We will soon see that \emph{single-crossing} preferences do satisfy the representative 
voter property. However, for single-peaked preferences, a weaker property holds. 
Consider the majority relation in \Cref{ex:sp-not-repr-voter}: 
while no voter submitted this ordering as their vote, we can see that the majority 
ranking is single-peaked with respect to the axis $\lhd$. 
We will now argue that this is not a coincidence.

\begin{proposition}
	\label{prop:sp-maj-relation-is-sp}
	The single-peaked domain is closed. That is, if $P$ is a profile with an odd number of 
	voters that is single-peaked with respect to $\lhd$, then its majority relation $\succ_{\text{maj}}$ 
	is single-peaked on $\lhd$.
\end{proposition}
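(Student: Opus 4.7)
The plan is to reduce the statement to the no-valley characterization of single-peakedness (condition~(3) of \Cref{prop:sp-equiv}) and then inherit the no-valley property of $\succ_{\text{maj}}$ from the no-valley property of the individual votes, using the same ``majorities must intersect'' trick as in the second proof of \Cref{cor:sp-transitive}.

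First I would observe that since $n$ is odd, there are no ties in pairwise comparisons, so $\succ_{\text{maj}}$ is a complete and antisymmetric relation; combined with \Cref{cor:sp-transitive}, it is therefore a linear order on $A$. To show that this linear order is single-peaked on $\lhd$, by \Cref{prop:sp-equiv} it suffices to verify that $\succ_{\text{maj}}$ has no valley on $\lhd$.

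Next I would suppose, towards a contradiction, that $\succ_{\text{maj}}$ has a valley, i.e., that there exist $a, b, c \in A$ with $a \lhd b \lhd c$ such that $a \succ_{\text{maj}} b$ and $c \succ_{\text{maj}} b$. By the definition of $\succ_{\text{maj}}$, the set $N_1 = \{i \in N : a \succ_i b\}$ and the set $N_2 = \{i \in N : c \succ_i b\}$ each contain a strict majority of voters, so $|N_1| + |N_2| > n$, which forces $N_1 \cap N_2 \neq \emptyset$. Pick any voter $i \in N_1 \cap N_2$; then $a \succ_i b$, $c \succ_i b$, and $a \lhd b \lhd c$, so $v_i$ has a valley at $b$ on $\lhd$. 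This contradicts the assumption that $v_i$ is single-peaked on $\lhd$ (again via \Cref{prop:sp-equiv}), completing the proof.

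There is really no main obstacle here: the only subtle point is that we need both that $\succ_{\text{maj}}$ is a linear order (so that \Cref{prop:sp-equiv} applies to it directly) and that one can check single-peakedness by ruling out valleys rather than by reasoning about the peak, but both are already in our toolkit from \Cref{cor:sp-transitive} and \Cref{prop:sp-equiv}. The argument is essentially a one-line application of the pigeonhole-style ``two strict majorities intersect'' observation.
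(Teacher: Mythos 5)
Your proof is correct, but it takes a genuinely different route from the one in the paper. The paper's argument (attributed to Moulin) verifies \Cref{def:sp} directly: it identifies the peak of $\succ_{\text{maj}}$ as the Condorcet winner, i.e., the top choice of the median voter, and then, for any pair $a,b$ with $\top(\succ_\text{maj}) \lhd b \lhd a$, observes that the strict majority of voters whose peaks lie at or to the left of $\top(\succ_\text{maj})$ all rank $b$ above $a$ by single-peakedness of their own votes, whence $b \succ_{\text{maj}} a$. You instead work with the no-valley characterization (\Cref{prop:sp-equiv}~(3)) and derive a contradiction from the fact that two strict majorities must share a voter --- the same device used in the second proof of \Cref{cor:sp-transitive}. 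Your route is arguably more elementary: it never invokes the Median Voter Theorem (\Cref{prop:median-voter-theorem}), only \Cref{cor:sp-transitive} (which you correctly flag as necessary so that $\succ_{\text{maj}}$ is a linear order and \Cref{prop:sp-equiv} applies to it). The paper's route, in exchange, exhibits the peak of the majority relation explicitly and mirrors the structure of the original definition. Both arguments are complete and of comparable length.
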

\begin{proof}
	One can prove this using the techniques of \citet{puppe2015condorcet}, i.e., by arguing that 
	the single-peaked domain is a \emph{maximal} Condorcet domain, and then appealing to their 
	Lemma~2.1. Here is a direct proof, which is due to \citet[Lemma 11.4]{mou:b:axioms}. 
	By \Cref{prop:median-voter-theorem} the majority relation has a unique peak 
	$\top(\succ_\text{maj})$, namely, the Condorcet winner. Now consider two alternatives 
	$a,b\in A$ with $\top(\succ_\text{maj}) \lhd b \lhd a$ 
	(the case $a \lhd b \lhd \top(\succ_\text{maj})$ is similar). 
	To establish that $\succ_\text{maj}$ is single-peaked on $\lhd$, 
	we need to show that $b \succ_{\text{maj}} a$. Now, since the 
	Condorcet winner is the peak of the median voter, a strict majority $N' \subseteq N$ of voters 
	have their peak located at or to the left of $\top(\succ_\text{maj})$. 
	Since all the voters in $N'$ have single-peaked preferences, 
	we then must have $b \succ_i a$ for all $i \in N'$. Since $N'$ forms 
	a strict majority, we have $b \succ_{\text{maj}} a$, as required.
\end{proof}
\Cref{prop:sp-maj-relation-is-sp} implies that some rank aggregation rules like Kemeny's rule always output a single-peaked ranking. \citet{bredereck2022opinionupdates} study the class of rank aggregation rules that preserve single-peakedness, and note their applications in an opinion diffusion model.

In addition to the majority relation, one can also consider a weighted version, namely the collection of \emph{majority margins}; the majority margin of $a$ over $b$ is defined to be $m_{ab} = |i \in N : a \succ_i b| - |i \in N : b \succ_i a|$. Thus, $m \succ_{\text{maj}} b$ if and only if $m_{ab} > 0$. \citet{smeulders2014non} and \citet{spanjaard2016netsinglepeaked} have characterized collections of majority margins that can be induced by a single-peaked preference profile and have given efficient algorithms for recognizing such collections.

\paragraph{Strategyproof Social Choice}
Consider the domain of single-peaked profiles with an odd number of voters. 
For odd $n$, every single-peaked profile admits a Condorcet winner.  
Hence, by \Cref{prop:condorcet-sp}, 
a voting rule that is defined on this domain and 
returns the Condorcet winner of the input profile is strategyproof.
Note that in this case the Condorcet winner is the top choice of the median voter.

\begin{examplebox}[breakable]
	{The median rule is strategyproof.}
	{sp-median-sp}
	It is instructive to see why the median voter rule is strategyproof.
	Consider a single-peaked profile with an odd number of voters. As an example, 
	consider the following distribution of voter peaks.
	\begin{center}
	\begin{tikzpicture}[yscale=0.65,xscale=1.5]
	
	\def\xmin{1}
	\def\xmax{5}
	\def\ymin{0}
	\def\ymax{5}

	\draw[->] (\xmin -0.5,\ymin) -- (\xmax+0.5,\ymin) node[right] {};
	\foreach \x/\xtext/\vtext in {1/a/{v_1,v_2}, 2/b/{v_3}, 3/c/{v_4}, 4/d/{v_5,v_6}, 5/e/v_7}
	\draw[shift={(\x,\ymin)}] (0pt,2pt) -- (0pt,-2pt) node[below] {$\strut\xtext$} node[above] {$\strut\vtext$};
	\foreach \x/\xtext in {1, 2,3,4}
	\node[below] at (\x+0.5,\ymin) {$\strut\lhd$};  
	
	\node (v3) at (2,0.6) {};
	\node (v1) at (1.1,0.9) {};
	\node (v4) at (2.9,0.9) {};
	\draw[->, dashed] (v3) edge [bend left=60] (v4) edge [bend right=60] (v1);
	\end{tikzpicture}
	\end{center}
	The median voter is voter~4, and so the median voter rule returns $c$. Of course, voter~4 
        is not interested in manipulating, since her favorite alternative is elected. 
	Consider any other voter, say voter~3, whose truthful peak is $b$. 
	If voter~3 were to report a peak further to the left (such as $a$), 
	then this would not affect the position of the median voter, 
	so this is not a successful manipulation. If voter~3 were to report a peak further 
	to the right (such as $c$ or $d$), then either the median would not change, or it 
	would move further to the right. 
	Since the preferences of voter~3 are single-peaked, the latter change 
	would lead to a worse outcome for her. Hence voter~3 cannot manipulate, 
	and a similar argument shows that this is the case for all other voters as well. 	
\end{examplebox}

However, on this smaller domain, there may be additional strategyproof voting rules. In this 
context, instead of considering the domain of all profiles that are single-peaked, it is 
arguably more natural to fix an axis $\lhd$ over $A$, and take the domain of profiles 
single-peaked with respect to $\lhd$. We can interpret this setting as having a commonly-known 
structure of the alternative space; potential manipulators must submit preferences that conform 
with this structure. (If we considered voting rules defined for single-peaked profiles without 
a fixed axis, manipulations could change the axis.)
In this model, there are other examples of strategyproof rules: for instance, the rule 
returning the leftmost reported peak is strategyproof (the argument is the same
as for the median voter rule in the example above), as is any other order statistic.
\citet{moulin1980strategy} gave a characterization of strategyproof voting rules defined for 
profiles single-peaked with respect to a fixed axis $\lhd$. Note that his result does
not require the number of voters $n$ to be odd.

\begin{theorem}[\citealp{moulin1980strategy}]
	\label{thm:moulin-phantoms}
	Fix a set of voters $N$ and a set of alternatives $A$, 
	and let $\mathcal D$ be the set of profiles 
	single-peaked with respect to the axis $\lhd$. Then $f : \mathcal D \to A$ is anonymous, 
	Pareto-optimal, and strategyproof if and only if there exist alternatives 
	$\alpha_1, \dots, \alpha_{n-1} \in A$ such that for all profiles 
	$P \in \mathcal D$, we have
	\[ f(P) = \textup{median}_\lhd(\top(v_1), \dots, \top(v_n), \alpha_1, \dots, \alpha_{n-1}). \]
\end{theorem}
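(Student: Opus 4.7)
The plan is to prove each direction separately, with the ``only if'' direction being substantially harder.

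\emph{Easy direction.} Given phantoms $\alpha_1,\dots,\alpha_{n-1}$, anonymity is immediate since the median is symmetric in its arguments. For Pareto-optimality, observe that the median of these $2n-1$ points must lie in the $\lhd$-interval spanned by the $n$ voter peaks (otherwise more than $n-1$ of the combined points would lie strictly on one side of it), and every alternative in this interval is Pareto-optimal by single-peakedness. For strategyproofness, let $c = f(P)$, fix a voter $i$ with true peak $p_i$, and assume $p_i \unlhd c$ (the other case is symmetric). If voter $i$ deviates to $v_i'$ with peak $p_i'$, then either $p_i' \unlhd c$, in which case the median is unchanged, or $c \lhd p_i'$, in which case the new median $c'$ satisfies $c \unlhd c'$; in the latter case, single-peakedness of $\succ_i$ with peak $p_i \unlhd c \unlhd c'$ rules out $c' \succ_i c$, so the deviation is not profitable.

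\emph{Hard direction, preliminaries.} I would first prove the \emph{tops-only} property: $f(P) = f(P')$ whenever $P$ and $P'$ have the same peak vector. Changing one voter's ballot at a time and applying strategyproofness in both directions rules out any change in the outcome, since both ballots share the same top alternative. This reduces $f$ to a function $g$ of the peak vector alone. Next I would establish \emph{peak-monotonicity}: if only voter $i$'s peak moves $\lhd$-rightward, the output moves weakly $\lhd$-rightward. This again follows from strategyproofness, applied to carefully chosen single-peaked completions that make the relevant comparisons binding.

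\emph{Constructing the phantoms.} Let $L$ and $R$ denote the $\lhd$-extremes of $A$. For $k \in \{0,1,\dots,n\}$, let $\beta_k$ be $g$ applied to the peak vector with $k$ copies of $L$ and $n-k$ copies of $R$; by Pareto-optimality $\beta_0 = R$ and $\beta_n = L$, and by peak-monotonicity $\beta_k$ is $\lhd$-weakly decreasing in $k$. Set $\alpha_k := \beta_k$ for $k \in \{1,\dots,n-1\}$. A direct sorting check confirms that the median formula reproduces $\beta_k$ on the extreme profiles. The main obstacle is to extend the formula to \emph{all} peak vectors. My plan is to move voters' peaks one at a time from $L$ or $R$ to their true positions, using peak-monotonicity at each step to bound how the output can move, and a counting argument comparing the number of true peaks and phantoms on each side of the candidate median to deduce both inequalities $\unlhd$ and $\unrhd$ separately.
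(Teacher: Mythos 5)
The paper does not prove this theorem---it is stated as a cited result of \citet{moulin1980strategy}---so your proposal can only be measured against the known arguments. Your easy direction is essentially correct (modulo the harmless edge case $p_i = c$, where the median can shift if voter $i$ moves her peak off $c$, but such a voter has no incentive to deviate anyway), and your overall architecture for the converse (tops-only $\to$ peak-monotonicity $\to$ phantoms from the extremist profiles $\to$ extension by induction) is the standard and correct one.

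There is, however, one genuine gap: the tops-only step does not follow from the two-line argument you give. If $v_i$ and $v_i'$ are distinct single-peaked ballots with the same peak $p$, and $a = f(v_i, v_{-i})$, $b = f(v_i', v_{-i})$, then strategyproofness in both directions yields only $a \succeq_i b$ and $b \succeq_i' a$. These two conditions are perfectly compatible with $a \neq b$ whenever $v_i$ and $v_i'$ disagree on the pair $\{a,b\}$---which can happen precisely when $a$ and $b$ lie on opposite sides of the common peak $p$, since two single-peaked orders with the same peak are only forced to agree on pairs lying on the same side of $p$. This is why the tops-only property is a theorem in its own right (due to \citealp{barbera1994characterization} and \citealp{weymark2011unified}); indeed, the survey's own discussion immediately after the theorem statement points out that Moulin \emph{assumed} tops-onlyness and that removing the assumption required this later work. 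You either need to assume tops-onlyness (as Moulin did) or supply a substantially more involved argument.

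A secondary, smaller point: for the final extension from the extremist profiles to arbitrary peak vectors, plain peak-monotonicity is not quite enough. The lemma that actually drives the induction is \emph{uncompromisingness}: if $p_i$ and the outcome $c$ satisfy $p_i \lhd c$, then replacing $p_i$ by any $p_i'$ with $p_i' \unlhd c$ leaves the outcome unchanged (and symmetrically on the other side). This does follow from strategyproofness plus tops-onlyness plus your monotonicity lemma, and once you have it the walk from each extremist profile to the true peak vector goes through cleanly; but as written your ``counting argument'' plan leaves precisely this step unproved.
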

The social choice functions identified in this characterization are called \emph{generalized 
median rules}. These rules take a median of the voters' reported top choices together with 
$n-1$ fixed values. The fixed values $\alpha_1, \dots, \alpha_{n-1}$ are often interpreted as 
the reported top choices of $n-1$ ``phantom voters''. As 
Thomson~(\citeyear[p.~78]{thomson2018terminology}) suggests, one can also interpret this result as 
follows: for each ``extremist'' profile (in which $n-i$ voters report the leftmost and $i$ 
voters report the rightmost alternative of $\lhd$) we can choose an arbitrary output 
alternative $\alpha_i$ so that $\alpha_1 \lhd \cdots \lhd \alpha_{n-1}$; then there is a 
unique extension of these choices to the full domain $\mathcal D$ that is strategyproof.

\citet{moulin1980strategy} proves his result using the additional assumption of $f$ being 
``tops-only'' (so that $f$ only depends on voters' top-ranked alternatives). Later work has 
shown that any strategyproof and onto rule on $\mathcal D$ must be tops-only 
\citep{barbera1994characterization,weymark2011unified}, so this assumption can be dropped.

\Cref{thm:moulin-phantoms} continues to hold when the set of alternatives is 
$\mathbb R$, or any subset of $\mathbb R$ \citep{weymark2011unified}. It also holds when replacing 
strategyproofness by group strategyproofness (i.e., resistance to manipulation
by groups of voters), since these properties are equivalent on the 
single-peaked domain \citep{moulin1980strategy}. An analog of 
\Cref{thm:moulin-phantoms} holds when weakening Pareto optimality to the tops-only condition, in which 
case the class of anonymous and strategyproof rules consists of the generalized median rules 
with $n + 1$ rather than $n-1$ phantom voters \citep{moulin1980strategy}. The class of 
strategyproof rules without the anonymity requirement has also been characterized, though the 
description of this class involves $2^n$ parameters and is more complicated 
\citep{moulin1980strategy,weymark2011unified}. 

Within the class of generalized median rules, the most commonly considered ones are the order statistics (e.g., left-most peak, median peak) which can be obtained by having all the phantoms at extreme positions (only the left-most or right-most alternative). But other mechanisms in this class are also of interest. For example if the set of alternatives is the interval $[0,1]$, and we place phantom voters at $\alpha_1 = \frac1n, \alpha_2 = \frac2n, \dots, \alpha_{n-1} = \frac{n-1}{n}$, we obtain the \emph{uniform phantom mechanism} (also known as the \emph{linear median}), which approximates the rule that selects the average of the peaks \citep{CPS16b,freeman2021truthful,caragiannis2022truthful,jennings2023new}.

\paragraph{Counting}
It is easy to see that a single-peaked profile could be single-peaked with respect 
to several different axes. 
Indeed, \emph{every} single-peaked profile will be single-peaked with respect to 
at least two different axes, since reversing the axis preserves single-peakedness.
If we consider a profile consisting of just a single preference order $\succ$, we can see that 
this profile is single-peaked on $2^{m-1}$ different axes. Indeed, we can 
start building a partial axis for $\succ$
by placing the peak of $\succ$ on the line, and then process the rest of the vote 
from top to bottom. For each alternative, we can choose whether to put it 
to the left or to the right of the alternatives placed so far. 
For example, the profile $P = (a \succ b \succ c)$ is single-peaked with respect 
to the axes $a\lhd b\lhd c$, $b\lhd a\lhd c$, $c\lhd a\lhd b$, and $c\lhd b\lhd a$. 
On the other hand, a profile containing two reverse preference orders is only single-peaked 
on precisely two different axes (which coincide with these two orders).
We will study the collection of all axes that make a given profile single-peaked 
in more detail in \Cref{sec:recog:sp}, where we will see that the size of this collection 
is always a power of 2, and that any two axes can be obtained from each other by 
(repeatedly) reversing certain intervals of the axis. 

A converse counting problem fixes an axis $\lhd$ and asks how many preference orders 
are single-peaked on this axis. 
For this case, too, the answer is $2^{m-1}$: this is easy to see by induction on $m$, 
observing that for a preference order that is single-peaked 
on a fixed axis~$\lhd$, there are two choices available for the $m$-th position, 
namely, the two outermost alternatives of~$\lhd$.

\paragraph{Sampling}
Given a fixed axis $\lhd$, how can we randomly sample one of the $2^{m-1}$ single-peaked preferences on $\lhd$ uniformly at random? \citet{walsh2015generating} gives an algorithm that solves this problem. It works from the outside in: note that the last-ranked alternative in any single-peaked preference is either the left-most or the right-most alternative. Thus, the algorithm flips a coin to decide the last-ranked alternative. Say it is the left-most. Note that the second-to-last alternative in the preference ranking must now be either the second-left-most alternative or the right-most alternative. Again, the algorithm flips a coin to decide which of these two it is. The algorithm continues in the same fashion.

An earlier sampling algorithm of \citet{conitzer2009eliciting} can also be used to sample single-peaked preferences, but it doesn't sample them uniformly at random. Conitzer's algorithm works by first choosing one of the alternatives uniformly at random to be the most-preferred alternative (i.e., the peak). It then repeatedly decides by a coin flip whether the next-most-preferred alternative will be immediately to the left or immediately to the right of the set of alternatives the algorithm has already ranked. This algorithm oversamples rankings whose peaks are at the ends of the axis \citep{walsh2015generating}. For example, only 1 out of $2^{m-1}$ rankings has the left-most alternative as its peak, but Conitzer's algorithm gives this ranking a probability of $1/m$.

Both sampling algorithms are frequently used in numerical experiments within computational social choice \citep{boehmer2024guide}, since they tend to give qualitatively different results. Interestingly, Conitzer's algorithm produces elections that are similar to those obtained when sampling one-dimensional Euclidean preference profiles, with alternatives and voters placed on positions in the interval $[0,1]$ uniformly at random.

\paragraph{Further Properties}
\citet{puppe2016single} has characterized the domain of all preferences 
that are single-peaked on a fixed axis $\lhd$ among 
all (Cartesian) domains that guarantee a transitive majority relation: 
namely, this domain is the only one that is 
minimally rich (every alternative appears in a top position in some order), connected 
(with respect to the natural betweenness relation), 
and contains two completely reversed preference orders.

Finally, observe that the domain of single-peaked profiles is hereditary: closure under 
voter deletion is immediate from the definition, and closure under alternative deletion is 
clear from the no-valley condition.

\paragraph{Single-Caved Preferences}
\begin{wrapfigure}[7]{r}{0.32\linewidth}
	\vspace{1pt}
	\scalebox{0.8}{
		\begin{tikzpicture}[yscale=0.6,xscale=0.9]
		\def\xmin{1}
		\def\xmax{7}
		\def\ymin{0}
		\def\ymax{6}
		\draw[step=1cm,black!10,very thin] (\xmin,\ymin) grid (\xmax,\ymax);
		\draw[thick, ->] (\xmin -0.3,\ymin) -- (\xmax+0.3,\ymin) node[right] {};
		\foreach \x/\y in {4/1,5/2,3/4, 6/3, 2/5, 7/6, 1/7}
		\node[fill=blue, circle, inner sep=0.6mm] at (\x,\y-1) {};
		\draw[thick,blue] (1,6)--(2,4)--(3,3)--(4,0) -- (5,1)--(6,2)--(7,5);
		\foreach \x/\y in {1/6,2/4,3/1, 4/2, 5/3, 6/5, 7/7}
		\node[fill=red!50!black, circle, inner sep=0.6mm] at (\x,\y-1) {};
		\draw[thick,red!50!black] (1,5)--(2,3)--(3,0)--(4,1) -- (5,2)--(6,4)--(7,6);
		\end{tikzpicture}
	}
\end{wrapfigure}

A related domain restriction is the domain of \emph{single-caved} preferences, also known as 
\emph{single-dipped} preferences, which is obtained by reversing single-peakedness. 
Thus, a profile is single-caved if there exists an axis $\lhd$ such that every voter has a 
least-preferred alternative, and their preferences increase as we move away from this minimum. 
As an example, such preferences could arise in situations where we need to decide on the 
location of a public bad, such as a polluting factory, and people would like this facility to 
be placed as far away from their own location as possible.

Recall that if 
$R$ is a binary relation, then the relation $R'$ is the \emph{reverse} of $R$ if $(a,b) \in R 
\Leftrightarrow (b,a) \in R'$. The reverse of a profile $P$ is obtained by taking the reverse 
of every vote in $P$.

\begin{definition}\label{def:single-caved}
	A profile $P$ of linear orders is \defemph{single-caved} if the reverse of $P$ is single-peaked.
\end{definition}

The class of strategyproof voting rules on this domain is much smaller than in the case of 
single-peaked preferences: all such rules have a range of size at most two, and select only 
between the left-most and the right-most alternative 
\citep{manjunath2014efficient,barbera2012domains}.

\subsection{Single-Crossing Preferences}
\label{sec:def:sc}

In the previous section, we have considered a notion of structure that was imposed on the 
alternative set: we assumed that the alternative space is one-dimensional. Another natural idea 
is to require the set of \emph{voters} to be one-dimensional. This approach gives rise to the 
notion of \emph{single-crossing preferences}, which we will study in this section.

We defined a preference profile $P = (\succ_1, \dots, \succ_n)$ as an ordered list of 
votes. This already suggests a one-dimensional structure on the set $[n]$ of 
voters. What should it mean for the voters' preferences to respect this structure? Suppose that 
the leftmost voter and the rightmost voter disagree on the order of some alternatives, so that 
$a \succ_1 b$, but $b\succ_n a$ for some $a,b\in A$. Then we expect that voters who ``tend 
left'' agree with voter $1$, and that voters who ``tend right'' agree with voter $n$. Thus, we 
require that there is a voter $i\in[n]$ such that $a \succ_1 b, \dots, a\succ_i b$, and $b 
\succ_{i+1} a, \dots, b \succ_n a$. Thus, the preferences over the pair $\{a,b\}$ cross only 
once as we scan the profile from left to right. We will allow the cross-over voter $i$ to be 
different for different alternative pairs. Formally, a profile is single-crossing if it 
satisfies the following condition.

\begin{definition}
A profile $P = (v_1 ,\ldots, v_n)$ over $A$ is \defemph{single-crossing with respect to
the given ordering} if for every pair of alternatives
$a,b \in A$ both sets $\{i\in[n]:a\succ_i b\}$ and $\{i\in[n]:b\succ_i a\}$ are 
(possibly empty) intervals of $[n]$.
A profile $P = (v_1, \ldots, v_n)$ over $A$ is \defemph{single-crossing} 
if the votes in $P$ can be permuted so that the permuted profile is single-crossing 
with respect to the given ordering.
\end{definition}

\begin{examplebox}
	{A single-crossing profile}
	{sc-trajectories}
	\begin{minipage}{0.22\linewidth}
			\begin{tikzpicture}
		\matrix (m) [matrix of nodes] {
			\toprule
			$v_1$ & $v_2$ & $v_3$ & $v_4$ & $v_5$ \\
			\midrule
			$a$ & $b$ & $b$ & $d$ & $d$ \\
			$b$ & $a$ & $d$ & $b$ & $c$ \\
			$c$ & $d$ & $a$ & $c$ & $b$ \\
			$d$ & $c$ & $c$ & $a$ & $a$ \\				
			\bottomrule \\
		};
		
		\begin{tikzbackground}

		\draw[line width=7pt, red!50, draw opacity=0.5, transform canvas={yshift=1mm}] 
		(m-2-1.west) -- (m-2-1.center) -- (m-3-2.center) -- (m-4-3.center) -- (m-5-4.center) -- (m-5-5.east);

		\draw[line width=7pt, green!30, draw opacity=0.5, transform canvas={yshift=1mm}] 
		(m-3-1.west) -- (m-3-1.center) -- (m-2-2.center) -- (m-2-3.center) -- (m-3-4.center) -- (m-4-5.center) -- (m-4-5.east);

		\draw[line width=7pt, blue!40, draw opacity=0.5, transform canvas={yshift=1mm}] 
		(m-4-1.west) -- (m-4-1.center) -- (m-5-2.center) -- (m-5-3.center) -- (m-4-4.center) -- (m-3-5.center) -- (m-3-5.east);

		\draw[line width=7pt, black!20, draw opacity=0.5, transform canvas={yshift=1mm}] 
		(m-5-1.west) -- (m-5-1.center) -- (m-4-2.center) -- (m-3-3.center) -- (m-2-4.center) -- (m-2-5.center) -- (m-2-5.east);
		\end{tikzbackground}
		\end{tikzpicture}
	\end{minipage}
	\hfill
	\begin{minipage}{0.74\linewidth}
		When a profile is single-crossing with respect to the given ordering, 
		this admits an attractive visualization. For each alternative $a\in A$, 
		we draw a ``trajectory'' through the positions in the profile in which 
		$a$ appears. If the profile is single-crossing with respect to the given ordering, 
		any two trajectories will cross at most once.
	\end{minipage}
\end{examplebox}

In which situations can we expect to observe single-crossing preferences? Within economics, 
such preferences arise in models of income taxation under common assumptions
\citep{mirrlees1971exploration,roberts1977voting,rothstein1990order}. 
Specifically, if voters are ordered by 
increasing income, and some voter prefers a higher tax rate to a lower tax rate, then it stands 
to reason that all lower-income voters would agree that the higher rate is preferable to the 
lower rate (e.g., because those voters would obtain higher benefits under a redistributive 
regime). In many other contexts, there is a natural one-dimensional ordering of voters induced 
by a parameter $\theta$ (e.g., in terms of income, productivity, a discount factor, years of 
education, etc.); if the utility of the alternatives exhibits increasing differences in $\theta$, 
the resulting profile will be single-crossing. The foregoing discussion is based on the 
exposition of \citet{saporiti2009strategy}, who provides references to several models in which 
single-crossing preferences appear.

\begin{proposition}
	The domain of single-crossing preferences is hereditary, that is, closed under deleting voters and alternatives.
\end{proposition}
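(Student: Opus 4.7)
The plan is to show that the witness ordering of the original profile can be reused after either deletion, so the restricted profile satisfies the definition with essentially the same witness. I will handle the two closure properties separately and then combine them.

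First I would treat alternative deletion. Suppose $P = (v_1, \ldots, v_n)$ is single-crossing, and (after reordering) assume it is single-crossing with respect to the given ordering. Let $A' \subseteq A$ and consider the restricted profile $P|_{A'} = (v_1|_{A'}, \ldots, v_n|_{A'})$. For any pair $a, b \in A'$, the relative order of $a$ and $b$ in $v_i|_{A'}$ coincides with that in $v_i$, so the sets $\{i \in [n] : a \succ_i b\}$ and $\{i \in [n] : b \succ_i a\}$ computed in $P|_{A'}$ are literally the same sets as in $P$, and hence intervals of $[n]$. Therefore $P|_{A'}$ is single-crossing with respect to the same ordering.

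Next I would treat voter deletion. Let $N' = \{i_1 < i_2 < \cdots < i_{n'}\} \subseteq [n]$ be the surviving voters, re-indexed $1, \ldots, n'$ in their original relative order, yielding $P' = (v_{i_1}, \ldots, v_{i_{n'}})$. For any pair $a, b \in A$, the set $\{i \in [n] : a \succ_i b\}$ is an interval $[\ell, r] \subseteq [n]$ by assumption. The corresponding set in $P'$ is $\{j \in [n'] : a \succ_{i_j} b\} = \{j \in [n'] : \ell \le i_j \le r\}$, and since the map $j \mapsto i_j$ is strictly increasing, the preimage of an interval is an interval of $[n']$. The same argument applies to $\{i : b \succ_i a\}$, so $P'$ is single-crossing with respect to the induced ordering.

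Finally, for a general subprofile $P''$ obtained from $P$ by deleting some voters and some alternatives, I would just apply the two steps in sequence: first delete the alternatives (preserving the single-crossing property by the first paragraph), then delete the voters (preserving it by the second paragraph). There is no real obstacle here; the only subtlety is being careful that ``single-crossing'' is defined up to reordering voters, so one must fix a witness ordering on $P$ at the outset and then argue that the induced order on the surviving voters is a valid witness for the restricted profile.
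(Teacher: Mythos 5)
Your proof is correct and complete. The paper states this proposition without proof, treating it as immediate from the definition; your argument (restriction to a subset of alternatives leaves the voter sets $\{i : a\succ_i b\}$ literally unchanged, and the preimage of an interval under the strictly increasing re-indexing of surviving voters is again an interval) is exactly the routine verification the paper leaves implicit, including the correct care taken in fixing a witness ordering up front.
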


\paragraph{Majority Relation}
As we have seen, single-peaked profiles with an odd number of voters always have a transitive 
majority relation, and the Median Voter Theorem holds. This has been a major reason why social 
choice theorists have studied this domain restriction. Single-crossing profiles enjoy the same 
guarantee: for an odd number of voters, their majority relation is transitive. 
Moreover, in contrast to single-peaked profiles, 
single-crossing profiles with an odd number of voters 
always have a representative voter, i.e., a voter 
whose preference relation is identical to the majority relation. 

\begin{proposition}[Representative Voter Theorem, \citealp{rothstein1991representative}]
	\label{prop:sc-rvt}
	Suppose $P$ is single-crossing. If the number of voters is odd, $n = 2k-1$, 
	then the preference order of the median voter $k$ coincides with the majority relation. 
	Thus, the majority relation is transitive.
\end{proposition}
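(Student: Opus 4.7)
The plan is to show that for every pair of alternatives $\{a,b\}$, the majority judgment on this pair agrees with the median voter's judgment. Once this is established, transitivity of the majority relation follows immediately from transitivity of the median voter's preference order.

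After reordering the voters if necessary, I may assume that $P = (v_1, \dots, v_n)$ is single-crossing with respect to the given ordering. Fix an arbitrary pair $\{a,b\} \subseteq A$. By the single-crossing property, each of the two sets $I_{ab} = \{i \in [n] : a \succ_i b\}$ and $I_{ba} = \{i \in [n] : b \succ_i a\}$ is an interval of $[n]$. Since every voter has a linear order, these two intervals partition $[n]$, so one of them is a prefix $[1, j]$ and the other is the complementary suffix $[j+1, n]$, for some $j \in \{0, 1, \dots, n\}$.

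Now consider the median voter $k = \lceil (n+1)/2 \rceil$; since $n = 2k-1$, we have $k = (n+1)/2$. Exactly one of the two intervals contains $k$. If $k$ lies in a prefix $[1, j]$, then $j \ge k$, so that prefix has size at least $k = (n+1)/2$, which is a strict majority. Symmetrically, if $k$ lies in a suffix $[j+1, n]$, then $j+1 \le k$, so that suffix has size $n - j \ge n - (k-1) = k$, again a strict majority. Either way, the interval containing voter $k$ forms a strict majority, so the pairwise comparison of $a$ and $b$ under $\succ_{\text{maj}}$ matches voter $k$'s comparison.

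Since this holds for every pair, $\succ_{\text{maj}}$ coincides with $\succ_k$, which is a linear order and hence transitive. The only real obstacle is the simple counting argument showing that the interval containing the median must be at least half the profile; once the single-crossing structure is translated into a prefix/suffix split, this is essentially immediate.
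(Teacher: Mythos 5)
Your proof is correct and follows essentially the same route as the paper's: both arguments reduce to observing that, for each pair $\{a,b\}$, the single-crossing property forces the set of voters agreeing with the median voter $k$ to be an interval containing $k$ and one endpoint of $[n]$, hence of size at least $k>n/2$. Your prefix/suffix phrasing is a slightly cleaner packaging of the same counting argument.
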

\begin{proof}
	We will argue that the majority relation agrees with the preferences of the $k$-th voter
	on every pair of alternatives.
	Let $a,b\in A$, and suppose $a \succ_k b$. By the single-crossing property, 
	it cannot be the case that both $b\succ_1 a$ and $b\succ_n a$; assume without loss
	of generality that $a\succ_1 b$. Then, by applying the single-crossing property
	again, we conclude that voters $2, \dots, k-1$ also prefer $a$
	to $b$, i.e., there are at least $k>n/2$ voters who rank $a$ above $b$.
\end{proof}
In particular, \Cref{prop:sc-rvt} implies that in a single-crossing profile 
with an odd number of voters the top alternative of the median voter is a strong
Condorcet winner.

When the number of voters is even, $n = 2k$, there are two median voters, i.e., $k$ and $k+1$, 
and the majority relation is the intersection of the relations $v_k$ and $v_{k+1}$. 
Thus, $a \succ_{\text{maj}} b$ if and only if both $a \succ_k b$ and $a \succ_{k+1} b$. 
This implies that the strict part of the majority relation is transitive, 
and hence the majority relation is quasi-transitive, though it may fail to be transitive 
(e.g., the profile in \Cref{ex:sp-quasi-transitive}, which was used to show 
that the weak majority relation of a single-peaked profile may fail to be transitive, 
is both single-peaked and single-crossing). 
Also, an argument similar to the proof of \Cref{prop:sc-rvt} shows that the top choices of voters 
$k$ and $k+1$ are weak Condorcet winners (and if these voters rank the same candidate first, 
then this candidate is a strong Condorcet winner).

\paragraph{Strategyproof Social Choice} 
The definition of a `strategyproof voting rule on the 
single-crossing domain' is somewhat subtle. Which rankings do we allow as admissible 
manipulative votes? Indeed, if we assume that the ordering of voters is given externally
(i.e., the voters are ordered by an observable parameter, such as their age, and cannot
change this parameter), then it is natural to require the manipulator's vote 
to be consistent with her position in the voter ordering. Under this assumption, if the number
of voters is odd, then the median voter rule is trivially strategyproof: the median
voter has no incentive to manipulate, and no other voter can change who the median voter is
(and hence voters cannot change the election outcome).

A more permissive approach is to allow the manipulator to submit a vote that is not necessarily
consistent with her original position; however, we still require 
the resulting profile to remain single-crossing (i.e., the profile should
become single-crossing with respect to the given ordering once the manipulator moves 
to an appropriate position in the voter ordering).
The median voter rule (for $n$ odd) remains strategyproof in this case: while a voter can now change
who the median voter is, no such change can be beneficial to her; the argument
is similar to the one for single-peaked preferences. To extend the analysis 
beyond the median voter rule, \citet{saporiti2009strategy} considers the following model.
Let $P$ be a maximal single-crossing 
profile, i.e., if we add to $P$ a linear order that is not present in $P$, then $P$ stops 
being single-crossing. Let $S$ be the set of linear orders present in $P$. Then we consider 
voting rules defined on the domain $\mathcal D = S^n$ of $n$-voter profiles where each voter's 
preference is taken from $S$. Further, let $T \subseteq A$ be the set of alternatives that are 
ranked in the top position by at least one order in $S$. 
\citet{saporiti2009strategy} proves that a voting rule 
$f$ is anonymous, unanimous, and strategyproof if and only if $f$ is a generalized median rule 
(as defined in the statement of \Cref{thm:moulin-phantoms}), where each phantom voter 
$\alpha_i$ votes for an alternative in $T$.

\paragraph{Counting}

In the case of single-peaked preferences, we have seen that a profile can be single-peaked with 
respect to exponentially many different axes. In contrast, a profile can only be 
single-crossing with respect to at most two essentially different orderings of the voters.

\begin{proposition}[see, e.g., \citealp{elkind2014recognizing}]
	\label{prop:sc-order-unique}
	Suppose $P = (v_1,\dots,v_n)$ is single-crossing with respect to the given ordering, 
	and all linear orders appearing in $P$ are pairwise distinct. Then the ordering of voters 
	making $P$ single-crossing is unique up to reversal. 
\end{proposition}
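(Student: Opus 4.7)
The plan is to isolate an intrinsic property of the profile that pins down the endpoints of any single-crossing ordering, and then finish by induction on $n$. Define a ternary \emph{betweenness} relation on votes by declaring $[u, v, w]$ to hold if for every pair $\{a, b\}$ of alternatives on which $u$ and $w$ share the same opinion, $v$ shares that opinion as well. The single-crossing condition gives an immediate ``geometric'' consequence: if $P$ is single-crossing with respect to $v_1, \dots, v_n$ and $i < k < j$, then $[v_i, v_k, v_j]$, because the set of voters agreeing with $v_i$ and $v_j$ on any pair is an interval of $[n]$ containing both $i$ and $j$, hence also $k$.

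I would then use this to characterize the two endpoints of any single-crossing ordering purely in terms of $P$. Every interior vote $v_k$ (with $1 < k < n$) satisfies $[v_1, v_k, v_n]$ and so lies strictly between two other distinct votes of $P$. Conversely, an endpoint such as $v_1$ does not lie between any two distinct other votes $u, w$: assume that under the ordering $\pi$ the vote $u$ precedes $w$. Since all votes are distinct, $v_1 \ne u$, so they disagree on some pair $\{a, b\}$; the interval of voters sharing $v_1$'s opinion on $\{a, b\}$ is a prefix $\{1, \dots, \ell\}$ with $\ell$ strictly smaller than the position of $u$, hence also strictly smaller than the position of $w$. Thus both $u$ and $w$ share the opposite opinion to $v_1$ on $\{a, b\}$, contradicting $[u, v_1, w]$. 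So the endpoints of any single-crossing ordering of $P$ are exactly the votes that do not lie between any two other distinct votes of $P$, and therefore any two single-crossing orderings $\pi, \pi'$ must share the same unordered pair of endpoints.

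After possibly reversing $\pi'$, we may assume both orderings place the same vote $v_1$ first and the same vote $v_n$ last. The proof concludes by induction on $n$: the subprofile $(v_2, \dots, v_n)$ is single-crossing under both $\pi$ and $\pi'$ restricted to it and still has all distinct votes, so by the inductive hypothesis its single-crossing ordering is unique up to reversal; since $v_n$ sits at the last position in both restrictions, the reversed option is excluded and $\pi = \pi'$. The base cases $n \le 2$ are immediate. The main obstacle is the endpoint characterization in the second paragraph: this is the only place where the distinctness of votes is genuinely used, since without it an ``endpoint'' could vacuously lie between itself and another vote and the whole strategy would collapse.
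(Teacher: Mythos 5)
Your proof is correct. It shares the paper's overall skeleton---an induction on $n$ that identifies the extremal voters and then peels one off---but the lemma you use to pin down the endpoints is different. The paper argues locally: it picks a pair $\{a,b\}$ on which $v_1$ and $v_2$ disagree, notes that single-crossingness then forces every voter other than $v_1$ to rank $b$ above $a$, concludes that $v_1$ must sit at an end of any admissible ordering, and afterwards still has to dispose of the two ``crossed'' candidates $(v_2,\dots,v_n,v_1)$ and $(v_1,v_n,\dots,v_2)$ by exhibiting a second pair, one on which $v_2$ and $v_n$ disagree, that would cross twice. You instead introduce a betweenness relation depending only on the profile and show that, in any single-crossing ordering, the votes not lying between two others are exactly the two endpoints; this determines the unordered endpoint pair intrinsically, so after aligning both ends of the two orderings the inductive step closes with no residual case analysis. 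Your route is more conceptual and yields a cleaner induction (plus, as a by-product, an ordering-free characterization of the extreme voters); the paper's is more hands-on and explicitly produces a pair of alternatives that cross twice under any rival ordering, which is the kind of witness one wants for a certifying recognition algorithm. One small imprecision in your closing remark: distinctness is used not only in the endpoint characterization but also to ensure $v_1\neq v_n$ (so interior votes lie between two genuinely distinct votes) and $v_2\neq v_n$ (to exclude the reversed option when applying the inductive hypothesis); this does not affect correctness.
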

\begin{proof}
	Note first that if a profile $P=(v_1, \dots, v_n)$ is single-crossing with respect
	to the given ordering, then so is the reversed profile $(v_n, \dots, v_1)$. 

	We show that every other permutation of $P$ results in some pairs of alternatives
	crossing at least twice. We proceed by induction on $n$. 
	The case $n\le 2$ is trivial. Now, suppose $n\ge 3$.
	Since $v_1 \neq v_2$, there is a pair $a,b$ of alternatives such that $a \succ_1 b$, but $b \succ_2 a$.  
	As $P$ is single-crossing with respect to $(v_1,\dots,v_n)$, it follows that 
	$b\succ_j a$ for \emph{all} $j \neq 1$. Thus, voter~1 must be at one end of 
	any single-crossing ordering of $P$.
	
	Now, the profile $P' = (v_2,\dots,v_n)$ is also single-crossing with respect 
	to the given ordering. By the inductive hypothesis, the only other ordering making $P'$ 
	single-crossing is $(v_n,\dots,v_2)$. 
	Thus, we need to show that profiles $(v_2,\dots,v_n,v_1)$ and $(v_1,v_n,\dots,v_2)$
	are not single-crossing with respect to the given ordering. 
	Since $v_2 \neq v_n$, 
	there is a pair $c,d$ of alternatives such that $c \succ_2 d$, but $d \succ_n c$. 
	Since $P$ is single-crossing, we must have $c \succ_1 d$. But then both 
	in $(v_2, \dots, v_n, v_1)$ and in $(v_1, v_n, \dots, v_2)$ alternatives $c$ and $d$ 
	cross at least twice, as desired.
\end{proof}
If the preference orders in a profile are not pairwise distinct, then there are more than two
admissible orderings, because we can swap positions of identical votes. However, 
\Cref{prop:sc-order-unique} implies that this is the only freedom we have.

Another difference between single-peaked and single-crossing preferences is how `permissive' 
these preference restrictions are. Recall that a single-peaked profile can contain up to 
$2^{m-1}$ different preference orders. In contrast, a single-crossing profile 
contains at most quadratically many different votes.

\begin{proposition}\label{prop:sc-binom}
	A single-crossing profile contains at most $\binom{m}{2}+1$ distinct votes.
\end{proposition}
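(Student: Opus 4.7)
The plan is to bound the number of distinct votes by counting how often pairwise comparisons can flip as we scan the profile from left to right. Assume without loss of generality that $P = (v_1,\ldots,v_n)$ is single-crossing with respect to the given ordering.

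First I would introduce the notion of a \emph{crossing index} for a pair $\{a,b\}$: this is an index $i$ such that the relative order of $a$ and $b$ in $v_i$ differs from that in $v_{i+1}$. By the definition of single-crossing with respect to the given ordering, the set $\{j : a \succ_j b\}$ is an interval of $[n]$, and the same holds for $\{j : b \succ_j a\}$. This immediately implies that each pair $\{a,b\}$ has \emph{at most one} crossing index. Hence the total number of crossing indices, summed over all pairs $\{a,b\}$, is at most $\binom{m}{2}$.

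Next I would look at the sequence of distinct votes in the order in which they appear in $P$: say they are $w_1, w_2, \ldots, w_k$, where $w_{t+1}$ is the next vote in $P$ after $w_t$ that differs from $w_t$. For each transition from $w_t$ to $w_{t+1}$, there must be at least one pair $\{a,b\}$ whose order differs between them, and therefore at least one index in the original profile at which this pair crosses. Different transitions correspond to disjoint blocks of the profile and hence to different crossing indices (and since the vote itself changes strictly, the associated pair crossings are distinct events). Thus the number of transitions $k-1$ is bounded above by the total number of crossings, which is at most $\binom{m}{2}$. This yields $k \le \binom{m}{2}+1$, as required.

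The main obstacle is the careful bookkeeping in the second step: one must argue that each transition between consecutive distinct votes consumes at least one of the $\binom{m}{2}$ available pair-crossings, and that no pair-crossing is double-counted. This follows cleanly from the interval property once stated, so it is more of a notational nuisance than a genuine difficulty.
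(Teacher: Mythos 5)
Your proof is correct and follows essentially the same route as the paper's: both arguments charge each transition between consecutive distinct votes to a pair of alternatives that flips there, and use the interval property to conclude that each of the $\binom{m}{2}$ pairs can be charged at most once. The paper phrases this as labeling each of the first $n-1$ votes (in a profile of pairwise distinct votes) with a distinct pair, which is just a more compact version of your crossing-index bookkeeping.
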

\begin{proof}
	Consider a single-crossing profile $P = (v_1,\dots,v_n)$ in which all votes are 
	pairwise distinct, 
	and a vote $v_i$ in $P$. As $v_i$ is different from its successor,
	there must be a pair of alternatives $a, b$  with $a\succ_i b$ but $b\succ_{i+1} a$.
	By the single-crossing property, $v_i$ is the \emph{last} vote in $P$ to rank $a$ above $b$. 
	Thus, we can label each of the first $n-1$ votes in $P$ with a distinct pair of alternatives, 
	so $n-1 \le \binom{m}{2}$.
\end{proof}

\begin{examplebox}
	{A single-crossing profile with $\binom{m}{2}+1$ distinct votes.}
	{max-sc}
	
	\begin{wrapfigure}[8]{l}{0.57\linewidth}
		\vspace{-1pt}
		\begin{tikzpicture}
			\matrix (m) [matrix of math nodes, column sep=1ex] {
				\toprule
				v_1 & v_2 & v_3 & v_4 & v_5 & v_6 & v_7 & v_8 & v_9 & v_{10} & v_{11} \\
				\midrule
				a & b & b & b & b & c & c & c & d & d & e \\
				b & a & c & c & c & b & d & d & c & e & d \\
				c & c & a & d & d & d & b & e & e & c & c \\
				d & d & d & a & e & e & e & b & b& b& b \\
				e & e & e & e & a & a & a & a& a& a& a \\
				\bottomrule \\
			};
			
			\begin{tikzbackground}
				\draw[line width=7pt, red!30, draw opacity=0.5, transform canvas={yshift=1mm}] 
				(m-2-1.west) -- (m-2-1.center) -- (m-3-2.center) -- (m-4-3.center) -- (m-5-4.center) -- (m-6-5.center) -- (m-6-11.east);

				\draw[line width=7pt, green!30, draw opacity=0.5, transform canvas={yshift=1mm}] 
				(m-3-1.west) -- (m-3-1.center) -- (m-2-2.center) -- (m-2-5.center) -- (m-3-6.center) -- (m-4-7.center) -- (m-5-8.center) -- (m-5-11.east);

				\draw[line width=7pt, blue!40, draw opacity=0.5, transform canvas={yshift=1mm}] 
				(m-4-1.west) -- (m-4-1.center) -- (m-4-2.center) -- (m-3-3.center) -- (m-3-5.center) -- (m-2-6.center)  -- (m-2-8.center)  -- (m-3-9.center) -- (m-4-10.center) -- (m-4-11.east);

				\draw[line width=7pt, purple!40, draw opacity=0.5, transform canvas={yshift=1mm}] 
				(m-5-1.west) -- (m-5-1.center) -- (m-5-3.center) -- (m-4-4.center) -- (m-4-6.center) -- (m-3-7.center) -- (m-3-8.center) -- (m-2-9.center) -- (m-2-10.center) -- (m-3-11.center) -- (m-3-11.east);

				\draw[line width=7pt, black!20, draw opacity=0.5, transform canvas={yshift=1mm}] 
				(m-6-1.west) -- (m-6-4.center) -- (m-5-5.center) -- (m-5-7.center) -- (m-4-8.center) -- (m-4-9.center) -- (m-3-10.center) -- (m-2-11.center) -- (m-2-11.east);
			\end{tikzbackground}
		\end{tikzpicture}
	\end{wrapfigure}
		For each $m\ge 1$, there is a single-crossing profile with $\binom{m}{2} + 1$ different votes. An example with $m = 5$ is shown. To construct such profiles, label $v_1$ so that $x_1 \succ_1 \cdots \succ_1 x_m$. In the following $m-1$ votes, alternative $x_1$ is moved down one rank at a time until it is at rank $m$. In the next $m - 2$ votes, alternative $x_2$ is moved down until it is at rank $m - 1$, and so on. See also \citet{bredereck2013characterization}.
\end{examplebox}

\paragraph{Comparison to Single-Peakedness}
Intuitively speaking, it seems that when a profile is single-crossing, 
this has implications on the structure of the alternative set as well: 
Alternatives that are highly-ranked by voters on the left seem like ``left-wing'' alternatives, 
and similarly for alternatives highly-ranked by voters towards the right. 
As the following example shows, there are single-crossing profiles that are not single-peaked. 
However, this implication \emph{almost} holds: if a single-crossing profile is also 
\emph{minimally rich} (meaning that every alternative is top-ranked by some voter; another
term used for this constraint is `narcissistic'), 
then it is also single-peaked, as we will see in \Cref{sec:def:spsc}.

\begin{examplebox}
	{A profile that is single-crossing, but not single-peaked.}
	{sc-but-not-sp}
	\begin{minipage}{0.18\linewidth}
		\begin{flushright}
			\begin{tikzpicture}
			\matrix (m) [matrix of math nodes, column sep=1ex] {
				\toprule
				v_1 & v_2 & v_3 \\
				\midrule
				a & c & c \\
				b & a & b \\
				c & b & a \\
				\bottomrule \\
			};
			
			\begin{tikzbackground}
			\draw[line width=7pt, red!20, draw opacity=0.5, transform canvas={yshift=1mm}] 
			(m-2-1.west) -- (m-2-1.center) -- (m-3-2.center) -- (m-4-3.center) -- (m-4-3.east);

			\draw[line width=7pt, green!30, draw opacity=0.5, transform canvas={yshift=1mm}] 
			(m-3-1.west) -- (m-3-1.center) -- (m-4-2.center) -- (m-3-3.center) -- (m-3-3.east);

			\draw[line width=7pt, blue!40, draw opacity=0.5, transform canvas={yshift=1mm}] 
			(m-4-1.west) -- (m-4-1.center) -- (m-2-2.center) -- (m-2-3.east);
			\end{tikzbackground}
			\end{tikzpicture}
		\end{flushright}
	\end{minipage}
	\hfill
	\begin{minipage}{0.77\linewidth}
		This profile is single-crossing with respect to the given ordering $(v_1,v_2,v_3)$. 
		It is not single-peaked, however, since three different alternatives occur 
		in bottom-most positions. (Further, as we will see in \Cref{sec:def:spsc}, 
	        this profile cannot be single-peaked because $b$'s trajectory has a valley.)
	\end{minipage}
\end{examplebox}

The reverse implication does not hold either: there are single-peaked profiles that are not 
single-crossing.

\begin{examplebox}
	{A profile that is single-peaked, but not single-crossing.}
	{sp-but-not-sc}
	\begin{minipage}{0.20\linewidth}
	\begin{tikzpicture}
	\matrix (m) [matrix of math nodes, column sep=1ex] {
		\toprule
		v_1 & v_2 & v_3 & v_4 \\
		\midrule
		a & a & b & b \\
		b & b & a & a \\
		c & d & c & d \\
		d & c & d & c \\			
		\bottomrule \\
	};
	
	\begin{tikzbackground}
	\draw[line width=7pt, red!20, draw opacity=0.5, transform canvas={yshift=1mm}] 
	(m-2-1.west) -- (m-2-2.center) -- (m-3-3.center) -- (m-3-4.east);

	\draw[line width=7pt, green!30, draw opacity=0.5, transform canvas={yshift=1mm}] 
	(m-3-1.west) -- (m-3-2.center) -- (m-2-3.center) -- (m-2-4.east);

	\draw[line width=7pt, blue!40, draw opacity=0.5, transform canvas={yshift=1mm}] 
	(m-4-1.west) -- (m-4-1.center) -- (m-5-2.center) -- (m-4-3.center) -- (m-5-4.center) -- (m-5-4.east);

	\draw[line width=7pt, black!20, draw opacity=0.5, transform canvas={yshift=1mm}] 
	(m-5-1.west) -- (m-5-1.center) -- (m-4-2.center) -- (m-5-3.center) -- (m-4-4.center) -- (m-4-4.east);
	\end{tikzbackground}
	\end{tikzpicture}
	\end{minipage}
	\hfill
	\begin{minipage}{0.77\linewidth}
	This profile is single-peaked with respect to $c \lhd b \lhd a \lhd d$. 
	It is not single-crossing, however. For an ordering of voters to be single-crossing, we need
	\begin{itemize}
		\itemsep0em
		\item $v_1$ and $v_2$ to be adjacent (because of $\{a,b\}$),
		\item $v_3$ and $v_4$ to be adjacent (because of $\{a,b\}$),
		\item $v_1$ and $v_3$ to be adjacent (because of $\{c,d\}$),
		\item $v_2$ and $v_4$ to be adjacent (because of $\{c,d\}$).
	\end{itemize}
	This is impossible.
	\end{minipage}
\end{examplebox}

The profile in \Cref{ex:sp-but-not-sc} is an important example of a profile that is not 
single-crossing. It will make another appearance in \Cref{sec:subprofiles} as a forbidden 
subprofile of the single-crossing domain.

Together, these examples show that the single-peaked and single-crossing conditions are 
logically independent. However, there are interesting domain restrictions that are 
strengthenings of both conditions, such as 1-Euclidean preferences 
(\Cref{sec:def:euclid}), the conjunction of the two conditions 
(\Cref{sec:def:spsc}), and top-monotonicity \citep{barbera2011top}.

\subsection{Euclidean Preferences}
\label{sec:def:euclid}

We now consider preference profiles that can be `embedded' into $d$-dimensional Euclidean 
space. Precisely, a preference profile is $d$-Euclidean if we can assign every voter and every 
alternative a point in $\mathbb R^d$ so that voters prefer those alternatives that are closer 
to them (according to the usual Euclidean metric $\rho_d$) to those that are further away. This 
characterization of preferences has intuitive appeal: considering $\mathbb R^d$ as a continuous 
`policy space', within which alternatives can vary along multiple dimensions, each voter is 
identified with an \emph{ideal point} \citep{bennett1960multidimensional}. The best alternative 
for the voter is the one that minimizes the distance to the ideal policy. We could also think 
of a facility location problem, where a single facility needs to be placed somewhere on a 
plane, with each decision maker preferring the facility to be placed as close to them as 
possible \citep{hotelling1929stability}.

\begin{examplebox}{2-Euclidean preferences}{2d-euclid}
	\begin{minipage}{0.77\linewidth}
Consider the 2-Euclidean profile illustrated in \Cref{fig:2deuclid}.
Both voters and alternatives are located in a two-dimensional plane. 
The (ordinal) preferences of voters are determined by their closeness to alternatives, 
based on the Euclidean distance $\rho_2$.
For example, for voter $1$ we have $\rho_2(1,a)<\rho_2(1,b)<\rho_2(1,c)$.
Hence, $a\succ_1 b\succ_1 c$.
	\end{minipage}
	\hfill
	\begin{minipage}{0.18\linewidth}
	\begin{tikzpicture}
	\matrix (m) [matrix of math nodes, column sep=1ex] {
		\toprule
		v_1 & v_2 & v_3  \\
		\midrule
		a & b & c \\
		b & c & a \\
		c & a & b \\
		\bottomrule \\
	};
	\end{tikzpicture}
	\end{minipage}
\end{examplebox}

Let $\|\cdot\|$ refer to the usual Euclidean $\ell_2$-norm on $\mathbb R^d$, that is \[\| (x_1,\dots,x_d) \| = \| (x_1,\dots,x_d) \|_2 = \sqrt{ x_1^2 + \cdots + x_d^2 }\text{.}\]  

\begin{figure}
	\centering
	\begin{tikzpicture}
		[alt/.style={circle, fill=black, inner sep=1.3pt}]
		\draw[thick, ->] (0,-2.2) -- (0,3); 
		\draw[thick, ->] (-3,0) -- (3,0);
		\node [alt, label=-90:{$a$}] (a) at (-1.7,0.7) {};
		\node [alt, label=0:{$b$}] (b) at (1.7,1.7) {};
		\node [alt, label=0:{$c$}] (c) at (0.8,-1.5) {};
		\node [label={[label distance=-3mm]+120:voter $1$}] (v1) at (-0.85,2) 
			{\color{red!60!white}\Large\Gentsroom};
		\node [label={[label distance=-3mm]-120:voter $2$}] (v2) at (2.5,0.7) {\color{blue!60!white}\Large\Gentsroom};
		\node [label={[label distance=-3mm]-120:voter $3$}] (v3) at (-0.4,-1) {\color{green!60!black!50!white}\Large\Gentsroom};
		\draw[dashed,->] (v1) -- (a);
		\draw[dashed,->] (v1) -- (b);
 		\draw[dashed,->] (v1) -- (c);		
	\end{tikzpicture}
	\caption{A 2-Euclidean embedding of the Condorcet profile.}\label{fig:2deuclid}
\end{figure}

\begin{definition}
\label{def:euclid}
A profile $P$ is \defemph{$d$-Euclidean} (where $d\ge 1$) if there is a map $x : N \cup A \to \mathbb R^d$ such that
\[ a \succ_i b \iff \|x(i) - x(a)\| < \| x(i) - x(b)\| \qquad \text{for all $i\in N$ and $a,b\in A$.}\]
Thus, voter $i$ prefers those alternatives which are closer to $i$ according to the embedding $x$. 
\end{definition}

This preference representation has been introduced by 
\citet{coombs1950psychological,coombs1964theory} as \emph{unidimensional unfolding} in the 
psychometrics literature (for the 1-dimensional case), and was later also studied in the 
multidimensional case \citep{bennett1960multidimensional,hays1961multidimensional}. The 
extensive multidimensional unfolding literature tries to find an approximate $d$-Euclidean 
representation of a given preference profile, usually by minimizing a loss function using local 
search.

\paragraph{Majority Relation}
The Condorcet paradox profile is 2-Euclidean, as seen in \Cref{ex:2d-euclid}. Therefore, being $d$-Euclidean does not guarantee a transitive majority relation or the existence of a Condorcet winner for every $d \ge 2$. (It does guarantee these properties for $d = 1$, as we discuss below.) 
In fact, \citet{EST21} prove that McGarvey's theorem \citep{mcgarvey1953} 
holds for 2-Euclidean preferences, so that every possible majority tournament 
can be induced as the majority relation of a profile that is 2-Euclidean. 
Thus, this preference restriction does not impose any structure on the majority relation. 

\paragraph{Choice of Metric}
Outside of facility location type problems, in many cases 
the space $\mathbb R^d$ does not have a natural geometric interpretation, 
and hence the $\ell_2$ distance is not necessarily a 
good modeling assumption: using  other metrics on $\mathbb R^d$, 
such as $\ell_1$ and $\ell_\infty$, 
may also be sensible. 
For example, in politics, the $d$ dimensions may encode the positions of a 
candidate on a variety of independent issues, and the best measure of a voter's distance 
to a candidate may be the sum of distances along each individual dimension, captured by the 
$\ell_1$ distance. For further discussion of the merits of using the $\ell_1$ norm, see 
\citet{eguia2011foundations} and the references therein. If voters are pessimistic, and focus 
on the issue where they have the most disagreement with a candidate, then the 
$\ell_\infty$ distance would be appropriate. However, preferences that are driven by these 
alternative metrics are not well-studied, and so for the rest of this survey we will focus on 
the usual $\ell_2$ distance. Note that for the case of $d = 1$, all three of these metrics 
give the same definition, because $\ell_1$, $\ell_2$, and $\ell_\infty$ coincide on the line.

\paragraph{Number of Distinct Preferences}
Given an embedding of the set of alternatives in $d$-dimensional space, how many different preference rankings are Euclidean? The answer is $O(m^{2d})$ \citep{jamieson2011active}, which one can determine by counting the number of cells in a hyperplane arrangement \citep[Section 28.1.1]{halperin2017arrangements}. For two dimensions, this means that there are up to $\Theta(m^4)$ different preference orders compatible with a given embedding of the alternatives \citep{bennett1960multidimensional}. For the two-dimensional case, the same bound also applies under the $\ell_1$ and $\ell_{\infty}$ metrics \citep[Theorem 6.1]{escoffier2022euclidean}. 

\paragraph{Sufficient Dimension}
Every preference profile is $d$-Euclidean for a sufficiently large dimension $d$.
In particular, it is easy to see that every profile over 
$m$ alternatives is $(m-1)$-Euclidean. For example, we can place the 
alternatives in the vertices of the standard $(m-1)$-dimensional simplex; then,  
by placing the voters at appropriate points within the simplex, 
we can induce all orders over the $m$ alternatives.

For given values of $n$ and $m$, \citet{bogomolnaia2007euclidean} ask which 
dimension $d = d(n,m)$ is sufficient to guarantee that all profiles with $m$ alternatives and 
$n$ voters are $d$-Euclidean. By the argument above, we know that $d(n,m) \le m-1$. Another 
simple argument establishes that $d(n,m) \le n$.

\begin{proposition}[\citealp{bogomolnaia2007euclidean}]
	Every profile with $n$ voters is $n$-Euclidean.
\end{proposition}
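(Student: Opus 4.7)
The plan is to use one coordinate per voter and exploit the fact that voter $i$'s preferences only need to be encoded by coordinate $i$, while making the other coordinates negligible by placing voter $i$ very far away along her own axis.

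Concretely, I would choose a large parameter $M>0$ (to be fixed later) and define the embedding $x:N\cup A\to\mathbb{R}^n$ as follows. For each voter $i\in N$, place $x(i) = -M\cdot e_i$, where $e_i$ is the $i$-th standard basis vector. For each alternative $a\in A$, let $x(a) = (\rank_1(a), \rank_2(a), \ldots, \rank_n(a))$, so that coordinate $i$ records $a$'s position in voter $i$'s ranking.

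Then for voter $i$ and alternative $a$, a direct computation gives
\[
\|x(i)-x(a)\|^2 = (M+\rank_i(a))^2 + \sum_{j\ne i}\rank_j(a)^2 = M^2 + 2M\cdot\rank_i(a) + \sum_{j}\rank_j(a)^2.
\]
For two alternatives $a,b$, the inequality $\|x(i)-x(a)\|^2<\|x(i)-x(b)\|^2$ is therefore equivalent to
\[
2M\bigl(\rank_i(b)-\rank_i(a)\bigr) > \sum_{j}\rank_j(a)^2 - \sum_{j}\rank_j(b)^2.
\]
If $\rank_i(a)<\rank_i(b)$, the left-hand side is at least $2M$, while the right-hand side is bounded in absolute value by $nm^2$. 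Choosing any $M>nm^2/2$ therefore makes the inequality hold exactly when $a\succ_i b$.

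The main obstacle is just the elementary bookkeeping above: checking that the $M$-term dominates the residual contribution from the other voters' coordinates. There is no deeper structural issue, because the construction exploits the fact that we have as many dimensions as voters, so each voter gets a private axis along which only her own ranking information matters.
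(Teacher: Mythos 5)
Your construction is essentially the paper's: one private axis per voter, alternatives placed at their rank coordinates, and the voter pushed a distance $M$ along her own axis so that the cross terms from other coordinates are dominated once $M>nm^2/2$ (the paper uses the mirror-image sign convention and leaves the bookkeeping as "easy to check," which you have simply carried out). The proof is correct.
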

\begin{proof}
	We give a Euclidean embedding $x : N \cup A \to \mathbb R^n$. Each voter's preferences 
	are encoded on a separate axis. For $i\in N$ and $a\in A$, we set 
	$x(a)_i = -|\{ b \in A : b \pref_i a \}$. Thus, on axis $i$, voter $i$'s top-ranked 
	alternative has coordinate $-1$, the second-ranked alternative has coordinate $-2$, 
	and so on.
	
	Next, for some fixed $M>0$, and for all $i\in N$, we define $x(i)_i = M$ and $x(i)_j = 0$ 
	for $j\neq i$. Then it is easy to check that, for sufficiently large $M$, $x$ gives 
	an $n$-Euclidean embedding of the profile.
\end{proof}

To obtain a lower bound, we need examples of profiles that are not embeddable for a given 
dimension~$d$. \citet{bogomolnaia2007euclidean} show that a generalization of the Condorcet 
profile gives such an example.

\begin{examplebox}
	{A profile that is not $(d-2)$-Euclidean}
	{cyclic-non-euclidean}
	\begin{minipage}{0.27\linewidth}
		\begin{tabular}{cccc}
			\toprule
			$v_1$ & $v_2$ & $\cdots$ & $v_d$ \\
			\midrule
			$a_1$     & $a_2$    & $\cdots$ & $a_d$ \\
			$a_2$     & $a_3$    & $\cdots$ & $a_1$ \\
			$\vdots$  & $\vdots$ & $\ddots$ & $\vdots$  \\
			$a_{d-1}$ & $a_d$    & $\cdots$ & $a_{d-2}$ \\
			$a_d$     & $a_1$    & $\cdots$ & $a_{d-1}$ \\
			\bottomrule
		\end{tabular}
	\end{minipage}
	\hfill
	\begin{minipage}{0.7\linewidth}
		This profile with $d$ alternatives and $d$ voters is a generalization of 
		the Condorcet profile to $d$ alternatives.
		\citet{bogomolnaia2007euclidean} show that this profile is not $(d-2)$-Euclidean.
		Their argument rests on the observation that the points corresponding to 
		the $d$ alternatives must be affinely dependent in ${\mathbb R}^{d-2}$.
	\end{minipage}
\end{examplebox}

\Cref{ex:cyclic-non-euclidean} shows that $d(n,m) \ge \min\{n-1, m-1\}$. The 
previous arguments imply that $d(n,m) \le \min\{n, m-1\}$. This leaves a slight gap; closing it 
is an open problem, though \citet{bogomolnaia2007euclidean} give some further examples of 
profiles at either end of the gap. They also study the sufficient dimension for 
profiles of weak orders, and precisely characterize it to be equal to $\min\{n, m-1\}$.

\begin{figure}[t]
	\centering \def\sep{0.4}
	\begin{tikzpicture}[draw=black!70, xscale=0.25, yscale=-0.25]
		\node[] at (-4,-2){Worst-case}; 
		\node[] at (-4,0){dimension}; 
		
		\node[] at (18,-2){Number of alternatives $(m)$};
		\node[] at (-4,6){Number};  \node[] at (-4,8){of voters};    \node[] at (-4,10){$(n)$};
		
		\foreach \nc in {1,...,8} \node[] at (\nc*2,0) {$\nc$}; 
		\foreach \x/\nc in {9.5/\cdots,11/15,12/16, 13.5/\cdots} \node[] at (\x*2,0) {$\nc$};       
		\foreach \nv in {1,...,6} \node[] at (0,\nv*2) {$\nv$}; 
		\node[] at (0,14) {$\vdots$}; 
		
		\draw (1,-3)--(1,15);
		\draw (-8,1)--(30,1);

		\tikzset{ every path/.style={draw=black, color=black,line width=1.5pt}}
		
		\tikzset{ }
		\node at (4,4) {1D};
		\draw (5,15) -- (5,5) -- (7,5) -- (7,3) -- (30,3);      
		\node[] at (6,6) {$\bullet$};
		\node[] at (8,4) {$\bullet$};
		
		\tikzset{ every path/.style={draw=blue, color=blue,line width=1.5pt}}
		\node at (8,6) {2D};       
		\draw (7,15) -- (7,7) -- (15,7) -- (15,5) -- (30,5); %
		\node[] at (8,8) {$\bullet$};
		\node[] at (16,6) {$\bullet$};    
		
		\tikzset{ every path/.style={draw=red!80!black, color=red!80!black,line width=1.5pt}}
		\node at (12,8) {3D};     
		\node at (19,8) {? (3 or 4)};     
		\draw (9,15) -- (9,9) -- (15,9) -- (23,9) -- (23,7) -- (30,7);  \draw (15,9) -- (15,7) -- (23,7) -- (30,7); 
		\node[] at (10,10) {$\bullet$};
		\node[] at (24,8) {$\bullet$};

		\tikzset{ every path/.style={draw=black, color=black,line width=1.5pt}}
		
		\node at (19,12) {$\geq$ 4D};     
	\end{tikzpicture}
	\caption{Boundaries of non-$d$-Euclidean profiles with a given number of voters and alternatives. Each colored bullet point denotes the existence of such a non-Euclidean profile for the corresponding dimension. Figure reproduced from \citet{bulteau20222dimensional}.} %
	\label{fig:euclid-sufficient-dimension}
\end{figure}

\citet{bulteau20222dimensional} focus on the case of $d = 2$, and show that every profile with at most 3 voters and at most 7 alternatives is 2-Euclidean. They summarize the known results about the sufficient dimension for small $n$ and $m$ in \Cref{fig:euclid-sufficient-dimension} which we reproduce here. 

\citet{chen2022manhattan} study similar questions for Euclidean preferences defined with
respect to $\ell_1$ distances, obtaining similar bounds, and giving a precise analysis of the
$d = 2$ case.

\paragraph{1-Euclidean Preferences}
Since every profile is $d$-Euclidean for some $d$, the Euclidean domain restriction is only 
interesting when the dimension is small. We will be particularly interested in the 
one-dimensional case, and so we will focus on $1$-Euclidean profiles for the rest of this 
section. 

With single-peaked and single-crossing preferences, we have already seen two proposals 
for what it means for preferences to be one-dimensional. The $1$-Euclidean domain is a 
refinement of these notions: every $1$-Euclidean profile is both single-peaked and 
single-crossing.

\begin{proposition}
	\label{prop:euclid-implies-spsc}
	Let $P$ be a $1$-Euclidean profile, and let $x : N \cup A \to \mathbb R$ 
	be an embedding witnessing this. 
	Then $P$ is both single-peaked and single-crossing (with respect to, respectively,
	the alternative and voter orderings induced by $x$).
\end{proposition}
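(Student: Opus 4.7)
The plan is to use the embedding $x$ directly to verify both conditions.

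First, I would set up notation: let $\lhd$ be the linear order on $A$ defined by $a \lhd b \iff x(a) < x(b)$, and (after permuting voters) assume $x(1) \le x(2) \le \cdots \le x(n)$, which gives the voter ordering. (Ties among voters do not matter, since voters assigned the same real number induce the same preference order.)

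For single-peakedness with respect to $\lhd$, I would verify the definition directly. Fix a voter $i$ with peak $p := \top(v_i)$; by definition of $p$ and the embedding, $p$ is an alternative minimizing $|x(i) - x(\cdot)|$ over $A$. Take $a, b \in A$ with $p \lhd b \lhd a$ (the case $a \lhd b \lhd p$ is symmetric), so $x(p) < x(b) < x(a)$. I want to show $b \succ_i a$, i.e., $|x(i) - x(b)| < |x(i) - x(a)|$. I would split into two cases depending on the position of $x(i)$:
\begin{itemize}
\item If $x(i) \le x(b)$, then $|x(i) - x(a)| = x(a) - x(i) > x(b) - x(i) = |x(i) - x(b)|$, as desired.
\item If $x(i) > x(b)$, then $x(i) > x(b) > x(p)$ yields $|x(i) - x(p)| > |x(i) - x(b)|$, which would mean $b$ is strictly closer to $x(i)$ than $p$ is, contradicting that $p$ is $i$'s peak. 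So this case cannot occur.
\end{itemize}

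For single-crossing with respect to the voter ordering, fix any pair $\{a, b\}$ of distinct alternatives and assume without loss of generality that $x(a) < x(b)$. Let $m := (x(a) + x(b))/2$. Then $|x(i) - x(a)| < |x(i) - x(b)|$ is equivalent to $x(i) < m$, and the reverse inequality to $x(i) > m$ (the case $x(i) = m$ cannot occur because $\succ_i$ is a strict order). Hence
\[ \{i \in [n] : a \succ_i b\} = \{i : x(i) < m\}, \qquad \{i \in [n] : b \succ_i a\} = \{i : x(i) > m\}, \]
which are respectively an initial and a final segment of the voter ordering, hence intervals of $[n]$.

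There is no real obstacle here; the argument is essentially a direct unpacking of the definitions, the only mild subtlety being the case split in the single-peakedness step, where one uses the minimality property of the peak to rule out the case $x(i) > x(b)$.
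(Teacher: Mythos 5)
Your proof is correct. The single-crossing half is identical to the paper's argument: split the voters at the midpoint $(x(a)+x(b))/2$ and observe that the two preference classes are a prefix and a suffix of the voter ordering. For the single-peaked half you take a slightly different route: the paper invokes the upper-contour-set characterization (\Cref{prop:sp-equiv}~(4)), noting that $\{a : a \pref_i c\}$ is the set of alternatives whose positions lie in a real interval centred at $x(i)$ and is therefore an interval of $\lhd$, whereas you verify \Cref{def:sp} directly via a case split on the position of $x(i)$ relative to $x(b)$, using minimality of the peak to rule out one case. Both are sound one-step unpackings of the geometry; the paper's version is marginally shorter because the interval characterization does the case analysis for free, while yours has the advantage of not relying on the earlier equivalence result.
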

\begin{proof}
	$P$ is single-crossing since all voters $i$ with $x(i) < (x(a) + x(b))/2$ have $a \succ_i b$, 
	and all voters $i$ with $x(i) > (x(a) + x(b))/2$ have $b \succ_i a$. 
	
	Define an axis $\lhd$ by setting $a \lhd b$ if and only if $x(a) < x(b)$. 
	We claim that $P$ is single-peaked with respect to $\lhd$. 
	To see this, note that 
	for each $i\in N$ and each $c\in A$, the set $\{a \in C : a \pref_i c\}$ equals 
	$\{a \in C : x(a) \in I \}$ for some interval $I \subseteq \mathbb R$ of real numbers. 
	Thus $\{a \in C : a \pref_i c\}$ is an interval of $\lhd$ as well, 
	and hence $P$ is single-peaked with respect to $\lhd$ by \Cref{prop:sp-equiv}~(4).
\end{proof}

One might think that the conjunction of the single-peaked condition and the single-crossing condition 
is sufficient to guarantee the existence of a $1$-Euclidean embedding, i.e., that
the converse of \Cref{prop:euclid-implies-spsc} is true. However, this is not the case:
there are profiles that are single-peaked and single-crossing but not $1$-Euclidean, 
as the following example shows.
Thus, the $1$-Euclidean domain imposes additional geometric constraints beyond 
the combinatorial information in the single-peaked axis 
and the single-crossing voter ordering.

\begin{examplebox}
	{A profile that is single-peaked and single-crossing, but not $1$-Euclidean}
	{spsc-not-euclid}
\begin{wrapfigure}[10]{l}{0.16\linewidth}
	\begin{tikzpicture}
		\matrix (m) [matrix of math nodes, column sep=1ex] {
			\toprule
			v_1 & v_2 & v_3 \\
			\midrule
			b & d & d \\
			c & e & e \\
			d & c & f \\
			e & b & c \\
			a & a & b \\
			f & f & a \\
			\bottomrule \\
		};
		
		\begin{tikzbackground}
			\draw[line width=7pt, red!20, draw opacity=0.5, transform canvas={yshift=1mm}] 
			(m-6-1.west) -- (m-6-1.center) -- (m-6-2.center) -- (m-7-3.center) -- (m-7-3.east);

			\draw[line width=7pt, green!30, draw opacity=0.5, transform canvas={yshift=1mm}] 
			(m-2-1.west) -- (m-2-1.center) -- (m-5-2.center) -- (m-6-3.center) -- (m-6-3.east);

			\draw[line width=7pt, blue!40, draw opacity=0.5, transform canvas={yshift=1mm}] 
			(m-3-1.west) -- (m-3-1.center) -- (m-4-2.center) -- (m-5-3.center) -- (m-5-3.east);

			\draw[line width=7pt, red!20, draw opacity=0.5, transform canvas={yshift=1mm}] 
			(m-4-1.west) -- (m-4-1.center) -- (m-2-2.center) -- (m-2-3.center) -- (m-2-3.east);

			\draw[line width=7pt, green!30, draw opacity=0.5, transform canvas={yshift=1mm}] 
			(m-5-1.west) -- (m-5-1.center) -- (m-3-2.center) -- (m-3-3.center) -- (m-3-3.east);

			\draw[line width=7pt, blue!40, draw opacity=0.5, transform canvas={yshift=1mm}] 
			(m-7-1.west) -- (m-7-1.center) -- (m-7-2.center) -- (m-4-3.center) -- (m-4-3.east);
		\end{tikzbackground}
	\end{tikzpicture}
\end{wrapfigure}

	This profile is single-peaked with respect to the axis $a\lhd b\lhd c\lhd d\lhd e\lhd f$ 
	(and its reverse),
	and single-crossing with respect to the given ordering of the voters.
	Moreover, it can be verified that it is not single-peaked with respect to any other 
	axes (see, e.g., \Cref{sec:recog:sp}), so if there exists an embedding $x$ witnessing 
	that this profile is $1$-Euclidean, it has to be the case that 
	$x(a)<x(b)<\dots <x(f)$ or $x(f)<x(e)<\dots<x(a)$; without loss of 
	generality, we assume the former. 
	
	Since voter 1 prefers $e$ to $a$ and $b$ to $c$, we have
	\begin{align*}
	  (x(a) + x(e))/2 &<  x(1) < (x(b) + x(c))/2. 
	\end{align*}
	Considering the preferences of voters 2 and 3, we obtain
	\begin{align*}
	  (x(c) + x(d))/2 &< x(2) < (x(a) + x(f))/2 \\
	  (x(b) + x(f))/2 &< x(3) < (x(d) + x(e))/2.
 	\end{align*}
 	Adding these inequalities yields 
	$x(a) + x(b) + x(c) + x(d) + x(e) + x(f) < 
	 x(a) + x(b) + x(c) + x(d) + x(e) + x(f)$, a contradiction.
\end{examplebox}

The profile shown in \Cref{ex:spsc-not-euclid} is minimal in the sense that there are no 
examples with two voters and six alternatives, and no examples with three voters and five 
alternatives. Indeed, for five or fewer alternatives, \citet{chen2021small} prove that a profile is 1-Euclidean if and only if it is single-peaked and single-crossing. In addition, they show that every single-peaked profile with two voters is 1-Euclidean.

\subsection{Preferences Single-Peaked on a Tree}
\label{sec:def:spt}

\citet{demange1982single} observed that some of the good properties of the single-peaked domain can be preserved even when we allow the underlying axis to assume a more complicated shape. Specifically, Demange defined a notion of being single-peaked on a \emph{tree}.%
\footnote{One can also define single-peakedness on other graphs, such as on cycles (see \Cref{sec:def:spoc}) and on general graphs \citep[see, e.g.,][]{nehring2007structure}.}

\begin{definition}[\citealp{demange1982single}]
	Let $T = (A,E)$ be a tree with vertex set $A$. 
	A linear order $v_i$ over $A$ is \defemph{single-peaked on $T$} if for every pair of 
	distinct alternatives $a,b\in A$ such that $a$ lies on the unique $\top(v_i)$--$b$ path in $T$
	we have $a \succ_i b$.
	A profile $P$ over $A$ is \defemph{single-peaked on $T$} if every vote in $P$ 
	is single-peaked on $T$. 
	A profile $P$ is \defemph{single-peaked on a tree} if there exists a tree $T$ 
	such that $P$ is single-peaked on $T$.
\end{definition}

Thus, a profile is single-peaked (in the sense of \Cref{sec:def:sp}) if and only if it 
is single-peaked on a tree that is a path. In particular, the property of being single-peaked 
on a tree is less demanding than the property of being single-peaked.

One of the motivating examples for the single-peaked domain is the problem of locating a 
facility along a line (such as a road) when voters prefer the facility to be close to them. 
When the road network has a tree structure, we can expect the voters' preferences 
to be single-peaked on a tree.

Just like we did for single-peaked preferences, we can phrase the definition of being 
single-peaked on a tree in several different ways. In the following proposition, definition~(2) 
is the one used by \citet{demange1982single}, and definition~(4) is the one used by 
\citet{trick1989recognizing}.

\begin{proposition}
	\label{prop:spt-equiv}
	Let $T = (A,E)$ be a tree. The following are equivalent:
	\begin{enumerate}[(1)]
		\item $P$ is single-peaked on $T$.
		\item For every subtree $T' \subseteq T$ which is a path, $P|_{T'}$ is single-peaked.
		\item For each $i\in N$, and all $a,b,c\in A$ such that $a,b,c$ lie on a common path in $T$, we do not have both $a \succ_i b$ and $c \succ_i b$.
		\item For each $i\in N$ and each $c \in A$, the set $\{a\in C : a \pref_i c\}$ is connected in $T$.
	\end{enumerate}
\end{proposition}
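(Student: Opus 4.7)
My plan is to prove the equivalence by closing the cycle $(1) \Rightarrow (2) \Rightarrow (3) \Rightarrow (4) \Rightarrow (1)$, closely mirroring the proof of \Cref{prop:sp-equiv} for the linear case but replacing ``intervals of $\lhd$'' with ``paths in $T$'' and ``subpaths of $\lhd$'' with ``path-subtrees of $T$''. Throughout, I interpret condition (3) with $b$ being the middle vertex of $a,b,c$ on the common path, which is the natural tree analog of Condition~(3) in \Cref{prop:sp-equiv}. The proof relies on one elementary tree fact I will prove first: if $b$ lies on the unique $x$--$z$ path in $T$, then for any vertex $t \in A$, $b$ lies on the $t$--$x$ path or on the $t$--$z$ path. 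This holds because the $t$--$x$ and $t$--$z$ paths share an initial segment and split at some vertex $y$; the $x$--$z$ path is obtained by concatenating the $x$--$y$ and $y$--$z$ pieces, so any intermediate vertex lies in one of these two pieces, hence in the corresponding $t$--$x$ or $t$--$z$ path.

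For $(1)\Rightarrow(2)$, fix a path-subtree $T'\subseteq T$ and argue contrapositively: a valley of $v_i|_{T'}$ on $T'$ consists of $x,y,z\in T'$ with $y$ strictly between $x$ and $z$ on $T'$, $x\succ_i y$, $z\succ_i y$. Since $T'$ is a connected subtree of $T$, the unique $x$--$z$ path in $T$ lies in $T'$ and contains $y$. Applying the tree fact with $t=\top(v_i)$ and invoking (1), we obtain $y\succ_i x$ or $y\succ_i z$, a contradiction. For $(2)\Rightarrow(3)$, given $a,b,c$ on a common path with $b$ between $a$ and $c$, take $T'$ to be the $a$--$c$ path in $T$; by (2), $v_i|_{T'}$ is single-peaked on the linear order induced by $T'$, so by \Cref{prop:sp-equiv} we cannot have both $a\succ_i b$ and $c\succ_i b$.

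For $(3)\Rightarrow(4)$, suppose $S=\{x\in A : x \pref_i c\}$ is disconnected in $T$. Pick $a,a'\in S$ lying in different connected components of the induced subgraph on $S$; the unique $a$--$a'$ path in $T$ must then contain some vertex $b\notin S$, so $c\succ_i b$. Combined with $a\pref_i c$ and $a'\pref_i c$, this gives $a\succ_i b$ and $a'\succ_i b$; since $b$ lies between $a$ and $a'$ on a common path, we contradict (3). For $(4)\Rightarrow(1)$, suppose $a\neq b$ lies on the $\top(v_i)$--$b$ path, and let $S=\{x\in A : x \pref_i b\}$; both $\top(v_i)$ and $b$ belong to $S$, so by (4) the connected set $S$ is a subtree containing these two vertices, hence contains the unique $\top(v_i)$--$b$ path, hence $a\in S$, giving $a\succ_i b$.

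The main obstacle is the step $(1)\Rightarrow(2)$, where one needs the right tree-theoretic lemma to turn a local valley in the restricted profile into a violation of the global single-peaked property; once this lemma is in hand, the remaining implications are straightforward adaptations of the linear-case proof.
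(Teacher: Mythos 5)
Your proof is correct. The paper itself states \Cref{prop:spt-equiv} without proof, so there is nothing to compare against directly; your argument is a faithful adaptation of the paper's proof of \Cref{prop:sp-equiv} to the tree setting, and the one genuinely new ingredient you need---the median fact that if $b$ lies on the $x$--$z$ path then it lies on the $t$--$x$ path or the $t$--$z$ path for any $t$---is stated and justified correctly. All four implications check out: in $(3)\Rightarrow(4)$ the vertex $b\notin S$ on the $a$--$a'$ path exists precisely because a path contained in $S$ would connect $a$ and $a'$ in the induced subgraph, and in $(4)\Rightarrow(1)$ you correctly use that a connected subgraph of a tree containing two vertices contains the whole path between them.

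One interpretive point worth making explicit, parallel to the one you already flagged for condition (3): condition (2) must be read as ``$P|_{T'}$ is single-peaked \emph{with respect to the linear order induced by the path} $T'$,'' not merely ``single-peaked with respect to some axis.'' Under the literal weaker reading the equivalence fails: the one-vote profile $a\succ c\succ b$ on the path $a-b-c$ is single-peaked on some axis (any single vote is), so it satisfies the weak reading of (2) for every subpath, yet it is not single-peaked on $T$. Your step $(1)\Rightarrow(2)$ in fact establishes the stronger statement and your step $(2)\Rightarrow(3)$ uses it, so your cycle is internally consistent and proves the intended (Demange) version of the proposition; it would just be worth stating the reading of (2) up front as you did for (3).
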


\begin{examplebox}[breakable]
	{A vote single-peaked on a tree}
	{spt-one-vote}
	The vote $debca\mathit{f}$ is single-peaked on the below tree, as we can verify by using condition (4) of  \Cref{prop:spt-equiv}, i.e., by checking that the following sets are all connected in the tree: $\{d\}$, $\{d,e\}$, $\{d,e,b\}$, $\{d,e,b,c\}$, $\{d,e,b,c,a\}$, and $\{d,e,b,c,a,f\}$.
	\begin{center}
		\begin{tikzpicture}[visited/.style={fill=red!60, draw=red!60, thick}]
			\node [vertex, label={[label distance=2.5pt]below:$a$}] (a) {};
			\node [vertex, label={below:$b$}, right=of a] (b) {};
			\node [vertex, label={above:$c$}, above right=0.5cm and 1.1cm of b] (c) {};
			\node [vertex, visited, label={below:$d$}, below right=0.5cm and 1.1cm of b] (d) {};
			\node [vertex, label={above:$e$}, above right=0.5cm and 1.1cm of d] (e) {};
			\node [vertex, label={below:$f$}, below right=0.5cm and 1.1cm of d] (f) {};
			\draw (a) edge (b) (b) edge (c) edge (d) (d) edge (e) edge (f);
		\end{tikzpicture}
		\quad
		\begin{tikzpicture}[visited/.style={fill=red!60, draw=red!60, thick}]
			\node [vertex, label={[label distance=2.5pt]below:$a$}] (a) {};
			\node [vertex, label={below:$b$}, right=of a] (b) {};
			\node [vertex, label={above:$c$}, above right=0.5cm and 1.1cm of b] (c) {};
			\node [vertex, visited, label={below:$d$}, below right=0.5cm and 1.1cm of b] (d) {};
			\node [vertex, visited, label={above:$e$}, above right=0.5cm and 1.1cm of d] (e) {};
			\node [vertex, label={below:$f$}, below right=0.5cm and 1.1cm of d] (f) {};
			\draw (a) edge (b) (b) edge (c) edge (d) (d) edge [visited] (e) edge (f);
		\end{tikzpicture}
		\quad
		\begin{tikzpicture}[visited/.style={fill=red!60, draw=red!60, thick}]
			\node [vertex, label={[label distance=2.5pt]below:$a$}] (a) {};
			\node [vertex, visited, label={below:$b$}, right=of a] (b) {};
			\node [vertex, label={above:$c$}, above right=0.5cm and 1.1cm of b] (c) {};
			\node [vertex, visited, label={below:$d$}, below right=0.5cm and 1.1cm of b] (d) {};
			\node [vertex, visited, label={above:$e$}, above right=0.5cm and 1.1cm of d] (e) {};
			\node [vertex, label={below:$f$}, below right=0.5cm and 1.1cm of d] (f) {};
			\draw (a) edge (b) (b) edge (c) edge [visited] (d) (d) edge [visited] (e) edge (f);
		\end{tikzpicture}
		\quad
		\begin{tikzpicture}[ visited/.style={fill=red!60, draw=red!60, thick}]
			\node [vertex, label={[label distance=2.5pt]below:$a$}] (a) {};
			\node [vertex, visited, label={below:$b$}, right=of a] (b) {};
			\node [vertex, visited, label={above:$c$}, above right=0.5cm and 1.1cm of b] (c) {};
			\node [vertex, visited, label={below:$d$}, below right=0.5cm and 1.1cm of b] (d) {};
			\node [vertex, visited, label={above:$e$}, above right=0.5cm and 1.1cm of d] (e) {};
			\node [vertex, label={below:$f$}, below right=0.5cm and 1.1cm of d] (f) {};
			\draw (a) edge (b) (b) edge [visited] (c) edge [visited] (d) (d) edge [visited] (e) edge (f);
		\end{tikzpicture}
		\quad
		\begin{tikzpicture}[ visited/.style={fill=red!60, draw=red!60, thick}]
			\node [vertex, visited, label={[label distance=2.5pt]below:$a$}] (a) {};
			\node [vertex, visited, label={below:$b$}, right=of a] (b) {};
			\node [vertex, visited, label={above:$c$}, above right=0.5cm and 1.1cm of b] (c) {};
			\node [vertex, visited, label={below:$d$}, below right=0.5cm and 1.1cm of b] (d) {};
			\node [vertex, visited, label={above:$e$}, above right=0.5cm and 1.1cm of d] (e) {};
			\node [vertex, label={below:$f$}, below right=0.5cm and 1.1cm of d] (f) {};
			\draw (a) edge [visited] (b) (b) edge [visited] (c) edge [visited] (d) (d) edge [visited] (e) edge (f);
		\end{tikzpicture}
		\quad
		\begin{tikzpicture}[ visited/.style={fill=red!60, draw=red!60, thick}]
			\node [vertex, visited, label={[label distance=2.5pt]below:$a$}] (a) {};
			\node [vertex, visited, label={below:$b$}, right=of a] (b) {};
			\node [vertex, visited, label={above:$c$}, above right=0.5cm and 1.1cm of b] (c) {};
			\node [vertex, visited, label={below:$d$}, below right=0.5cm and 1.1cm of b] (d) {};
			\node [vertex, visited, label={above:$e$}, above right=0.5cm and 1.1cm of d] (e) {};
			\node [vertex, visited, label={below:$f$}, below right=0.5cm and 1.1cm of d] (f) {};
			\draw (a) edge [visited] (b) (b) edge [visited] (c) edge [visited] (d) (d) edge [visited] (e) edge [visited] (f);
		\end{tikzpicture}
	\end{center}
	On the other hand, the vote $de\mathit{f}cba$ is not single-peaked on this tree, because the set $\{d, e, f, c\}$ of the top 4 alternatives is not connected in the tree.
	\begin{center}
		\begin{tikzpicture}[visited/.style={fill=red!60, draw=red!60, thick}]
			\node [vertex, label={[label distance=2.5pt]below:$a$}] (a) {};
			\node [vertex, label={below:$b$}, right=of a] (b) {};
			\node [vertex, visited, label={above:$c$}, above right=0.5cm and 1.1cm of b] (c) {};
			\node [vertex, visited, label={below:$d$}, below right=0.5cm and 1.1cm of b] (d) {};
			\node [vertex, visited, label={above:$e$}, above right=0.5cm and 1.1cm of d] (e) {};
			\node [vertex, visited, label={below:$f$}, below right=0.5cm and 1.1cm of d] (f) {};
			\draw (a) edge (b) (b) edge (c) edge (d) (d) edge [visited] (e) edge [visited] (f);
		\end{tikzpicture}
	\end{center}
\end{examplebox}

\begin{examplebox}[breakable]
	{Profiles single-peaked on a star}
	{spt-star}
	Consider the star $T$ with center alternative $z$ and leaf alternatives $a,\dots,g$. 
	Which orders are single-peaked on $T$?
	\begin{center}
	\begin{tikzpicture}
	\graph {
		subgraph I_n[n=7, V={$a$,$b$,$c$,$d$,$e$,$f$,$g$}, clockwise] -- {z/$z$}
	};
	\end{tikzpicture}
	\end{center}
	Consider a voter whose preferences are single-peaked on this tree. If her ranking 
	begins with $z$, she can rank the other alternatives in an arbitrary order:
	any such ranking is single-peaked on $T$. 
	But suppose her top choice is a leaf alternative such as $a$. Then $z$ must be the 
	second alternative in her ranking, because 
	the set consisting of her top two choices must be connected in $T$
	(\Cref{prop:spt-equiv} (4)). 
        After ranking $a$ and $z$, the voter can order the remaining alternatives arbitrarily.
	
	Thus, precisely the orders in which the center vertex is ranked first or second 
	are single-peaked on a star. Hence, there are $2(m-1)! = \Theta(m!)$ orders 
	single-peaked on a star---many more than the $\Theta(2^m)$ orders 
	that are single-peaked on a line.
\end{examplebox}

Allowing trees as the underlying structure of the alternatives also enables us to handle certain 
configurations that are `almost' single-peaked on a line. For example, in politics, while the 
traditional left-to-right spectrum of political parties has substantial explanatory power, this 
classification can break down, especially at the extremes. For instance, it can be difficult to 
say whether a party focused on the environment is more left-wing 
than a party focused on women's rights, or vice versa.
The following kind of tree might capture the situation better:
\begin{center}
\begin{tikzpicture}
[scale=1]
\node[vertex] (l0) {};
\node[vertex, right=1 of l0] (l1) {};
\node[vertex, right=1 of l1] (l2) {};
\node[vertex, right=1 of l2] (l3) {};
\node[vertex, right=1 of l3] (l4) {};
\node[vertex, right=1 of l4] (l5) {};
\node[vertex] at (-1.5,1) (lex1) {};
\node[vertex] at (-1.5,0.5) (lex2) {};
\node[vertex] at (-1.5,0) (lex3) {};
\node[vertex] at (-1.5,-0.5) (lex4) {};
\node[vertex] at (-1.5,-1) (lex5) {};
\node[vertex] at (7,1) (rex1) {};
\node[vertex] at (7,0.5) (rex2) {};
\node[vertex] at (7,0) (rex3) {};
\node[vertex] at (7,-0.5) (rex4) {};
\node[vertex] at (7,-1) (rex5) {};
\draw (l0) edge (lex1) edge (lex2) edge (lex3) edge (lex4) edge (lex5)
(l5) edge (rex1) edge (rex2) edge (rex3) edge (rex4) edge (rex5)
(l0) edge (l1)
(l1) edge (l2)
(l2) edge (l3)
(l3) edge (l4)
(l4) edge (l5);
\end{tikzpicture}
\end{center}
With this tree, voters are free to order the extreme alternatives as they like, but there is 
still a noticeable left-to-right ordering over the more moderate choices.

\paragraph{Majority Relation}
We have seen that the single-peaked domain is attractive in that, for an odd number of voters, 
it guarantees a Condorcet winner, and, moreover, a transitive majority relation.
The generalization to trees preserves the first of these guarantees.

\begin{figure}[t]
	\centering
	\begin{tikzpicture}[vertex/.style={circle, draw=black, fill=black, inner sep=0pt, minimum size=5pt},
		on path/.style={red!60!black, thick}]
		\node [style=vertex] (0) at (-4.75, 3.75) {}; %
		\node [style=vertex] (1) at (-4.75, 4.9) {};
		\node [style=vertex] (2) at (-3.75, 2.75) {};
		\node [style=vertex] (3) at (-2.5, 2.25) {};
		\node [style=vertex] (4) at (-3.5, 1.75) {};
		\node [style=vertex] (5) at (-3.25, 0.75) {};
		\node [style=vertex] (7) at (-5.75, 3) {}; %
		\node [style=vertex] (8) at (-6.75, 2.25) {};
		\node [style=vertex] (9) at (-6.25, 1.25) {};
		\node [style=vertex] (10) at (-7.75, 1.5) {}; %
		\node [style=vertex] (11) at (-8.5, 0.5) {};
		\node [style=vertex] (12) at (-7.25, 0.5) {};

		\node at (-4.4, 3.75) {$c$};
		\node at (-5.75, 3.3) {$x$};
		\node at (-7.75, 1.85) {$d$};

		\draw[-latex] (0) to (1);
		\draw[-latex] (0) to (2);
		\draw[-latex] (2) to (3);
		\draw[-latex] (2) to (4);
		\draw[-latex] (4) to (5);
		\draw[-latex,on path] (0) to (7);
		\draw[-latex,on path] (7) to (8);
		\draw[-latex] (8) to (9);
		\draw[-latex,on path] (8) to (10);
		\draw[-latex] (10) to (11);
		\draw[-latex] (10) to (12);
	\end{tikzpicture}
	\caption{Illustration of the proof of \Cref{prop:spt-condorcet}. 
	The edges of the tree $T$ are oriented according to the majority relation. 
	The source vertex of the resulting directed acyclic graph can be shown to be a Condorcet winner.}
	\label{fig:spt-condorcet-proof}
\end{figure}

\begin{proposition}[\citealp{demange1982single}]
	\label{prop:spt-condorcet}
	Every profile $P$ with an odd number of voters that is single-peaked on a tree 
	has a  Condorcet winner.
\end{proposition}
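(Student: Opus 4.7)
The plan is to identify a ``median vertex'' $v^\star$ of the tree $T$ and then show that $v^\star$ beats every other alternative in a strict majority comparison. For each vertex $v$ of $T$ and each neighbor $u$ of $v$, let $T_u(v)$ denote the connected component containing $u$ of the graph obtained from $T$ by deleting the edge $\{u,v\}$; as $u$ ranges over the neighbors of $v$, the sets $T_u(v)$ partition $A\setminus\{v\}$. I would call $v$ a \emph{median vertex} of $P$ if
\[
    |\{i\in N : \top(v_i)\in T_u(v)\}|\;\le\;(n-1)/2
\]
for every neighbor $u$ of $v$.

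First, I would show that a median vertex exists by a walking argument. Start at any vertex and, while the current vertex $v$ is not a median, move along the edge to an offending neighbor $u$, that is, one with $|\{i : \top(v_i)\in T_u(v)\}|\ge (n+1)/2$ (using that $n$ is odd). After the move, the ``back'' branch $T_v(u)=A\setminus T_u(v)$ contains at most $n-(n+1)/2=(n-1)/2$ peaks, so it cannot be offending at $u$. Any other offending neighbor $u'$ of $u$ must satisfy $T_{u'}(u)\subsetneq T_u(v)$, since $u\in T_u(v)\setminus T_{u'}(u)$. Thus the offending branches strictly shrink along the walk, and by finiteness of $T$ the process terminates at some median $v^\star$.

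Second, I would show that $v^\star$ is a Condorcet winner. Fix any $a\ne v^\star$, and let $u$ be the unique neighbor of $v^\star$ with $a\in T_u(v^\star)$. For every voter $i$ with $\top(v_i)\notin T_u(v^\star)$, the unique path in $T$ from $\top(v_i)$ to $a$ must cross the edge $\{v^\star,u\}$, so $v^\star$ lies on the $\top(v_i)$-to-$a$ path. By the definition of single-peakedness on a tree, voter $i$ therefore prefers $v^\star$ to $a$. Since at most $(n-1)/2$ voters have peaks in $T_u(v^\star)$, at least $(n+1)/2$ voters strictly prefer $v^\star$ to $a$, a strict majority.

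The main obstacle is the existence-of-a-median step, which needs a careful monotonicity argument to ensure that the walk does not cycle; the rest is a direct unpacking of the definition. A plausible alternative would be induction on $|A|$ via contraction of a leaf into its unique neighbor, but verifying that such a contraction interacts correctly with pairwise majorities seems more delicate than the direct median argument sketched above.
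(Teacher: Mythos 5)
Your proof is correct, but it takes a genuinely different route from the one in the survey. The survey's argument orients each edge of $T$ according to the majority relation; since orienting a tree always yields an acyclic digraph, there is a source vertex $c$, which beats all its neighbors by construction and beats any non-neighbor $d$ because a strict majority prefers $c$ to the first vertex $x$ on the $c$--$d$ path, and each such voter, being single-peaked on that path, prefers $x$ to $d$ and hence $c$ to $d$ by transitivity. Your argument instead generalizes the Median Voter Theorem directly: you locate a ``median vertex'' determined solely by the distribution of the voters' \emph{peaks}, and then count votes. Both proofs rest on the same core fact (if $v^\star$ lies on the path from a voter's peak to $a$, that voter prefers $v^\star$ to $a$), but they buy different things. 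The survey's proof is shorter once acyclicity is invoked, yet it needs the full majority relation on the tree edges. Your proof identifies the Condorcet winner from the top choices alone and comes with an explicit terminating search procedure, which is closer in spirit to \Cref{prop:median-voter-theorem}; your walking argument is sound (the offending branch strictly shrinks, and there is at most one offending neighbor at each step since two disjoint branches cannot each contain more than $n/2$ peaks), and the final counting step correctly yields a strict majority of at least $(n+1)/2$ voters preferring $v^\star$ to any other alternative.
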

\begin{proof}
	Fix a profile $P$ that is single-peaked on a tree $T$,
	and turn $T$ into a directed graph by orienting its edges 
	according to the majority relation, so that there is an arc $a \to b$
	if and only if $\{a,b\}$ is an edge of $T$ and $a \succ_{\text{maj}} b$ in $P$. 
	See \Cref{fig:spt-condorcet-proof} for an illustration.
	Since $T$ is a tree, the resulting directed graph $D$ is acyclic. 
	Hence it must have a source vertex $c \in A$, i.e.,  
	a vertex that dominates all of its neighbors. We claim that $c$ is a Condorcet winner. 
	Indeed, consider an alternative $d\in A\setminus\{c\}$. If $d$ is a neighbor of $c$, 
	then $c \succ_{\text{maj}} d$ since $c$ is a source. Otherwise, consider the (unique) 
	path $c-x-\cdots-d$ between $c$ and $d$ in $T$. Since $c$ is a source, 
	we have $c \succ_{\text{maj}} x$, and so there is a strict majority $N' \subseteq N$ 
	of voters who all prefer $c$ to $x$. Consider a voter $i\in N'$.
	Since her preferences remain single-peaked when restricted to the path $c-x-\cdots-d$, 
	by the no-valley property she prefers $x$ to $d$, and therefore, 
	by transitivity, $c\succ_i d$. Hence $c\succ_{\text{maj}} d$. 
\end{proof}
Interestingly, to determine which alternative is the Condorcet winner, it suffices to know the plurality scores of all the alternatives (i.e., the number of voter ranking it top), since this information is enough to decide how to orient the edges in the tree according to the majority relation \citep{xu2024localstability}. Indeed, for an edge $\{a,b\}$ of the tree $T$, if we delete the edge, we obtain a graph with two connected components. We can compute the total plurality score of the alternatives in each component. Then $a \succ_{\text{maj}} b$ if and only if the total plurality score in the component including $a$ is strictly higher than the total plurality score of the other component.

A slight modification of this proof shows that profiles that are single-peaked 
on a tree and have an even number of voters 
still have a weak Condorcet winner \citep{demange1982single}. Further,  
this preference domain is maximal in guaranteeing the existence of a Condorcet winner, 
in the following sense. 
Take any tree $T = (A,E)$ and let $L$ be the set of all linear orders that 
are single-peaked on $T$. \citet{demange1982single} shows that if we add any linear order to $L$, 
then there exists a profile that consists of preferences drawn from this enlarged set 
and does not have a Condorcet winner.

However, if a profile is single-peaked on a tree, its 
majority relation may fail to be transitive (unlike for profiles that are single-peaked on a path).
Examples are easy to construct, for instance using a 
star: recall that single-peakedness on a star is not very restrictive 
(\Cref{ex:spt-star}). In fact, one 
can show that paths are the only trees that guarantee a transitive majority relation.

\begin{proposition}[\citealp{demange1982single}]
	No tree other than a path guarantees a transitive majority relation.
\end{proposition}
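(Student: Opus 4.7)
The plan is to construct, for any tree $T$ that is not a path, a three-voter profile single-peaked on $T$ whose majority relation contains a Condorcet cycle (and is therefore not transitive). Since $T$ is not a path, it has a vertex $z$ of degree at least $3$; fix three distinct neighbors $a, b, c$ of $z$. The three voters $v_1, v_2, v_3$ will have $a$, $b$, $c$ respectively as their top choice, and on the subset $\{a, b, c\}$ their preferences will reproduce the classical Condorcet cycle ($a \succ b \succ c$, $b \succ c \succ a$, $c \succ a \succ b$).

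The workhorse is the characterization of single-peakedness on a tree given by \Cref{prop:spt-equiv}(4): a linear order $a_1 \succ a_2 \succ \cdots \succ a_m$ is single-peaked on $T$ if and only if every prefix $\{a_1, \dots, a_i\}$ induces a connected subgraph of $T$. So it suffices to enumerate the alternatives in an order where each newly listed vertex is adjacent in $T$ to some earlier one. For voter $1$, I start with the prefix $a, z, b, c$: this prefix is connected, since $z$ is a neighbor of $a$ and both $b$ and $c$ are neighbors of $z$. I then extend this prefix to a full linear order of $A$ greedily, at each step appending any remaining vertex that is adjacent to the set already listed; because $T$ is connected, this process terminates with a complete enumeration, and the resulting vote is single-peaked on $T$. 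Voters $2$ and $3$ are built analogously, starting from the prefixes $b, z, c, a$ and $c, z, a, b$.

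Restricted to $\{a, b, c\}$ these three votes are exactly the Condorcet triple, so the pairwise majority comparisons give $a \succ_{\text{maj}} b$, $b \succ_{\text{maj}} c$, $c \succ_{\text{maj}} a$, each by a $2{:}1$ margin. Hence the majority relation is cyclic and thus not transitive, completing the proof.

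I do not expect a serious obstacle; the only step requiring care is the extension of each prescribed prefix to a full single-peaked vote, which is handled cleanly by the prefix-connectedness criterion. It is also worth noting that the forced portions of the votes are truly forced: since $z$ lies on the unique $a$--$b$ path in $T$, single-peakedness with top $a$ forces $z \succ b$ for voter $1$, and analogously $z \succ c$. The one remaining degree of freedom on $\{a, z, b, c\}$, namely the relative order of $b$ and $c$ immediately after $z$, is precisely what we exploit across the three voters to manufacture the Condorcet cycle.
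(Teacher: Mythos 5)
Your proof is correct and follows essentially the same approach as the paper: locate a vertex of degree at least three and embed the Condorcet triple on three of its neighbors, completing each vote so as to remain single-peaked on $T$. The only (cosmetic) difference is that the paper places the hub vertex at the top of all three votes while you place the three neighbors at the tops with the hub second; your version is in fact slightly more explicit, since you justify the completion of each vote via the prefix-connectedness criterion rather than simply asserting that the votes can be completed.
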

\begin{proof}
	We have seen in \Cref{cor:sp-transitive} that paths do guarantee 
	a transitive majority relation. Suppose $T = (A,E)$ is a tree that is not a path, 
	so that there is some vertex $x\in A$ with at least three neighbors. 
	Suppose $a,b,c\in A$ are three distinct neighbors of $x$ in $T$. 
	Then consider the following profile:
	\begin{center} 
	\begin{tabular}{ccc}
	\toprule
	$v_1$ & $v_2$ & $v_3$ \\
	\midrule
	$x$ & $x$ & $x$ \\
	$a$ & $b$ & $c$ \\
	$b$ & $c$ & $a$ \\
	$c$ & $a$ & $b$ \\
	$\vdots$ & $\vdots$ & $\vdots$ \\
	\bottomrule
	\end{tabular}
	\end{center}
	Assume that the votes have been completed so as to be single-peaked on $T$. 
	The resulting profile has $x$ as the Condorcet winner, 
	but contains a Condorcet cycle 
	$a \succ_{\text{maj}} b \succ_{\text{maj}} c \succ_{\text{maj}} a$ below $x$.
\end{proof}

Similarly to \Cref{prop:sp-maj-relation-is-sp}, one can show that if $P$ is 
single-peaked on $T$ and its majority relation $\succ_\text{maj}$ forms a linear order,
then $\succ_\text{maj}$ is also single-peaked on $T$.  
Indeed, a vote is single-peaked on $T$ if and only if it is single-peaked on every path in $T$, 
and we can apply \Cref{prop:sp-maj-relation-is-sp} to each such path.

\paragraph{Strategyproof Voting Rules}
\citet{danilov1994structure} characterizes the class of strategy-proof voting rules 
for preferences single-peaked on a specific tree $T$. 
Like in \citeauthor{moulin1980strategy}'s \citeyearpar{moulin1980strategy} characterization 
for the single-peaked case, any such rule is a median of voters' peaks and `phantom' peaks, 
for an appropriate definition of `median'. \citet{peters2020spt} generalize this result to 
randomized rules and to more general graphs.

\paragraph{Counting}
As we have seen, there are $\Theta(m!)$ different rankings that are single-peaked 
on a star, but only $\Theta(2^m)$ rankings that are single-peaked on a path. 
Conversely, since a profile can be single-peaked on exponentially many different axes, 
it can also be single-peaked on exponentially many different trees. 
The collection of all such trees admits a concise representation, 
as we will discuss in \Cref{sec:recog:spt}.

In our discussion of single-peaked preferences, we saw that every single-peaked profile is 
single-peaked with respect to at least two axes, since reversing an axis preserves 
single-peakedness. However, from a graph-based perspective, `reversing' the vertices of a 
path does not change the graph $T = (A,E)$. Thus, it is possible for a profile to be 
single-peaked on a unique tree. In particular, \citet{trick1989recognizing} has shown that 
every narcissistic profile that is single-peaked on a tree is single-peaked on a unique tree.

\paragraph{Sampling}
\citet{sliwinski2019spt} give an efficient algorithm that given a fixed tree $T$, samples a preference single-peaked on $T$ uniformly at random. The algorithm can be seen as a generalization of the idea behind the algorithm of \citet{conitzer2009eliciting} for single-peaked preferences, except that it samples the vote's peak with probabilities carefully chosen to ensure that preferences are sampled uniformly at random.

\paragraph{Hereditariness}
It is clear from the definition that the domain of profiles single-peaked on a tree 
is closed under voter deletion. However, as the following example shows, it is not closed 
under deleting alternatives. Thus, this domain is not hereditary.

\begin{examplebox}[height=4.5cm]
	{The domain of profiles single-peaked on trees is not closed under alternative deletion}
	{sp-trees-not-closed-alt-del}
	\begin{wrapfigure}{l}{0.15\linewidth}
		\begin{tabular}{ccc}
			\toprule
			$v_1$ & $v_2$ & $v_3$ \\
			\midrule
			$x$ & $x$ & $x$ \\
			$a$ & $b$ & $c$ \\
			$b$ & $c$ & $a$ \\
			$c$ & $a$ & $b$ \\
			\bottomrule
		\end{tabular}
	\end{wrapfigure}
	This profile is single-peaked on a star with center $x$. However, deleting 
	alternative $x$ results in a profile that does not have a Condorcet winner, 
	so it cannot be single-peaked on a tree.
\end{examplebox}

While the domain of preferences single-peaked on trees 
is not closed under alternative deletion, it has a weaker property: 
if a profile is single-peaked on some tree $T = (A,E)$, and some alternative $a\in A$ has degree $2$ 
in $T$, then deleting $a$ yields a profile that is again single-peaked on a tree 
(namely, we delete $a$ from $T$ and join the two neighbors of $a$ by an edge). Similarly, 
we can delete leaves (vertices of degree $1$) from $T$.

An interesting direction for future work is to consider the domain of profiles that are 
single-peaked on a tree, and, moreover, remain single-peaked on some tree, no matter which 
(and how many) alternatives we delete.

\subsection{Preferences Single-Peaked on a Circle}
\label{sec:def:spoc}
In the previous section, we explored the consequences of 
extending the notion of single-peaked preferences
from paths to trees. Another class of graphs that are, in some sense, 
similar to paths are cycles. Motivated by this intuition, 
\citet{peters2017spoc} introduced the domain of 
preferences single-peaked on a circle. In such profiles, the alternatives can be arranged in a 
circle, and each voter ``cuts'' the circle at some point, obtaining a line, so that the voter's 
preferences are then single-peaked on this line. Note that different voters may cut the circle 
at different points. In our discussion, we follow the exposition of \citet{peters2017spoc}.

\paragraph{Motivating Examples}
In some cases, the alternative set comes with an obvious cyclic structure, for example, 
when deciding on a recurring meeting time (with weekdays forming a cycle), 
or a time for a daily event (on a 24-hour cycle). In facility location, 
when locating a facility along the boundary of an area (such as a city), 
one can often view this boundary as a circle.

\begin{wrapfigure}[7]{r}{0.25\linewidth}
	\scalebox{0.6}{
		\begin{tikzpicture}[yscale=0.6,xscale=0.9]
		\def\xmin{1}
		\def\xmax{7}
		\def\ymin{0}
		\def\ymax{6}
		\draw[step=1cm,black!10,very thin] (\xmin,\ymin) grid (\xmax,\ymax);
		\draw[thick, ->] (\xmin -0.3,\ymin) -- (\xmax+0.3,\ymin) node[right] {};
		\foreach \x/\y in {4/7,5/6,3/4, 6/5, 2/3, 7/2, 1/1}
		\node[fill=blue, circle, inner sep=0.6mm] at (\x,\y-1) {};
		\draw[thick,blue] (1,0)--(2,2)--(3,3)--(4,6) -- (5,5)--(6,4)--(7,1);
		\foreach \x/\y in {1/6,2/4,3/1, 4/2, 5/3, 6/5, 7/7}
		\node[fill=red!50!black, circle, inner sep=0.6mm] at (\x,\y-1) {};
		\draw[thick,red!50!black] (1,5)--(2,3)--(3,0)--(4,1) -- (5,2)--(6,4)--(7,6);
		\end{tikzpicture}
	}
\end{wrapfigure}

Another important subclass of preferences single-peaked on a circle is formed 
by mixtures of single-peaked and 
single-dipped preferences on a common axis. Such profiles are single-peaked on the circle 
obtained by gluing together the ends of the axis. Mixtures of this form occur 
naturally in facility location on a line, where voters may disagree whether the facility 
(e.g., a school or a hospital) is a good (and want it to be as close as possible 
to their own position) or a bad (and want it to be far away).

\paragraph{Definition}
We say that two axes $\lhd$ and $\lhd'$ are \emph{cyclically equivalent} if there is 
a value $\ell\in [m]$ such that we can write $a_1 \lhd a_2 \lhd a_3 \lhd \dots \lhd a_m$ 
and $a_\ell \lhd' a_{\ell+1} \lhd' \dots \lhd' a_m \lhd' a_1 \lhd' \dots \lhd' a_{\ell-1}$. 
For an axis $\lhd$, we then define the \emph{circle $C(\lhd)$ of $\lhd$} to be the set of axes 
cyclically equivalent to $\lhd$. Any set $C$ of axes that can be written as $C=C(\lhd)$ 
for some $\lhd$ is called a \emph{circle}. 
For example, $C = \{ a \lhd b \lhd c, b \lhd' c \lhd' a, c \lhd'' a \lhd'' b  \}$ is a circle.
Note that ``cutting'' a circle $C$ at a point yields an axis ${\lhd}\in C$. We say that 
$\lhd$ \emph{starts at} $a\in A$ if $a \lhd b$ for all $b\in A \setminus \{a\}$.

\begin{definition}
	Let $C$ be a circle. A vote $v_i$ is \defemph{single-peaked on $C$} if there is an axis 
	${\lhd} \in C$ such that $v_i$ is single-peaked with respect to $\lhd$. A preference profile 
	$P$ is \defemph{single-peaked on a circle (SPOC)} if there exists a circle $C$ such that 
	every vote $v_i \in P$ is single-peaked on $C$.
\end{definition}

\begin{wrapfigure}[4]{r}{0.1\linewidth}
	\scalebox{0.75}{
		\begin{tikzpicture}
		\graph[nodes={draw=black, fill=black, circle, inner sep=1pt}, empty nodes, clockwise, n=13] {
			{[nodes={red!80!black, very thick}]a; b; c; d; e; f; g; h; i;};
			{[nodes={blue!80!black, very thick}] j; k; l; m;};
			{[edges={red!80!black, ultra thick}] a -- b -- c -- d -- e -- f -- g -- h -- i};
			i -- j;
			m -- a;
			{[edges={blue!80!black, ultra thick}] j -- k -- l -- m}
		};
		\end{tikzpicture}}
\end{wrapfigure}
Intuitively, a vote $v_i$ is single-peaked on $C$ if $C$ can be cut so that $v_i$ is single-peaked 
on the resulting line.
Again, we can give an equivalent definition in terms of connected sets. 
An \emph{interval} $I \subseteq A$ of a circle $C$ is a set that is an interval 
of one of the axes ${\lhd} \in C$ of the circle. Then a vote $v_i$ is single-peaked on a circle $C$ 
if and only if each prefix $\{ a \in A : a \succ_i c \}$ of $v_i$ is an interval of $C$. 
Note that the complement $A\setminus I$ of an interval $I$ of $C$ is again an interval. 
Thus, a weak order $\pref$ is single-peaked on $C$ if and only if its reverse 
is also single-peaked on $C$.

\begin{examplebox}[height=3.6cm]
	{Single-peaked preference orders on a circle with 5 alternatives}
	{spoc-orders}
	\begin{wrapfigure}[6]{l}{0.16\linewidth}
		\begin{tikzpicture}
			\graph[nodes={baseline, circle, inner sep=1pt}, edges={draw=black}, clockwise, n=5] {
				subgraph C_n [V={a,b,c,d,e}];
			};
		\end{tikzpicture}
	\end{wrapfigure}
	Consider the circle with candidates $a$, $b$, $c$, $d$, $e$ in that order. The eight preference orders that are single-peaked on that circle and that rank $a$ on top are: $abcde$, $abced$, $abecd$, $abedc$, $aebcd$, $aebdc$, $aedbc$, $aedcb$. Hence there are $5 \times 8 = 40$ preference orderings single-peaked on this circle.
\end{examplebox}

\paragraph{Majority Relation}
One can easily check that
the Condorcet paradox profile $(x\succ_1y\succ_1z, y\succ_2 z \succ_2 x, z \succ_3 x \succ_3 y)$ on 3 alternatives is single-peaked on the circle $x\lhd y \lhd z$. Therefore, being single-peaked on a circle does not guarantee a transitive majority relation or the existence of a Condorcet winner. 
In fact, \citet{peters2017spoc} prove that McGarvey's theorem \citep{mcgarvey1953} 
holds for preferences single-peaked on a circle: every possible majority tournament 
can be induced as the majority relation of a profile that is single-peaked on a circle. 
Thus, this preference restriction does not impose any structure on the majority relation. 
Notably, even if the majority relation is transitive, it need not be single-peaked 
on a circle itself, in contrast 
to the case of preferences single-peaked on lines and trees.

\paragraph{Strategyproof Social Choice}
We have seen that, for any given tree (e.g., a path), the domain of preferences 
single-peaked on that tree
admits a non-trivial strategyproof voting rule. In contrast,  
the results of \citet{kim1980special} and \citet{sato2010circular} show that there 
is no (resolute) voting rule defined on profiles single-peaked on a circle 
that is non-dictatorial, onto, and satisfies strategyproofness.
In fact, they prove that this result holds for an even more restricted domain 
consisting only of the $2m$ orders which traverse the circle clockwise and 
counterclockwise starting from each possible alternative.
Note that the orders used in this proof are heavily `directional':
if we associate each agent with a position on the circle, then for a `clockwise'
agent her top choice is the first alternative located clockwise from her and her bottom
choice in the first alternative located counterclockwise from her 
(and similarly for `counterclockwise' agents).
Still, a similar dictatorship result can be proved for orders that 
are `Euclidean' on a circle, where preferences decrease uniformly 
in both directions from the peak \citep{schummer2002strategy}. 
It can also be shown that, with these Euclidean orders, the \emph{random dictatorship} 
rule is \emph{group}-strategyproof \citep{alon2010walking}, and there is an intriguing 
randomized mechanism that is strategyproof and provides a 3/2-approximation to 
the egalitarian social welfare \citep{alon2010strategyproof}.

\paragraph{Counting and Sampling}
Given a fixed circle $C$, the number of preference orders single-peaked on $C$ is $m\cdot 2^{m-2}$ \citep{oeisA057711}.
One can sample among these orders uniformly at random by using the analog of the \citet{conitzer2009eliciting} algorithm for single-peaked preferences: sample the peak uniformly at random, then repeatedly flip a coin to decide whether the next-most-preferred alternative should be adjacent to the set of already-ranked alternatives in clockwise or counterclockwise direction. This leads to a selection that is uniformly at random for the circular context \citep{boehmer2024guide}. 

\paragraph{Single-Crossing on a Circle}
\citet{constantinescu2022voting} give a definition of preferences that are single-crossing on a circle, which is equivalent to preferences that cross at most twice. Like for single-peaked on a circle, McGarvey's theorem also holds for single-crossing on a circle. \citet{constantinescu2022voting} give a polynomial-time recognition algorithm for this domain, and propose polynomial-time algorithms for evaluating Young's rule and the Chamberlin--Courant rule on this domain.

\subsection{Preferences Single-Crossing on a Tree}
\label{sec:def:sct}

The single-crossing domain also allows a generalization from the line to trees, as proposed by 
\citet{clearwater2015generalizing} and \citet{kung2015sorting}. (In fact, 
\citet{clearwater2015generalizing}, and, subsequently, \citet{puppe2015condorcet}, 
also consider preferences that are single-crossing on 
\emph{median graphs}, but we will not discuss this larger domain.) 
This restricted domain consists of profiles where 
voters correspond to the vertices of a tree, and for each pair of alternatives, 
the voters' preferences cross along a single edge.

\begin{definition}
	\label{def:sct}
	A profile $P$ of linear orders is \defemph{single-crossing on the tree $T = (N,E)$} 
	if for each pair $a,b\in A$ of alternatives, the sets $\{ i \in N : a \succ_i b \}$ 
	and $\{ i \in N : b \succ_i a \}$ are connected in $T$. A profile $P$ is 
	\defemph{single crossing on a tree} if there exists a tree $T$ such that $P$ 
	is single-crossing on $T$.
\end{definition}

Clearly, a profile is single-crossing in the sense of \Cref{sec:def:sc} if and only if it 
is single-crossing on a tree that is a path.

\begin{examplebox}[breakable]
	{A profile that is single-crossing on a tree, but not on a path}
	{sct-but-not-sc}
	\begin{wrapfigure}[2]{l}{0.22\linewidth}
		\begin{tikzpicture}
		\matrix (m) [matrix of math nodes, column sep=1ex] {
			\toprule
			v_1 & v_2 & v_3 & v_4 \\
			\midrule
			a & a & a & b \\
			c & b & b & a \\
			b & c & d & c \\
			d & d & c & d \\			
			\bottomrule \\
		};
		
		\begin{tikzbackground}
		\draw[line width=7pt, red!20, draw opacity=0.5, transform canvas={yshift=1mm}] 
		(m-2-1.west) -- (m-2-2.center) -- (m-2-3.center) -- (m-3-4.center) -- (m-3-4.east);

		\draw[line width=7pt, red!20, draw opacity=0.5, transform canvas={yshift=1mm}] 
		(m-4-1.west) -- (m-4-1.center) -- (m-3-2.center) -- (m-3-3.center) --(m-2-4.center) -- (m-2-4.east);

		\draw[line width=7pt, red!20, draw opacity=0.5, transform canvas={yshift=1mm}] 
		(m-3-1.west) -- (m-3-1.center) -- (m-4-2.center) -- (m-5-3.center) -- (m-4-4.center) -- (m-4-4.east);

		\draw[line width=7pt, red!20, draw opacity=0.5, transform canvas={yshift=1mm}] 
		(m-5-1.west) -- (m-5-1.center) -- (m-5-2.center) -- (m-4-3.center) -- (m-5-4.center) -- (m-5-4.east);
		\end{tikzbackground}
		\end{tikzpicture}
	\end{wrapfigure}
	It can be checked that this profile is not single-crossing. However, it is 
	single-crossing on the following tree:
	\begin{center}
	\begin{tikzpicture}
		\node [vertex, label={below:$v_1$}] (1) {};
		\node [vertex, label={below:$v_2$}, right=of 1] (2) {};
		\node [vertex, label={below:$v_3$}, right=of 2] (3) {};
		\node [vertex, label={above:$v_4$}, above=of 2] (4) {};
		\draw (1) edge (2) (2) edge (3) edge (4);
	\end{tikzpicture}
	\end{center}
	Indeed, for each pair of alternatives, the relevant sets of voters are all connected:
	\begin{center}
		\begin{tikzpicture}
			\node [vertex, label={below:$v_1$}] (1) {};
			\node [vertex, label={below:$v_2$}, right=of 1] (2) {};
			\node [vertex, label={below:$v_3$}, right=of 2] (3) {};
			\node [vertex, label={above:$v_4$}, above=of 2] (4) {};
			\draw (1) edge (2) (2) edge (3) edge (4);
			
			\begin{tikzbackground}
			\begin{scope}[every pin/.style={transform shape,scale=0.6,inner sep=2pt},every pin edge/.style={thin}]
				\node [fit={(1) (2)}, inner sep=2pt, draw=none, fill=red!20, rounded corners]  {};
				
				\node [fit={(2) (3)}, pin={above right:$a \succ b$}, inner sep=2pt, draw=none, fill=red!20, rounded corners]  {};
				\node [fit={(4)}, pin={220:$b \succ a$}, inner sep=2pt, draw=none, fill=red!20, rounded corners]  {};
			\end{scope}
			\end{tikzbackground}
		\end{tikzpicture}
		\quad
		\begin{tikzpicture}
			\node [vertex, label={below:$v_1$}] (1) {};
			\node [vertex, label={below:$v_2$}, right=of 1] (2) {};
			\node [vertex, label={below:$v_3$}, right=of 2] (3) {};
			\node [vertex, label={above:$v_4$}, above=of 2] (4) {};
			\draw (1) edge (2) (2) edge (3) edge (4);
			
			\begin{tikzbackground}
			\begin{scope}[every pin/.style={transform shape,scale=0.6,inner sep=2pt},every pin edge/.style={thin}]
				\node [fit={(2) (4)}, inner sep=2pt, draw=none, fill=red!20, rounded corners]  {};
				
				\node [fit={(2) (3)}, pin={above right:$b \succ c$}, inner sep=2pt, draw=none, fill=red!20, rounded corners]  {};
				\node [fit={(1)}, pin={50:$c \succ b$}, inner sep=2pt, draw=none, fill=red!20, rounded corners]  {};
			\end{scope}
			\end{tikzbackground}
		\end{tikzpicture}
		\quad
		\begin{tikzpicture}
			\node [vertex, label={below:$v_1$}] (1) {};
			\node [vertex, label={below:$v_2$}, right=of 1] (2) {};
			\node [vertex, label={below:$v_3$}, right=of 2] (3) {};
			\node [vertex, label={above:$v_4$}, above=of 2] (4) {};
			\draw (1) edge (2) (2) edge (3) edge (4);
			
			\begin{tikzbackground}
			\begin{scope}[every pin/.style={transform shape,scale=0.6,inner sep=2pt},every pin edge/.style={thin}]
				\node [fit={(1) (2)}, inner sep=2pt, draw=none, fill=red!20, rounded corners]  {};
				
				\node [fit={(3)}, pin={100:$d \succ c$}, inner sep=2pt, draw=none, fill=red!20, rounded corners]  {};
				\node [fit={(2) (4)}, pin={150:$c \succ d$}, inner sep=2pt, draw=none, fill=red!20, rounded corners]  {};
			\end{scope}
			\end{tikzbackground}
		\end{tikzpicture}
	\end{center}
\end{examplebox}

The following observation shows that
a profile is single-crossing on a tree if and only if it is single-crossing
on every path in that tree; it can be seen as a partial analog
of \Cref{prop:spt-equiv}.

\begin{proposition}
	\label{prop:sct-equiv}
A profile $P$ is single-crossing on a tree $T$ if and only if for every
path $Q$ in $T$ the restriction of $P$ to $Q$ is single-crossing.
\end{proposition}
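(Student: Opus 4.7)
My plan is to prove both implications directly from Definition \ref{def:sct}, which characterizes the single-crossing property on a tree in terms of connectedness of the ``agreement sets'' $\{i : a \succ_i b\}$.

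For the forward direction, suppose $P$ is single-crossing on $T$, and fix a path $Q$ in $T$ with vertex set $V(Q) \subseteq N$. For each pair $a,b \in A$, I would invoke a Helly-type property of trees: if $S_1, S_2 \subseteq N$ are each connected in a tree, then so is $S_1 \cap S_2$. Applying this to $S = \{i : a \succ_i b\}$ (which is connected in $T$ by assumption) and $V(Q)$ (connected in $T$ because $Q$ is a path), I conclude that $S \cap V(Q)$ is connected in $T$. Being contained in the vertex set of the path $Q$, this intersection must in fact form a subpath of $Q$, i.e., it is connected in $Q$. The same reasoning for $\{i : b \succ_i a\}$ shows that $P|_Q$ is single-crossing on the tree $Q$.

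For the backward direction, I would argue the contrapositive. Assume $P$ is not single-crossing on $T$, so that there exist alternatives $a, b$ for which the set $S = \{i : a \succ_i b\}$ is disconnected in $T$. Then I can find voters $i, j \in S$ and a vertex $k \notin S$ such that $k$ lies on the unique $i$--$j$ path in $T$; take $Q$ to be this path. Then $S \cap V(Q)$ contains $i$ and $j$ but not the intermediate vertex $k$, so it fails to be an interval of $Q$. Hence $P|_Q$ is not single-crossing on $Q$, contradicting the assumption.

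The only ingredient beyond unfolding definitions is the Helly property of trees used in the forward direction, but this follows immediately from the uniqueness of paths in a tree: the unique path connecting any two vertices of a connected subset stays inside that subset, so the intersection of two connected subsets is again closed under this path operation. I do not anticipate any real obstacle in either direction.
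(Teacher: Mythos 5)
Your proof is correct and follows essentially the same route as the paper's: both directions reduce to the uniqueness of paths in a tree, and your backward direction is word-for-word the paper's argument (pick two voters in a disconnected agreement set and take the path joining them). The only cosmetic difference is that you argue the forward direction directly, via the fact that the intersection of two connected vertex sets of a tree is connected, whereas the paper argues it contrapositively from a violating triple $i$--$k$--$j$ on the path; these are the same observation in two guises.
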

\begin{proof}
Suppose that for some pair of alternatives $a, b\in A$ the set of voters
who prefer $a$ to $b$ in $T$ is not connected. Let $i$ and $j$ be two such voters, 
and let $Q$ be the (unique) path in $T$ that connects them. Then some voter on $Q$
prefers $j$ to $i$ and hence the restriction of $P$ to $Q$ is not single-crossing.

Conversely, consider a path $Q$ in $T$ such that the restriction of $P$ to that
path is not single-crossing. Then there are two voters $i$ and $j$ on $Q$
and a pair of alternatives $a, b\in A$ such that $a\succ_i b$, $a\succ_j b$,
but there is a voter $k$ that lies between $i$ and $j$ on $Q$ such that
$b\succ_k a$. Since $T$ is a tree, one can only travel from $i$ to $j$
in $T$ along $Q$, so the set of voters who prefer $a$ to $b$ is not connected in $T$.
\end{proof}

\paragraph{Majority Relation}
For single-peakedness, we saw that, in generalizing from paths to trees, we lost transitivity of 
the majority relation, retaining only the existence of Condorcet winners. 
For single-crossingness, on the other hand, the move to trees preserves the transitivity guarantee. 
In fact, the Representative Voter Theorem still holds; in \Cref{ex:sct-but-not-sc}, 
voter $v_2$ is the representative voter.

\begin{theorem}
	[\citealp{clearwater2014single}]
	Let $P$ be a profile with an odd number of voters that is single-crossing on a tree 
	$T = (N,E)$. Then the majority relation of $P$ is transitive, and $P$ contains a vote
	that coincides with the majority relation.
\end{theorem}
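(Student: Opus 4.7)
My plan is to generalize the Representative Voter Theorem (\Cref{prop:sc-rvt}) from paths to trees by identifying a suitable \emph{centroid voter} of $T$ as the representative. The role played by the median voter in the path case will here be played by a vertex $v^\star \in N$ whose removal breaks $T$ into subtrees each of size at most $(n-1)/2$; it is a classical fact that every tree admits such a centroid, and for odd $n$ each component of $T - v^\star$ contains strictly fewer than $n/2$ voters.

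Given $v^\star$, the goal is to prove that the vote $v^\star$ itself coincides with the strict majority relation of $P$; transitivity of $\succ_{\text{maj}}$ is then immediate, since $v^\star$ is a linear order. So fix a pair $a, b \in A$ with $a \succ_{v^\star} b$, and consider the two sets
\[
S_{ab} = \{i \in N : a \succ_i b\}, \qquad S_{ba} = \{i \in N : b \succ_i a\}.
\]
By \Cref{def:sct}, both $S_{ab}$ and $S_{ba}$ are connected in $T$, and they partition $N$ with $v^\star \in S_{ab}$. The key structural fact I would use is that if a tree is partitioned into two nonempty connected vertex sets, then exactly one edge of $T$ runs between them (otherwise contracting each side would produce a cycle in a tree). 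Removing this edge yields $S_{ab}$ and $S_{ba}$ as the two resulting components.

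From this, since $S_{ba}$ is connected and avoids $v^\star$, it must be contained inside a single component of $T - v^\star$. By the centroid property, $|S_{ba}| \leq (n-1)/2$, so $|S_{ab}| \geq (n+1)/2$, a strict majority. Hence $a \succ_{\text{maj}} b$. Symmetrically, $b \succ_{v^\star} a$ implies $b \succ_{\text{maj}} a$, so the linear order $v^\star$ coincides with $\succ_{\text{maj}}$, and the majority relation is transitive.

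The main obstacle I foresee is not the centroid argument itself, which is standard, but rather justifying cleanly that $S_{ab}$ and $S_{ba}$ are separated by a single edge incident to (or reachable from) $v^\star$ in a way that lets the centroid bound bite. This is where the hypothesis that \emph{both} sides of the pairwise comparison are connected in $T$ is essential---a property that distinguishes single-crossingness on trees from the weaker condition one might naively expect by analogy with \Cref{prop:spt-equiv}, and which makes the argument work via \Cref{prop:sct-equiv} applied to the path in $T$ between any minority voter and $v^\star$ as a sanity check.
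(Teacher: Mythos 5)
Your proof is correct, but it takes a genuinely different route from the paper's. The paper's argument is non-constructive: it collects the family $\mathcal M = \{N_{ab} : a \succ_{\text{maj}} b\}$ of majority supporter sets, observes that each is a connected subtree and that any two strict majorities intersect, and then invokes the Helly property of subtrees of a tree to conclude that $\bigcap \mathcal M \neq \varnothing$; any voter in the intersection is representative. Your argument instead exhibits a concrete representative, namely a centroid $v^\star$ of $T$, and shows directly that for every pair $a \succ_{v^\star} b$ the losing set $S_{ba}$ is a connected subgraph avoiding $v^\star$, hence trapped in a single component of $T - v^\star$ of size at most $(n-1)/2$, so $S_{ab}$ is a strict majority. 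This is sound; note only that your intermediate "single separating edge" observation is dispensable---containment of $S_{ba}$ in one component follows immediately from its connectivity in $T$ together with $v^\star \notin S_{ba}$, since the unique path in $T$ between vertices in distinct components of $T - v^\star$ must pass through $v^\star$. What each approach buys: the Helly route is shorter modulo a citation and generalizes cleanly to other Helly families (e.g., the median-graph setting of \citealp{clearwater2015generalizing}); your centroid route is more elementary, avoids the black box, and identifies the representative voter explicitly (any centroid, computable in linear time), which nicely parallels the role of the median voter in \Cref{prop:sc-rvt}.
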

\begin{proof}
	For a pair $a,b\in A$ of alternatives, let $N_{ab} = \{ i \in N : a \succ_i b \}$. 
	Since $P$ is single-crossing on $T$, each set $N_{ab}$ is connected in $T$. 
	
	Write $\mathcal M = \{ N_{ab} : a,b\in A \text{ such that } a \succ_{\text{maj}} b \}$. 
	Note that any two sets $N_{ab}, N_{cd} \in \mathcal M$ must have non-empty intersection, 
	since any two strict majorities intersect. Hence, by the Helly property of connected 
	subsets of a tree \citep[see, e.g.,][]{golumbic2004algorithmic}, 
	we must have $\bigcap_{a \succ_{\text{maj}} b} N_{ab} \neq \varnothing$. 
	The voters in this intersection are representative voters.
\end{proof}

\paragraph{Hereditariness}
It is easy to see that a profile single-crossing on a tree remains single-crossing
on the same tree if we delete alternatives. 
However, the domain of preferences single-crossing on trees 
is not closed under deleting voters. In fact, \citet{clearwater2014single} 
show that if $P$ is a profile single-crossing on a tree, then all subprofiles of $P$ are 
single-crossing on a tree if and only if $P$ is in fact single-crossing on a path.

\begin{examplebox}
	{The domain of preferences single-crossing on trees is not closed under voter deletion}
	{sct-voter-del}
	\begin{wrapfigure}[8]{l}{0.2\linewidth}
		\begin{tikzpicture}
		\matrix (m) [matrix of math nodes, column sep=1ex] {
			\toprule
			v_1 & v_2 & v_3 & v_4 \\
			\midrule
			a & a & a & b \\
			c & b & b & a \\
			b & c & d & c \\
			d & d & c & d \\			
			\bottomrule \\
		};
		
		\begin{tikzbackground}
		\draw[line width=7pt, red!20, draw opacity=0.5, transform canvas={yshift=1mm}] 
		(m-2-1.west) -- (m-2-2.center) -- (m-2-3.center) -- (m-3-4.center) -- (m-3-4.east);

		\draw[line width=7pt, red!20, draw opacity=0.5, transform canvas={yshift=1mm}] 
		(m-4-1.west) -- (m-4-1.center) -- (m-3-2.center) -- (m-3-3.center) --(m-2-4.center) -- (m-2-4.east);

		\draw[line width=7pt, red!20, draw opacity=0.5, transform canvas={yshift=1mm}] 
		(m-3-1.west) -- (m-3-1.center) -- (m-4-2.center) -- (m-5-3.center) -- (m-4-4.center) -- (m-4-4.east);

		\draw[line width=7pt, red!20, draw opacity=0.5, transform canvas={yshift=1mm}] 
		(m-5-1.west) -- (m-5-1.center) -- (m-5-2.center) -- (m-4-3.center) -- (m-5-4.center) -- (m-5-4.east);
		\end{tikzbackground}
		\end{tikzpicture}
	\end{wrapfigure}
	Revisiting the profile in \Cref{ex:sct-but-not-sc}, we see that this profile 
	is no longer single-crossing on a tree if we delete the central voter $v_2$.
	To see this, notice that a three-voter profile is single-crossing on a tree 
	if and only if it is single-crossing (because a tree with three vertices must be a path). Can we order $v_1$, $v_3$ and $v_4$
	so that the resulting profile is single-crossing in the given order?
	In particular, which voter can be placed second? 
	Not $v_1$ (because of $b, c$), 
	not $v_3$ (because of $c, d$), and not $v_4$ (because of $a, b$), 
	so the answer is `no'.
	We conclude that $(v_1,v_3,v_4)$ is not single-crossing (on any tree) 
	and hence the domain of preferences single-crossing on trees 
	is not closed under voter deletion.
\end{examplebox}

\subsection{Single-Peaked and Single-Crossing (SPSC) Preferences}
\label{sec:def:spsc}

Every 1-Euclidean profile is single-peaked and single-crossing, but the converse is not true  when there are 6 or more alternatives (\Cref{ex:spsc-not-euclid}, \citealp{chen2021small}).
Even though the 1-Euclidean domain does not coincide with the domain of preference profiles 
that are both single-peaked and single-crossing (SPSC for short), we can still view SPSC as a 
`combinatorial'  version of 1-Euclidean preferences. \citet{elkind2020spsc} undertook 
the task of understanding more closely how the two concepts interact. A key role 
in their approach is played by so-called narcissistic profiles: A preference profile $P$ 
over $A$ is \emph{narcissistic} if for every $c\in A$, there is a voter $i \in N$ 
with $\top(i) = c$. The term `narcissistic' is chosen because such profiles arise in situations 
where voters and alternatives coincide, and each voter ranks herself first; some authors
use the term `minimally rich' instead. 
We start with an interesting observation on the impact of narcissism on single-crossing profiles, 
which is discussed by \citet{puppe2016single} and implicit in 
\citet[Theorem~1]{barbera2011top}.

\begin{proposition}
	\label{prop:nsc-to-sp}
	A narcissistic single-crossing profile $P = (v_1,\dots,v_n)$ is single-peaked 
	with respect to the axis ${\lhd} = {\succ_1}$ given by the preference order of the first voter.
\end{proposition}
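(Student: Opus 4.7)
The plan is to prove the contrapositive via the no-valley characterization of single-peakedness from \Cref{prop:sp-equiv}. So I assume, toward a contradiction, that some vote $v_i$ has a valley with respect to $\lhd = {\succ_1}$: there exist alternatives with $a \lhd b \lhd c$ such that $a \succ_i b$ and $c \succ_i b$. Unpacking the definition of $\lhd$, this just says that voter $1$ ranks them as $a \succ_1 b \succ_1 c$, while voter $i$ ranks $b$ below both $a$ and $c$.

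Next I invoke narcissism to produce a witness that will force a double-crossing. Since $P$ is narcissistic, there is a voter $j$ with $\top(v_j) = b$; in particular $b \succ_j a$ and $b \succ_j c$. I will now locate $j$ relative to $i$ using the single-crossing property on two different pairs, and obtain contradictory constraints.

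For the pair $\{a,b\}$: voter $1$ and voter $i$ both rank $a$ above $b$, while voter $j$ ranks $b$ above $a$. By single-crossingness the set of voters preferring $a$ to $b$ is an interval of $[n]$; it contains $1$ and $i$, and since $1$ is the leftmost voter this interval is of the form $\{1,\dots,k\}$ with $k \geq i$. Hence $j > i$. For the pair $\{b,c\}$: voter $1$ and voter $j$ both rank $b$ above $c$, while voter $i$ ranks $c$ above $b$. Again by single-crossingness, the set of voters preferring $b$ to $c$ is an interval of $[n]$ containing $1$ and $j$, hence an initial segment $\{1,\dots,k'\}$ with $j \leq k'$. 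But $i$ lies strictly between $1$ and $j$, so $i \in \{1,\dots,k'\}$ would force $b \succ_i c$, contradicting our assumption $c \succ_i b$. This contradiction shows that no vote in $P$ can have a valley on $\lhd$, so $P$ is single-peaked with respect to $\lhd$.

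The main (and only) subtle step is choosing the right narcissistic witness: picking the voter whose top is the \emph{middle} alternative $b$ of the putative valley, rather than $a$ or $c$, is what produces incompatible interval constraints on the two pairs. Once that choice is made, everything else is a direct application of the single-crossing property to two pairs.
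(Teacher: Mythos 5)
Your proof is correct and follows essentially the same route as the paper's: assume a valley $a \succ_1 b \succ_1 c$ with $a \succ_i b$ and $c \succ_i b$, use narcissism to pick the voter whose peak is the middle alternative $b$, and apply the single-crossing interval condition to the pairs $\{a,b\}$ and $\{b,c\}$ to derive contradictory orderings of the two voters. The only difference is cosmetic (which voter you name $i$ versus $j$, and deriving the contradiction directly on the second pair rather than as an inequality).
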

\begin{proof}
	Suppose for the sake of contradiction that 
	there are alternatives $a \succ_1 b \succ_1 c$ with respect to which 
	voter $j>1$ has a valley, i.e., $a \succ_j b$ and $c\succ_j b$. 
	Because $P$ is narcissistic, there is some voter $i$ with $\top(i) = b$, 
	and hence $b \succ_i c$. Since $P$ is single-crossing we must have $i < j$. 
	But also $b \succ_i a$, and so, since $P$ is single-crossing, we must have $i > j$. 
	This is a contradiction.
\end{proof}

However, the converse is not true: clearly, there are SPSC profiles that are not narcissistic.
Indeed, we can start with a narcissistic SPSC profile and delete all voters whose
top choice belongs to some non-empty subset of $A$. 
The resulting profile is not narcissistic, but it remains 
SPSC, because both the single-peaked domain and the single-crossing domain 
are closed under voter deletion. It turns out that all SPSC profiles
can be generated in this way.

\begin{theorem}[\citealp{elkind2020spsc}, Theorem 9]
	A preference profile is both single-peaked and single-crossing if and only if 
	it can be obtained from a narcissistic single-crossing profile by deleting voters.
\end{theorem}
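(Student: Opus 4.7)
The $(\Leftarrow)$ direction is immediate from prior results: if $P$ is obtained from a narcissistic single-crossing profile $P^*$ by voter deletion, then $P^*$ is single-peaked by \Cref{prop:nsc-to-sp}, and since both the single-peaked and single-crossing domains are closed under voter deletion, $P$ is SPSC.

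For $(\Rightarrow)$, fix an SPSC profile $P=(v_1,\dots,v_n)$ with single-peaked axis $\lhd$ and voters already listed in single-crossing order. The plan is to enlarge $P$ to a narcissistic SPSC profile by iteratively inserting, for each alternative that is missing as a top, a new voter at the right spot in the single-crossing sequence. A preliminary lemma I would prove first is that the tops $\top(v_1),\dots,\top(v_n)$ appear in monotone order along $\lhd$: for any three voters $v_i,v_j,v_k$ ($i<j<k$) with pairwise distinct tops $a,b,c$, single-crossing gives $b\succ_k a$ and $b\succ_i c$, and combining this with the no-valley property of single-peakedness rules out every arrangement on $\lhd$ except $a\lhd b\lhd c$ and its reverse. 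After possibly reversing the single-crossing order (\Cref{prop:sc-order-unique}), I may assume the tops are weakly increasing on $\lhd$.

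The induction step picks a missing top $a_k$, lets $v_j,v_{j+1}$ be the neighbouring voters with $\top(v_j)\lhd a_k\lhd\top(v_{j+1})$ (with analogous boundary conventions when $a_k$ is extremal), and inserts a new voter $u$ between them. I would construct $u$ greedily so that (i) $u$ has top $a_k$, (ii) $u$ is single-peaked on $\lhd$, and (iii) $u$ is a linear extension of the partial order $\succ_j\cap\succ_{j+1}$. At each step the alternatives already placed in $u$ form an interval $[a_p,a_q]$ containing $a_k$, and (ii) forces the next entry of $u$ to be $a_{p-1}$ or $a_{q+1}$; I pick whichever is $\succ_j\cap\succ_{j+1}$-maximal among the unranked alternatives. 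The crucial claim is that such a choice always exists: any remaining $a_\ell$ with $\ell<p-1$ lies to the left of $a_{p-1}$ and hence is dispreferred to $a_{p-1}$ by $v_{j+1}$ (whose peak sits to the right of $a_q$), and symmetrically any $a_\ell$ with $\ell>q+1$ is dispreferred to $a_{q+1}$ by $v_j$; so no alternative outside $\{a_{p-1},a_{q+1}\}$ dominates either of them in $\succ_j\cap\succ_{j+1}$, while $a_{p-1}$ and $a_{q+1}$ cannot both strictly dominate each other.

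To close out, I would verify that inserting this $u$ between $v_j$ and $v_{j+1}$ preserves single-crossing: on every pair where $v_j$ and $v_{j+1}$ agree, $u$ agrees by (iii) and the single-crossing interval simply grows to include $u$; on every pair where they disagree, the unique crossing is merely shifted to be adjacent to $u$. Since $u$ is single-peaked on $\lhd$ by (ii), the extended profile is SPSC and has one more top represented, so iterating terminates with a narcissistic SPSC profile $P^*$, from which $P$ is recovered by deleting the inserted voters. The main technical obstacle is establishing the existence of the greedy step's maximal choice, which is where the monotonicity-of-tops lemma together with single-peakedness of \emph{both} $v_j$ and $v_{j+1}$ is crucial; the extremal boundary cases need a small analogous argument using only the unique neighbouring voter, again relying on the monotonicity lemma for consistency.
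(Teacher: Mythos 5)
Your overall strategy---extending the SPSC profile to a narcissistic one by inserting, for each missing top $a_k$, a single-peaked vote between the two adjacent voters whose tops straddle $a_k$ in the single-crossing order---is precisely the constructive argument that the paper's proof sketch alludes to, and your backward direction and monotonicity-of-tops lemma are both correct. However, the justification of your ``crucial claim'' does not hold as written. You assert that no alternative outside $\{a_{p-1},a_{q+1}\}$ dominates either of them in $\succ_j\cap\succ_{j+1}$, but your two observations only exclude dominators of $a_{p-1}$ lying to its \emph{left} and dominators of $a_{q+1}$ lying to its \emph{right}. An alternative $a_\ell$ with $\ell>q+1$ can genuinely dominate $a_{p-1}$: on the axis $a_1\lhd\dots\lhd a_7$ with $a_k=a_4$, take $v_j\colon a_3\succ a_4\succ a_5\succ a_6\succ a_7\succ a_2\succ a_1$ and $v_{j+1}\colon a_5\succ a_4\succ a_6\succ a_7\succ a_3\succ a_2\succ a_1$ (both single-peaked), and the reachable interval $[a_3,a_5]$; then $a_7$ dominates $a_2=a_{p-1}$ in both votes. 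The conclusion you need (at least one of $a_{p-1},a_{q+1}$ is maximal) is still true, but it needs an extra step: if the peak of $v_j$ lies strictly left of $a_{p-1}$, then $a_{p-1}$ beats every unranked alternative to its right in $\succ_j$ and is maximal (symmetrically for $v_{j+1}$ and $a_{q+1}$); in the remaining case any far-right dominator of $a_{p-1}$ is itself dominated by $a_{q+1}$, so by transitivity $a_{q+1}$ dominates $a_{p-1}$ and one then checks $a_{q+1}$ is maximal. (Also, your parenthetical ``whose peak sits to the right of $a_q$'' is not guaranteed; the peak of $v_{j+1}$ need only lie to the right of $a_k$, which is all the fact you actually use requires.)

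The second gap is the boundary case, which you dismiss as ``a small analogous argument using only the unique neighbouring voter.'' When $a_k$ lies outside the range of the tops, $u$ must be appended at an end of the single-crossing order, and the condition for preserving single-crossingness there is \emph{not} expressible via the single neighbour $v_1$ alone: a linear extension of $\succ_1$ with a different top does not exist, and the correct requirement is that $u$ extend the relation consisting of all pairs $(x,y)$ with $x\succ_1 y$ for which some voter in the entire profile prefers $y$ to $x$ (on unanimous pairs $u$ may freely disagree with $v_1$). The argument can be completed---the key point being that all peaks lie on one side of $a_k$---but the cleanest repair is to invoke \Cref{prop:comp-axis-exists} and first prepend the vote $\lhd$ and append $\textit{reverse}(\lhd)$: this makes both endpoints of the axis into tops, so every missing alternative lies strictly between two consecutive tops and the boundary case disappears entirely.
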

\begin{proof}[Proof sketch]
	We already know one direction from \Cref{prop:nsc-to-sp}: every profile obtained 
	by deleting voters from a narcissistic single-crossing profile is SPSC. 
	The other direction can be shown by a constructive argument: given an SPSC profile, 
	we can extend it to a narcissistic one, while preserving the single-crossing property.
\end{proof}

\citet[Theorem~10]{elkind2020spsc} also give an $O(m^2n)$ time algorithm that checks whether 
a given profile $P$ is single-peaked and single-crossing, and, if so, returns a narcissistic 
single-crossing profile $P' \supseteq P$.

As an alternative characterization, \citet[Corollary~12]{elkind2014characterization} 
show that a profile $P$ is SPSC if and only if for every alternative $c\in A$, 
there is a vote $v_i$ with $\top(v_i) = c$ such that $P + v_i$ is single-crossing 
(for an appropriate reordering of the voters).
Finally, \citet[Proposition~11]{elkind2020spsc} show that in an SPSC profile written 
in a single-crossing order, the `trajectories' of all alternatives are single-peaked, 
in the sense that if $a=\text{top}(v_k)$ for some $k\in [n]$ then 
for all $i, j\in [n]$  such that $1\le i< j\le k$ or
$k\le j< i\le n$ we have $\rank_i(a)\ge \rank_j(a)$.

\subsection{Multidimensional Single-Peaked Preferences}
\label{sec:def:multidim-sp}

\newcommand{\multisp}{\mathbin{\vec{\lhd}}}

Single-peaked preferences are described in terms of ordering the alternatives 
along a single dimension---the axis. 
It is natural to ask whether this concept can be extended to higher dimensions
in a way that preserves some of its desirable features.
\citet{barbera1993generalized} were the first to introduce a multidimensional analog 
of the single-peaked domain.
Their definition of multidimensional single-peaked preferences is based on the assumption 
that alternatives are points on a grid and every grid point is an alternative; 
thus, the alternative set has a very specific structure.
\citet{sui2013multi} define $d$-dimensional single-peaked preferences 
for a general model with any alternative set.
Here, we consider their definition and a slight variant.
The domain defined by \citet{sui2013multi}, which corresponds to condition (MD1) 
in the following definition, turns out not to be closed under alternative deletion. 
Our variant (MD2) is hereditary.

\begin{definition}
	Let $P$ be a profile over $A$ and let $\multisp = (\lhd_1,\ldots,\lhd_d)$ 
	be a $d$-tuple of linear orders over~$A$.
	For three alternatives $a,b,c$, we write $a \multisp b \multisp c$ if $b$ 
	is between $a$ and $c$ on every axis, i.e., for every $k\in[d]$ 
	it holds that either $a\lhd_k b\lhd_k c$ or $c\lhd_k b\lhd_k a$.
	\begin{enumerate}[(MD1)]
		\item Vote $v_i$ over $A$ is \defemph{$d$-dimensional single-peaked with respect to $\multisp$} if 
		for every pair of alternatives $a,b\in A$ with $a\multisp b\multisp \top(v_i)$ 
		it holds that $b\succ_i a$.
		\item Vote $v_i$ over $A$ is \defemph{$d$-dimensional hereditary single-peaked 
		with respect to $\multisp$} if 
		for every triple of alternatives $a,b,c\in A$ with $a\multisp b\multisp c$ 
		we do not have both $a\succ_i b$ and $c\succ_i b$.
	\end{enumerate}
	A profile $P$ over $A$ is \emph{$d$-dimensional (hereditary) single-peaked with respect to $\multisp$} 
	if every vote in $P$ is $d$-dimensional (hereditary) single-peaked with respect to $\multisp$, and 
	it is \emph{$d$-dimensional (hereditary) single-peaked} if there exists a $d$-tuple $\multisp$ 
	of linear orders over $A$ such that $P$ is $d$-dimensional (hereditary) single-peaked 
	with respect to $\multisp$.
	\label{def:multidim-sp}
\end{definition}

It is easy to see that if $P$ satisfies (MD2) then it also satisfies (MD1): 
the constraint in (MD1) is obtained by reformulating the constraint in (MD2)
for triples of the form $(a, b, \top(v_i))$ only.

By \Cref{prop:sp-equiv}, in the one-dimensional case the conditions (MD1) 
and (MD2) are equivalent, and indeed single-peakedness (according to \Cref{def:sp}) is 
the same as one-dimensional (hereditary) single-peakedness.

\begin{examplebox}
	{Preferences that are two-dimensional single-peaked}
	{md-sp}
	\begin{wrapfigure}[9]{l}{0.29\linewidth}
		\begin{tikzpicture}
		[scale=0.6,
		alternative/.style={circle,draw,fill=white,inner sep=0pt,minimum size=13pt,scale=0.85}]
		\foreach \y in {0,1,2,3,4,5}
		\draw[gray] (0,\y) -- (5,\y);
		\foreach \x in {0,1,2,3,4,5}
		\draw[gray] (\x,0) -- (\x,5);
		
		\draw[-latex] (-0.5,-0.5) -- (5.5,-0.5);
		\draw[-latex] (-0.5,-0.5) -- (-0.5,5.5);
		\node[scale=0.7] at (5.7,-0.75) (lhd1) {$\lhd_1$};
		\node[scale=0.7] at (-0.99,5.5) (lhd2) {$\lhd_2$};
		
		\node[alternative] at (0,0) (a) {$a$};
		\node[alternative] at (1,3) (b) {$b$};
		\node[alternative] at (2,1) (c) {$c$};
		\node[alternative] at (3,5) (d) {$d$};
		\node[alternative] at (4,2) (e) {$e$};
		\node[alternative] at (5,4) (f) {$f$};	
		\end{tikzpicture}
	\end{wrapfigure}
	Take axes $a\lhd_1 b\lhd_1 c\lhd_1 d\lhd_1 e\lhd_1 f$ and $a\lhd_2 c\lhd_2 e\lhd_2 b\lhd_2 f\lhd_2 d$.
	Then the alternatives can be displayed on a grid as shown. 
	The order $a \succ c \succ e \succ b \succ d \succ f$ satisfies both (MD1) and (MD2). 
	The order $d \succ f \succ c \succ e \succ b \succ a$ satisfies (MD1), but it fails (MD2) 
	because $c \multisp e \multisp f$ yet $e$ is ranked below both $c$ and $f$. 
	(To convince yourself that $e$ is between $c$ and $f$ in the sense that 
	$c \multisp e \multisp f$, note that $e$ is in the box spanned 
	by corner points $c$ and $f$). 
	Finally, the order $a \succ b \succ e \succ d \succ c \succ f$ fails (MD1) 
	(and hence also (MD2)), because $a \multisp c \multisp e$ yet $c$ is ranked below 
	both $a = \top(\succ)$ and $e$.
\end{examplebox}

It is immediate that (MD2) is hereditary: removing an alternative $x$ does not change the set of 
triples of alternatives $a, b, c\in A\setminus\{x\}$ such that $a\multisp b \multisp c$,
and for each such triple the voters' preferences do not change either. However, for (MD1) this argument
does not go through: if we remove an alternative $x$ such that $x=\top(v_i)$ for some voter $i$,
in the new profile the constraint in (MD1) should now apply to triples of alternatives
involving $i$'s top choice in $A\setminus\{x\}$. Indeed,   
the following example shows that (MD1) is not hereditary for $d \ge 2$, 
and thus (MD1) and (MD2) are not equivalent for $d \ge 2$.

\begin{examplebox}
	{The two-dimensional single-peaked domain defined by (MD1) is not hereditary}
	{md1-not-hereditary}
	\begin{wrapfigure}[9]{l}{0.29\linewidth}
		\begin{tikzpicture}
		[scale=0.6,
		alternative/.style={circle,draw,fill=white,inner sep=0pt,minimum size=13pt,scale=0.85}]
		\foreach \y in {0,1,2,3,4,5}
		\draw[gray] (0,\y) -- (5,\y);
		\foreach \x in {0,1,2,3,4,5}
		\draw[gray] (\x,0) -- (\x,5);
		
		\draw[-latex] (-0.5,-0.5) -- (5.5,-0.5);
		\draw[-latex] (-0.5,-0.5) -- (-0.5,5.5);
		\node[scale=0.7] at (5.7,-0.75) (lhd1) {$\lhd_1$};
		\node[scale=0.7] at (-0.99,5.5) (lhd2) {$\lhd_2$};
		
		\node[alternative] at (0,0) (a) {$a$};
		\node[alternative] at (1,5) (b) {$b$};
		\node[alternative] at (2,4) (c) {$c$};
		\node[alternative] at (3,3) (d) {$d$};
		\node[alternative] at (4,2) (e) {$e$};
		\node[alternative] at (5,1) (f) {$f$};	
		\end{tikzpicture}
	\end{wrapfigure}
	Consider the profile $P$ over $A = \{a,b,c,d,e,f\}$ that contains all $5!$ votes 
	that rank $a$ first.
	Consider axes $a\lhd_1 b\lhd_1 c\lhd_1 d\lhd_1 e\lhd_1 f$ and $a\lhd_2 f\lhd_2 e\lhd_2 d\lhd_2 c\lhd_2 b$.
	Then $P$ satisfies (MD1), since for all $x, y\in A\setminus\{a\}$ 
	it is not the case that $x\multisp y\multisp a$.
	Hence, this profile is two-dimensional single-peaked.
	However, if we remove candidate $a$ from the profile, the resulting profile 
	$P|_{A\setminus\{a\}}$ is no longer two-dimensional single-peaked.
	Indeed, this profile contains all possible orderings of alternatives 
	in $A\setminus\{a\}$. Hence, for it to be two-dimensional single-peaked, 
	there has to exist a 2-tuple of linear orders $\multisp=(\lhd_1, \lhd_2)$ 
	such that there is no triple of alternatives $x, y, z\in A\setminus\{a\}$ 
	with $x\multisp y\multisp z$.
	However, it can be shown that for five alternatives no such 2-tuple exists.
	(That the restricted profile violates (MD1) also follows from the tightness claim 
	of \Cref{prop:multisp-triviality}.)
\end{examplebox}

We observe that if a profile is $d$-dimensional (hereditary) single-peaked, 
then it is also $d'$-dimensional (hereditary) single-peaked for $d'>d$.
Indeed, if a $d'$-tuple $\multisp'=(\lhd_1, \dots, \lhd_d, \allowbreak \lhd_{d+1}, \dots, \lhd_{d'})$ is obtained
from a $d$-tuple $\multisp = (\lhd_1, \dots, \lhd_d)$ by adding additional axes, 
then $a\multisp' b\multisp' c$ implies $a\multisp b\multisp c$, i.e., 
adding axes can only reduce the number of triples of alternatives 
to which the constraints in (MD1) and (MD2) apply, making these constraints
easier to satisfy. This suggests that, as $d$ grows, the domain of $d$-dimensional 
single-peaked profiles becomes quite large. The following result confirms this intuition: it shows that 
every profile is $d$-dimensional (hereditary) single-peaked for sufficiently large $d$.%
\footnote{We thank Jan Kyncl for pointing us to the results used in the proof.}

\begin{theorem}
Let $d\ge 1$.
All profiles with at most $2^{2^{d-1}}$ alternatives are 
$d$-dimensional (hereditary) single-peaked. 
This bound is tight:
The profile containing all linear orders over $2^{2^{d-1}}+1$ alternatives 
is not $d$-dimensional (hereditary) single-peaked.%
\label{prop:multisp-triviality}
\end{theorem}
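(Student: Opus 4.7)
The plan is to reduce both parts of the theorem to a single extremal statement, and then address the two sides separately. First, I would observe that for a $d$-tuple $\multisp=(\lhd_1,\ldots,\lhd_d)$ over $A$, the full profile on $A$ is $d$-dimensional (hereditary) single-peaked with respect to $\multisp$ if and only if $\multisp$ admits no \emph{bad triple}: no three distinct $a,b,c\in A$ with $a\multisp b\multisp c$. Indeed, a bad triple immediately produces votes in the full profile witnessing failure of both (MD2) (any vote ranking $a,c$ above $b$) and (MD1) (any vote with top $c$ ranking $a$ above $b$). Conversely, when no bad triple exists, both (MD1) and (MD2) are vacuous—not only for the full profile, but for every profile on $A$. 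Both halves of the theorem therefore reduce to the statement: the maximum number of alternatives admitting a $d$-tuple of linear orders without a bad triple is exactly $2^{2^{d-1}}$.

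For the upper bound $m\le 2^{2^{d-1}}$, I would use an Erdős–Szekeres-style pigeonhole argument. After relabelling so that $a_1\lhd_1\cdots\lhd_1 a_m$, define, for each $i\in[m]$ and each sign vector $v\in\{+,-\}^{d-1}$, the quantity $L_v(i)$ as the length of the longest chain $i_1<\cdots<i_\ell=i$ whose every consecutive pair has sign vector $v$ on axes $\lhd_2,\ldots,\lhd_d$ (meaning $a_{i_t}\lhd_k a_{i_{t+1}}$ iff $v_{k-1}=+$). If some $L_v(i)\ge 3$, then the last three entries of a witnessing chain form a bad triple on all $d$ axes. Otherwise every label $(L_v(i))_v$ belongs to $\{1,2\}^{2^{d-1}}$, a set of size $2^{2^{d-1}}$; when $m>2^{2^{d-1}}$, pigeonhole yields two indices $i<j$ with identical labels, and taking $v$ to be the sign vector between $a_i$ and $a_j$ on axes $\lhd_2,\ldots,\lhd_d$ lets us append $j$ to the chain ending at $i$, giving $L_v(j)\ge L_v(i)+1$—a contradiction.

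For the tightness (existence of a $d$-tuple on $[2^{2^{d-1}}]$ without a bad triple, from which all $m\le 2^{2^{d-1}}$ follow by restricting the orders to a subset of the alternatives), I would proceed by induction on $d$. The base case $d=1$ is immediate ($m=2$, no triples exist). The step from $d-1$ to $d$ squares the point set—from $n=2^{2^{d-2}}$ to $n^2=2^{2^{d-1}}$—by indexing points as pairs in $[n]^2$ and combining the previous $(d-1)$-tuple into $d$ axes via a lexicographic-product construction, in the spirit of the classical Erdős–Szekeres extremal permutation (nested reversed blocks). The invariant to maintain is that the pigeonhole argument above becomes tight: every label in $\{1,2\}^{2^{d-1}}$ is realized by exactly one point. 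I expect this construction step to be the main technical obstacle: naive products of the smaller axes introduce unwanted monotone triples across blocks, so a careful interleaving of the component orderings is needed (or, alternatively, one invokes a known multidimensional Erdős–Szekeres-type extremal construction, consistent with the paper's acknowledgement to Kyncl).
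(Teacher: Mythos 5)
Your proposal follows essentially the same route as the paper: both reduce the theorem to the purely combinatorial question of whether a $d$-tuple of linear orders admits a \emph{monotone triple} (your ``bad triple''), and both identify $2^{2^{d-1}}$ as the extremal size for triple-free configurations via an Erd\H{o}s--Szekeres-type result. Your reduction---a monotone triple defeats the full profile under both (MD1) and (MD2), and its absence makes both conditions vacuous for every profile---is exactly the paper's. Where you genuinely diverge is the tightness direction: the paper invokes de Bruijn's theorem (published by Kruskal, with details by Alon, F\"uredi, and Katchalski) as a black box, whereas you give a correct self-contained pigeonhole proof, labelling each index by its vector of chain lengths over the $2^{d-1}$ sign patterns and noting that a repeated label in $\{1,2\}^{2^{d-1}}$ extends a chain; this is in effect the standard proof of the cited result, and inlining it makes the argument self-contained at essentially no cost. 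For the positive direction, however, your inductive squaring construction is only a sketch: as you yourself observe, the naive lexicographic product introduces monotone triples across blocks, and you do not verify the required interleaving, so as written that half of your argument rests on the fallback of citing the known extremal construction. Since that is precisely what the paper does (it cites the explicit construction of Alon, F\"uredi, and Katchalski, including the distinct-coordinates property needed to turn the vectors into linear orders), there is no gap relative to the paper's own standard---but the inductive construction should not be presented as proved.
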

\begin{proof}
Let $d\ge 1$ be fixed.
We rely on a result by de Bruijn (published by \citet{kruskal1953monotonic}, 
with further proof details provided by \citet{alon1985separating}), stated here in a restricted form.
We say that a sequence $(x_1,\dots,x_m)$ with $x_j\in \mathbb{R}^d$, $j\in[m]$, 
is \emph{monotonic} if it is component-wise monotonic, i.e., 
for all $k\in[d]$ it holds that $(x_{1,k},\dots,x_{m,k})$ is 
either weakly increasing or weakly decreasing.
The result states that every sequence $(x_1,\dots,x_m)$ with $x_j\in \mathbb{R}^d$ and 
$m\ge 2^{2^{d}}+1$ contains a monotonic subsequence of length 3. 
On the other hand, there exists a sequence of length $2^{2^{d}}$ that 
does not contain monotonic subsequences of length 3; \citet{alon1985separating} 
explicitly construct a sequence with this property that additionally satisfies
$x_{i, k}\neq x_{j, k}$ for all $k\in [d]$ and all $i, j\in [m]$ with $i\neq j$.

Let $m\le 2^{2^{d-1}}$ and $A=[m]$.
To prove that all profiles with $m$ alternatives are $d$-dimensional (hereditary) single-peaked, 
we will construct a $d$-tuple of linear orders $\multisp=(\lhd_1,\dots,\lhd_d)$ such that
there is no triple of alternatives $a, b, c\in A$ with $a\multisp b\multisp c$; then the constraints
in (MD1) and (MD2) are trivially satisfied.

To this end, let $(x_1,\dots,x_m)$ be a sequence with $x_j\in \mathbb{R}^{d-1}$ 
that does not contain monotonic subsequences of length 3 
and such that for each $k\in [d-1]$ all elements of $X_k=\{x_{1, k}, \dots, x_{m, k}\}$
are pairwise distinct.
We transform this sequence into a $d$-tuple $\multisp$ as follows. 
First, we take $\lhd_1$ to be $1\lhd 2 \lhd\dots\lhd m$.
Then, for each $k\in[d-1]$, we set $i\lhd_{k+1} j$ if $x_{i, k} < x_{j, k}$.
Note that $i\lhd_{k+1} j\lhd_{k+1} \ell$ if the sequence $(x_{i, k}, x_{j, k}, x_{\ell, k})$
is increasing, and
	  $\ell\lhd_{k+1} j\lhd_{k+1} i$ if the sequence $(x_{i, k}, x_{j, k}, x_{\ell, k})$
is decreasing.
Let $\multisp=(\lhd_1,\dots,\lhd_d)$. 
Then a triple of alternatives $i,j,\ell\in [m]$ satisfies $i\multisp j\multisp \ell$
if and only if $(x_i, x_j, x_\ell)$ is a monotone subsequence of length three 
in the input sequence, and such a subsequence does not exist.

Conversely, suppose that $m>2^{2^{d-1}}$\!, and let $P$ be the profile containing all possible votes 
over $[m]$. Furthermore, let $\multisp=(\lhd_1,\dots,\lhd_d)$ be an arbitrary $d$-tuple of linear orders 
over $[m]$; by renaming the alternatives, we can assume that $1\lhd_1\dots\lhd_1 m$.
We are going to identify a triple $i, j, \ell\in [m]$ with $i\multisp j\multisp \ell$; 
this shows that $P$ is not $d$-dimensional (hereditary) single-peaked with respect to $\multisp$, 
because $P$ contains a vote of the form $i\succ\dots\succ j$.  
For this purpose we construct a sequence $(x_1,\dots,x_m)$ of vectors in $[m]^{d-1}$ as follows.
For each $k\in [d]$, we set $x_{i, k}$ to be the rank of $i$ in $\lhd_{k+1}$.
By de Bruijn's theorem, the sequence $(x_1, \dots, x_m)$
contains a monotone subsequence $(x_i, x_j, x_\ell)$.
Further, we have $i\lhd_1 j\lhd_1 \ell$, and for each $k\in [d-1]$
we have  $x_{i, k}<x_{j, k}<x_{\ell, k}$ if and only if $i\lhd_{k+1} j\lhd_{k+1} \ell$ and
                  $x_{\ell, k}<x_{j, k}<x_{i, k}$ if and only if $\ell\lhd_{k+1} j\lhd_{k+1} i$. 
Thus, we have $i\multisp j \multisp \ell$.
This concludes the proof.
\end{proof}

Consequently, all profiles with $4$ alternatives are 2-dimensional single-peaked, 
all profiles with up to $16$ alternatives are 3-dimensional single-peaked, etc.
This immediately implies that many nice properties of the 1-dimensional 
single-peaked domain do not extend to two or more dimensions.
For example, since all profiles on 3 alternatives are 2-dimensional single-peaked, 
so is a profile that induces a cyclic majority relation, and hence transitivity 
of the majority relation cannot be guaranteed.
The result might also partially explain the empirical findings of \citet{sui2013multi}, 
who show that in real-world election datasets, one can find 2-dimensional axes on which 
47--65\% voters are single-peaked (in the (MD1) sense); 
2-dimensional single-peakedness is perhaps a less demanding condition than one might expect.

\paragraph{$d$-Euclidean Preferences and $d$-Dimensional Single-Peaked Preferences}
We know that every $1$-Euclidean profile is single-peaked. It is thus natural to ask
if this relationship holds in higher dimensions. 

Suppose first that we have a $d$-Euclidean profile $P$ where the set of voters
coincides with the set of candidates. Then the answer to our question is `yes'.
Indeed, we can define the $d$ axes $\lhd_1, \dots, \lhd_d$
in a natural way: given two candidates $a$ and $b$, located at $(x_1, \dots, x_d)$
and $(y_1, \dots, y_d)$, respectively, 
for each $k\in [d]$ we set $a\lhd_k b$ if and only if $x_k \le y_k$.
Let $\multisp=(\lhd_1, \dots, \lhd_d)$.
Denote the vote of candidate $c$ by $\succ_c$. We have $c\succ_c a$
for all $a\in A\setminus\{c\}$. Further, $c\multisp a\multisp b$
implies that for each $k\in [d]$ the $k$-th coordinate of $a$ lies between
the $k$-th coordinate of $c$ and the $k$-th coordinate of $b$, 
which means that $c$ prefers $a$ to $b$. Thus, $P$ is $d$-dimensional
single-peaked. 

\begin{figure}
	\centering
	\begin{tikzpicture}
		[scale=0.3, alt/.style={circle, fill=black, inner sep=1.3pt}]
		\draw[thick, ->] (0,-0.3) -- (0,12); 
		\draw[thick, ->] (-0.5,0) -- (12,0);
		\node [alt, label=0:{$a$}] (a) at (1,1) {};
		\node [alt, label=0:{$b$}] (b) at (10,2) {};
		\node [alt, label=0:{$c$}] (c) at (11,11) {};
		\node [label={[label distance=0mm]0:voter $i$}] (v1) at (1,10) 
		{\color{red!80!black}\Large\Gentsroom};
	\end{tikzpicture}
	\caption{A difficulty in translating $2$-Euclidean profiles to 2-dimensional single-peaked profiles.}\label{fig:2d-sp-to-euclid}
\end{figure}

However, if the set of voters may differ from the set of candidates, 
this construction no longer works. Indeed, suppose that $d=2$, 
candidate $a$ is located at $(0, 0)$, candidate $b$ is located at $(9, 1)$,
and candidate $c$ is located at $(10, 10)$ (compare \Cref{fig:2d-sp-to-euclid}). If we define axes
$\lhd_1$ and $\lhd_2$ as we did in the previous paragraph, we obtain
$a\multisp b\multisp c$. However, a voter $i$ with $2$-Euclidean
preferences who is located at $(0, 9)$ has $a$
as her top choice and prefers $c$ to $b$, so her preferences
are not two-dimensional single-peaked with respect to $\multisp$.
Moreover, if this voter were to join the election as a candidate, 
her preferences would not be two-dimensional hereditary single-peaked with respect
to $\multisp$.

Of course, the example in the previous paragraph only shows that
voter $i$'s preferences fail to be two-dimensional single-peaked
for a specific choice of axes; clearly,  a single vote over three
alternatives is always single-peaked with respect to a suitable axis.
However, it is not clear whether every $d$-dimensional Euclidean
profile admits a suitable collection of $d$ axes. 

\subsection{Value Restriction}

Since \citet{black1948rationale} and \citet{arr:b:sc} introduced the single-peaked condition, 
social choice theorists looked for less restrictive conditions that still entailed  
transitivity of the majority relation. \citet{sen1966possibility} proposed value 
restriction, and \citet{sen1969necessary} showed that value restriction is, in a sense we will 
discuss below, the largest preference domain that guarantees transitivity.

\begin{definition}
	A profile $P$ over $A$ is \defemph{value-restricted} if for every triple $a,b,c\in A$ 
	of alternatives, there is an element in $\{a,b,c\}$ that among the alternatives 
	$\{a,b,c\}$ is either never ranked first, or never ranked second, or never ranked third 
	in any vote $v_i \in P$.
\end{definition}

This definition can be equivalently phrased in terms of a \emph{forbidden subprofile}: a profile 
is value-restricted if and only if it does not contain the Condorcet profile as a subprofile. In 
\Cref{sec:subprofiles} we will see similar characterizations of other domain 
restrictions.

\begin{proposition}\label{prop:vr-cycle}
	A profile $P$ over $A$ is value-restricted if and only if there do not exist alternatives 
	$a,b,c\in A$ and voters $i,j,k\in N$ such that the restriction of $P$ to these voters
	and alternatives forms a Condorcet profile, i.e., 
	
	\begin{minipage}{0.7\textwidth}
		\begin{itemize}
			\item $a \succ_i b \succ_i c$,
			\item $b \succ_j c \succ_j a$,
			\item $c \succ_k a \succ_k b$.
		\end{itemize}
	\end{minipage}
	\hfill
	\begin{minipage}{0.2\textwidth}
	\begin{flushright}                                      
	\begin{tabular}{ccc}
	\toprule
	$v_i$ & $v_j$ & $v_k$ \\
	\midrule
	$a$ & $b$ & $c$ \\
	$b$ & $c$ & $a$ \\
	$c$ & $a$ & $b$ \\
	\bottomrule
	\end{tabular}
	\end{flushright}
	\end{minipage}
\end{proposition}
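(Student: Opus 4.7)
The plan is to prove both directions by contrapositive, with the backward direction being the more interesting one.

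For the forward direction ($\Rightarrow$), I would assume the three voters $i,j,k$ and alternatives $a,b,c$ exist with $a\succ_i b\succ_i c$, $b\succ_j c\succ_j a$, and $c\succ_k a\succ_k b$, and directly check that each of $a,b,c$ appears among $\{a,b,c\}$ in each of the positions first, second, and third across these three voters. For example, $a$ is first in $v_i$, third in $v_j$, and second in $v_k$. Hence no element of $\{a,b,c\}$ avoids any of the three positions, so value-restriction fails on this triple.

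For the backward direction ($\Leftarrow$), I would assume that $P$ is not value-restricted, fixing a witnessing triple $\{a,b,c\}$ such that every element of $\{a,b,c\}$ appears in every position (first, second, third) in the restriction of some vote to $\{a,b,c\}$. The six possible restrictions split into two ``cyclic classes'': $\mathcal C_1=\{abc,\ bca,\ cab\}$ and $\mathcal C_2=\{acb,\ cba,\ bac\}$. The key observation is a natural bijection between the nine pairs (position, element) and pairs consisting of one order from $\mathcal C_1$ and one from $\mathcal C_2$: for every $p\in\{1,2,3\}$ and $x\in\{a,b,c\}$, exactly two of the six orders place $x$ at position $p$, one from each class. (For instance, $a$ is first only in $abc$ and in $acb$; $b$ is second only in $abc$ and in $cba$; etc.)

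Using this observation, suppose some order $O$ from $\mathcal C_1$ is missing from the restricted profile. Then for each position $p\in\{1,2,3\}$, the partner of $O$ in $\mathcal C_2$, i.e., the unique order in $\mathcal C_2$ placing $O(p)$ at position $p$, must be present, since otherwise the pair $(p,O(p))$ would not be covered, contradicting our assumption. As $p$ ranges over $\{1,2,3\}$, the element $O(p)$ takes all three values in $\{a,b,c\}$, so the three corresponding partners are three distinct orders of $\mathcal C_2$, meaning all of $\mathcal C_2$ is present. By symmetry, if some order of $\mathcal C_2$ is missing then all of $\mathcal C_1$ is present. In either case, $P$ contains three voters whose restrictions form one of the two Condorcet cyclic profiles, and after possibly swapping the labels $b$ and $c$ (which interchanges $\mathcal C_1$ and $\mathcal C_2$), we obtain voters $i,j,k$ witnessing the required pattern.

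There is no real obstacle here; the work is in spotting the partner-pairing, which reduces a potentially tedious case analysis (over which orders are present) to a single coverage-counting argument.
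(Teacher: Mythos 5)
Your proof is correct and follows essentially the same route as the paper's: there, the backward direction is also handled by observing that if all six orders appear we are done, and otherwise a missing order (taken WLOG to be $a\succ b\succ c$) forces its three positional partners $a\succ c\succ b$, $c\succ b\succ a$, and $b\succ a\succ c$ --- i.e., exactly the opposite cyclic class --- to be present. Your partner-pairing bijection is the paper's explicit derivation restated in a symmetric, label-free form, so the two arguments have the same mathematical content.
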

\begin{proof}
If there is a triple of alternatives $a, b, c$ and a triple of voters $i, j, k$
such that the restriction of $P$ to these voters and alternatives forms a Condorcet
profile, then clearly $P$ is not value-restricted. 

Conversely, suppose $P$ is not value-restricted, as witnessed by some triple
$a, b, c\in A$, and let $P'$ be the restriction of $P$ to $\{a, b, c\}$.
If $P'$ contains all six different linear orders over $\{a, b, c\}$ then
in particular it contains a Condorcet profile. Thus, suppose some order, say, 
$a\succ b\succ c$, does not appear in $P'$. As $P'$ is not value-restricted, 
it has to contain a linear order where $a$ is ranked first, and it can only be 
$a\succ c\succ b$. Also, it has to contain a linear order where $b$ is ranked
second, and it can only be $c\succ b\succ a$. Finally, it has to contain
a linear order where $c$ is ranked last, and it can only be $b\succ a\succ c$.
But then $a\succ c\succ b$, $c\succ b\succ a$ and $b\succ a\succ c$ form a Condorcet profile.
\end{proof}

\paragraph{Majority Relation}
The defining feature of value-restricted profiles is that, because they exclude Condorcet 
subprofiles, they avoid cycles in the majority relation, as we will now formally show.

\begin{proposition}
	\label{prop:vr-transitive}
	The majority relation of a value-restricted profile with an odd number of voters is transitive.
\end{proposition}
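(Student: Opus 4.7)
The plan is to argue by contradiction, using the equivalent formulation of value restriction in terms of forbidden Condorcet subprofiles (\Cref{prop:vr-cycle}). Suppose $P$ is value-restricted and has an odd number of voters, but its majority relation $\succ_{\text{maj}}$ is not transitive. Since $n$ is odd, $\succ_{\text{maj}}$ is a strict tournament on $A$ (for every pair of distinct alternatives $x, y$, exactly one of $x\succ_{\text{maj}} y$ or $y\succ_{\text{maj}} x$ holds). Intransitivity of a tournament implies the existence of a 3-cycle, so there exist alternatives $a,b,c\in A$ with
\[ a \succ_{\text{maj}} b \succ_{\text{maj}} c \succ_{\text{maj}} a. \]

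The goal is to exhibit three voters whose restrictions to $\{a,b,c\}$ form a Condorcet profile, contradicting \Cref{prop:vr-cycle}. The key observation is the elementary pigeonhole fact that any two strict majorities must share a voter: if sets $X, Y\subseteq N$ both satisfy $|X|, |Y| > n/2$, then $|X\cap Y| \ge |X| + |Y| - n > 0$. Applying this to the three strict majority sets
\[ N_{ab} = \{i : a \succ_i b\}, \quad N_{bc} = \{i : b \succ_i c\}, \quad N_{ca} = \{i : c \succ_i a\}, \]
each of size greater than $n/2$, I obtain voters $i \in N_{ab}\cap N_{bc}$, $j\in N_{bc}\cap N_{ca}$, and $k\in N_{ca}\cap N_{ab}$. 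The restrictions of the votes of $i, j, k$ to $\{a,b,c\}$ are then forced to be $a\succ b\succ c$, $b\succ c\succ a$, and $c\succ a\succ b$, respectively (transitivity of the individual linear orders fills in the third pairwise comparison in each case). This is exactly the Condorcet subprofile forbidden by \Cref{prop:vr-cycle}, contradicting the value restriction of $P$.

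There is no real obstacle: the main step is spotting that the pairwise-intersection pigeonhole immediately produces the three cyclic orderings that value restriction prohibits. The only small care required is noting that for odd $n$, non-transitivity of the complete asymmetric relation $\succ_{\text{maj}}$ yields a 3-cycle (rather than merely some longer cycle), which is standard since a minimal cycle in a tournament has length three.
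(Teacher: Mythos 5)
Your proof is correct and follows essentially the same route as the paper's: both obtain the $3$-cycle $a \succ_{\text{maj}} b \succ_{\text{maj}} c \succ_{\text{maj}} a$ and then use the fact that any two strict majorities intersect to extract three voters whose restrictions to $\{a,b,c\}$ form the forbidden Condorcet subprofile. Your extra remark justifying why non-transitivity yields a $3$-cycle (via the tournament structure for odd $n$) is a detail the paper leaves implicit, but the argument is otherwise identical.
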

\begin{proof}
	Consider a profile $P$ with an odd number of voters 
	whose majority relation is not transitive, so that there are $a,b,c\in A$ with 
	$a \succ_{\text{maj}} b \succ_{\text{maj}} c \succ_{\text{maj}} a$.
	We will show that $P$ is not value-restricted.
	Indeed, for a majority of voters we have $a \succ b$, 
	and for a majority of voters we have $b \succ c$; 
	as any two strict majorities intersect, 
	there must be a voter $i$ with $a \succ_i b \succ_i c$. 
	Similarly, there are majorities with $b\succ c$ and with 
	$c \succ a$, so there is a voter $j$ with $b \succ_j c \succ_j a$. 
	And again, similarly, there is a voter $k$ with $c \succ_k a \succ_k b$. 
	Together, voters $i$, $j$, and $k$ witness that $P$ is not value-restricted.
\end{proof}

Fix a set $A$ of alternatives and a number $n$ of voters. A domain $\mathcal D$ of $n$-voter 
preference profiles over $A$ is said to have \emph{product structure} if there exists a subset 
$L$ of the set of linear orders over $A$ such that $\mathcal D = L^n$. Thus, a profile is part 
of $\mathcal D$ if and only if each voter chooses their preference order from the set $L$. 
For example, one such domain is the domain of all $n$-voter profiles where each vote
is single-peaked on a specific axis $\lhd$. Within the universe of 
domains with product structure, the value-restricted domain turns out to be 
the unique maximal domain that guarantees a transitive majority relation (for $n$ odd).

\begin{proposition}
	\label{prop:vr-maximal}
	Let $\mathcal D$ be a domain of $n$-voter profiles that has product structure,
	where $n\ge 3$ is odd. If every profile in $\mathcal D$ has 
	a transitive majority relation, then every profile in $\mathcal D$ is value-restricted.
\end{proposition}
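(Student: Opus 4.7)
The plan is to establish the contrapositive: assuming that some profile $P\in\mathcal{D}$ fails value-restriction, I will construct another profile $P'\in\mathcal{D}$ whose majority relation is not transitive. The critical leverage is the product-structure assumption $\mathcal{D}=L^n$, which lets me freely choose any multiplicities of orders from $L$ and still land in $\mathcal{D}$.

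First, I would apply \Cref{prop:vr-cycle} to the profile $P$. This gives a triple of alternatives $a,b,c\in A$ together with three voters whose orders $v_i,v_j,v_k\in L$ restrict on $\{a,b,c\}$ to $a\succ_i b\succ_i c$, $b\succ_j c\succ_j a$, and $c\succ_k a\succ_k b$. Since $P\in L^n$, each of $v_i,v_j,v_k$ lies in $L$, so any profile built from these three orders still belongs to $\mathcal{D}$.

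Next, I would exploit oddness of $n$ by writing $n=2\ell+1$ with $\ell\ge 1$ and defining $P'\in L^n$ to consist of $\ell$ copies of $v_i$, $\ell$ copies of $v_j$, and a single copy of $v_k$. A direct pairwise tally on $\{a,b,c\}$ yields $\ell+1$ voters preferring $a$ to $b$, $2\ell$ preferring $b$ to $c$, and $\ell+1$ preferring $c$ to $a$. Each of these is a strict majority (using $n\ge 3$), so the majority relation of $P'$ contains the cycle $a\succ_{\text{maj}}b\succ_{\text{maj}}c\succ_{\text{maj}}a$, contradicting the hypothesis that every profile in $\mathcal{D}$ has a transitive majority relation.

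There is no real obstacle to the argument: the ingredients are only \Cref{prop:vr-cycle}, which converts failure of value-restriction into an explicit Condorcet subprofile, and the product structure, which lets us replicate those three orders with arbitrary multiplicities. The role of $n$ being odd (and at least $3$) is purely combinatorial, guaranteeing that three multiplicities each at most $(n-1)/2$ and summing to $n$ exist so that every pairwise comparison produces a strict majority.
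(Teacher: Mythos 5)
Your proposal is correct and matches the paper's own proof essentially verbatim: both extract a Condorcet triple from $L$ via \Cref{prop:vr-cycle} and replicate it with multiplicities $\ell,\ell,1$ (the paper writes $n=2s+3$ and uses $s+1,s+1,1$, which is the same thing) to force a majority cycle. The pairwise tallies you give are accurate, so nothing further is needed.
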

\begin{proof}
	Since $\mathcal D$ has product structure, we can write $\mathcal D = L^n$ for some set 
	$L$ of linear orders. Suppose for the sake of contradiction that there is some profile 
	in $\mathcal D$ that is not value-restricted. Then, by \Cref{prop:vr-cycle}, 
	$L$ must contain three orders 
	${\succ_i},{\succ_j},{\succ_k} \in L$ such that $a \succ_i b \succ_i c$, 
	$b \succ_j c \succ_j a$, and $c \succ_k a \succ_k b$. 
	Since $n$ is odd and $n\ge 3$, we can write $n = 2s+3$ for some $s\ge 0$. 
	Now, consider the profile $P$ containing $s+1$ copies of $\succ_i$, 
	$s+1$ copies of $\succ_j$, and one copy of $\succ_k$. Then $P \in \mathcal D$, 
	but the majority relation of $P$ satisfies 
	$a \succ_{\text{maj}} b \succ_{\text{maj}} c \succ_{\text{maj}} a$, a contradiction.
\end{proof}

The result of \Cref{prop:vr-maximal} has led some authors to discount conditions such 
as single-peakedness as less interesting or useful, because value restriction allows strictly 
more profiles while retaining the guarantee of a transitive majority relation. This is a fair 
point if one is mainly interested in avoiding the impossibilities of Arrow and 
Gibbard--Satterthwaite, but value-restriction does not imply enough structure on the profile to 
be useful for many other applications, such as efficiently evaluating multi-winner rules.

It is easy to misinterpret \Cref{prop:vr-transitive,prop:vr-maximal} 
as saying that a profile $P$ is value-restricted if and only if it has a transitive majority 
relation. The following example shows that this is not true, and in particular the converse of 
\Cref{prop:vr-transitive} does not hold. However, if a profile $P$ has the property 
that its majority relation is transitive, and every way of changing the multiplicities (weights) 
of the preference orders occurring in $P$ leads to a profile whose majority relation is still 
transitive, then $P$ must be value-restricted by \Cref{prop:vr-maximal}.

\begin{examplebox}
	{A profile whose majority relation is transitive, but which is not value-restricted}
	{val-restr-transitive}
	\begin{minipage}{0.3\linewidth}
	\begin{tabular}{cccccc}
	\toprule
	$v_1$ & $v_2$ & $v_3$ & $v_4$ & $v_5$ & $v_6$ \\
	\midrule
	$a$ & $b$ & $c$ & $c$ & $c$ & $c$ \\
	$b$ & $c$ & $a$ & $a$ & $a$ & $a$ \\
	$c$ & $a$ & $b$ & $b$ & $b$ & $b$ \\
	\bottomrule
	\end{tabular}
	\end{minipage}
	\hfill
	\begin{minipage}{0.65\linewidth}
		There are profiles that are not value-restricted, but still have a transitive 
		majority relation: Here, the voters with $c \succ a \succ b$ `overwhelm' 
		the Condorcet cycle.
	\end{minipage}
\end{examplebox}

Three notions that are closely related to value restriction are sometimes discussed in the 
literature: best-restricted, medium-restricted, and worst-restricted profiles. Each of these 
conditions is stronger than being value-restricted, so trivially each 
best/medium/worst-restricted profile again admits a transitive majority relation.

\begin{definition}
A profile $P$ over $A$ is \defemph{best-/medium-/worst-restricted} if for every triple $a,b,c\in A$ 
of alternatives, there is an element in $\{a,b,c\}$ that is never ranked first/second/last, respectively,
in the restriction of $P$ to $\{a, b, c\}$. 
\end{definition}

Observe that, by the no-valley property, every single-peaked profile is worst-restricted. However, 
the converse is not true: the two-voter profile with votes $a\succ b\succ c\succ d$ and 
$d\succ b\succ c \succ a$ is worst-restricted (indeed, any two-voter profile is worst-restricted), 
but it is easy to see that it is not single-peaked. We will encounter this profile again
in \Cref{sec:subprofiles}, where we discuss forbidden subprofiles. Similarly, every
single-caved profile is best-restricted, but the converse is not true.

\begin{wrapfigure}[9]{r}{0.345\linewidth}
	\vspace{1pt}
	\begin{tabular}{crrr}
		\toprule
		$m$ & $f(m)$ & $f_{\text{SP}}(m)$ & $f_{\text{SC}}(m)$ \\
		\midrule
		$3$ & 4 & 4 & 4 \\
		$4$ & 9 & 8 & 7 \\
		$5$ & 20 & 16 & 11 \\
		$6$ & 45 & 32 & 16 \\
		$7$ & 100 & 64 & 22 \\
		$8$ & 224 & 128 & 29 \\
		\bottomrule
	\end{tabular}
\end{wrapfigure}
\paragraph{Counting} 
Value-restriction makes most sense in the context of domains $\mathcal D = L^n$ with 
product structure. In this case, sets $L$ that induce value-restricted profiles are known as \emph{Condorcet domains}, and they have been intensely studied; see \citet{puppe2024maximal} for a recent survey.
It is particularly interesting to ask how big the set $L$ can be: 
for a given $m$, what is the maximum number $f(m)$ of different orders 
that $L$ can contain without including the forbidden subprofile? 
This question is combinatorially challenging and has inspired much work 
\citep[see][for a survey]{monjardet2009acyclic}. It is easy to see that $f(3) = 4$: the 6 
possible linear orders over three profiles split into two Condorcet cycles, and we need to 
eliminate one order from each. It is also not hard to check that $f(4) = 9$. 
\citet{fishburn1996acyclic} showed that $f(5) = 20$ and later that $f(6) = 45$ 
\citep{fishburn2002acyclic}. Computing the values of $f(m)$ for $m\ge 7$ remains open. 
To prove the lower bounds, \citet{fishburn1996acyclic,fishburn2002acyclic} 
introduced two schemes that produce large $L$-sets: 
the \emph{alternating scheme} and the \emph{replacement scheme}. The alternating 
scheme produces the unique maximum value-restricted sets for $m = 5$ and $m = 6$, 
but for $m \ge 16$, the replacement scheme is better. 
Fishburn conjectured that the alternating scheme is also 
optimal for $m = 7$, which would imply that $f(7) = 100$ which was confirmed by \citet{galambos2008acyclic}. However, for $m = 8$, the alternating scheme has size $222$, while the largest domain contains $224$ orders \citep{leedham2024largest}. In the table, we show the known values 
of $f(m)$ and contrast with the maximum number $f_{\text{SP}}(m)=2^{m-1}$ of different orders in 
a single-peaked profile and with the maximum number $f_{\text{SC}}(m)={m\choose 2} + 1$ of 
different orders in a single-crossing profile.

\subsection{Group-Separable Preferences}

Group-separable preferences were introduced by \citet{inada1964note,inada1969simple}.
Intuitively, the voters' preferences are group-separable if every subset of at least
two alternatives can be split into two groups, so that each voter ranks one group
above the other (but voters may differ in which of the two groups they prefer).
This idea is formalized as follows.

\begin{definition}\label{def:gs}
A profile $P$ is \defemph{group separable} if for every set of alternatives $A'\subseteq A$ 
there is a proper subset $B\subset A'$ such that for every voter $i\in N$ we have either 
$B\succ_i (A'\setminus B)$ or $(A'\setminus B)\succ_i B$.
\end{definition}

\begin{examplebox}
	{Group-separable preferences.}
	{group-separable}
	\begin{wrapfigure}[9]{l}{0.46\linewidth}
		\begin{tabular}{cccc}
			\toprule
			$v_1$ 	& $v_2$ & $v_3$ & $v_4$ \\
			\midrule
			Austin  & Nice    & Calgary & Bergen \\
			Calgary & Madrid  & Austin  & Lisbon \\
			Bergen  & Lisbon  & Nice    & Madrid \\
			Nice	& Bergen  & Madrid  & Nice \\
			Madrid  & Austin  & Lisbon  & Calgary \\
			Lisbon  & Calgary & Bergen  & Austin \\
			\bottomrule
		\end{tabular}
	\end{wrapfigure}
	In this instance, when comparing cities, voters first consider the continent 
	they are located on, 
	with some voters preferring Europe to North America, and some voters preferring
	North America to Europe. Within Europe, the voters distinguish between Southern 
	Europe	(Nice, Lisbon, and Madrid) and Northern Europe (Bergen). Within 
	Northern America the voters distinguish between the USA and Canada, and within 
	Southern Europe they distinguish between France and the Iberic Peninsula;
	within the latter, they distinguish between Spain and Portugal. 
	
	Note that 
	Austin, Bergen, Calgary and Nice are weak Condorcet winners, 
	but Madrid and Lisbon lose to Nice in a pairwise comparison. 
\end{examplebox}

\paragraph{Hereditariness and Relationship to Other Domains}
It is immediate from the definition that the domain of group-separable
preferences is closed under voter and alternative deletion. Moreover,
group separability is related to another domain restriction we have discussed, 
namely the domain of medium-restricted preferences.
  
\begin{proposition}\label{prop:gs-medium}
Every group-separable profile is medium-restricted.
\end{proposition}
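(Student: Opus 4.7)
The plan is to unpack the two definitions and apply group-separability directly to each triple. Fix any three distinct alternatives $a, b, c \in A$, and let $A' = \{a, b, c\}$. By \Cref{def:gs}, there exists a proper (nonempty) subset $B \subsetneq A'$ such that every voter $i$ satisfies either $B \succ_i (A' \setminus B)$ or $(A' \setminus B) \succ_i B$. Since $|A'| = 3$, we have $|B| \in \{1, 2\}$, and in either case exactly one of the two parts is a singleton; call that singleton element $x \in A'$.

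The key observation is that $x$ is never ranked second among $\{a,b,c\}$ by any voter. Indeed, suppose first that $B = \{x\}$. Then for each voter $i$, either $x \succ_i y$ for both $y \in A' \setminus \{x\}$ (so $x$ is ranked first among the triple), or $y \succ_i x$ for both such $y$ (so $x$ is ranked last among the triple). Similarly, if $A' \setminus B = \{x\}$, then for each voter $i$, either both elements of $B$ are preferred to $x$ (so $x$ is last in the triple) or $x$ is preferred to both (so $x$ is first). In all cases, no voter places $x$ in the middle position of the triple.

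Hence for every triple $\{a, b, c\}$, there is an element of the triple that is never ranked second in the restriction of $P$ to $\{a, b, c\}$, which is exactly the definition of medium-restrictedness.

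There is no serious obstacle here: the proof is a one-paragraph direct application of the definitions, and the only thing to be careful about is the implicit assumption in \Cref{def:gs} that $B$ is a nonempty proper subset (otherwise $B = \varnothing$ makes the condition vacuous and the statement would fail). Assuming that convention, the argument goes through immediately for triples, and in fact the same argument shows more: for any subset $A'$, the singleton side of the partition guaranteed by group-separability identifies an alternative whose position among $A'$ is either always maximal or always minimal.
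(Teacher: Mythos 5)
Your proof is correct and uses the same key observation as the paper's: applying group-separability to the triple $\{a,b,c\}$ yields a partition with a singleton block, and the singleton's element can never be ranked second among the triple. The paper phrases this contrapositively (assuming medium-restrictedness fails and deriving a contradiction with group-separability), but the content is identical, and your remark about requiring $B$ to be nonempty matches the paper's intended reading of \Cref{def:gs}.
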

\begin{proof}
Suppose a profile $P$ fails to be medium-restricted, 
i.e., there is a triple of alternatives $a, b, c$ such that
in the restriction of $P$ to $A'=\{a, b, c\}$ each alternative
is ranked second in at least one vote. Now suppose 
for the sake of contradiction that $P$ is group-separable, 
and let $(B, A'\setminus B)$ be the associated partition of $A'$;
we can assume without loss of generality that $B=\{a, b\}$.
But then no voter can rank $c$ second among $a$, $b$, and $c$, 
a contradiction. 
\end{proof}
We note that the converse of \Cref{prop:gs-medium}
is not true: a two-voter profile $a\succ_1 b \succ_1 c\succ_1 d$, 
$c\succ_2 a\succ_2 d\succ_2 b$ is medium-restricted, 
but not group-separable. This profile will appear again
in \Cref{sec:subprofiles}, where we characterize
the group-separable domain in terms of forbidden minors.

\paragraph{Majority Relation}
Let us apply \Cref{def:gs} to the entire set of alternatives $A$: 
let $B$ be a proper subset of $A$ such that some voters rank $B$ above $A\setminus B$,
while other voters rank $A\setminus B$ over $B$. We can assume that a weak majority of voters
prefer $B$ to $A\setminus B$. Thus, if $B$ is a singleton, then the unique candidate in 
$B$ is a weak Condorcet winner, and otherwise we can recursively partition $B$
so as to eventually arrive at a weak Condorcet winner. As the domain of group-separable
preferences is hereditary, we can remove the weak Condorcet winner and apply the same
argument to the remaining alternatives; this establishes that the weak majority relation
of a group-separable profile is transitive.

\paragraph{Characterization via Clone Tree Decomposition}
Group-separable profiles are conveniently described in terms of \emph{clone sets}.
We say that a set of alternatives $C\subseteq A$ is a \emph{clone set with respect
to a profile $P$} if $1<|C|<|A|$ and for every voter $i$, every alternative
$b\in A\setminus C$ and every pair of alternatives $c, c'\in C$ it is not the case
that $c\succ_i b\succ_i c'$; that is, alternatives in $C$
form a contiguous block in each voter's ranking. Observe that 
a clone set may be a subset 
of another clone set: e.g., in \Cref{ex:group-separable} both 
\{Lisbon, Madrid\} and \{Lisbon, Madrid, Nice\} are clone sets.

The clone sets of a given profile can be succinctly described by 
its \emph{clone decomposition tree} \citep{elkind2012clone}.
To define this concept formally, we need the machinery
of PQ-trees \citep{booth1976testing}. A \emph{PQ-tree} over a set of alternatives $A$
is an ordered rooted tree whose leaves are labeled with elements of $A$
so that each label is used exactly once (i.e., there is a bijection between
$A$ and the leaves of $T$) 
and whose internal nodes belong to one of the two types
(P-nodes and Q-nodes). There are two types of permissible
operations on a PQ-tree: we can either reorder the children of 
a P-node in an arbitrary way, or flip the order of the children of a Q-node.
A PQ-tree $T$ 
\emph{generates} a ranking $r$ of $A$ if $r$ can be obtained by performing
zero or more permissible operations and then listing the leaves of the 
resulting ordered tree from left to right; let $R(T)$ denote the set
of all rankings generated by $T$. Given a node $x$ of a tree $T$ over $A$, 
we let $L(x)$ denote the set of labels of all leaves of $T$ that are descendants
of $x$. Note that for each node $x$ the alternatives in $L(x)$ 
form a contiguous block in every ranking in $R(T)$;
in fact, this is also the case for any set $L(x_i)\cup\dots\cup L(x_j)$,
where $x_i, \dots, x_j$ are consecutive children of some Q-node of $T$.

\begin{figure}[t]
	\centering
	\scalebox{0.9}{
		\begin{tikzpicture}
			[qnode/.style={rectangle,fill=red!50,draw},
			leaf/.style={anchor=base},
			node distance=0.8cm and 0.5cm]
			\node(A)[qnode,text width=10.5cm]{};
			\node(B)[qnode,below right=0.5cm and -6.85cm of A, text width=6.6cm]{};
			\node(NA)[qnode,below left=0.5cm and -3.05cm of A, text width=2.8cm]{};
			\node(C)[qnode,below right=0.5cm and -4.8cm of B, text width=4.5cm]{};
			\node(D)[qnode,below right=0.5cm and -3cm of C, text width=2.75cm]{};
			\node(7)[left=of C]{Bergen};
			\node(2)[left=of D]{Nice};
			\node(3)[leaf,below left=0.32cm and -1.3cm of D]{Madrid};
			\node(4)[leaf,below right=0.32cm and -1.2cm of D]{Lisbon};
			\node(Austin)[leaf,below left=0.32cm and -1.15cm of NA]{Austin};
			\node(Calgary)[leaf,below right=0.32cm and -1.35cm of NA]{Calgary};
			\draw[->](A.south)-|(NA.north);
			\draw[->](A.south)-|(B.north);
			\draw[->](NA.south)-|(Austin.north);
			\draw[->](NA.south)-|(Calgary.north);
			\draw[->](B.south)-|(7.north);
			\draw[->](B.south)-|(C.north);
			\draw[->](C.south)-|(2.north);
			\draw[->](C.south)-|(D.north);
			\draw[->](D.south)-|(3.north);
			\draw[->](D.south)-|(4.north);
		\end{tikzpicture}
	}
	\caption{The clone tree decomposition of \Cref{ex:group-separable}.}
	\label{fig:clone-tree}
\end{figure}

We say that a PQ-tree $T$ is a \emph{clone tree decomposition} 
for a profile $P$ over $A$ if for each $B\subseteq A$ it holds that $B$ is a clone set
with respect to $P$ if and only if $B=L(x)$ for some node $x$ of $T$
or $B=L(x_i)\cup\dots\cup L(x_j)$, where $x_i, \dots, x_j$
are consecutive children of some Q-node.
Each profile admits a unique clone decomposition tree, 
which can be computed in polynomial time \citep{elkind2012clone}.
The clone tree decomposition of the profile in \Cref{ex:group-separable} is shown in \Cref{fig:clone-tree}.

The definition of group-separable preferences can be rephrased 
in terms of clone sets: it says that for each $A'\subseteq A$
there is a proper subset $B\subset A'$ such that both $B$
and $A'\setminus B$ are clone sets with respect to $P|_{A'}$.
Building on this idea,
\citet{karpov:j:gs} provides a characterization of group-separable
profiles in terms of the properties of their clone decompositions trees:
specifically, he establishes that a profile is group-separable
if and only if all internal nodes of its clone decomposition tree 
are Q-nodes. 

\paragraph{Counting}
Recall that a single-peaked profile can contain at most $2^{m-1}$ distinct
rankings, whereas a single-crossing profile may contain at most ${m\choose 2}+1$
distinct rankings. Interestingly, the maximum number of distinct rankings in a group-separable 
profile over a set $A$ is the same as in a single-peaked profile over $A$.

\begin{proposition}\label{prop:gs-count}
A group-separable profile over $m$ alternatives may contain at most $2^{m-1}$
distinct rankings, and this bound is tight.
\end{proposition}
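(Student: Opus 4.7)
The plan is to prove the upper bound by induction on $m$, exploiting the recursive partition forced by \Cref{def:gs} together with hereditariness, and then to exhibit a matching construction.

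For the upper bound, I would induct on $m$. The base case $m = 1$ is immediate since there is a unique ranking and $2^{0}=1$. For $m \geq 2$, apply \Cref{def:gs} to $A' = A$ to obtain a proper subset $B \subset A$ with $|B| = k$ such that every voter either ranks all of $B$ above $A \setminus B$, or all of $A\setminus B$ above $B$. Any vote in $P$ is then uniquely determined by three pieces of data: which of the two blocks is ranked on top (two choices), its restriction to $B$, and its restriction to $A \setminus B$. Since the group-separable domain is closed under alternative deletion, both $P|_B$ and $P|_{A \setminus B}$ are group-separable, so by the inductive hypothesis they contain at most $2^{k-1}$ and $2^{m-k-1}$ distinct rankings respectively. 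Multiplying yields
\[ 2 \cdot 2^{k-1} \cdot 2^{m-k-1} = 2^{m-1} \]
as an upper bound on the number of distinct rankings in $P$.

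For tightness, I would construct a profile $P^\star$ on $A = \{a_1, \dots, a_m\}$ containing, for each sequence $(\epsilon_1, \dots, \epsilon_{m-1}) \in \{\text{up},\text{down}\}^{m-1}$, the unique ranking in which each $a_i$ is placed above all of $\{a_{i+1}, \dots, a_m\}$ when $\epsilon_i = \text{up}$ and below all of them when $\epsilon_i = \text{down}$. This ``caterpillar'' profile has exactly $2^{m-1}$ distinct rankings. To verify its group-separability, take any $A' \subseteq A$ with $|A'| \geq 2$ and let $a_i$ be the smallest-indexed alternative in $A'$. In every ranking of $P^\star$, alternative $a_i$ lies either entirely above or entirely below all of $\{a_{i+1}, \dots, a_m\} \supseteq A' \setminus \{a_i\}$, so the partition $B = \{a_i\}$, $A' \setminus B$ witnesses group-separability on $A'$.

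The main obstacle is checking group-separability of the caterpillar profile for \emph{every} subset $A'$, not just $A' = A$; taking $a_i$ to be the smallest-indexed alternative of $A'$ handles this uniformly by exploiting the nested structure. An alternative, slicker route would invoke \citet{karpov:j:gs}'s characterization of group-separable profiles as those whose clone decomposition trees have only Q-nodes at internal positions: such a tree with $m$ leaves has at most $m-1$ internal nodes (with equality iff it is binary), and each internal Q-node independently contributes a factor of $2$ to the number of generated rankings, giving $|R(T)| \leq 2^{m-1}$ with equality achieved by the binary caterpillar.
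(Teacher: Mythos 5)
Your proposal is correct and follows essentially the same route as the paper: the upper bound is the identical induction via the top-level split into $B$ and $A\setminus B$ with the factor-of-two stacking argument, and your extremal profile is the same ``caterpillar'' profile the paper builds recursively by inserting $a_m$ at the top or bottom (the paper even records your up/down-sequence encoding right after its proof). Your direct verification of group-separability via the smallest-indexed element of $A'$ is a minor, equally valid variant of the paper's inductive check.
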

\begin{proof}
For the upper bound, we proceed by induction on $m$. For $m=2$, the profile that
consists of two distinct rankings is group-separable. Now, suppose our claim 
has been established for $m'<m$. Consider a group-separable 
profile $P$ over a set of alternatives $A$, $|A|=m$, such that all rankings in $P$
are pairwise distinct. Since $P$
is group-separable, there exists a set $B$, $1\le |B|\le m-1$
such that in each ranking in $P$ either $B$ appears above
$A\setminus B$ or $A\setminus B$ appears above $B$.
By construction, the profiles $P|_B$ and $P_{A\setminus B}$
are group-separable. Thus, by the inductive hypothesis, 
$P|_B$ contains at most $2^{|B|-1}$ distinct rankings and 
$P|_{A\setminus B}$ contains at most $2^{m-|B|-1}$ distinct rankings. 
Now, note that each ranking in $P$ is obtained
by `stacking' a ranking from $P|_B$ and a ranking from $P|_{A\setminus B}$
in one of the two possible ways. Hence, $P$ contains
at most $2\cdot 2^{|B|-1}\cdot 2^{m-|B|-1}=2^{m-1}$ distinct rankings.

To show that this bound is tight, we explain how to build a group-separable
profile over $m$ alternatives that contains $2^{m-1}$ distinct rankings.
Again, we proceed by induction. The case $m=2$ is immediate.
Now, suppose we have built a group-separable profile $P_\textrm{gs-max}^{m-1}$ over $\{a_1, \dots, a_{m-1}\}$
that contains $2^{m-2}$ distinct rankings. For each vote $v$ in $P_\textrm{gs-max}^{m-1}$, 
we construct two rankings over $\{a_1, \dots, a_m\}$: both rank the alternatives
 $a_1, \dots, a_{m-1}$ in the same order as $v$ does, but one ranks $a_{m}$
first, while the other ranks $a_m$ last. By construction, 
the resulting profile $P_\textrm{gs-max}^m$ contains $2^{m-1}$ distinct rankings.

To see that $P_\textrm{gs-max}^m$ is group-separable, consider a subset of alternatives
$A'\subseteq \{a_1, \dots, a_m\}$ with $|A'|\ge 2$.
We need to argue that there is a subset $B$, $1\le |B|\le |A'|-1$,
such that each voter's preferences are either of the form $B\succ A'\setminus B$
or of the form $A'\setminus B\succ B$. Now, if $a_m\in A'$, we can simply take 
$B=\{a_m\}$. On the other hand, if $a_m\not\in A'$ then 
$A'\subseteq \{a_1, \dots, a_{m-1}\}$ and the existence of $B$ follows
from the assumption that $P_\textrm{gs-max}^{m-1}$ is group-separable.
\end{proof} 

The clone decomposition tree of the profile $P_\textrm{gs-max}^m$
is a caterpillar, i.e., a binary tree such that each internal node
has one child that is a leaf. Each vote in this profile can be encoded
by a binary string of length $m-1$. Specifically, to generate a vote from
a binary string $(b_1, \dots, b_{m-1})$, we think of this vote
as an $m$-by-$1$ array, with $i$-th entry containing the alternative ranked 
in the $i$-th position, and then place $a_m, \dots, a_1$
in that array, so that $a_i$ is placed in the top-most available
position if $b_i=1$ and in the bottom-most available position if $b_i=0$
(with $a_1$ being placed in the last available position).
Yet another way to think about $P_\textrm{gs-max}^m$ is that 
there is a one-to-one correspondence between votes in $P_\textrm{gs-max}^m$
and subsets of $A\setminus \{a_1\}$: a subset $B\subseteq A\setminus\{a_1\}$
corresponds to a vote in which all elements of $B$ are ranked 
in top $B$ positions in decreasing order of indices, all elements
of $(A\setminus\{a_1\})\setminus B$ are ranked in the bottom $m-1-|B|$
positions in increasing order of indices, and $a_1$ is ranked 
in between in position $|B|+1$. These observations show
that $P_\textrm{gs-max}^m$ is neither single-peaked nor 
single-crossing for $m=3$; also, they will be useful 
in \Cref{sec:problems}, where we use them to show that
many computationally difficult problems remain hard
if voters' preferences are group separable.

We note that \citet{karpov:j:gs} provides an exact formula 
for the number of group-separable profiles with 
$n$ voters and $m$ alternatives, as well as an expression
for the number of such profiles that are narcissistic.

\newpage
\section{Recognition}
\label{sec:recog}

\begin{table}[!ht]
\centering
	\begin{tabular}{rll}
	\toprule
	\textbf{domain} 			& \textbf{recognition}		&  \\
	\midrule
	single-peaked 				& $O(mn)$				& \Cref{thm:recognizing-sp}\\
	single-crossing 			& $O(nm\log m)$ 				& \Cref{thm:recognizing-sc}\\
	1-Euclidean 				& in P  				& via solving an LP, see \Cref{sec:recog:euclid} \\
	$d$-Euclidean, $d\ge 2$ 		& $\exists\mathbb R$-complete 	&  \Cref{sec:recog:euclid}\\
	single-peaked on a tree 		& $O(mn)$				&  \Cref{thm:recognizing-spt}\\
	single-peaked on a circle   & $O(mn)$ &  Thm.\ 2 in \citet{peters2017spoc}\\	
	single-crossing on a tree 		&  $O(m^2n^3)$		&  Thm.\ 6 in \citet{clearwater2014single} \\
	single-peaked and single-crossing 	& $O(nm\log m)$ 			& \Cref{thm:recognizing-sp,thm:recognizing-sc} \\	
	single-caved 				& $O(mn)$				& \Cref{thm:recognizing-sp}\\
	value/best/medium/worst-restricted 			& $O(m^3n)$ 				& Prop.\ 6 in \citet{elkind2014detecting}\\
	group-separable 				& $O(mn)$				& \Cref{sec:recog:group-sep}\\
	$d$-dim.\ (hereditary) single-peaked	 	& open 				& \Cref{open:multi-dimsp}\\
	\bottomrule
	\end{tabular}
\caption{The fastest known algorithms for recognizing domain restrictions for profiles with $n$ voters and $m$ alternatives.}
\label{tab:recognition}
\end{table}

Each of the many domain restrictions introduced in \Cref{sec:def} is associated with a natural
algorithmic question: given a profile $P$, does $P$ belong to this restricted domain?
Such \emph{recognition} problems have been studied extensively and, 
in many cases, complexity classifications and algorithms exist.
Nonetheless, important questions remain open.
In this section, we consider restricted domains listed in \Cref{sec:def}, 
survey the known results for their corresponding recognition problems, 
and point out open problems along the way.
For an overview, we refer the reader to \Cref{tab:recognition}. 

For computational purposes, the appeal of the restricted domains is that many popular 
voting rules that are computationally hard in general 
admit efficient winner determination algorithms when their inputs are drawn
from such domains (see \Cref{sec:problems}). Many of these algorithms 
require a certificate that the input profile is structured; such a certificate 
can be provided, e.g., by an axis $\lhd$ that makes the input single-peaked, 
or an ordering of voters that makes it single-crossing. 
This is another reason why we are interested in recognition 
algorithms: without them, it would be impossible to make use of specialized winner 
determination procedures.

From this perspective, it is important to have recognition 
algorithms that are as fast as possible. Indeed, the cost of a slow recognition algorithm 
may outweigh the benefits of a dedicated fast winner determination procedure, 
so users might opt instead to pass their problem directly to a super-polynomial algorithm, such as an 
integer linear programming solver. Thus, in this section we will pay particular
attention to the running times of the algorithms we consider.
Nevertheless, we will also mention several slower algorithms, since they offer different 
perspectives, and provide a deeper understanding of the respective domains.
In particular, some (potentially slower) algorithms are \emph{certifying}, i.e., they 
return a short certificate in case the input profile \emph{does not} belong 
to the target domain.

\subsection{Algorithms for Single-Peaked Preferences}
\label{sec:recog:sp}

We will present two approaches to recognizing single-peaked profiles. 
Historically, the first algorithm to solve the recognition problem uses a simple reduction to the \emph{consecutive ones problem} which was put forward by \citet{bar-tri:j:sp}.
This approach leads to an $O(m^2n)$ runtime.
Later, faster runtimes were achieved by direct algorithms, which as a by-product also offer insight 
into the nature of single-peaked preferences.
In particular, an algorithm by \citet{doignon1994polynomial} runs in $O(mn+m^2)$ time, which was further improved to $O(mn)$ time by \citet{escoffier2008single}. 
Another linear-time algorithm, which is also applicable to certain profiles of weak orders,
was subsequently described by \citet{lackner2014incomplete}.

\subsubsection*{The Consecutive Ones Approach}

Let us consider a binary matrix, i.e., a matrix consisting of zeros and ones.
We say that this matrix possesses the \emph{consecutive ones property} if its columns can be permuted 
so that in each row all ones appear consecutively; see \Cref{fig:consones1} for an example.
This property gives rise to the following algorithmic problem.

\begin{figure}
	\centering
	\begin{tikzpicture}
	[highlight/.style={line width=11pt, black!10}]
	\matrix (m) [matrix of math nodes,left delimiter={[},right delimiter={]}] {
		0 & 0 & 0 & 1 & 1 & 1 & 1 & 1 & 0 & 0 \\
		0 & 1 & 1 & 1 & 1 & 1 & 0 & 0 & 0 & 0 \\
		0 & 0 & 0 & 0 & 0 & 0 & 0 & 1 & 1 & 1 \\
		0 & 0 & 1 & 1 & 1 & 1 & 1 & 1 & 1 & 0 \\
		0 & 0 & 0 & 0 & 0 & 0 & 1 & 1 & 0 & 0 \\
	};
	\begin{tikzbackground}
	\draw[highlight] (m-1-4.west) -- (m-1-8.east);
	\draw[highlight] (m-2-2.west) -- (m-2-6.east);
	\draw[highlight] (m-3-8.west) -- (m-3-10.east);
	\draw[highlight] (m-4-3.west) -- (m-4-9.east);
	\draw[highlight] (m-5-7.west) -- (m-5-8.east);
	\end{tikzbackground}		
	\end{tikzpicture}
	\caption{A matrix with the consecutive ones property.}
	\label{fig:consones1}
\end{figure}

\problem
{Consecutive Ones}
{An $n\times m$ matrix $M$ with $M(i,j)\in\{0,1\}$}
{Does $M$ have the consecutive ones property, i.e., 
is there a permutation of the columns of $M$ such that in every row, 
all $1$-entries appear consecutively?}

The consecutive ones property was first defined by \citet{fulkerson1965incidence}, who also described 
an $O(m^2n)$ time algorithm for solving \problemname{Consecutive Ones}.
Later, \citet{booth1976testing} introduced the PQ-tree data structure, 
which enabled them to prove the following result.

\begin{theorem}[\citealp{booth1976testing}]
	\problemname{Consecutive Ones} can be solved in time $O(m+n+f)$, where $f = \sum_{i,j} M(i,j)$ 
	is the number of $1$s occurring in $M$.\label{thm:consones}
\end{theorem}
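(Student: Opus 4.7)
The plan is to use the PQ-tree data structure to incrementally encode the set of column permutations under which all rows processed so far have contiguous $1$-entries. Recall from the discussion of clone decomposition trees that a PQ-tree $T$ over the set $[m]$ generates a set $R(T)$ of orderings of $[m]$, obtained by applying arbitrary reorderings at P-nodes and reflections at Q-nodes. I would maintain the invariant that after processing rows $1,\dots,i$, the set $R(T)$ is exactly the set of column orderings in which the $1$-entries of each of these rows appear consecutively. Initially $T$ is a single P-node with $m$ leaves, so $R(T)$ contains all $m!$ permutations. At the end, $M$ has the consecutive ones property if and only if $T$ is nonempty, and any leaf order generated by $T$ gives a valid column permutation, extractable in time $O(m)$.

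For each row $i$, let $S_i = \{j : M(i,j) = 1\}$. The central subroutine $\textsc{Reduce}(T, S_i)$ must modify $T$ so that $R(T)$ is further restricted to orderings where the leaves in $S_i$ are contiguous (or report failure). Following \citeauthor{booth1976testing}, I would first identify the \emph{pertinent subtree} of $T$ (the smallest subtree containing all leaves in $S_i$) by a bottom-up marking pass starting from the leaves in $S_i$, and then walk this subtree bottom-up applying local pattern-based \emph{templates}. Each template matches a specific configuration at a node (e.g., a P-node some of whose children are ``full'' and some ``empty'', or a Q-node whose full children must be grouped at one end) and rewrites the local structure so as to enforce the new contiguity constraint while preserving all previously imposed ones. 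There are nine templates in the original formulation, and one must prove that this collection is both sound (no extra orderings are introduced, none that satisfy all constraints are lost) and complete (if some valid rewriting exists, some sequence of templates finds it); failure corresponds to the case in which no template matches.

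The hard part will be the amortized running time analysis. A naive accounting bounds a single call to $\textsc{Reduce}$ by the size of the pertinent subtree, which can be $\Theta(m)$, and thus yields only $\Theta(mn)$ overall. To reach $O(m+n+f)$, I would set up a potential function that charges the work done at each node either to one of the $1$-entries being processed, or to an irreversible structural change (a P-node being split into smaller P- and Q-nodes, or a Q-node absorbing an adjacent Q-node). Crucially, each $1$-entry of $M$ triggers only $O(1)$ marking work during its corresponding call, and each structural change either reduces the number of internal nodes or creates a Q-node segment whose subsequent rescans can be charged to fresh $1$-entries. Summing the amortized cost over all $n$ rows then gives total work $O(m + f)$ for the reductions, plus $O(n)$ to iterate over rows (including possibly empty ones), yielding the claimed $O(m + n + f)$ bound.
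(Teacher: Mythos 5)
The paper does not actually prove this theorem---it is imported verbatim from Booth and Lueker (1976)---and your outline is precisely their algorithm: maintain a PQ-tree whose generated leaf orders are exactly the permutations consistent with the rows processed so far, reduce it row by row via local templates applied bottom-up over the pertinent subtree, and recover the linear bound $O(m+n+f)$ by an amortized charging argument. Your plan is sound and matches the cited source; the only caveat is that the genuinely hard content (soundness and completeness of the template set, and---especially---locating and traversing the pertinent subtree in time proportional to its own size rather than to the whole tree during the bubbling pass, which is where the naive $\Theta(mn)$ bound is actually beaten) is deferred rather than carried out, exactly as you acknowledge.
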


In the applications that we consider, the matrices are dense, so the time bound boils down to 
$O(mn)$. In addition to allowing for a linear-time algorithm, PQ-trees also have the property that they 
generate all permutations that witness the consecutive ones property.
Subsequent work has improved this result in various ways \citep{meidanis1998consecutive,
habib2000lex,
mcconnell2004certifying}, and a survey of the literature on the consecutive ones problem is provided by \citet{dom2009consecutive}.

We will now explain how to reduce the problem of recognizing single-peaked preference
profiles to the consecutive ones problem; our exposition follows \citet{bar-tri:j:sp}.
As shown in \Cref{prop:sp-equiv}, 
a vote $v_i$ is single-peaked with respect to an order $\lhd$ if and only if 
for each $c \in A$, the set $\{a\in A : a \pref_i c\}$ is an interval of $\lhd$, or,
equivalently, for all $\ell\in[m]$ the top $\ell$ alternatives in $v_i$ form an interval of $\lhd$.
This observation enables us map a profile $P$ to a binary matrix $M$ so that
$P$ is single-peaked if and only if $M$ has the consecutive ones property.
The reduction proceeds as follows.

Assume that $A=[m]$; we create a matrix $M$ with $m$ columns and $nm$ rows, 
so that each column corresponds to an alternative and each group of $m$ rows corresponds
to a voter. Specifically, for each $i\in [n]$, $\ell\in [m]$ 
the row $(i-1)m+\ell$ encodes the top $\ell$ alternatives
in the preferences of voter $i$, i.e., for each $j\in [m]$ 
we set 
\begin{align*}
M((i-1)m+\ell, j) = 
	\begin{cases} 
	1 &\mbox{if $j$ is among the top $\ell$ alternatives of voter $i$}\\
	0 & \mbox{otherwise.} 
	\end{cases} 
\end{align*}
Now, observe that each 
permutation of columns that witnesses the consecutive ones property of $M$ 
corresponds to an ordering of alternatives such that every prefix of every vote forms
an interval of that ordering, and vice versa. 
By \Cref{prop:sp-equiv}, this establishes that our reduction is correct.

\begin{examplebox}
{Reduction of the single-peakedness recognition problem to the consecutive ones problem}
{sp-to-c1p}
\centering
\begin{minipage}{0.1\linewidth}
	\begin{tabular}{cc}
		\toprule
		$v_1$ & $v_2$ \\
		\midrule
		$b$ & $c$  \\
		$c$ & $d$  \\
		$a$ & $a$  \\
		$d$ & $b$  \\
		\bottomrule
	\end{tabular}
\end{minipage}
\qquad
\raisebox{-10pt}{\scalebox{2.5}{$\mapsto$}}
\qquad
\begin{minipage}{0.4\linewidth}
	\begin{tikzpicture}
	[decoration=brace,
	highlight/.style={line width=11pt, red!15},
	rowlabel/.style={anchor=base, text width=2cm}]
	\matrix (m) [matrix of math nodes,left delimiter={[},right delimiter={]}] {
		0 & 1 & 0 & 0 \\
		0 & 1 & 1 & 0 \\
		1 & 1 & 1 & 0 \\
		1 & 1 & 1 & 1 \\
		0 & 0 & 1 & 0 \\
		0 & 0 & 1 & 1 \\
		1 & 0 & 1 & 1 \\
		1 & 1 & 1 & 1 \\
	};
	\node [anchor=base] (a) at ($(m-1-1.north)+(0,5pt)$) {$a$};
	\node [anchor=base] (b) at ($(m-1-2.north)+(0,5pt)$) {$b$};
	\node [anchor=base] (c) at ($(m-1-3.north)+(0,5pt)$) {$c$};
	\node [anchor=base] (d) at ($(m-1-4.north)+(0,5pt)$) {$d$};
	\node [rowlabel] (row1) at ($(m-1-4.east)+(1.5cm,-3pt)$)  {$\{b\}$};
	\node [rowlabel] (row2) at ($(m-2-4.east)+(1.5cm,-3pt)$)  {$\{b,c\}$};
	\node [rowlabel] (row3) at ($(m-3-4.east)+(1.5cm,-3pt)$)  {$\{a,b,c\}$};
	\node [rowlabel] (row4) at ($(m-4-4.east)+(1.5cm,-3pt)$)  {$\{a,b,c,d\}$};
	\node [rowlabel] (row5) at ($(m-5-4.east)+(1.5cm,-3pt)$)  {$\{c\}$};
	\node [rowlabel] (row6) at ($(m-6-4.east)+(1.5cm,-3pt)$)  {$\{c,d\}$};
	\node [rowlabel] (row7) at ($(m-7-4.east)+(1.5cm,-3pt)$)  {$\{a,c,d\}$};
	\node [rowlabel] (row8) at ($(m-8-4.east)+(1.5cm,-3pt)$)  {$\{a,b,c,d\}$};
	\draw[decorate,transform canvas={xshift=-1.3em},thick] ($(m-4-1.south west)+(0,2pt)$) -- node[left=2pt] {$v_1$} (m-1-1.north west);
	\draw[decorate,transform canvas={xshift=-1.3em},thick] (m-8-1.south west) -- node[left=2pt] {$v_2$} ($(m-5-1.north west)+(0,-2pt)$);
	\begin{tikzbackground}
	\draw[highlight] (m-1-2.west) -- (m-1-2.east);
	\draw[highlight] (m-2-2.west) -- (m-2-3.east);
	\draw[highlight] (m-3-1.west) -- (m-3-3.east);
	\draw[highlight] (m-4-1.west) -- (m-4-4.east);
	\draw[highlight] (m-5-3.west) -- (m-5-3.east);
	\draw[highlight] (m-6-3.west) -- (m-6-4.east);
	\draw[highlight] (m-7-1.west) -- (m-7-1.east);
	\draw[highlight] (m-7-3.west) -- (m-7-4.east);
	\draw[highlight] (m-8-1.west) -- (m-8-4.east);
	\end{tikzbackground}		
	\end{tikzpicture}
\end{minipage}
\end{examplebox}
Thus, the problem of recognizing single-peaked profiles is polynomial-time solvable.
More precisely, 
since we can translate a profile into a binary matrix in  $O(m^2n)$ time 
and solve the resulting consecutive ones problem in $O(m^2n)$ time (\Cref{thm:consones}), 
we obtain a total runtime of $O(m^2n)$.
The PQ-tree data structure behind this approach yields a compact representation 
of all orderings that witness the single-peaked property.

\subsubsection*{A Direct Approach}

We now turn to a direct, linear-time algorithm for recognizing single-peaked profiles.
Our presentation builds on the algorithm by \citet{doignon1994polynomial}, 
infused with ideas from the linear-time algorithm by \citet{escoffier2008single}.
We follow the exposition of \citet{brandt2011lectures}.
The main idea, which is shared by both algorithms, is to build an axis $\lhd$  
by starting from the outside and iteratively placing alternatives proceeding inwards.
At each stage, we consider the set of alternatives that are ranked last in some vote. If there are one or two such alternatives, those are the alternatives that we place.
If there are three or more, 
the profile is not single-peaked: one of these three has to be placed 
in between the others on $\lhd$, creating a valley 
in the vote where it is ranked below the other two (cf.\ \Cref{prop:sp-equiv}~(3)).

\begin{examplebox}
	{Placing outermost alternatives}
	{sp-alg-1}
	\begin{minipage}{0.15\linewidth}
	\begin{tabular}{ccc}
		\toprule
		$v_1$ & $v_2$ & $v_3$ \\
		\midrule
		$e$ & $e$ & $d$ \\
		$d$ & $d$ & $e$ \\
		$b$ & $c$ & $c$ \\
		$c$ & $b$ & $a$ \\
		$a$ & $a$ & $b$ \\
		\bottomrule
	\end{tabular}
	\end{minipage}
	\hfill
	\begin{minipage}{0.8\linewidth}
	Consider the profile $P$ on the left.
	There are two alternatives that appear in the bottom-most position: 
	alternative $a$ in $v_1$ and $v_2$ and alternative $b$ in $v_3$.
	These two alternatives have to be at outermost positions 
	in any order of alternatives witnessing the single-peaked property.
	As axes can be reversed without affecting single-peakedness, 
	we can restrict our attention to orders $\lhd$ satisfying $a\lhd\{c,d,e\}\lhd b$.
	\end{minipage}
\end{examplebox}

Let $B$ denote the set of alternatives that are ranked last by at least one voter
(among the alternatives that have not yet been placed).
If at any point in the execution of the algorithm we have $|B|\ge 3$, 
the algorithm returns `no' since these alternatives would form a valley in at least one vote, no matter
how we place them on $\lhd$; thus, from now on,
when describing the algorithm, we assume that $|B|\le 2$.

The algorithm proceeds in two stages. The goal of the first stage is to ensure that the leftmost
position and the rightmost position on $\lhd$ are filled.
Thus, if initially $|B|=1$, we place the unique alternative in $B$ in the leftmost unfilled 
position on $\lhd$ and remove this alternative from the profile.
We repeat this step until $|B|=2$.
If $|B|=2$, so that $B=\{a, b\}$, we place $a$ into the leftmost unfilled position on $\lhd$
and $b$ into the rightmost unfilled position on $\lhd$. 
We are now guaranteed that there is at least one alternative on the left-hand side of the partial axis 
and at least one alternative on its right-hand side.

At the second stage, we proceed as follows.
At each iteration, we first check that
there are at least two alternatives that have not been placed yet;
otherwise, we can terminate.
Let $\ell$ denote the alternative last placed on the left-hand side 
and let $r$ denote the alternative last placed on the right-hand side,
so that the partial axis is of the form $\ell'\lhd\dots\lhd\ell\lhd\dots\lhd r\lhd\dots\lhd r'$,
where the positions between $\ell$ and $r$ are unfilled.
For each $x\in B$
let $N_x$ be the set of voters that do not rank $x$ first among the remaining alternatives.
Suppose that for some $a\in B$ there is a voter $i\in N_a$ with $\ell\succ_i a\succ_i r$. 
Then $a$ has to be placed into the rightmost available spot, i.e., just to the left of $r$.
Indeed, if, instead, we place $a$ just to the right of $\ell$, 
the top remaining alternative of voter $i$ (which is distinct from $a$ since $i\in N_a$) 
would create a valley together with $\ell$ and $a$.
Similarly, if there is a voter $j\in N_a$ with $r\succ_j a\succ_j \ell$, 
then $a$ has to be placed just to the right of $\ell$.

\definecolor{lgray}{gray}{0.7}
\iflatexml\addtocounter{example}{-1}\else\addtocounter{tcb@cnt@examplebox}{-1}\fi
\begin{examplebox}{continued --- placing inner alternatives}{sp-alg-2} 
	\begin{minipage}{0.14\linewidth}
		\begin{tabular}{ccc}
			\toprule
			$v_1$ & $v_2$ & $v_3$ \\
			\midrule
			$e$ 				   & $e$ 					& $d$ \\
			$d$ 				   & $d$ 					& $e$ \\
			\textcolor{lgray}{$b$} & $c$ 					& $c$ \\
			$c$ 			       & \textcolor{lgray}{$b$} & \textcolor{lgray}{$a$} \\
			\textcolor{lgray}{$a$} & \textcolor{lgray}{$a$} & \textcolor{lgray}{$b$} \\
			\bottomrule
		\end{tabular}
	\end{minipage}
	\hfill
	\begin{minipage}{0.8\linewidth}
	We consider the restriction of our profile $P$ to alternatives not yet placed, 
	i.e., the set $\{c,d,e\}$. The set $B$ of bottom-ranked alternatives is $\{c\}$.
	Our current partial axis is $a\lhd\dots\lhd b$, so $\ell=a$ and $r=b$.
	Since $b\succ_1 c \succ_1 a$, and voter $1$ does not rank $c$ first among $\{c, d, e\}$, 
	alternative $c$ has to be placed next to $a$.
	Otherwise, if we chose the partial axis $a\lhd\dots\lhd c \lhd b$, 
	the alternatives $e,c,b$ would form a valley for vote $v_1$.
	Votes $v_2$ and $v_3$ rank $c$ above $a$ and $b$, so they impose
	no constraints on the placement of $c$.
	We thus continue with the partial axis $a\lhd c\lhd\dots\lhd b$.
\end{minipage}
\end{examplebox}

If at some point $|B|=2$ and both alternatives in $B$ have to be placed next to $\ell$ 
or both have to be placed next to $r$, the profile is not single-peaked.
It may also be the case that all alternatives in $B$ can be placed in either position,
in which case we make the choice arbitrarily.

\iflatexml\addtocounter{example}{-1}\else\addtocounter{tcb@cnt@examplebox}{-1}\fi
\begin{examplebox}{continued --- placing inner alternatives}{sp-alg-3} 
	It remains to place the alternatives $\{d,e\}$. Since our partial axis is 
	$a\lhd c\lhd\dots\lhd b$, we have $\ell=c$ and $r=b$.
	Observe that there does not exist a voter $i$ with 
	$c\succ_i d \succ_i b$; nor is there a voter $j$ with $b\succ_j d \succ_j c$.
	The same holds for alternative $e$.
	Thus, both alternatives can be placed arbitrarily.
	The algorithm concludes that both $a\lhd c\lhd e \lhd d \lhd b$ and $a\lhd c\lhd d \lhd e \lhd b$ 
	are single-peaked axes for $P$, and returns one of those; the reader can verify
	that this is correct.
\end{examplebox}

A more precise description of the algorithm is given by
the pseudocode in \Cref{alg:sp}.
In the pseudocode, $N_x(A')$ denotes the set of voters that do not rank $x$ 
first among the alternatives in $A'\subseteq A$.
Furthermore, we use the $\oplus$ operator to concatenate two lists.
We also concatenate lists and sets with $\oplus$ 
if the sets are guaranteed to contain at most one element.

\newcommand{\longleft}{\makebox[0pt][l]{\texttt{left}}\phantom{\texttt{right}}}
\LinesNumbered
\SetNlSty{}{\color{green!30!black!70!white}}{}
\begin{algorithm}
	\DontPrintSemicolon
	\KwIn{A profile $P$ over $A$}
	\KwOut{An axis $\lhd$ on $A$ such that $P$ is single-peaked with respect to $\lhd$, if one exists}
	\SetKwRepeat{Do}{do}{while}
	$\texttt{left},\texttt{right}\gets$ empty lists\;
	$A' \gets A$\hfill /\!/ $A'$ is the set of alternatives that still need to be placed\\
	\smallskip
	\While(\hfill // first stage)
	{\textup{$A'\neq\varnothing$ and \texttt{right} is empty}}
	{
		$B \gets $ the set of bottom-ranked alternatives in $P|_{A'}$\;
		\lIf{$|B|>2$}{\Return{``$P$ is not single-peaked''}\nllabel{sp-alg:phase1:not-worst}}
		\lIf{$|B|=\{x\}$}{$\texttt{left}\gets\texttt{left}\oplus \{x\}$}
		\If{$|B|=\{x,y\}$\nllabel{sp-alg:phase1:two-bottom}}
		{
			$\longleft\gets\texttt{left}\oplus \{x\}$\;
			$\texttt{right}\gets \{y\}\oplus \texttt{right}$
		}
		$A'\gets A'\setminus B$
	}
	\smallskip
	\While(\hfill // second stage)
	{$|A'| \ge 2$}
	{%
		$\ell\gets$ rightmost element of $\texttt{left}$\;
		$r   \gets$ leftmost element of $\texttt{right}$\;
		$B   \gets $ the set of bottom-ranked alternatives in $P|_{A'}$\;
		\lIf{$|B|>2$}{\Return{``$P$ is not single-peaked''}\nllabel{sp-alg:phase2:not-worst}}
		$L \gets \{x\in B : \text{there is $i\in N_x(A')$ with $r    \succ_i x \succ_i \ell$}\}$\;
		$R \gets \{x\in B : \text{there is $i\in N_x(A')$ with $\ell \succ_i x \succ_i r   $}\}$\;
		\lIf{$|L|>1$ or $|R|>1$}{\Return{``$P$ is not single-peaked''}\nllabel{sp-alg:phase2:LRtoolarge}}
		\lIf{$L\cap R\neq\varnothing$}{\Return{``$P$ is not single-peaked''}\nllabel{sp-alg:phase2:contrad}}
		\If{$L\cup R\neq B$}{
			\nllabel{sp-alg:arbitrary}
			Arbitrarily assign elements of $B\setminus(L\cup R)$ to $L$ and $R$
					 so that $|L|\le 1$ and $|R|\le 1$.}
			$\longleft\gets\texttt{left}\oplus L$\;
			$\texttt{right}\gets R\oplus \texttt{right}$
			\hfill
			\iflatexml\else\begin{tikzpicture}[remember picture,overlay,x=.4cm,xshift=-8.7cm]
			\node[anchor=base] (axis-label) at (8.1cm,-0.35) {$\lhd$};
			
			\node[rectangle, fill=green!20!white, anchor=south west, minimum width=3.2cm, minimum height=0.6cm] (left) at (2.4cm,0) {$A'$};
			
			\node[rectangle, fill=red!40!white, anchor=south west, minimum width=2.4cm, 
				minimum height=0.6cm] (left) at (0,0) {\emph{left}};
			\node[rectangle, anchor=south west, minimum width=.4cm, minimum height=0.6cm] 
				(left) at (2cm,0) {$\ell$};
			\node[rectangle, anchor=south west, minimum width=.4cm, minimum height=0.6cm] 
				(left) at (2.4cm,0) {$L$};
		
			\node[rectangle, fill=red!40!white, anchor=south west, minimum width=2.4cm, 
				minimum height=0.6cm] (left) at (5.6cm,0) {\vphantom{$L$}\emph{right}};
			\node[rectangle, anchor=south west, minimum width=.4cm, minimum height=0.6cm] 
				(left) at (5.6cm,0) {$\vphantom{L} r$};
			\node[rectangle, anchor=south west, minimum width=.4cm, minimum height=0.6cm] 
				(left) at (5.1cm,0) {$R$};

			\draw[-latex] (0,0) -- (21,0);
			\foreach \x in {0,...,6}
				\draw (\x,1pt) -- (\x,-3pt);
			\foreach \x in {14,...,20}
				\draw (\x,1pt) -- (\x,-3pt);
			\end{tikzpicture}\fi
			\\
		$A'\gets A'\setminus B$
	}
	\Return $\texttt{left}\oplus A' \oplus\texttt{right}$ \hfill /\!/ {$A'$ contains at most one element}
	\caption{Recognizing single-peaked profiles in $O(mn)$ time}\label{alg:sp}
\end{algorithm}
\LinesNotNumbered

\begin{theorem}\label{thm:recognizing-sp}
\Cref{alg:sp} recognizes single-peaked profiles in time $O(mn)$.
\end{theorem}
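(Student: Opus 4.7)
The plan is to handle correctness and runtime separately. For correctness, I would induct on the iteration count of the two main loops, maintaining the invariant that, at the end of each iteration, the lists \texttt{left} and \texttt{right} equal a prefix and a suffix, respectively, of every axis witnessing $P$'s single-peakedness (modulo the symmetry of reversing the axis). This invariant yields both directions of correctness in one stroke: if the algorithm outputs $\texttt{left} \oplus A' \oplus \texttt{right}$ with $|A'|\le 1$, the resulting order is itself a valid axis by construction; and if the algorithm rejects, the rejection condition contradicts the existence of any axis consistent with the invariant, so no such axis exists.

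Phase 1 correctness rests on condition (3) of \Cref{prop:sp-equiv}: in a profile single-peaked on $\lhd$, only the two extreme alternatives of $\lhd$ can be ranked last by any voter, since any interior alternative would have neighbors on both sides and form a valley in a vote that ranks it last. Combined with hereditariness, this justifies the rejection at line \ref{sp-alg:phase1:not-worst} and shows that when $|B|\le 2$ the placements are forced, up to axis reversal. For phase 2, the same no-valley reasoning combined with the invariant yields the key claim: given the partial outer endpoints $\ell, r$, if $x \in B$ and some voter $i \in N_x(A')$ satisfies $\ell \succ_i x \succ_i r$, then $x$ must be placed adjacent to $r$. Indeed, $i$ ranks some $y \in A'\setminus\{x\}$ above $x$; by the invariant, $y$ lies strictly between $\ell$ and $r$ on any valid axis, so placing $x$ adjacent to $\ell$ would produce the valley $\ell\lhd x\lhd y$ in $v_i$. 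A symmetric argument handles $R$. The rejection checks at lines \ref{sp-alg:phase2:LRtoolarge} and \ref{sp-alg:phase2:contrad} then correctly detect non-single-peakedness, and the arbitrary assignment at line \ref{sp-alg:arbitrary} is justified by showing that every $x\in B\setminus(L\cup R)$ satisfies, for each $i\in N_x(A')$, either $x\succ_i \ell$ or $x\succ_i r$, ruling out any valley through $x$ on either side.

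For the $O(mn)$ runtime, the plan is to precompute $\rank_i(a)$ for every pair $(i,a)$ in $O(mn)$ time, enabling $O(1)$ comparisons between any two alternatives in any vote. Each vote then maintains two pointers: one to its current top alternative within $A'$, and one to its current bottom alternative within $A'$. Since $|A'|$ decreases monotonically, each pointer advances at most $m$ times per vote across the entire execution, contributing $O(mn)$ total pointer work. Using these pointers, each iteration computes $B$ in $O(n)$ time and, for each of the at most two elements $x \in B$, scans the voters once to determine membership in $L$ or $R$ using $O(1)$ rank comparisons. There are at most $m$ iterations overall, giving the desired $O(mn)$ bound.

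The main obstacle I anticipate is the correctness argument for the arbitrary-assignment step at line \ref{sp-alg:arbitrary}: showing that placing $x\in B\setminus(L\cup R)$ on either side preserves the invariant requires coupling the current partial axis with all future placements and verifying that no valley involving $x$ and a yet-to-be-placed alternative can arise. Closing this inductive step cleanly---and in particular confirming that when $B\setminus(L\cup R)$ contains both elements of $B$ they can always be split between \texttt{left} and \texttt{right}---is the subtle part of the proof. A secondary concern is the amortization of the bottom-pointer updates, which must be ordered carefully so that removals never skip over alternatives that would need to be revisited in a later iteration.
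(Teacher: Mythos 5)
Your runtime plan and your justification of the forced placements in the second stage (the valley argument via $N_x(A')$ and the partial endpoints $\ell,r$) match the paper's and are sound. The problem is the correctness architecture. The invariant you propose---that \texttt{left} and \texttt{right} are a prefix and a suffix of \emph{every} witnessing axis, modulo reversing the whole axis---is false. Already for the one-voter profile $a \succ b \succ c$ the first stage builds $\texttt{left} = (c,b)$, yet $b \lhd a \lhd c$ is also a valid axis of which $(c,b)$ is neither a prefix nor a (reversed) suffix; more generally, whenever the unplaced alternatives form a common prefix (every first-stage singleton placement, and every arbitrary choice at line~\ref{sp-alg:arbitrary}), the partial placement commits to only a proper subset of the valid axes, cf.\ \Cref{prop:sp-all-axes}. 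The invariant must be weakened to an existential one: \emph{if} $P$ is single-peaked, \emph{then some} valid axis extends the current partial placement.

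More importantly, even that corrected invariant does not give ``both directions in one stroke.'' When $P$ is not single-peaked, any statement quantified over witnessing axes is vacuously true, so it says nothing about what the algorithm does on such inputs and in particular does not exclude the algorithm terminating and outputting a bogus axis. The direction ``if the algorithm outputs $\lhd$, then $P$ is single-peaked on $\lhd$'' therefore needs its own argument, and this is where most of the work in the paper's proof lies: for a hypothetical valley $a \lhd b \lhd c$ in the output, a case analysis of the order in which $a$, $b$, $c$ were placed, each case contradicting the placement rules. Your ``by construction'' hides exactly this. (For the other direction the paper argues not by contradiction with an invariant but by showing that every rejection exposes one of the two finite forbidden configurations of \Cref{lem:sp-minors-are-bad}, which additionally makes the algorithm certifying; your contradiction-with-the-invariant route can be made to work there, but only after the invariant is repaired as above.) As written, the soundness half of the theorem is essentially unproved in your plan.
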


\noindent
\emph{Proof.}\:
We proceed in three steps. First, we argue that if the algorithm outputs an axis $\lhd$
then the input profile $P$ is single-peaked with respect to $\lhd$. Second, we show that
if the algorithm fails, then the profile is not single-peaked. Finally, we analyze the running time.

\begin{claim}
If \Cref{alg:sp} outputs an axis $\lhd$ then $P$ is single-peaked
with respect to $\lhd$.
\end{claim}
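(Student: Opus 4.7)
The plan is to verify the no-valley characterization from Proposition~\ref{prop:sp-equiv}(3) for the output axis $\lhd$. Assume for contradiction that there exist alternatives $a \lhd b \lhd c$ and a voter $i \in N$ such that $a \succ_i b$ and $c \succ_i b$; the goal is to show that the algorithm would have triggered one of its failure checks at some iteration, contradicting the assumption that it outputs~$\lhd$. By the left-right symmetry of the algorithm (after stage~1 initializes \texttt{right}), I may assume that $b$ is placed in \texttt{left}. Since elements are appended to \texttt{left} in axis order from the outside in, $a \lhd b$ forces $a$ to have been placed in \texttt{left} at some earlier iteration $t_a < t_b$, so $a \notin A'$ at iteration~$t_b$; conversely $c$ must still be in $A'$ at iteration~$t_b$, which in particular establishes $i \in N_b(A')$.

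The argument then splits according to the stage of the algorithm in which $b$ is placed. In the first stage, $b$ is placed solely because $b \in B$; when $|B|=1$, every voter (including $i$) ranks $b$ last in~$A'$, and by tracing back to iteration $t_a$ one obtains a voter who ranked $b$ above $a$ in an earlier $A'\supseteq\{a,b\}$ (since $a$ was then bottom-ranked while $b$ was still available), contradicting $a \succ_i b$. In the second stage, the placement $b \in L$ is witnessed by some voter $j\in N_b(A')$ with $r \succ_j b \succ_j \ell$; I would argue that voter $i$'s preferences $a \succ_i b$ and $c \succ_i b$, together with the positions of $a$ (in \texttt{left}, possibly further left than $\ell$) and $c$ (in $A'$, possibly destined for \texttt{right}), imply that voter $i$ witnesses $b \in R$ as well, so $b \in L\cap R$ and the algorithm fails at line~\ref{sp-alg:phase2:contrad}. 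The alternative sub-case, in which $b$ is added to $L$ by the arbitrary assignment at line~\ref{sp-alg:arbitrary}, is dispatched similarly after noting that then $b\notin L\cup R$, so no voter exhibits either of the relevant patterns $r\succ b\succ\ell$ or $\ell\succ b\succ r$, yielding an immediate contradiction with voter~$i$'s preferences.

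The main technical obstacle I anticipate is two-fold. First, the case where $a$ is placed strictly earlier than the most recent \texttt{left}-addition (so $a \neq \ell$ at iteration $t_b$) requires threading the argument through an intermediate alternative, invoking the argument recursively on the iteration where $\ell$ itself was placed. Second, the transition iteration of stage~1 (where $|B|=2$ initializes \texttt{right}) and the arbitrary assignment in line~\ref{sp-alg:arbitrary} produce subcases that need dedicated handling: in the former, both elements of $B$ are placed at opposite extremes simultaneously, and one needs to check that $a$ cannot coincide with the stage-1 transition's left-assignment unless a contradiction is already present; in the latter, the absence of a witnessing voter for $L$ must be leveraged carefully. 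Once the case structure is set up correctly, each individual case reduces to a short verification that voter~$i$'s preferences force one of the algorithm's failure checks to trigger.
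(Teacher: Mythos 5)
Your overall strategy---assume a valley $a \lhd b \lhd c$ for some voter $i$ and show that one of the algorithm's failure checks must have fired---is the same as the paper's, but two essential ingredients are missing, and without them the argument does not close. First, you never reduce to a \emph{local} valley (three alternatives that are consecutive on $\lhd$, as licensed by \Cref{prop:sp-equiv}), and several of your claims fail without that reduction. In particular, ``$c$ must still be in $A'$ at iteration $t_b$'' is false when $c$ lies far to the right: $c$ can be placed into \texttt{right} at the stage-1 transition, many iterations before $b$ enters \texttt{left}, and then $i \in N_b(A')$ no longer follows from $c \succ_i b$. Likewise the case $a \neq \ell$, which you defer to an unspecified ``recursive'' argument, disappears entirely once $a,b,c$ are consecutive, because $b$ being appended to \texttt{left} immediately to the right of $a$ forces $a$ to be the most recent left placement. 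The paper starts with exactly this reduction and then pins down the order in which the three consecutive alternatives can have been placed.

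Second, and more seriously, to conclude that voter $i$ witnesses $b \in R$ you need the full pattern $\ell \succ_i b \succ_i r$, and you never establish $b \succ_i r$. The hypotheses $a \succ_i b$ and $c \succ_i b$ say nothing about how $i$ ranks $b$ against the already-placed alternative $r$; a priori $i$ could have $r \succ_i b$, in which case $i$ witnesses neither $b\in L$ nor $b \in R$ and the intended contradiction evaporates. The paper closes this gap with a dedicated step: the alternatives placed in the iteration immediately preceding $b$'s form a subset of $\{\ell, r\}$, voter $i$'s bottom-ranked alternative in that iteration's $A'$ lies in that set, and it cannot be $\ell = a$ because $a \succ_i b$ with $b$ still available---hence it is $r$, giving $b \succ_i r$. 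Two further points: your first-stage case extracts ``a voter who ranked $b$ above $a$'' and calls this a contradiction with $a \succ_i b$, but that is only a contradiction if the voter is $i$ itself; the correct observation is that a singleton $B=\{a\}$ means \emph{every} voter, including $i$, bottom-ranks $a$ while $b$ is still present. And $b$ being appended to \texttt{left} in stage 2 does not imply $b \in L$ (it may have been assigned arbitrarily in line~\ref{sp-alg:arbitrary}), so aiming for $b \in L \cap R$ is both unjustified and unnecessary: showing $b \in R$ alone suffices, since then $b$ is either placed on the right or the algorithm fails, and either way the assumed output axis is contradicted.
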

\begin{proof}
Assume towards a contradiction that there is a vote $v_i$ in $P$ that is not single-peaked on $\lhd$.
Then there exists a valley, 
i.e., three alternatives $a,b,c$ such that $a\lhd b\lhd c$, $a\succ_i b$, and $c\succ_i b$.
We can assume that $a$, $b$, and $c$ appear consecutively on $\lhd$, i.e., 
the axis is of the form $\dots\lhd a\lhd b \lhd c \lhd \cdots$. (If there exists a valley, a valley also exists on consecutive candidates.) Let us consider the order
in which $a$, $b$, and $c$ were placed on the axis. 
We distinguish with four cases which candidates were placed first on the axis: $\{a, b\}$ or $\{b, c\}$ were placed simultaneously first, $\{a, c\}$  were placed simultaneously first, only $\{b\}$ was placed first, and either $\{a\}$ or  $\{c\}$ was placed first.

It cannot be the case that $a$ and $b$ were placed 
simultaneously, and before $c$ (or that $b$ and $c$ were placed simultaneously, and before $a$), 
since whenever $|B|=2$, we place the alternatives in $B$ at the opposite
ends of the unfilled part of the axis. Also, 
it cannot be the case that $a$ and $c$ were placed simultaneously, and before $b$: if $b$
has not been placed yet, we cannot have $B=\{a, c\}$, since voter $i$ ranks both $a$ and $c$
above $b$. Thus, there was an iteration where all three of $a$, $b$ and $c$ 
were available, and we placed exactly one of them. It could not have been $b$, 
as the unfilled positions should always form an interval of the axis. 

The only remaining possibility is that
one of $a$ and $c$ was placed strictly before the other two alternatives in $\{a, b, c\}$;
assume without loss of generality that it was $a$. Further, we note that $b$ and $c$ were placed during the second stage. This can be seen as follows: since $a\succ_i b$, $a$ was not placed as the only candidate in this round, i.e., in this round $|B|=2$. Thus, $a$ was placed either in the final iteration of the first stage (line~\ref{sp-alg:phase1:two-bottom}) or in the second stage, and consequently $b$ and $c$ were placed during the second stage.
Now, suppose that $c$ was placed strictly before $b$; we will show that this leads to a contradiction.
Indeed, given that $a$, $b$ and $c$ form a contiguous segment of $\lhd$, it follows that 
in the iteration in which $c$ was placed we had to have $A'=\{b, c\}$.
As $c\succ_i b$, it follows that we had $B=\{b, c\}$ at that point, so $b$ would have to be placed
in the same iteration.

Thus, it has to be the case that $b$ is placed before $c$ or simultaneously with $c$.
Consider the iteration in which $b$ is placed. 
At the start of this iteration, the partial axis is of the form 
$\cdots\lhd a \lhd\cdots\lhd r\lhd\cdots$, with positions between $a$ and $r$ unfilled.
Note that the alternatives placed in the previous iteration form a subset of $\{a, r\}$;
as $a\succ_i b$, it follows that the bottom-most alternative in $i$'s vote 
in the previous iteration was $r$, so we have $a\succ_i b\succ_i r$.
As $c$ has not been placed yet and voter $i$ prefers $c$ to $b$, 
we have $i\in N_b(A')$. Together with $a\succ_i b\succ_i r$ this implies (line 17)
$b\in R$. As $|A'|\ge 2$, this means that $b$ cannot be placed next to $a$, a contradiction.
We conclude that every vote in $P$ is single-peaked on $\lhd$.
\end{proof}

To prove the converse claim, i.e., that \Cref{alg:sp} only returns
`no' on profiles that are not single-peaked, we need a technical lemma.

\begin{lemma}
	\label{lem:sp-minors-are-bad}
	Let $P$ be a profile. 
	If there exist alternatives $a,b,c\in A$ and voters $i,j,k\in N$ such that
	\begin{align}
		\label{eq:sp-minor-1}
		\{b,c\} \succ_i a, \qquad \{a,c\}\succ_j b, \qquad\text{ and }\qquad \{a,b\}\succ_k c.
	\end{align}
	then $P$ is not single-peaked.
	If there exist alternatives $a,b,c,d\in A$ and voters $i,j\in N$ such that
	\begin{align}
		\label{eq:sp-minor-2}
		\{a,d\} \succ_i b  \succ_i c, \qquad\text{ and }\qquad
		\{c,d\} \succ_j b  \succ_j a,
	\end{align}
	then $P$ is not single-peaked.
\end{lemma}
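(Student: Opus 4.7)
The plan is to prove both statements by contradiction, assuming single-peakedness on some axis $\lhd$ and deriving a forbidden valley (or violating the fact that only endpoints of the axis may be ranked last). Since the single-peaked domain is hereditary, it suffices to work with the restriction of $P$ to the small alternative set mentioned in each statement, which makes the case analysis finite and short.

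For configuration \eqref{eq:sp-minor-1}, the key observation is already available in the paragraph following \Cref{prop:sp-equiv}: in a profile single-peaked on $\lhd$, the only alternatives that can be ranked last by some voter are the two endpoints of $\lhd$. I would restrict $P$ to $\{a,b,c\}$ and note that voter $i$ ranks $a$ last among these three, voter $j$ ranks $b$ last, and voter $k$ ranks $c$ last. Thus three distinct alternatives appear in bottom position, contradicting the two-endpoint bound. (Equivalently, whatever the middle element of the axis is, one of the three voters exhibits a valley at it.)

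For configuration \eqref{eq:sp-minor-2}, I would restrict $P$ to $\{a,b,c,d\}$ and again invoke the endpoint observation: voter $i$ ranks $c$ last and voter $j$ ranks $a$ last in this restriction, so $\{a,c\}$ must be the two endpoints of the restricted axis. By reversibility of the axis, I may assume $a \lhd \cdots \lhd c$, leaving only two possible orderings of $\{a,b,c,d\}$ on $\lhd$, namely $a\lhd b\lhd d\lhd c$ and $a\lhd d\lhd b\lhd c$. In the first case, voter $i$ has $a\succ_i b$ and $d\succ_i b$ with $a\lhd b\lhd d$, producing a valley at $b$. In the second case, voter $j$ has $d\succ_j b$ and $c\succ_j b$ with $d\lhd b\lhd c$, again producing a valley at $b$. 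Either way \Cref{prop:sp-equiv}(3) is violated, contradicting single-peakedness of $P$.

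The main obstacle is essentially bookkeeping: making sure that the restriction argument is justified (closure of the single-peaked domain under alternative deletion, which is immediate from the no-valley characterization) and that the two subcases in \eqref{eq:sp-minor-2} are exhaustive after fixing the orientation of the axis. Neither step requires calculation, so the proof should be short.
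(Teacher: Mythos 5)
Your proof is correct and follows essentially the same route as the paper's: for configuration \eqref{eq:sp-minor-1} the observation that three distinct alternatives cannot all be bottom-ranked (equivalently, the middle one yields a valley), and for configuration \eqref{eq:sp-minor-2} restricting to $\{a,b,c,d\}$, forcing $a$ and $c$ to the endpoints, and ruling out both placements of $b$ and $d$. The only cosmetic difference is that the paper phrases the final contradiction in the second case via the interval condition of \Cref{prop:sp-equiv}(4) rather than the valley condition of \Cref{prop:sp-equiv}(3); these are interchangeable.
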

\begin{proof}
	Assume for a contradiction that $P$ is single-peaked with respect to $\lhd$,  
	but condition \eqref{eq:sp-minor-1} holds.
	One of the alternatives $a, b, c$ appears between the other two on $\lhd$. This creates a valley
	in the preferences of the voter who ranks this alternative last, a contradiction.
	
	Suppose next that condition \eqref{eq:sp-minor-2} holds.
	Since $P$ is single-peaked, so is its restriction to $A'=\{a, b, c, d\}$.
	Thus we can assume without loss of generality that $A=\{a, b, c, d\}$.
	From \Cref{prop:sp-equiv}~(4), get that $\{a,d\}$ and $\{c,d\}$ must be intervals of $\lhd$.
	Since the alternatives $a$ and $c$ occur bottom-ranked, they must appear at the ends of $\lhd$, so without loss of generality we have $a \lhd \{b, d\} \lhd c$.
	But if $a \lhd b \lhd d \lhd c$, then $\{a,d\}$ is not an interval, and if $a \lhd d \lhd b \lhd c$, then $\{c,d\}$ is not an interval.
	This is a contradiction.
\end{proof}
\begin{claim}
	If \Cref{alg:sp} returns `no', then $P$ is not single-peaked.
\end{claim}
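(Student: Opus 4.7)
The algorithm returns ``not single-peaked'' at one of four places: at lines~\ref{sp-alg:phase1:not-worst} or~\ref{sp-alg:phase2:not-worst} when $|B|>2$; at line~\ref{sp-alg:phase2:LRtoolarge} when $|L|>1$ or $|R|>1$; and at line~\ref{sp-alg:phase2:contrad} when $L\cap R\neq\varnothing$. The first two cases yield a forbidden subprofile of type~\eqref{eq:sp-minor-1} directly; the remaining two I would handle via an inductive invariant about the algorithm's earlier placements.

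For the $|B|>2$ cases, I would pick three distinct alternatives $a,b,c\in B$, together with voters $v_a,v_b,v_c$ who rank them last in $P|_{A'}$. Since restriction to $A'$ preserves pairwise comparisons, the relations $\{b,c\}\succ_{v_a} a$, $\{a,c\}\succ_{v_b} b$, $\{a,b\}\succ_{v_c} c$ already hold in $P$, matching condition~\eqref{eq:sp-minor-1} of \Cref{lem:sp-minors-are-bad}, so $P$ is not single-peaked.

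For the remaining failures I would establish the following loop invariant: at the start of every iteration, if $P$ is single-peaked on an axis $\lhd$, then $\lhd$ (or its reversal, the latter only while \texttt{right} is empty) places the alternatives in \texttt{left} at the leftmost positions of $\lhd$ in the listed order, and those of \texttt{right} at the rightmost positions read inward. Initially the invariant is vacuous. It is preserved at each successful phase 1 step because the no-valley property (\Cref{prop:sp-equiv}) forces the elements of $B$ onto the endpoints of $\lhd|_{A'}$; the sole remaining orientation choice is still available while \texttt{right} is empty. Preservation through each successful phase 2 step follows from the placement analysis below.

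Finally, at a failing phase 2 iteration, assume for contradiction that $\lhd$ is an axis compatible with the invariant. For any $x\in L$ with witness $i\in N_x(A')$ satisfying $r\succ_i x\succ_i \ell$, I would show that $\lhd$ must place $x$ immediately to the right of $\ell$. Otherwise $x$, being an element of $B$, must sit at the other endpoint of $\lhd|_{A'}$, immediately to the left of $r$. But since $i\in N_x(A')$, voter $i$'s peak on $\lhd$ cannot lie strictly to the right of $r$ (otherwise $i$'s top in $A'$ would be the rightmost element of $A'$ on $\lhd$, which is $x$), nor can it equal $x$; so the peak lies in $\lhd|_{A'}\setminus\{x\}$ or at or to the left of $\ell$. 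In either case the no-valley property applied along the segment of $\lhd$ from the peak through $x$ to $r$ forces $x\succ_i r$, contradicting $r\succ_i x$. Symmetrically any $x\in R$ is forced immediately to the left of $r$. Consequently $|L|>1$ or $|R|>1$ would force two distinct alternatives into the same endpoint of $\lhd|_{A'}$, and $L\cap R\neq\varnothing$ would force a single alternative into both endpoints, so no compatible $\lhd$ exists and $P$ is not single-peaked. The main obstacle is the case split for the location of $i$'s peak in this final argument, in particular exploiting $i\in N_x(A')$ to exclude the peak lying strictly to the right of $r$.
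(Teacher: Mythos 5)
Your handling of the $|B|>2$ cases is correct and matches the paper's argument. For the remaining failure cases, however, your loop invariant is false as stated, and this is a genuine gap. You claim that \emph{every} axis $\lhd$ on which $P$ is single-peaked must (up to one global reversal) list the contents of \texttt{left} in order at the leftmost positions. This fails already for the one-voter profile $a\succ b\succ c$: the algorithm builds $\texttt{left}=[c,b]$ after two iterations, yet the profile is also single-peaked on $b\lhd a\lhd c$, and neither this axis nor its reversal has $c$ followed by $b$ in the two leftmost positions. More generally, whenever the unplaced alternatives form a common prefix---which is exactly the situation in every first-stage iteration with $|B|=1$ and whenever the algorithm makes the arbitrary choice in line~\ref{sp-alg:arbitrary}---there are valid axes that disagree with the algorithm's placement (cf.\ \Cref{prop:sp-all-axes}). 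Since your contradiction at a failing iteration is derived only for axes compatible with the invariant, the argument as written does not exclude the possibility that $P$ is single-peaked on some incompatible axis.

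What your strategy actually needs is the existential form of the invariant: \emph{if $P$ is single-peaked at all, then some valid axis is compatible with the current partial placement}. That version is true and would suffice, and your endpoint analysis (any $x\in L$ must occupy the left end of $\lhd|_{A'}$, any $x\in R$ the right end, for any compatible axis) is sound; it also yields preservation of the existential invariant across the \emph{forced} placements. But preservation across the unforced steps---the first-stage singleton placements and line~\ref{sp-alg:arbitrary}---requires showing that the set of remaining alternatives is a common prefix whose reversal again yields a valid axis, and you have not supplied this. For comparison, the paper sidesteps the invariant entirely: at each failure point it extracts an explicit forbidden configuration of type \eqref{eq:sp-minor-1} or \eqref{eq:sp-minor-2} from \Cref{lem:sp-minors-are-bad}, using only local facts about the history (e.g.\ that $r$ was bottom-ranked by some voter in the iteration in which it was placed). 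That route is self-contained, and has the added benefit of producing a small certificate of non-single-peakedness.
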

\begin{proof}
	We show that whenever \Cref{alg:sp} returns `no', then either situation \eqref{eq:sp-minor-1} or \eqref{eq:sp-minor-2} from \Cref{lem:sp-minors-are-bad} occurs and hence $P$ is not single-peaked.
	
	If \Cref{alg:sp} fails in Line~\ref{sp-alg:phase1:not-worst} 
	or in Line~\ref{sp-alg:phase2:not-worst}, so that $|B| \ge 3$, then we can take three alternatives $a,b,c \in B$ and deduce that $P$ is not single-peaked by \eqref{eq:sp-minor-1}.
	
	Next, assume that \Cref{alg:sp} fails in Line~\ref{sp-alg:phase2:LRtoolarge}, 
	i.e., either $|L|=2$ or $|R|=2$.
	Without loss of generality, suppose $|L|=2$ and write $L=\{a,b\}$.
	This means that there are voters $i\in N_a(A')$ with $r\succ_i a\succ_i \ell$
	and $j \in N_b(A')$ with $r\succ_j b\succ_j \ell$. 
	
	Suppose first that we can choose such voters to be the same ($i = j$), so there is a single voter $i \in N_a(A') \cap N_b(A')$ with $r \succ_i \{a,b\} \succ_i \ell$. Let $x$ be $i$'s most-preferred alternative among the set $A'$ of unplaced alternatives, which is different from $a$ and $b$. Assume without loss of generality that $a \succ_i b$, and thus $\{x,r\} \succ_i a \succ_i b \succ_i \ell$. Since $a \in B$ and $x, b \in A'$, there is a voter $k$ with $\{x,b\} \succ_k a$. If $a \succ_k r$, then we have
	\[
		\{x,r\} \succ_i a \succ_i b \text{ and } \{x,b\} \succ_k a \succ_k r,
	\]
	which is an instance of \eqref{eq:sp-minor-2}. Otherwise, $r \succ_k a$. In that case, take a voter $t$ with $A' \succ_t r$, where such a voter exists because alternative $r$ was placed in the previous iteration. Then we have
	\[
	r \succ_i a \succ_i b \text{ and } b \succ_k r \succ_k a \text{ and } \{a,b\} \succ_t r,
	\]
	which is an instance of \eqref{eq:sp-minor-1}.
	
	Otherwise, the voters $i$ and $j$ cannot be chosen to be the same. Thus, for neither voter do we have $r \succ \{a,b\} \succ \ell$. On the other hand, we do have $\{a,b\} \succ \ell$ for both $i$ and $j$. This can be seen as follows: Consider voter~$i$, the argument for $j$ is the same. We have $r \succ_i a \succ_i \ell$ but not $r \succ_i \{a,b\} \succ_i \ell$. So the remaining possibilities are $b \succ_i r \succ_i a \succ_i \ell$ and $r \succ_i a \succ_i \ell \succ_i b$.
	The latter is not possible because then $b$ would have been placed before (or together with) either $r$ or $\ell$.
	Consequently,
	$b \succ_i r \succ_i a \succ_i \ell$ and $a \succ_j r \succ_j b \succ_j \ell$.
	Now, take again a voter $t$ with $A' \succ_t r$, who exists because $r$ was placed in a previous iteration. Then we have
	\[
	b \succ_i r \succ_i a \text{ and } a \succ_j r \succ_j b \text{ and } \{a,b\} \succ_t r,
	\]
	which is an instance of \eqref{eq:sp-minor-1}.
	
	Finally, suppose the algorithm fails in Line~\ref{sp-alg:phase2:contrad} (and thus  did not fail in Line~\ref{sp-alg:phase2:LRtoolarge}), and let $a\in L\cap R$. Thus there are voters $i,j \in N_a(A')$ with $r \succ_i a \succ_i \ell$ and $\ell \succ_j a \succ_j r$.
	If $|B| = 1$, let $b \neq a$ be an arbitrary alternative in $A'$ (noting that $|A'| \ge 2$ by the condition of the while-loop). Then by definition of $B$, we have $b \succ_i a$ and $b \succ_j a$.
	If $|B| = 2$, let us write $B = \{a,b\}$. In that case we also have $b \succ_i a$ and $b \succ_j a$, because otherwise we would have, say, $a \succ_i b$ and thus $r \succ_i \{a,b\} \succ_i \ell$ and then we would have already failed in Line~\ref{sp-alg:phase2:LRtoolarge}, contradiction. Hence we have
	\[
	\{b, r\} \succ_i a \succ_i \ell \text{ and } \{b, \ell\} \succ_j a \succ_j r,
	\]
	which is an instance of \eqref{eq:sp-minor-2}.
\end{proof}

\begin{claim}
\Cref{alg:sp} runs in time $O(nm)$.
\end{claim}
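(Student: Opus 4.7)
The plan is to show that suitable data structures allow each iteration of either while loop to run in $O(n)$ time, while there are at most $m$ iterations in total.

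First, I would spend $O(nm)$ time on preprocessing. For each voter $i$, build a rank array giving the position of every alternative in $v_i$, and construct a doubly-linked list storing $v_i$ in preference order (from top to bottom). For every voter--alternative pair $(i,a)$ also store a direct pointer to the node representing $a$ in $v_i$'s list, so later deletions from the list take $O(1)$ per voter. The set $A'$ can be maintained as a boolean array.

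With these structures in place, each iteration runs in $O(n)$ time. The set $B$ of bottom-ranked alternatives in $P|_{A'}$ is read off by inspecting the tail of each voter's linked list, after which checking $|B|>2$ is immediate. Since $|B|\le 2$ in every iteration that does not abort, computing $L$ and $R$ reduces to a constant number of linear sweeps over the voters: for each $x\in B$ and each voter $i$, we check in $O(1)$ whether the head of $v_i$'s list equals $x$ (this decides $i\in N_x(A')$) and compare the stored ranks of $x$, $\ell$, and $r$. The tests $|L|>1$, $|R|>1$, $L\cap R\neq\varnothing$, $L\cup R\neq B$, and the two concatenations take $O(1)$. Finally, deleting the at most two alternatives of $B$ from $A'$ and from every voter's linked list costs $O(n)$ per iteration, using the precomputed node pointers.

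The key accounting observation is that each iteration of either loop strictly decreases $|A'|$, so the two while loops together execute at most $m$ iterations. Multiplying the $O(n)$ per-iteration cost by $m$ and adding the $O(nm)$ preprocessing gives the claimed $O(nm)$ runtime. I do not expect any real obstacle here: no delicate amortized argument is needed because each alternative is removed at most once, and the data-structure updates are charged directly to that removal.
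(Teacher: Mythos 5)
Your proof is correct and follows essentially the same route as the paper's: both bound the number of iterations by $m$ and achieve $O(n)$ per iteration by precomputing per-vote position arrays for constant-time comparisons and by maintaining direct access to each voter's top- and bottom-ranked remaining alternative (the paper via pointers that advance by at most two positions per iteration, you via doubly-linked lists with $O(1)$ node deletion). The two implementations are interchangeable, so no further comment is needed.
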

\begin{proof}
Observe that both while loops induce at most $m$ iterations.
Thus, to prove the runtime bound, we have to show that all lines within these loops 
require at most $O(n)$ time.
This is straightforward for most lines, but the computation of the sets $B$, $L$, and $R$ 
requires some attention. The set $B$ contains all bottom-ranked alternatives in $P|_{A'}$.
Iterating through each vote until we found an alternative that is contained in $A'$
would require $O(m)$ time per voter.
This issue can be circumvented by storing---for each voter---a pointer to the bottom-ranked 
alternative in $A'$.
Similarly, we can store a pointer to the top-ranked alternative in $A'$ for each voter;
this way, for each $x\in A'$ and $i\in [n]$ we can easily check whether $i\in N_x(A')$.
We update these pointers at the end of each while loop, by iterating through all votes;
importantly, each pointer needs to be moved by at most two positions.

Recall that $L=\{x\in B : \text{there is $i\in N_x(A')$ with $r \succ_i x \succ_i \ell$}\}$.
At the time of computing $L$ and $R$, we know that $|B|\leq 2$.
Thus, under the assumption that $r \succ_i x \succ_i \ell$ can be checked in constant time, 
computing $L$ takes $O(n)$ time.
Satisfying this assumption is not entirely trivial: e.g.,
if we stored votes as lists, comparing two alternatives would require $O(m)$ time, 
as we would have to locate these alternatives in the list.
Thus, we precompute (outside of the while loops) for each vote 
a function from alternatives to positions that maps each alternative $c$ to 
its position in the vote; this can be done by scanning each vote once, so in $O(m)$ time per vote.
Then we can decide if, e.g., $r\succ_i x$, by comparing the positions of $r$ and $x$ in vote $v_i$.
Hence the set $L$---and similarly the set $R$---can be computed in $O(n)$ time.
\end{proof}
Together, the three claims establish \Cref{thm:recognizing-sp}.
\qed

\subsubsection*{Finding All Axes}

When \Cref{alg:sp} succeeds, it returns a single axis 
on which the input profile is single-peaked. 
We will now discuss how to modify the algorithm so that it can return all axes in a compact 
representation. It turns out that this is possible within the $O(mn)$ runtime bound.

A set $A' \subseteq A$ is a \emph{common prefix} of a profile $P$ 
if $A' \succ_i A\setminus A'$ for all $i \in N$, i.e., every voter in $P$
ranks the alternatives in $A'$ above all other alternatives. Suppose that 
$P$ is single-peaked on $\lhd$, and $A'$ is a common prefix of $P$. 
Since $P$ is single-peaked on $\lhd$, every prefix of every vote 
corresponds to an interval of $\lhd$. Hence $A'$ is an interval of $\lhd$. 
Now consider the axis $\lhd'$ obtained from $\lhd$ by reversing the 
order of alternatives in $A'$, and observe
that every prefix of every vote is still an interval of $\lhd'$. Thus, the collection of 
axes is closed under reversing common prefixes. In fact, 
the set of axes obtained from $\lhd$ in this way is complete, in the sense 
that it contains all valid axes. This can be seen by analyzing \Cref{alg:sp}.
During the first stage, each iteration except for the last one corresponds
to the case where the unranked alternatives form a common prefix. While 
during these iterations \Cref{alg:sp} places the alternative
it considers on the left-hand side of the axis, this choice is only made
for convenience, as it allows us to quickly decide when to move to the second stage;
placing any of these alternatives on the right (which is equivalent to reversing
the associated common prefix) would result in a valid axis as well.
During the second stage, 
the algorithm can only make choices in case $L = R = \varnothing$. 
In this case, it then follows from the definition of 
$L$ and $R$ that the remaining set of alternatives $A'$ is a common prefix of $P$.

\begin{proposition}[\citealp{doignon1994polynomial}, Prop.~7]
	\label{prop:sp-all-axes}
	Suppose a profile $P$ is single-peaked on $\lhd$. 
	Let $\varnothing \neq A_1 \subset A_2 \subset \dots \subset A_t = A$ 
	be the collection of common prefixes of $P$. 
	Then $P$ is single-peaked on an axis $\lhd'$ if and only if $\lhd'$ 
	can be obtained from $\lhd$ by the following process: for $j = 1, \dots, t$ sequentially, 
	reverse or do not reverse the interval $A_j$ in $\lhd$.
\end{proposition}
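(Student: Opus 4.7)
The plan is to prove the two implications separately, using Condition~(4) of \Cref{prop:sp-equiv} as the main technical tool.

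For the forward direction, the key lemma is: if $P$ is single-peaked on an axis $\lhd^\star$ and $A'$ is a common prefix of $P$, then $P$ is single-peaked on the axis $\lhd^{\star\star}$ obtained by reversing the elements of $A'$ in their positions in $\lhd^\star$. This reversal is well-defined because $A'$ equals the upper contour set $\{a : a \pref_i c_i\}$ for any voter $i$ and $i$'s least-preferred alternative $c_i$ inside $A'$, so by Condition~(4) $A'$ is an interval of $\lhd^\star$. To verify Condition~(4) for $\lhd^{\star\star}$, I case-split on each voter $i$ and each $c \in A$: if $c \in A'$, then every voter ranks $A'$ on top, so $\{a : a \pref_i c\} \subseteq A'$ is a sub-interval of $A'$ and stays an interval after the internal reversal; if $c \in A \setminus A'$, then $A' \subseteq \{a : a \pref_i c\}$, so $\{a : a \pref_i c\}$ is an interval of $\lhd^\star$ containing the sub-interval $A'$, and reversing $A'$ internally keeps it an interval. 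Applying this lemma inductively along $A_1, \ldots, A_t$---using that ``common prefix'' depends only on $P$, so each $A_j$ remains a common prefix and hence still an interval of the current axis---gives the forward direction.

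For the backward direction, I would argue via \Cref{alg:sp}: every axis on which $P$ is single-peaked is produced by the algorithm under some sequence of choices, and each free choice corresponds to reversing one common prefix of $P$. At the start of any non-terminal iteration of the first while-loop, the unplaced set $A'$ is $A$ minus what has been placed, and by construction every voter ranks all placed alternatives strictly above $A'$; hence $A'$ is a common prefix, and placing the current bottom-ranked alternative on the left rather than on the right is exactly the choice of whether or not to reverse $A'$ in the final axis. In the second while-loop the algorithm is forced whenever $L \cup R \neq \varnothing$; when $L = R = \varnothing$, the definitions of $L$ and $R$ force $A'$ to again be a common prefix, so the side-assignment flip corresponds to reversing $A'$. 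A leftover singleton at termination can be appended on either side, a reversal of a singleton common prefix. Since common prefixes of $P$ form a chain under inclusion, the sequence of unplaced sets visited at free-choice points is precisely $A_1 \subsetneq A_2 \subsetneq \cdots \subsetneq A_t$, so the axes producible by the algorithm coincide with those described in the proposition.

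The main obstacle lies in the backward direction, specifically in rigorously matching the free-choice points of \Cref{alg:sp} with the chain of common prefixes of $P$. I would need to establish both (i)~that at every free-choice point the unplaced set $A'$ is a common prefix of $P$---so the algorithm's freedom is captured by reversals of common prefixes---and (ii)~that every common prefix of $P$ appears as $A'$ at some free-choice point---so no valid axis is unreachable by the algorithm. Point~(i) follows from loop invariants; point~(ii) requires tracing how the algorithm peels alternatives off the top of voters' rankings, combined with the chain structure of common prefixes, to argue that each $A_j$ equals $A'$ at exactly the iteration just after the largest alternative outside $A_j$ (in the algorithm's peeling order) has been placed.
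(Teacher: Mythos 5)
Your proposal takes essentially the same route as the paper: the forward direction is the closure of the set of axes under reversing common prefixes (argued, as you do, by noting that a common prefix is an upper contour set and hence an interval by \Cref{prop:sp-equiv}~(4), and that upper contour sets remain intervals after the internal reversal), and the backward direction identifies the free-choice points of \Cref{alg:sp} with the common prefixes, exactly as in the paper's discussion preceding the proposition. The only stray remark is your claim that a leftover singleton at termination ``can be appended on either side''---at that point only one unfilled position remains, so there is no choice---but since reversing a singleton is a no-op this is immaterial.
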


\begin{wrapfigure}[7]{r}{0.49\linewidth}
	\scalebox{0.85}{
		\begin{tikzpicture}
		[qnode/.style={rectangle,fill=red!50,draw},
		node distance=0.8cm and 0.5cm]
		\node(A)[qnode,text width=8.5cm]{};
		\node(B)[qnode,below=0.5cm of A, text width=6.6cm]{};
		\node(C)[qnode,below right=0.5cm and -4.8cm of B, text width=3.5cm]{};
		\node(D)[qnode,below=0.5cm of C, text width=1.5cm]{};
		\node(8)[left=of B]{$h$};
		\node(9)[right= of B]{$i$};
		\node(6)[left=of C]{$f$};
		\node(7)[left=of 6]{$g$};
		\node(1)[right=of C]{$a$};
		\node(2)[left=of D]{$b$};
		\node(5)[right=of D]{$e$};
		\node(3)[below left=0.4cm and -0.6cm of D]{$c$};
		\node(4)[below right=0.32cm and -0.6cm of D]{$d$};
		\draw[->](A.south)-|(8.north);
		\draw[->](A.south)-|(B.north);
		\draw[->](A.south)-|(9.north);
		\draw[->](B.south)-|(7.north);
		\draw[->](B.south)-|(6.north);
		\draw[->](B.south)-|(C.north);
		\draw[->](B.south)-|(1.north);
		\draw[->](C.south)-|(2.north);
		\draw[->](C.south)--(D.north);
		\draw[->](C.south)-|(5.north);
		\draw[->](D.south)-|(3.north);
		\draw[->](D.south)-|(4.north);
		\end{tikzpicture}
	}
\end{wrapfigure}
Another way to obtain a concise representation
of the set of all axes a given profile is single-peaked on is by leveraging
the connection between single-peakedness and the consecutive ones property of the associated 0-1 matrix 
(see \Cref{ex:sp-to-c1p}). \citet{booth1976testing} show that the collection of column 
reorderings making a 0-1 matrix exhibit the consecutive ones property can be represented 
by a data structure known as a \emph{PQ-tree}. 
In this context, \Cref{prop:sp-all-axes} can be viewed as 
a statement about the structure of this PQ-tree: all its non-leaf nodes are Q-nodes, 
and each Q-node has at most one non-leaf child. The pictured PQ-tree captures the structure of 
\Cref{ex:sp-all-axes}.

\begin{examplebox}
	{All single-peaked axes of a profile \citep[Example~9]{doignon1994polynomial}}
	{sp-all-axes}
	\centering
	\begin{minipage}{0.15\linewidth}
		\begin{tabular}{ccc}
			\toprule
			$v_1$ & $v_2$ & $v_3$ \\
			\midrule
			$c$ & $d$ & $c$ \\
			$d$ & $c$ & $d$ \\
			\midrule
			$b$ & $b$ & $e$ \\
			$e$ & $e$ & $b$ \\
			\midrule
			$f$ & $a$ & $a$ \\
			$g$ & $f$ & $f$ \\
			$a$ & $g$ & $g$ \\
			\midrule
			$h$ & $i$ & $h$ \\
			$i$ & $h$ & $i$ \\
			\bottomrule
		\end{tabular}
	\end{minipage}
	\quad
	\begin{minipage}{0.8\linewidth}
		The profile on the left is single-peaked on 
		$h \lhd g \lhd f \lhd b \lhd c \lhd d \lhd e \lhd a \lhd i$. 
		This profile has four non-empty common prefixes, indicated using horizontal lines: 
		$\{c,d\}, \{b,c,d,e\}, \{a,b,c,d,e,f,g\}, \{a, b, c, d, e, f, g, h, i\}$. 
		One can arrange these in nested `boxes', 
		where the order in each box follows $\lhd$, as below:
		\begin{center}
		\fbox{$h$ \fbox{$g$\:$f$ \fbox{$b$ \fbox{$c$\:$d$} $e$}  $a$} $i$}
		\end{center}
		Then, each axis on which the profile is single-peaked can be obtained by deciding, 
		for each box, whether to reverse it. Thus, there are $2^4 = 16$ different axes.
	\end{minipage}
\end{examplebox}

As we have seen, the problem of recognizing single-peaked preferences is very well understood.
However, this does not extend to multidimensional single-peaked preferences (\Cref{def:multidim-sp}).
Indeed, no algorithms or complexity classifications for two or more dimensions are known.

\begin{open}\label{open:multi-dimsp}
What is the complexity of recognizing $d$-dimensional (hereditary) single-peaked preferences 
for $d\ge 2$?
\end{open}

\subsection{Algorithms for Single-Crossing Preferences}
\label{sec:recog:sc}

For single-crossing preferences, we can use a reduction to the consecutive ones problem
to obtain an $O(nm^2)$ algorithm; a more direct 
combinatorial approach leads to an $O(nm\log m)$ algorithm. 
However, in contrast to single-peaked preferences,
a linear-time recognition algorithm for this domain is not known.
 
\subsubsection*{Reduction to Consecutive Ones}

We have already seen that for single-peaked preferences the recognition problem 
can be reduced to the consecutive ones problem. This is also the case for single-crossing  
preferences \citep{bredereck2013characterization}.

Given a profile $P$ with $m$ alternatives and $n$ voters, 
we construct an $m^2 \times n$ matrix $A$ as follows: 
we introduce a column for each voter, and one row for each ordered pair $(a,b)\in A\times A$. 
We set
\[ A( (a,b), i) = 
\begin{cases}
1 & \text{if } a \succ_i b, \\
0 & \text{if } b \succ_i a.
\end{cases} \]
Then any ordering of the columns of $A$ witnessing the consecutive ones property corresponds to a 
permutation of the profile $P$ making it single-crossing in the given order 
(see \Cref{ex:sc-c1p}). 
Since we can check the consecutive ones property in linear time (\Cref{thm:consones}), 
this gives an $O(nm^2)$ 
time algorithm for recognizing single-crossing preferences. 

\begin{examplebox}
	{Reducing the recognition of single-crossing profiles to the consecutive ones problem.}
	{sc-c1p}
	\centering
	\begin{minipage}{0.23\textwidth}
		\begin{tikzpicture}
			\matrix (m) [matrix of nodes] {
				\toprule
				$v_1$ & $v_2$ & $v_3$ & $v_4$ & $v_5$ \\
				\midrule
				$a$ & $b$ & $b$ & $d$ & $d$ \\
				$b$ & $a$ & $d$ & $b$ & $c$ \\
				$c$ & $d$ & $a$ & $c$ & $b$ \\
				$d$ & $c$ & $c$ & $a$ & $a$ \\				
				\bottomrule \\
			};
			
			\begin{tikzbackground}
				\draw[line width=7pt, red!50, draw opacity=0.5, transform canvas={yshift=1mm}] 
				(m-2-1.north west) -- (m-3-2.center) -- (m-4-3.center) -- (m-5-4.center) -- (m-5-5.east);

				\draw[line width=7pt, green!30, draw opacity=0.5, transform canvas={yshift=1mm}] 
				(m-3-1.south west) -- (m-2-2.center) -- (m-2-3.center) -- (m-3-4.center) -- (m-4-5.south east);

				\draw[line width=7pt, blue!40, draw opacity=0.5, transform canvas={yshift=1mm}] 
				(m-4-1.north west) -- (m-5-2.center) -- (m-5-3.center) -- (m-4-4.center) -- (m-3-5.north east);

				\draw[line width=7pt, black!20, draw opacity=0.5, transform canvas={yshift=1mm}] 
				(m-5-1.south west) -- (m-4-2.center) -- (m-3-3.center) -- (m-2-4.center) -- (m-2-5.east);
			\end{tikzbackground}
		\end{tikzpicture}
	\end{minipage}
	\quad
	\raisebox{-10pt}{\scalebox{2.5}{$\mapsto$}}
	\quad
	\begin{minipage}{0.4\textwidth}
		\begin{tikzpicture}
			[decoration=brace,
			highlight/.style={line width=11pt, red!15},
			rowlabel/.style={anchor=base, text width=1.8cm}]
			\matrix (m) [matrix of math nodes,left delimiter={[},right delimiter={]}, column sep=0.5ex] {
				1 & 0 & 0 & 0 & 0 \\ %
				0 & 1 & 1 & 1 & 1 \\ %
				1 & 1 & 1 & 0 & 0 \\ %
				0 & 0 & 0 & 1 & 1 \\ %
				1 & 1 & 0 & 0 & 0 \\ %
				0 & 0 & 1 & 1 & 1 \\ %
				1 & 1 & 1 & 1 & 0 \\ %
				0 & 0 & 0 & 0 & 1 \\ %
				1 & 1 & 1 & 0 & 0 \\ %
				0 & 0 & 0 & 1 & 1 \\ %
				1 & 0 & 0 & 0 & 0 \\ %
				0 & 1 & 1 & 1 & 1 \\ %
			};
			\node [anchor=base] (a) at ($(m-1-1.north)+(0,5pt)$) {$v_1$};
			\node [anchor=base] (b) at ($(m-1-2.north)+(0,5pt)$) {$v_2$};
			\node [anchor=base] (c) at ($(m-1-3.north)+(0,5pt)$) {$v_3$};
			\node [anchor=base] (d) at ($(m-1-4.north)+(0,5pt)$) {$v_4$};
			\node [anchor=base] (e) at ($(m-1-5.north)+(0,5pt)$) {$v_5$};
			\node [rowlabel] (row1) at ($(m-1-5.east)+(1.5cm,-3pt)$)  {$(a,b)$};
			\node [rowlabel] (row2) at ($(m-2-5.east)+(1.5cm,-3pt)$)  {$(b,a)$};
			\node [rowlabel] (row3) at ($(m-3-5.east)+(1.5cm,-3pt)$)  {$(a,c)$};
			\node [rowlabel] (row4) at ($(m-4-5.east)+(1.5cm,-3pt)$)  {$(c,a)$};
			\node [rowlabel] (row5) at ($(m-5-5.east)+(1.5cm,-3pt)$)  {$(a,d)$};
			\node [rowlabel] (row6) at ($(m-6-5.east)+(1.5cm,-3pt)$)  {$(d,a)$};
			\node [rowlabel] (row7) at ($(m-7-5.east)+(1.5cm,-3pt)$)  {$(b,c)$};
			\node [rowlabel] (row8) at ($(m-8-5.east)+(1.5cm,-3pt)$)  {$(c,b)$};
			\node [rowlabel] (row9) at ($(m-9-5.east)+(1.5cm,-3pt)$)  {$(b,d)$};
			\node [rowlabel] (row10) at ($(m-10-5.east)+(1.5cm,-3pt)$)  {$(d,b)$};
			\node [rowlabel] (row11) at ($(m-11-5.east)+(1.5cm,-3pt)$)  {$(c,d)$};
			\node [rowlabel] (row12) at ($(m-12-5.east)+(1.5cm,-3pt)$)  {$(d,c)$};
			\begin{tikzbackground}
				\draw[highlight] (m-1-1.west) -- (m-1-1.east);
				\draw[highlight] (m-2-2.west) -- (m-2-5.east);
				\draw[highlight] (m-3-1.west) -- (m-3-3.east);
				\draw[highlight] (m-4-4.west) -- (m-4-5.east);
				\draw[highlight] (m-5-1.west) -- (m-5-2.east);
				\draw[highlight] (m-6-3.west) -- (m-6-5.east);
				\draw[highlight] (m-7-1.west) -- (m-7-4.east);
				\draw[highlight] (m-8-5.west) -- (m-8-5.east);
				\draw[highlight] (m-9-1.west) -- (m-9-3.east);
				\draw[highlight] (m-10-4.west) -- (m-10-5.east);
				\draw[highlight] (m-11-1.west) -- (m-11-1.east);
				\draw[highlight] (m-12-2.west) -- (m-12-5.east);
			\end{tikzbackground}		
		\end{tikzpicture}
	\end{minipage}
\end{examplebox}

\subsubsection*{Refining Partitions}
\citet{bredereck2013characterization} also give a direct recognition algorithm, which stores the 
set of voters in an ordered partition. Initially, all the voters are in a single partition cell. 
Then, the algorithm iterates through all pairs $(a,b)$ of alternatives, at each step separating the 
voters with $a \succ b$ from the voters with $b \succ a$. If the algorithm does not discover a 
contradiction, it then returns a linear ordering of the voters that is compatible with the ordered 
partition. This algorithm takes $O(m^2n)$ time and can be implemented in a certifying way: 
if the given profile is not single-crossing, the algorithm returns a small \emph{forbidden subprofile} 
to witness this (see \Cref{sec:subprofiles}). Another advantage of this approach
is that it extends naturally to preferences that are single-crossing on trees 
(see \citet{kung2015sorting} and \Cref{sec:recog:sct}).

\subsubsection*{Recognizing Profiles That Are Single-Crossing in the Given Order}

Before we go into describing faster and more involved recognition algorithms, it will be useful to 
carefully consider the problem of determining whether a profile $P = (v_1,\dots,v_n)$ is 
single-crossing \emph{in the given order}, that is, whether we are already done. This task 
can be accomplished by a straightforward $O(m^2n)$ time algorithm: 
we transform each vote into
an $m\times m$ pairwise comparison matrix in time $O(m^2)$ per vote (so that we can decide whether $a\succ_i b$ in constant time), 
and then, for each of the $m^2$ pairs $(a,b)\in A 
\times A$, go through the profile in the given order and check that there is at most one crossing. 
We will now show that, with a bit more sophistication, 
this check can actually be performed in 
$O(nm\log m)$ time.

\begin{definition}\label{def:kendall}
The \defemph{Kendall-tau distance} between two rankings $\succ_1$ and $\succ_2$ over $A$
is the number of pairwise comparisons that they disagree on: 
$K({\succ_1},{\succ_2}) = |\{ (a,b)\in A \times A : a\succ_1 b \text{ and } b \succ_2 a \}|$. 
Let us write $\Delta({\succ_1},{\succ_2}) = \{ (a,b)\in A \times A : a\succ_1 b \text{ and } 
                                                                     b \succ_2 a \}$, 
so that $K({\succ_1},{\succ_2}) = |\Delta({\succ_1},{\succ_2})|$.
\end{definition}

In what follows, when considering the Kendall-tau distance between two votes $v_i$ and $v_j$ 
in a profile~$P$, for brevity, we will write $K[i,j]$ instead of $K(v_i, v_j)$.

Our algorithms make use of the observation that
if $P$ is single-crossing in the given order, then, as we go from left to right, the 
Kendall-tau distance $K[1,i]$ to the leftmost voter increases. 
To see this, observe that if voter $i$ disagrees
with voter $1$ on some pair of alternatives $(a, b)$ then so does each voter $j$ with $j>i$, i.e., 
$\Delta[v_1, v_i]\subseteq \Delta[v_1, v_j]$ and hence $K[1, i]\le K[1, j]$;
further, $K[1, i]=K[1, j]$ is only possible if $v_i=v_j$.
We will also need the following two results:

\begin{proposition}
	The Kendall-tau distance between two linear orders $u$ and $v$ over $m$ alternatives 
	can be computed in $O(m\log m)$ time.
\end{proposition}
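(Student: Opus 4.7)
The plan is to reduce the computation of $K(u,v)$ to the classical problem of counting inversions in a permutation, and then invoke a standard $O(m\log m)$ divide-and-conquer algorithm.

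First, in $O(m)$ preprocessing I would build the position function $\text{pos}_u : A \to [m]$ assigning to each alternative its rank in $u$. Writing $(a_1,\dots,a_m)$ for the alternatives in the order they appear in $v$, I would then form the sequence $(x_1,\dots,x_m)$ defined by $x_k = \text{pos}_u(a_k)$. By construction this is a permutation of $[m]$, and a pair $\{a,b\}$ lies in $\Delta(u,v)$ (in the sense of \Cref{def:kendall}, counted once as an ordered pair with the direction from $u$) if and only if the indices $k < \ell$ with $a_k = a$ and $a_\ell = b$ satisfy $x_k > x_\ell$. In other words, $K(u,v)$ is exactly the number of inversions of the sequence $(x_1,\dots,x_m)$.

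The main step is then to count inversions in $(x_1,\dots,x_m)$ in $O(m\log m)$ time. This is done by a standard modification of merge sort: we recursively sort and count inversions in the two halves, and when merging the two sorted halves, every time we emit an element from the right half while $t$ elements remain in the left half, we have discovered exactly $t$ new inversions, which we add to a running total. Each of the $\Theta(\log m)$ recursion levels does $O(m)$ work, giving the desired $O(m\log m)$ bound. (Equivalently, one can process $x_1,\dots,x_m$ left to right, using a Fenwick tree over $[m]$ to query in $O(\log m)$ time how many previously-seen values exceed the current $x_k$.)

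There is no real obstacle: the only thing worth verifying carefully is the bijection between inversions of $(x_1,\dots,x_m)$ and the pairs counted by $K(u,v)$, which follows immediately from the fact that $\text{pos}_u$ is a bijection and $v$ orders the $a_k$ by index. The $O(m\log m)$ bound is then inherited directly from the correctness and running time of merge-sort-based inversion counting.
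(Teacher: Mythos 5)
Your proposal is correct and matches the paper's proof, which likewise treats $u$ as the reference ordering and counts inversions of $v$ relative to it via merge sort; you have simply spelled out the reduction and the inversion-counting step in more detail. No issues.
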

\begin{proof}
	Treat $u$ as the `correct' ordering of the alternatives and use merge sort to sort 
	the list $v$ into the ordering $u$. While doing this, we can keep track of the number 
	of swaps required to do so.
\end{proof}

The fastest known algorithm for computing the Kendall-tau distance takes 
	$O(m\sqrt{\log m})$ time \citep{chanP10}.

\begin{proposition}
	\label{prop:kendall-tau-triangle-ineq}
	The Kendall-tau distance satisfies the triangle inequality
	\[  K(u, w) \le K(u, v) + K(v, w), \]
	with equality if and only if $\Delta(u,w) \supseteq \Delta(u, v)$.
\end{proposition}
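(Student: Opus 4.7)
The plan is to reduce both the inequality and the equality condition to a pair-by-pair analysis. Writing $K$ as a sum of disagreement indicators over unordered pairs makes everything transparent, and the only case analysis needed concerns three rankings restricted to a single two-element set.

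First I would decompose the distance: for every unordered pair $\{a,b\} \subseteq A$, let $d_{\{a,b\}}(u,v) = 1$ if $u$ and $v$ disagree on the relative ordering of $a$ and $b$, and $0$ otherwise. By \Cref{def:kendall}, $K(u,v) = \sum_{\{a,b\}} d_{\{a,b\}}(u,v)$, and similarly for $K(u,w)$ and $K(v,w)$.

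Next I would prove the pointwise triangle inequality $d_{\{a,b\}}(u,w) \le d_{\{a,b\}}(u,v) + d_{\{a,b\}}(v,w)$ for each pair $\{a,b\}$. Since the restrictions of $u$, $v$, $w$ to $\{a,b\}$ are among only two possible orderings, pigeonhole forces at least two of the three restrictions to coincide. Enumerating the three cases (all three agree; $u$ and $w$ agree while $v$ is the odd one out; exactly one of $u$, $w$ agrees with $v$), the pointwise inequality holds in every case and is strict precisely in the second case. Summing over all pairs yields $K(u,w) \le K(u,v) + K(v,w)$, and global equality holds if and only if $v$ is never the unique odd one out.

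Finally, I would identify this condition with $\Delta(u,w) \supseteq \Delta(u,v)$. If $(a,b) \in \Delta(u,v) \setminus \Delta(u,w)$, then $a \succ_u b$, $b \succ_v a$, but $a \succ_w b$, which is exactly the scenario where $v$ is the odd one out on $\{a,b\}$. Conversely, if $v$ is the odd one out on some $\{a,b\}$, naming the alternatives so that $a \succ_u b$, one has $a \succ_w b$ and $b \succ_v a$, giving $(a,b) \in \Delta(u,v) \setminus \Delta(u,w)$. Combining this with the previous paragraph yields the claimed equivalence.

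The argument is essentially a bookkeeping exercise; the only mildly subtle point is to keep track of orientations when translating between the unordered notion of disagreement (used for the triangle inequality) and the ordered set $\Delta$ (used to state the equality condition). Once one records that $\Delta(u,v)$ lists the disagreeing pairs in $u$'s orientation, the equivalence in the last step is immediate.
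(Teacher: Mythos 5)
Your proof is correct and follows essentially the same route as the paper: both decompose $K$ into a sum of per-pair disagreement indicators, observe the pointwise triangle inequality, and characterize equality by identifying exactly which pairs make the inequality strict. The only cosmetic difference is that you work with unordered pairs and an ``odd one out'' case split, whereas the paper sums over ordered pairs and argues directly about membership in $\Delta$; the content is identical.
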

\begin{proof}
	The Kendall-tau distance can be rewritten as 
	\[K(u,w) = \sum_{(a,b)\in A \times A} 
	{\mathbf{1}}_{(a,b) \in \Delta(u,w)},\] 
	where ${\mathbf{1}}_\phi$ is the indicator of whether $\phi$ is true.

	To see the triangle inequality, note that whenever $u$ and $w$ disagree on a pair $(a,b)$, 
	this disagreement must be present either between $u$ and $v$, or between $v$ and $w$. 
	Thus, ${\mathbf{1}}_{(a,b) \in \Delta(u,w)} \le 
	       {\mathbf{1}}_{(a,b) \in \Delta(u,v)} + 
	       {\mathbf{1}}_{(a,b) \in \Delta(v,w)}$ for all $(a,b)\in A\times A$. 
	Summing over all pairs, we obtain the triangle inequality.

	Given the inequality shown above, equality occurs if and only if 
	${\mathbf 1}_{(a,b) \in \Delta(u,w)} = {\mathbf 1}_{(a,b) \in \Delta(u,v)} + 
	 {\mathbf 1}_{(a,b) \in \Delta(v,w)}$ for all $(a,b)\in A\times A$. 
	
	($\Rightarrow$): Assume these equalities hold, and let $(a,b) \in \Delta(u,v)$. 
	Then we must have ${\mathbf 1}_{(a,b) \in \Delta(u,w)} = 1$, 
	so that $(a,b) \in \Delta(u,w)$. Hence $\Delta(u,w) \supseteq \Delta(u, v)$. 
	
	($\Leftarrow$): Suppose $\Delta(u,w) \supseteq \Delta(u, v)$. Let 
	$(a,b)\in A\times A$ be a pair of alternatives. 
	We show that ${\mathbf 1}_{(a,b) \in \Delta(u,w)} = 
	{\mathbf 1}_{(a,b) \in \Delta(u,v)} + {\mathbf 1}_{(a,b) \in \Delta(v,w)}$. 
	Suppose first that $(a,b) \in \Delta(u, v)$. By assumption, $(a,b) \in \Delta(u, w)$. 
	Since both $v$ and $w$ disagree with $u$ on $(a, b)$, they must agree
	with each other, so $(a,b) \not\in \Delta(v, w)$. 
	Hence the equality holds. 
	Alternatively, suppose that $(a,b) \not\in \Delta(u, v)$, so that $u$ and $v$ 
	agree on $(a,b)$. Then $u$ and $w$ disagree on $(a,b)$ if and only if 
	$v$ and $w$ disagree, again confirming the equality.
\end{proof}

The statement about equality in \Cref{prop:kendall-tau-triangle-ineq} suggests a way of reasoning 
about the set of up to $O(m^2)$ disagreements without having to store a set of this size 
explicitly. This gives us \Cref{alg:sc-given-order}, which achieves the promised time bound: for each 
voter, it calculates two Kendall-tau distances. The correctness of this algorithm 
follows immediately from the following proposition.

\begin{proposition}
	Let $P = (v_1, \dots, v_n)$ be a profile. The following statements are equivalent:
	\begin{enumerate}
		\item $P$ is single-crossing in the given order.
		\item $\Delta[1, i] \subseteq \Delta[1, i+1]$ for all $1 \le i \le n-1$.
		\item $K[1, i+1] = K[1, i] + K[i, i+1]$ for all $1 \le i \le n-1$.
	\end{enumerate}
\end{proposition}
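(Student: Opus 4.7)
The plan is to prove the three-way equivalence by showing $(1) \Leftrightarrow (2)$ directly from the definition of single-crossingness, and $(2) \Leftrightarrow (3)$ as an immediate consequence of the equality case in \Cref{prop:kendall-tau-triangle-ineq}. This avoids a cyclic argument and keeps each implication short.

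For $(1) \Rightarrow (2)$, I would fix a pair $(a,b) \in \Delta[1,i]$, so $a \succ_1 b$ and $b \succ_i a$. Since $P$ is single-crossing, the set $\{j \in [n] : a \succ_j b\}$ is an interval of $[n]$ containing $1$ but not $i$, hence it has the form $\{1,\dots,k\}$ for some $k < i$. In particular $i+1 > k$, so $b \succ_{i+1} a$, giving $(a,b) \in \Delta[1,i+1]$.

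For $(2) \Rightarrow (1)$, I would argue by chaining the inclusions: for every pair $(a,b)$ with $a \succ_1 b$, let $k$ be the smallest index (if any) with $b \succ_k a$. The inclusions $\Delta[1,k] \subseteq \Delta[1,k+1] \subseteq \cdots \subseteq \Delta[1,n]$ imply $(a,b) \in \Delta[1,j]$ for every $j \ge k$, i.e., $b \succ_j a$ for all $j \ge k$. Hence $\{j : a \succ_j b\} = \{1,\dots,k-1\}$ and $\{j : b \succ_j a\} = \{k,\dots,n\}$, both intervals of $[n]$. The symmetric case $b \succ_1 a$ is analogous (or can be reduced to the previous by swapping $a$ and $b$).

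For $(2) \Leftrightarrow (3)$, apply \Cref{prop:kendall-tau-triangle-ineq} with $u = v_1$, $v = v_i$, $w = v_{i+1}$: the triangle inequality $K[1,i+1] \le K[1,i] + K[i,i+1]$ holds with equality if and only if $\Delta[1,i+1] \supseteq \Delta[1,i]$, which is exactly condition (2). No obstacle here — this is essentially a restatement. Overall, the only step requiring mild care is $(2) \Rightarrow (1)$, where one must be attentive to the case split on the direction of preference at voter $1$, but this is routine.
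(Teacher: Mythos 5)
Your proof is correct, and it follows exactly the route the paper intends: the paper states this proposition without an explicit proof, but the preceding discussion (the observation that $\Delta[1,i]\subseteq\Delta[1,j]$ for $i<j$ in a single-crossing order, plus the equality case of \Cref{prop:kendall-tau-triangle-ineq}) is precisely your argument. Your explicit write-up of the $(2)\Rightarrow(1)$ direction via chaining the inclusions fills in the one step the paper leaves implicit, and it is sound.
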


\begin{algorithm}[H]
	\DontPrintSemicolon
	\KwIn{A profile $P=(v_1,\dots,v_n)$ over $A$}
	\KwOut{Is $P$ single-crossing in the given order?}
	\For{each vote $v_i$ in $P$ with $i\ge 2$}
	{
		calculate the distances $K[1, i]$, $K[i, i+1]$, and $K[1, i+1]$\;
		\If{$K[1, i] + K[i, i+1] \neq K[1, i+1]$}
		{ 
			\Return{``$P$ is not single-crossing in the given order''}\;
		}
	}
	\Return{``$P$ is single-crossing in the given order''}\;
	\caption{Recognizing profiles single-crossing in the given order in $O(nm\log m)$ time}
	\label{alg:sc-given-order}
\end{algorithm}

\subsubsection*{Guessing the Leftmost Voter}

Based on the insights of the above procedure, we can give a very simple (but slow) 
recognition algorithm, similar to an algorithm proposed by \citet{elkind2012clone}. It proceeds by 
guessing which voter will appear in the leftmost position in the single-crossing order. 
For each of the $n$ possible guesses, we then sort the 
remaining voters in increasing order of their Kendall-tau distance to the leftmost voter, and 
check whether the resulting profile is single-crossing in the given order. 

In each iteration, we calculate Kendall-tau distances in 
$O(nm\log m)$ time, sort the voters in $O(n \log n)$ time, and spend $O(nm 
\log m)$ time invoking \Cref{alg:sc-given-order}. Thus, overall this process can be 
implemented in $O(n(n\log n +nm\log m))$ time. If 
we assume that all the votes in $P$ are distinct (and hence $n = O(m^2)$ by 
\Cref{prop:sc-binom}), then this time bound can be simplified
to $O(n^2m\log m)$; the same time bound can be achieved without 
assuming distinctness of the votes using the approach we describe 
in the context of the next algorithm.

\begin{algorithm}[H]
	\DontPrintSemicolon
	\KwIn{A profile $P$ over $A$, in which all votes are pairwise distinct}
	\KwOut{A single-crossing ordering of $P$, if one exists}
	\For{each voter $v_i$ in $P$}
	{
		sort the voters in $P$ according to increasing Kendall-tau distance from $v_i$\;
		\If{this ordering is single-crossing}
		{ 
			\Return{this ordering of $P$}\;
		}
	}
	\Return{``$P$ is not single-crossing''}\;
\caption{Recognizing single-crossing profiles in $O(n^2m\log m)$ time}
\label{alg:sc-naive-sorting}
\end{algorithm}

\subsubsection*{Fast Recognition by Sorting}

The above na\"ive algorithm can be modified in a way that does not require us to guess the 
leftmost voter. The resulting algorithm runs in time $O(nm\log m)$, and is essentially a faster 
implementation of an algorithm due to \citet{doignon1994polynomial}
That algorithm
repositions voters at each iteration, leading to a 
worse time bound of $O(n^2 + nm\log m)$. 
Our algorithm calculates a value $\textit{score}[i]$ for each voter $i$ that is based on Kendall-tau distances, and then reorders the input profile $P$ in increasing order of the score. If $P$ is single-crossing then the resulting ordering will be single-crossing.

\begin{algorithm}[H]
	\DontPrintSemicolon
	\KwIn{A profile $P$ over $A$} %
	\KwOut{A single-crossing ordering of $P$, if one exists}
	\lnl{alg:sc-fastest:1-2-distinct}
	Ensure that voters 1 and 2 have different preference orders (otherwise relabel)\;
	Calculate $K[1, 2]$\;
	\textit{score} $\gets$ empty array indexed by the voters 
	\tcp*{will have $\textit{score}[i] = \pm K[1,i]$}
	\textit{score}$[1] \gets 0$; \textit{score}$[2] \gets +K[1, 2]$\;
	\For{each voter $i\in N \setminus \{1,2\}$}
	{
		Calculate the distances $K[1,i]$ and $K[2,i]$\;
		\lnl{alg:sc-fastest:if-between} \If{$K[1,2] = K[1,i] + K[i,2]$}
		{ 
			\textit{score}$[i] \gets +K[1,i]$ \tcp*{$i$ goes between $1$ and $2$}
		} 
		\lnl{alg:sc-fastest:if-right} \ElseIf{$K[1,i] = K[1,2] + K[2,i]$}
		{ 
			\textit{score}$[i] \gets +K[1,i]$ \tcp*{$i$ goes to the right of $2$} 
		}
		\lnl{alg:sc-fastest:if-left} \ElseIf{$K[2,i] = K[1,i] + K[1,2]$}
		{ 
			\textit{score}$[i] \gets -K[1,i]$ \tcp*{$i$ goes to the left of $1$} 
		} 
		\Else
		{
			\Return{``$P$ is not single-crossing''}\;}
	}
	// Order the voters by their score. Fastest way depends on whether $n < m$ or not. \;
	\If{$n < m$} {
		\lnl{alg:sc-fastest:reorder-1}
		$P' \gets$ the list $P$ sorted in order of $\textit{score}[i]$ using $O(n \log n)$ time\;}
	\ElseIf{$n\ge m$} {
	$B\gets$ an array indexed by $-m^2, \dots, 0, \dots, m^2$, each entry containing an empty list\;
	\For{each voter $i \in N$} {
		add $v_i$ to the end of the list $B[\textit{score}[i]]$\;}
	\lnl{alg:sc-fastest:reorder-2}
	$P'\gets B[-m^2] \oplus \cdots \oplus B[m^2]$\;
	}
	\lnl{alg:sc-fastest:check-scgo} \If{$P'$ is single-crossing in the given order (use \Cref{alg:sc-given-order})}
	{
		\lnl{alg:sc-fastest:yes}\Return{$P'$}\;}
	\Else{
		\lnl{alg:sc-fastest:check-sc}\Return{``$P$ is not single-crossing''}\;}
	\caption{Recognizing single-crossing profiles in $O(nm\log m)$ time}
	\label{alg:sc-fastest}
\end{algorithm}

\begin{theorem}\label{thm:recognizing-sc}
	\Cref{alg:sc-fastest} recognizes single-crossing profiles 
	in $O(nm\log m)$ time.
\end{theorem}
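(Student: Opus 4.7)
The plan is to prove correctness and running time separately. For correctness, the key observation is that the classification in lines~\ref{alg:sc-fastest:if-between}--\ref{alg:sc-fastest:if-left} is exhaustive whenever $P$ is single-crossing. Suppose $P$ is single-crossing with respect to some ordering $\sigma$, and (by reversing $\sigma$ if necessary) assume that voter~$1$ precedes voter~$2$ in $\sigma$. For any three voters $a, b, c$ appearing in this order along $\sigma$, one has $\Delta(v_a, v_b) \subseteq \Delta(v_a, v_c)$ by the single-crossing property, so \Cref{prop:kendall-tau-triangle-ineq} yields the tight triangle identity $K(v_a, v_c) = K(v_a, v_b) + K(v_b, v_c)$. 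Applying this to the triples $(1, i, 2)$, $(1, 2, i)$, and $(i, 1, 2)$ shows that every voter $i \neq 1, 2$ satisfies at least one of the three conditions, and which condition matches records on which side of voters~$1$ and $2$ the voter $i$ lies in $\sigma$. Thus the first failure branch is justified: if no equation is tight for some voter, $P$ cannot be single-crossing.

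Next I need to show that, when $P$ is single-crossing, sorting voters by the resulting \textit{score} yields a single-crossing ordering. Along $\sigma$, the quantity $K[1, \cdot]$ strictly increases as one moves away from voter~$1$ to either side (again by monotonicity of the $\Delta$ sets). Since the scores are $-K[1, i]$ on the left of voter~$1$ and $+K[1, i]$ on or to the right of voter~$1$, the score is monotonically non-decreasing along $\sigma$, and sorting voters in order of increasing score reproduces $\sigma$ up to the placement of voters with identical preferences. Ties in the score correspond to such duplicate votes (two voters on the same side of voter~$1$ with equal $K[1, \cdot]$ must have nested disagreement sets of the same size, hence equal), and since duplicate votes can be permuted freely without breaking single-crossingness, the sorted profile $P'$ passes the check in line~\ref{alg:sc-fastest:check-scgo}. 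Contrapositively, if line~\ref{alg:sc-fastest:check-sc} is reached then $P$ is not single-crossing. The preprocessing in line~\ref{alg:sc-fastest:1-2-distinct} is harmless: if no two voters have distinct preferences then $P$ is trivially single-crossing, and otherwise such a pair is located in $O(nm)$ time.

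For the running time, each Kendall-tau distance is computable in $O(m \log m)$ time by counting inversions via merge sort, the main loop performs $O(n)$ such computations, and the final verification using \Cref{alg:sc-given-order} takes $O(nm \log m)$. The sorting step is split by a case distinction: comparison sort uses $O(n \log n) \le O(nm)$ time when $n < m$, and bucket sort over the range $[-m^2, m^2]$ (a valid range since $|K[1, i]| \le \binom{m}{2} < m^2$) uses $O(n + m^2) \le O(nm)$ time when $n \ge m$. Summing these contributions yields the claimed $O(nm \log m)$ bound.

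The main obstacle I foresee is the correctness of the sorting step: one must verify that the signed score is monotone along the (unknown) single-crossing ordering $\sigma$ and that no spurious collisions occur across the three score regions (left of voter~$1$, between voters~$1$ and $2$, right of voter~$2$). I expect this to follow cleanly from the monotonicity of $K[1, \cdot]$ on each side of voter~$1$ together with the bound $K[1, i] < K[1, 2]$ for between-voters, provided the orientation of $\sigma$ is chosen so that voter~$1$ precedes voter~$2$ and provided duplicate votes are handled consistently.
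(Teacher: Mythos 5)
Your proposal is correct and follows essentially the same route as the paper's own proof: fix a single-crossing ordering with voter~$1$ preceding voter~$2$, use the tight triangle identity of \Cref{prop:kendall-tau-triangle-ineq} to show each voter matches exactly the branch corresponding to its position, observe that $K[1,\cdot]$ is strictly monotone on each side of voter~$1$ so the signed score reproduces the ordering up to permuting identical votes, and let the final check in line~\ref{alg:sc-fastest:check-scgo} certify the output. The running-time accounting (including the $n<m$ versus $n\ge m$ case split for the sort) also matches the paper.
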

\begin{proof}
	First we establish the time bound.
	A voter whose vote is distinct from $v_1$
	(line~\ref{alg:sc-fastest:1-2-distinct}) can be found in $O(nm)$ time 
	by starting at voter~1 and then scanning the profile until we find a $v_i$
	such that $v_i\neq v_1$ (if no such voter can be found, the profile is trivially
	single-crossing).
	Now, for each voter in $P$ we calculate the Kendall-tau distances to $v_1$ and $v_2$, 
	taking $O(nm\log m)$ time.
	If $n < m$, creating the profile $P'$ takes $O(n \log n)$ time, which is in $O(m \log m)$ by the assumption that $n < m$.
	If $n \ge m$, creating the profile $P'$ takes $O(m^2)$ time which is in $O(nm)$ by the assumption that $n \ge m$. In either case this step was performed in time $O(nm\log m)$.
	Finally, we check if the profile is single-crossing in the given order
	in $O(nm\log m)$ time using \Cref{alg:sc-given-order}.
	
	For correctness, we show that the algorithm reports that $P$ is single-crossing if and only if 
	$P$ is indeed single-crossing. One direction is easy: if the algorithm returns `yes' 
	in line~\ref{alg:sc-fastest:yes}, then $P'$ is single-crossing in the returned order 
	by the check in line~\ref{alg:sc-fastest:check-sc} and the correctness of 
	\Cref{alg:sc-given-order}, and $P'$ is a single-crossing reordering of the votes in $P$.
	
	Conversely, assume that the input profile $P$ is single-crossing. Fix some reordering 
	$\widehat{P}$ of $P$ that is single-crossing in the given order and such that voter $2$ 
	is placed to the right of voter $1$. As in the algorithm, write $K[i,j]$ for the Kendall-tau distance between $v_i$ and $v_j$.
	We claim that, after running the for-loop, it holds that 
	for all $i\neq 1$ that are placed to the right of $1$ in $\widehat{P}$, we have $\textit{score}[i] = K[1,i]$, and for all $i \neq 1$ placed to the left of $1$ in $\widehat{P}$, we have $\textit{score}[i] = -K[1,i]$.
	
	Clearly this holds for voter 2.
	For a voter $i$ with the same vote as voter 1, the conditions of all three if-conditions are satisfied, but in each case the algorithm sets $\textit{score}[i] = 0$, which is compatible with our claim no matter if $i$ is placed to the left or to the right of voter 1. Next, for a voter $i$ with the same vote as voter $2$, the first two if-conditions are satisfied (but not the third since $K[1,2] > 0$) and in either case the algorithm sets $\textit{score}[i] = K[1,2]$, which agrees with our claim since $i$ must be placed to the right of $1$ in $\widehat{P}$.
	For all other voters $i$, at most one of the conditions in 
	lines~\ref{alg:sc-fastest:if-between}, \ref{alg:sc-fastest:if-right}, 
	and~\ref{alg:sc-fastest:if-left} can be true, because in each case the value 
	on the left-hand side needs to be the uniquely largest of the three.
	We will now do a case analysis on the position of voter $i$ in $\widehat{P}$.
	
	\begin{itemize}
		\item If $i$ is between $1$ and $2$ in $\widehat{P}$, then we have 
		      $\Delta(v_1, v_i) \subseteq \Delta(v_1,v_2)$ by the single-crossing property, 
		      and hence $K[1,2] = K[1,i] + K[i,2]$, which is the condition of 
		      line~\ref{alg:sc-fastest:if-between}. Thus the algorithm sets $\textit{score}[i] = K[1,2]$ as required by the claim.
		\item If $i$ is to the right of $2$ in $\widehat{P}$, then we have 
		      $\Delta(v_1,v_2) \subseteq \Delta(v_1,v_i)$ by the single-crossing property, 
		      and hence $K[1,i] = K[1,2] + K[2,i]$, which is the condition of 
		      line~\ref{alg:sc-fastest:if-right}. Thus the algorithm sets $\textit{score}[i] = K[1,2]$ as required by the claim.
		\item If $i$ is to the left of $1$ in $\widehat{P}$, then we have 
		      $\Delta(v_i, v_1) \subseteq \Delta(v_i,v_2)$ by the single-crossing property, 
		      and hence $K[i,2] = K[i,1] + K[1,2]$, which is the condition of 
		      line~\ref{alg:sc-fastest:if-left}. Thus the algorithm sets $\textit{score}[i] = -K[1,2]$ as required by the claim.
	\end{itemize}

	In each case, by choosing the sign of $\textit{score}[i]$, the algorithm correctly decides 
	whether to place $i$ to the right of voter $1$ (setting $\textit{score}[i] > 0$) or to the left  
	(setting $\textit{score}[i] < 0$). Now, because $\widehat{P}$ is in single-crossing order, we have that the Kendall-tau distance to voter $1$ increases as we scan from $1$ to the right or as we scan from $1$ to the left. It follows that we have $\textit{score}[i] = \textit{score}[j]$ if and only if $v_i = v_j$ (because by single-crossingness two voters on the same side of voter 1 with the same Kendall-tau distance to voter 1 must be identical) and that we have 
	$\textit{score}[i] < \textit{score}[j]$ if and only if $v_i \neq v_j$ and $i$ is positioned to the left of $j$	in $\widehat{P}$. Thus, any list $P'$ obtained by ordering votes in order of increasing $\textit{score}[i]$ coincides with $\widehat{P}$ (up to reordering voters with the same vote). Hence, the profile $P'$ constructed by the algorithm in line \ref{alg:sc-fastest:reorder-1} or line \ref{alg:sc-fastest:reorder-2} is single-crossing in the given order, and hence $P'$ passes the check in line~\ref{alg:sc-fastest:check-scgo} and the algorithm correctly returns $P'$ in line \ref{alg:sc-fastest:yes}.
 \end{proof}

\begin{open}
	Does there exist an $O(mn)$ time algorithm for recognizing single-crossing profiles?
\end{open}

\subsection{Algorithms and Complexity Results for Euclidean Preferences}
\label{sec:recog:euclid}
We start by sketching an algorithm for recognizing 1-Euclidean profiles; 
subsequently, we will discuss complexity results for $d$-Euclidean preferences, $d>1$.

\subsubsection*{1-Euclidean}

For $d=1$, we can capture the recognition problem by $n\cdot {\binom{m}{2}}$ 
constraints over $n+m$ variables: the embedding $x:N\cup A\to{\mathbb R}$ witnesses that the input profile
is $1$-Euclidean if for each voter $i$ and every pair of alternatives $(a, b)$ with $a\succ_i b$
we have 
$|x(i)-x(a)| < |x(i)-x(b)|$. 
While these constraints are not linear, they 
can be transformed into linear constraints if the order of the points $\{x(a): a\in A\}$
on the real line is known: the constraint $|x(i)-x(a)| < |x(i)-x(b)|$
can be replaced with $x(i)<\frac12(x(a)+x(b))$ if $x(a)<x(b)$
and $x(i)>\frac12(x(a)+x(b))$ if $x(a)>x(b)$. The problem of deciding if the input
profile is $1$-Euclidean then boils down to finding a feasible solution of the resulting
linear program (the reader may worry that our constraints involve strict inequalities, 
but this can be handled by replacing each constraint of the form $z>z'$ with $z\ge z'+1$,
as discussed, e.g., by \citet{elkind2014recognizing}). 

It remains to figure out how to order the alternatives on the line. 
One may be tempted to use an algorithm 
from \Cref{sec:recog:sp} for this purpose: indeed, a profile is $1$-Euclidean
only if it is single-peaked, so we can reject the input if it is not single-peaked, and otherwise
use the ordering provided by a single-peaked axis. The problem with this approach is 
that a $1$-Euclidean profile may be single-peaked with respect to several different
axes (indeed, potentially exponentially many of them), and some of these axes may be
unsuitable for our purposes. This issue is illustrated by the following example. 

\begin{examplebox}
	{Recognizing $1$-Euclidean preferences: choosing an axis}
	{1D-badaxis}
	\begin{wrapfigure}[7]{l}{0.1\linewidth}
	\begin{tabular}{cc}
		\toprule
		$v_1$ & $v_2$ \\
		\midrule
		$b$   & $c$   \\
		$c$   & $b$   \\
		$a$   & $d$   \\
		$d$   & $a$   \\
		\bottomrule
	\end{tabular}
	\end{wrapfigure}
	The profile $P$ on the left
	is single-peaked with respect to the axis $\lhd_1$
	given by $a\lhd_1 b\lhd_1 c\lhd_1 d$ as well as the axis $\lhd_2$
	given by $d\lhd_2 b\lhd_2 c\lhd_2 a$.
	There is a mapping $x: N\cup A\to{\mathbb R}$ that is consistent with $\lhd_1$
	and witnesses that $P$ is $1$-Euclidean: e.g., we can take 
	$x(a)=-4$, $x(b)=-1$, $x(c)=1$, $x(d)=4$ and co-locate each voter
	with its top alternative (i.e., $x(v_1)=-1$, $x(v_2)=1$).
	However, there is no mapping $x': N\cup A\to{\mathbb R}$ that is consistent
	with $\lhd_2$ and witnesses that $P$ is $1$-Euclidean. 

	To see this, consider a mapping $x': N\cup A\to {\mathbb R}$ 
	that is consistent with $\lhd_2$, i.e., satisfies $x'(d)<x'(b)<x'(c)<x'(a)$.
	Let $x_{bc}$ be the midpoint of the segment $[x'(b), x'(c)]$, and observe that 
	$b\succ_1 c$, $c\succ_2 b$  
	implies that $v_1$ needs to be placed to the left of $x_{bc}$, while
	             $v_2$ needs to be placed to the right of $x_{bc}$, so
			$x'(v_1) < x'(v_2)$.
	On the other hand, $a\succ_1 d$, $d\succ_2 a$,
	so, by considering the positions of $v_1$ and $v_2$ 
	with respect to the midpoint of the segment $[x'(d), x'(a)]$,
	we obtain $x'(v_1) > x'(v_2)$,
	a contradiction.
\end{examplebox}

The history of recognition algorithms for 1-Euclidean preferences is colorful. 
The first polynomial-time 
algorithm appears in the paper by 
\citet{doignon1994polynomial}, who start by developing an 
elegant theory of single-peaked and single-crossing preferences, and then use it to reason about 
1-Euclidean preferences. However, their work used the terminology of the field of psychometrics, 
such as `unidimensional unfolding' \citep{coombs1950psychological}. Perhaps as a consequence 
of that, until recently, economists and computational social choice theorists were not aware of this work. 
\citet{knoblauch2010recognizing} proposed a 
recognition algorithm that reasoned about single-peaked axes before turning to a linear program; 
in \citeyear{elkind2014recognizing}, \citeauthor{elkind2014recognizing} 
proposed a recognition algorithm that used a single-crossing order of the voters
before, again, turning to a linear program. Neither of these two papers cites
\citet{doignon1994polynomial}.

Out of all these algorithms, \citeauthor{doignon1994polynomial}'s way of \emph{combining} 
information derived from a single-peaked order of alternatives and 
a single-crossing order of voters is particularly elegant; below,  
we provide an outline of their proof.

\begin{proposition}[\citealp{doignon1994polynomial}]
	\label{prop:comp-axis-exists}
	Suppose $P = (v_1,\dots,v_n)$ is a profile that is single-crossing in the given order, 
	as well as single-peaked. Then there exists an axis $\lhd$ on which $P$ is single-peaked 
	such that the profile $P' = (\lhd, v_1,\dots, v_n, \textit{reverse}(\lhd))$ is still single-crossing. 
	Moreover, such an axis can be found in polynomial time.
\end{proposition}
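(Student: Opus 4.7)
The plan is to reduce to the case of narcissistic profiles, so that \Cref{prop:nsc-to-sp} can be applied to provide the axis directly. Since $P$ is both single-peaked and single-crossing, by \citet[Theorem~10]{elkind2020spsc} we can extend $P$ in polynomial time to a narcissistic profile $\hat P = (u_1, \dots, u_N) \supseteq P$ on the same alternative set $A$ that is single-crossing in the given order (and contains the votes of $P$ in their original relative order). I would then take $\lhd$ to be the axis given by the preference order of the first voter, i.e., ${\lhd} := {\succ_{u_1}}$. By \Cref{prop:nsc-to-sp}, the narcissistic single-crossing profile $\hat P$ is single-peaked on $\lhd$, and hence so is its subprofile $P$ (since single-peakedness is closed under voter deletion).

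The remaining step is to verify that prepending $\lhd$ to $\hat P$ and appending $\textit{reverse}(\lhd)$ yields a profile that is single-crossing in the given order. I would argue pair-by-pair. Fix alternatives $a, b \in A$, and suppose without loss of generality that $u_1$ prefers $a$ to $b$; then $a \lhd b$, so $\lhd$ also prefers $a$ to $b$ while $\textit{reverse}(\lhd)$ prefers $b$ to $a$. Since $\hat P$ is single-crossing in the given order and the first voter $u_1$ prefers $a$ to $b$, the set of voters of $\hat P$ preferring $a$ to $b$ is a prefix $\{u_1, \dots, u_k\}$ for some $0 \le k \le N$, and the remaining voters prefer $b$ to $a$. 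The combined sequence $(\lhd, u_1, \dots, u_N, \textit{reverse}(\lhd))$ therefore exhibits exactly one flip on the pair $\{a,b\}$: either between $u_k$ and $u_{k+1}$ if $k < N$, or between $u_N$ and $\textit{reverse}(\lhd)$ if $k = N$. Hence $(\lhd, \hat P, \textit{reverse}(\lhd))$ is single-crossing in the given order, and since the single-crossing domain is closed under voter deletion, its subprofile $(\lhd, v_1,\dots, v_n, \textit{reverse}(\lhd))$ is also single-crossing.

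For the polynomial running time, the narcissistic completion is computed in $O(m^2 n)$ time by \citet[Theorem~10]{elkind2020spsc}, and extracting $\lhd$ from the first voter is immediate. The main work of the proof is thus in invoking the narcissistic completion result; once that is available, the verification of single-crossingness of the extended profile is a straightforward case analysis, and no serious obstacle remains.
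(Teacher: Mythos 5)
Your proof is correct, but it takes a genuinely different route from the one in the survey. The text follows \citet{doignon1994polynomial}: existence of a compatible axis is established by a separate inductive argument, and the axis is then found by running \Cref{alg:sp} while resolving the arbitrary choices in line~\ref{sp-alg:arbitrary} so as to keep $(\lhd, v_1,\dots,v_n,\textit{reverse}(\lhd))$ single-crossing. You instead reduce to the narcissistic case: extend $P$ to a narcissistic single-crossing profile $\hat P$ via \citet[Theorem~10]{elkind2020spsc}, take $\lhd$ to be the first vote of $\hat P$ (so that, by \Cref{prop:nsc-to-sp}, $\hat P$ and hence $P$ are single-peaked on $\lhd$), and observe that since $\lhd$ coincides with $u_1$ as a vote and $\textit{reverse}(\lhd)$ disagrees with $u_1$ on every pair, padding $\hat P$ with these two votes preserves single-crossingness pair by pair. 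This yields existence and the algorithm in one stroke and is arguably cleaner, at the cost of importing the SPSC extension theorem as a black box (whereas the survey's route is self-contained given \Cref{alg:sp}). One small point deserves attention: \citet[Theorem~10]{elkind2020spsc} as quoted only promises some narcissistic single-crossing superprofile $\hat P \supseteq P$, not that the votes of $P$ occur in $\hat P$ in their given relative order. You should justify this via \Cref{prop:sc-order-unique}: the single-crossing order of $\hat P$ restricted to the votes of $P$ must agree with the given order of $P$ up to reversal and permutation of identical votes (which form contiguous blocks), and in the reversed case you replace $\hat P$ by its reversal, which swaps $\lhd$ for $\textit{reverse}(\lhd)$ and is equally acceptable. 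With that remark added, the argument is complete.
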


We will say that an axis whose existence is established in \Cref{prop:comp-axis-exists}
is \emph{compatible} with $P$.
To prove existence, \citeauthor{doignon1994polynomial} use an inductive argument. Given the knowledge that a compatible axis exists, finding one is straightforward. First ensure that the input profile $P$ is ordered to be single-crossing, using an algorithm from \Cref{sec:recog:sc}. Then use \Cref{alg:sp} to obtain a single-peaked axis for $P$, but whenever we can make an arbitrary decision (in line \ref{sp-alg:arbitrary} of the algorithm), we make it in a way that allows $(\lhd, v_1, v_n, \textit{reverse}(\lhd))$ to be single-crossing.

To see the relevance of \Cref{prop:comp-axis-exists}, suppose we are considering a profile $P = (v_1,\dots,v_n)$ that is 1-Euclidean with embedding $x : N\cup A \to \mathbb R$. As we saw in the proof of \Cref{prop:euclid-implies-spsc}, 
the left-to-right ordering of the alternatives induced by $x$ is an axis $\lhd$ that $P$ 
is single-peaked on. Moreover, $\lhd$ is in fact compatible with $P$: just place a new voter $\ell$ 
to the left of the leftmost alternative in $x$, 
and place another voter $r$ to the right of the rightmost alternative. 
The profile described by this new embedding is exactly $(\lhd, v_1,\dots, v_n, \textit{reverse}(\lhd))$.

Thus, any ordering of the alternatives that can be induced by a 1-Euclidean embedding 
is a compatible axis. Very conveniently, the converse is also true.
\begin{theorem}[\citealp{doignon1994polynomial}]\label{thm:compatible}
	Suppose $P$ is a 1-Euclidean profile (and hence it is single-peaked and single-crossing). 
	Then for \emph{any} axis $\lhd$ that is compatible with $P$, there exists a 1-Euclidean embedding 
	$x: N\cup A \to \mathbb R$ of $P$  that satisfies
	\[ x(a) < x(b) \iff a \lhd b \qquad \forall a,b\in A. \]
\end{theorem}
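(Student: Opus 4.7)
The plan is to formulate the existence of a compatible embedding as a linear program (LP) and then prove its feasibility by induction, using the structure provided by compatibility of $\lhd$. I would take variables $x(c)$ for each $c\in A$ and $x(i)$ for each $i\in N$, and impose two families of constraints: (i) $x(a)+1\le x(b)$ for each pair $a\lhd b$ of consecutive alternatives on $\lhd$, enforcing the axis ordering with a unit-gap surrogate for strict inequality; and (ii) for each voter $i$ and each pair $a,b\in A$ with $a\succ_i b$, a linear inequality placing $x(i)$ strictly on the same side of the midpoint $(x(a)+x(b))/2$ as $x(a)$. Any feasible solution of this LP yields, after a harmless rescaling, a $1$-Euclidean embedding of $P$ whose induced alternative ordering is exactly $\lhd$.

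For feasibility, fix any $1$-Euclidean embedding $x_0$ of $P$ and let $\lhd_0$ be the axis it induces. Then $\lhd_0$ is also compatible with $P$ by the discussion preceding the theorem, so by \Cref{prop:sp-all-axes} the axis $\lhd$ can be obtained from $\lhd_0$ by a sequence of reversals of common prefixes $A_1\subsetneq A_2\subsetneq\dots\subsetneq A_t=A$ of $P$. I would proceed by induction on the length $t$ of this sequence. The base case is $x_0$ itself, and the inductive step reduces to the following claim: given a $1$-Euclidean realization $x_k$ of $P$ whose induced axis has a common prefix $A'$ of $P$ as a contiguous endpoint interval, one can construct a $1$-Euclidean realization of $P$ whose induced axis is obtained from that of $x_k$ by reversing $A'$.

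To prove this claim I would exploit the fact that, since $A'$ is a common prefix of $P$, every voter ranks every element of $A'$ above every element of $A\setminus A'$, so in $x_k$ every voter lies strictly closer to the range of $A'$ than to $A\setminus A'$. The new embedding is produced by composing two transformations. First, rescale the cluster formed by $A'$ together with all voters (about a point $p_0$ interior to the range of $A'$) by a small factor $c>0$, while leaving the positions of $A\setminus A'$ fixed. A short computation shows that this transformation preserves all preferences: distances within the cluster scale uniformly, $A\setminus A'$ is untouched, and cross-preferences still have the correct sign because the rescaled voter positions remain below every cross midpoint. Second, reflect the rescaled cluster about $p_0$. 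As a simultaneous isometry on $A'\cup N$ this preserves within-cluster distances and hence preferences on $A'$; the positions of $A\setminus A'$ are again untouched; and, provided $c$ was chosen small enough, the reflected voter positions remain strictly below every cross midpoint, so cross-preferences are also preserved. The net effect is precisely to reverse the order of $A'$ on the axis, completing the inductive step.

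The main obstacle is the quantitative choice of $c$: it must be small enough that after the reflection no voter is pushed to a position that flips either a cross-preference or a preference between two alternatives of $A\setminus A'$. The common-prefix property guarantees a strict positive gap between the cluster of $A'$ and $A\setminus A'$ in $x_k$, and rescaling can bring the voter positions arbitrarily close to $p_0$, so such a $c$ exists. Once this quantitative bookkeeping is verified, the induction produces a $1$-Euclidean embedding whose induced axis is $\lhd$, which is a feasible solution of the LP, and hence establishes the theorem.
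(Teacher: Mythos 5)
The reduction to a linear program is fine (it is exactly the LP the paper sets up afterwards), but the feasibility argument has a fatal gap: the inductive step is false. Your claim is that if $P$ has a $1$-Euclidean realization inducing some axis and $A'$ is a common prefix of $P$, then $P$ also has a realization inducing the axis with $A'$ reversed. Since \Cref{prop:sp-all-axes} generates \emph{all} single-peaked axes by such reversals, and you never use the compatibility of $\lhd$ anywhere else, your argument would prove that every axis on which a $1$-Euclidean profile is single-peaked is realizable. This is false, and \Cref{ex:1D-badaxis} is a direct counterexample: that profile is realizable on $a\lhd b\lhd c\lhd d$, the set $\{b,c\}$ is a common prefix, yet the axis $a\lhd c\lhd b\lhd d$ obtained by reversing it is not realizable (it is the reverse of the axis $\lhd_2$ shown in the example to admit no embedding). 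The hypothesis that $\lhd$ is \emph{compatible}---i.e., that $(\lhd, v_1,\dots,v_n,\textit{reverse}(\lhd))$ is single-crossing---is strictly stronger than $\lhd$ merely being a single-peaked axis, and any correct proof must actually use it.

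The concrete failure point in your construction is the preservation of preferences between pairs $e,f\in A\setminus A'$ that lie on opposite sides of the interval occupied by $A'$. Their midpoint $(x_k(e)+x_k(f))/2$ can lie inside the range of $A'$, at or arbitrarily close to your reflection centre $p_0$; a voter must sit on a prescribed side of this midpoint, and shrinking followed by reflection about $p_0$ can move her to the wrong side. Choosing the factor $c$ smaller does not help, because $e$ and $f$ are held fixed, so the midpoint does not move. In \Cref{ex:1D-badaxis}, with $x(a)=-4$, $x(b)=x(v_1)=-1$, $x(c)=x(v_2)=1$, $x(d)=4$ and $p_0=0$, the midpoint of $a$ and $d$ is exactly $0$: after shrink-and-reflect, $v_1$ moves from $-c$ to $+c$ and her induced ranking becomes $b\succ c\succ d\succ a$ instead of $b\succ c\succ a\succ d$ (and choosing $p_0\neq 0$ only shifts the problem to the other voter). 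A secondary mismatch: you state the inductive step for a common prefix occupying an \emph{endpoint} interval of the axis, but the nested common prefixes $A_1\subset\dots\subset A_t$ are intervals around the voters' peaks and are generally interior. Note that the survey does not prove this theorem itself but cites \citet{doignon1994polynomial}, so there is no in-paper proof to compare against; as it stands, however, your route cannot be repaired without bringing the single-crossing half of compatibility into the argument.
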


\Cref{thm:compatible} tells us that, when ordering the alternatives
on a line for the purpose of constructing a linear program, 
we can take \emph{any} compatible axis $\lhd$: If our 
profile does not admit a Euclidean embedding agreeing with $\lhd$, 
then no other axis $\lhd'$ will work either. We can now see how to proceed:
check that the input profile is single-peaked and single-crossing, 
identify a compatible axis, and then use linear programming
to find the positions of voters and alternatives.

\begin{theorem}
	There is a polynomial-time algorithm that recognizes whether a given profile is 1-Euclidean.
\end{theorem}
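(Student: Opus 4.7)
The plan is to combine the recognition algorithms for single-peakedness and single-crossingness with linear programming, leveraging \Cref{prop:comp-axis-exists} and \Cref{thm:compatible}. First, I would check whether the input profile $P$ is single-peaked using \Cref{alg:sp} (in time $O(mn)$) and whether it is single-crossing using \Cref{alg:sc-fastest} (in time $O(mn\log m)$). If either check fails, we output ``no'', since by \Cref{prop:euclid-implies-spsc} any $1$-Euclidean profile must be both single-peaked and single-crossing. Otherwise, reorder the voters so that $P$ is single-crossing in the given order.

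Next, I would compute a compatible axis $\lhd$ for $P$ using the polynomial-time procedure promised in \Cref{prop:comp-axis-exists}. Concretely, one runs \Cref{alg:sp}, and whenever a choice of placement is available (line~\ref{sp-alg:arbitrary}), one resolves it so as to preserve the property that the augmented profile $(\lhd, v_1,\dots,v_n,\textup{reverse}(\lhd))$ remains single-crossing. By \Cref{prop:comp-axis-exists}, such an axis exists and can be found in polynomial time.

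Given the axis $\lhd$, I would then set up a linear program in variables $x(i)$ for $i \in N$ and $x(a)$ for $a\in A$. The preference constraint $a \succ_i b$ is equivalent to $|x(i)-x(a)| < |x(i)-x(b)|$. Since the order of the points $\{x(a):a\in A\}$ is fixed by $\lhd$, each such constraint simplifies to a single linear inequality on the midpoint of $x(a)$ and $x(b)$: namely $x(i) < \tfrac12(x(a)+x(b))$ if $a\lhd b$, and $x(i) > \tfrac12(x(a)+x(b))$ if $b\lhd a$. One adds the strict ordering constraints $x(a_1) < x(a_2) < \cdots < x(a_m)$ dictated by $\lhd$. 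Strict inequalities can be handled in the standard way by replacing each $z<z'$ with $z+1\le z'$ (after rescaling), which preserves feasibility.

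Finally, solve the resulting LP in polynomial time using any standard algorithm (e.g., the ellipsoid method). By \Cref{thm:compatible}, the LP is feasible if and only if $P$ is $1$-Euclidean: if $P$ is $1$-Euclidean then an embedding consistent with $\lhd$ exists and yields a feasible solution, while if the LP is feasible its solution directly provides a $1$-Euclidean embedding of $P$. The main conceptual obstacle --- and the reason the argument works --- is that $\lhd$ must be a \emph{compatible} axis rather than an arbitrary single-peaked axis (cf.\ \Cref{ex:1D-badaxis}); this is precisely what \Cref{prop:comp-axis-exists} and \Cref{thm:compatible} deliver.
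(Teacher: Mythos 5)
Your proposal is correct and follows essentially the same route as the paper's proof: reject if the profile is not single-peaked and single-crossing, compute a compatible axis via \Cref{prop:comp-axis-exists}, and then solve the linear feasibility program whose correctness is guaranteed by \Cref{thm:compatible}. You also correctly identify the key subtlety (an arbitrary single-peaked axis does not suffice, cf.\ \Cref{ex:1D-badaxis}), which is exactly the point the paper emphasizes.
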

\begin{proof}
	First, using the algorithms from \Cref{sec:recog:sp,sec:recog:sc}, check whether the given profile $P$
	is single-peaked and single-crossing. If not, $P$ is not 1-Euclidean. If yes, use the algorithm 
	from \Cref{prop:comp-axis-exists} to produce an axis $\lhd$ compatible with $P$. Then construct the 
	following linear feasibility program with variables $\{x(i): i\in N\}\cup\{x(a): a\in A\}$ 
	over $\mathbb R$ to check for the existence of an embedding 
	$x : N \cup A \to \mathbb R$ respecting $\lhd$.
	\begin{alignat*}{3}
	  x(i) + 1 &\le  \frac{x(a) + x(b)}{2} && \qquad\text{for all $a\lhd b$ and $i\in N$ with $a\succ_i b$,} \\
	  \frac{x(a) + x(b)}{2} +1 &\le x(i) && \qquad\text{for all $a\lhd b$ and $i\in N$ with $b\succ_i a$,} \\
	  x(a) + 1 &\le x(b) \vphantom{\dfrac12} && \qquad\text{for all $a\lhd b$.} %
	\end{alignat*}
\end{proof}

\begin{open}
	Can 1-Euclidean profiles be recognized by a purely `combinatorial' algorithm, 
	that is, an algorithm that does not depend on solving a linear program?
\end{open}
It should be noted that \Cref{thm:euclid-forbidden} from \Cref{sec:subprofiles} is, perhaps, bad news for 
this endeavor.

\subsubsection*{Higher Dimensions}

\paragraph{Complexity}
While the one-dimensional case admits a polynomial-time recognition algorithm, 
in higher dimensions this is unlikely to be the case. Indeed, by 
\Cref{thm:d-euclid-precision}, we might need exponentially many bits to write down a Euclidean embedding 
when one exists, so it is not even clear if the decision version of our problem is in NP. The right complexity 
class for this recognition problem turns out to be the little-known class $\exists\mathbb R$ (see 
\citealp{schaefer2013realizability} and \citealp{schaefer2015nash}).

\begin{theorem}[\citealp{peters2016recognising}]
	\label{thm:recognizing-d-Euclidean}
	For fixed $d \ge 2$, it is NP-hard to recognize whether a given profile is $d$-Euclidean. In fact, 
	this problem is equivalent to the existential theory of the reals (ETR), and thus 
	$\exists\mathbb R$-complete. It is contained in PSPACE.
\end{theorem}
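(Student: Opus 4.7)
The plan is to establish the three claims separately: $\exists\mathbb R$-membership (from which PSPACE containment follows by Canny's theorem that $\exists\mathbb R \subseteq \mathrm{PSPACE}$), and then $\exists\mathbb R$-hardness (which subsumes NP-hardness since $\mathrm{NP} \subseteq \exists\mathbb R$).

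For membership in $\exists\mathbb R$, I would introduce $d(n+m)$ real variables encoding the coordinates of each voter and each alternative in $\mathbb R^d$. For every voter $i \in N$ and every pair $a, b \in A$ with $a \succ_i b$, I add the polynomial inequality
\[ \sum_{k=1}^d \bigl(x(i)_k - x(a)_k\bigr)^2 < \sum_{k=1}^d \bigl(x(i)_k - x(b)_k\bigr)^2, \]
which simplifies to a linear inequality in the coordinates (once squared norms are expanded the quadratic terms in $x(i)_k$ cancel, leaving a linear constraint in the embedding variables). This yields a polynomial-sized ETR instance that is satisfiable if and only if $P$ is $d$-Euclidean, establishing both membership in $\exists\mathbb R$ and the forward direction of the equivalence with ETR.

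For $\exists\mathbb R$-hardness, the approach is to reduce from a known $\exists\mathbb R$-complete geometric realization problem, such as stretchability of pseudoline arrangements, realizability of a prescribed order type, or realizability of a rank-$3$ oriented matroid (for the $d=2$ case). The key geometric observation is that in a $d$-Euclidean profile voter $i$'s preference between $a$ and $b$ is determined by the side of the perpendicular bisector hyperplane of $\overline{x(a)x(b)}$ on which $x(i)$ lies. Hence a preference profile collectively encodes the sign pattern of the $\binom{m}{2}$ bisector hyperplanes against the voter points—this is precisely the kind of combinatorial data that oriented matroid/order type realizability problems prescribe. The reduction must construct a profile whose realizability is equivalent to the existence of the target geometric configuration.

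The main obstacle is rigidity: the reduction must ensure that the relationship between the embedding and the target geometric object is tight enough that a $d$-Euclidean realization of the constructed profile really does encode a valid realization of the source instance (and vice versa). This typically requires inserting `calibration' voters and alternatives that force the overall embedding to be canonical up to an affine/isometric transformation, so that the perpendicular bisectors of the relevant pairs behave like prescribed lines. For fixed $d \ge 3$, one can either carry out the construction directly in $\mathbb R^d$ or reduce from the $d=2$ case by a padding argument (adding $d-2$ extra alternatives placed far apart along the additional axes to suppress their influence). Taken together with the polynomial-time ETR encoding of membership, this completes the equivalence with ETR and hence all three claims.
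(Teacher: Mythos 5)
The paper does not prove this theorem; it cites \citet{peters2016recognising}, so there is no in-survey argument to compare against. Your overall architecture (membership via an ETR encoding, hardness via the perpendicular-bisector correspondence and a reduction from a geometric realizability problem) does match the strategy of the cited work, but as written the proposal has one concrete error and one large gap.

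The error is the claim that the constraint $\sum_{k}(x(i)_k-x(a)_k)^2 < \sum_{k}(x(i)_k-x(b)_k)^2$ ``simplifies to a linear inequality.'' Only the $x(i)_k^2$ terms cancel; what remains is $\sum_k\bigl(x(a)_k^2 - 2x(i)_k x(a)_k\bigr) < \sum_k\bigl(x(b)_k^2 - 2x(i)_k x(b)_k\bigr)$, which is bilinear because the alternative coordinates $x(a)_k, x(b)_k$ are also unknowns of the embedding (cf.\ \Cref{def:euclid}). If the system really were linear, the recognition problem would be solvable by linear programming, contradicting the very hardness you are proving. The constraint is still polynomial, so $\exists\mathbb R$-membership (and hence PSPACE containment via $\exists\mathbb R\subseteq\text{PSPACE}$) survives, but the remark should be deleted. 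The gap is in the hardness direction, which is the substance of the theorem: you correctly identify that a profile encodes the sign pattern of voter points against $\binom{m}{2}$ perpendicular bisectors and that one should reduce from stretchability or order-type realizability, but the actual gadget construction --- the ``calibration'' voters and alternatives that make the correspondence between realizations of the source instance and Euclidean embeddings of the profile tight in both directions --- is only named, not built. That construction is where all the difficulty lies. Likewise, the padding argument for $d\ge 3$ is not justified: adding $d-2$ alternatives ``far apart along the additional axes'' does not obviously force every $d$-dimensional embedding of the padded profile to restrict to a planar embedding of the core, since a valid embedding is free to use all $d$ dimensions for the original points. As it stands the proposal is a correct plan with the right key observation, but not a proof.
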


Suppose $P$ is a $d$-Euclidean profile. Then, because \Cref{def:euclid} only uses strict inequalities, 
there must be a Euclidean embedding that only uses \emph{rational} coordinates; and by multiplying by the 
denominators we can assume that only integral coordinates are used. Thus, there is a Euclidean embedding 
$x : N\cup A \to \mathbb Z^d$. Now we can ask how big the integers in this representation need to be: 
do numbers that are polynomial in $n$, $m$ and $d$ suffice? 
Do singly exponential numbers, that is, numbers with polynomially many bits, suffice? 
The answer turns out to be `no':

\begin{theorem}[\citealp{peters2016recognising}, based on \citealp{mcdiarmid2013integer}]
	\label{thm:d-euclid-precision}
	For each fixed $d\ge 2$, there are $d$-Euclidean profiles with $n$ voters and $m$ alternatives such that every integral Euclidean embedding uses at least one coordinate that is $2^{2^{\Omega(n+m)}}$\!. On the other hand, every $d$-Euclidean profile can be realized by an integral Euclidean embedding whose coordinates are at most $2^{2^{O(n+m)}}$\!.
\end{theorem}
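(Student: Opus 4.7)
The plan is to treat the two bounds quite differently: the upper bound follows from general quantifier-elimination machinery over the reals, while the lower bound is a genuine geometric construction that must then be translated into the language of preferences.

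For the \emph{upper bound}, I would express the condition that a profile $P$ is $d$-Euclidean as an existential first-order sentence over $\mathbb{R}$. Introducing variables $x(i), x(a) \in \mathbb{R}^d$ for each voter $i$ and alternative $a$, the condition $a \succ_i b$ translates into the strict polynomial inequality $\|x(i)-x(a)\|^2 < \|x(i)-x(b)\|^2$, i.e.\ a polynomial of degree $2$ in the $d(n+m)$ unknowns. The resulting system has $O(nm)$ polynomial inequalities, each of degree $2$, in $d(n+m)$ variables. Standard bounds on the bit-complexity of points in semi-algebraic sets (for instance, those of Grigoriev--Vorobjov or the refined versions in Basu--Pollack--Roy) then guarantee that if the system is feasible over $\mathbb{R}$, it has a rational solution whose numerators and denominators are of magnitude at most $2^{(nm)^{O(d)}} = 2^{2^{O(n+m)}}$ for fixed $d$. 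Clearing denominators and perturbing slightly (exploiting that all inequalities are strict) yields an integral embedding with coordinates of the same order of magnitude, giving the claimed $2^{2^{O(n+m)}}$ upper bound.

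For the \emph{lower bound}, the idea is to import a geometric construction of \citet{mcdiarmid2013integer}, who exhibit configurations of points in $\mathbb{R}^d$ (for $d \ge 2$) whose combinatorial type (e.g.\ a prescribed pattern of pairwise distance comparisons) forces any integral realization to have coordinates of size $2^{2^{\Omega(k)}}$, where $k$ is the number of points. Such configurations are typically built by iterating a ``doubling'' gadget, each stage of which at least squares the magnitude of some coordinate, akin to the classical nested-square-root constructions used for order-type realizability. The crucial step is to \emph{encode} such a configuration as a preference profile: given the target configuration of points, I would designate some of the points as alternatives and others as voters (adding auxiliary points as needed), and then read off the preference orders determined by Euclidean distance. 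One must verify two things: first, that the resulting profile is $d$-Euclidean precisely when such a configuration is realized in $\mathbb{R}^d$, so that every integral embedding of the profile yields an integral realization of the geometric configuration; and second, that the number of auxiliary points needed is only polynomial in the size of the configuration, so that $n+m$ remains of the same order as $k$. This then transfers the $2^{2^{\Omega(k)}}$ lower bound on coordinate size to a $2^{2^{\Omega(n+m)}}$ lower bound for the profile.

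The main obstacle, in my view, is the second part of the encoding step: the McDiarmid--Mueller construction gives constraints of the form ``point $p$ lies at a specific distance from point $q$'' or ``three points are collinear'', but a preference profile can only express \emph{which} of two alternatives a voter is closer to, not absolute distances. One therefore needs to simulate the required geometric constraints (such as forcing a voter to sit on the perpendicular bisector of two alternatives, or forcing three alternatives to be collinear) using only ordinal comparisons, presumably by introducing carefully placed auxiliary alternatives that ``pin down'' the positions of voters through chains of indifference-like preferences. Getting this encoding to propagate the double-exponential blow-up faithfully, without allowing slack that would let an integral realization use much smaller coordinates, is the delicate part of the argument.
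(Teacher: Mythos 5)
The survey itself does not prove this theorem; it is imported from \citet{peters2016recognising}, whose argument in turn rests on \citet{mcdiarmid2013integer}, and your two-part strategy follows those sources. For the upper bound your outline is essentially right, but the intermediate bound you state is wrong in a way that matters: a rational solution of magnitude $2^{(nm)^{O(d)}}$ is, for fixed $d$, only $2^{\mathrm{poly}(n,m)}$ — singly exponential — and if that were the truth, the lower-bound half of the very theorem you are proving would be false. The sample-point bounds for semi-algebraic sets put the \emph{number of variables} in the exponent: for a system of strict inequalities of degree $2$ in $k = d(n+m)$ variables with constant-size coefficients, one gets a rational point of bit-length $2^{O(d(n+m))}$, hence magnitude $2^{2^{O(n+m)}}$ for fixed $d$; multiplying through by the common denominator (positive scaling preserves all strict distance comparisons, so no perturbation is needed) yields the integral embedding. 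With that correction the upper bound goes through.

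The lower bound, however, has a genuine gap that you identify but do not fill, and it is exactly where the substance of the result lies. The configurations of \citet{mcdiarmid2013integer} force doubly exponential coordinates through \emph{threshold} constraints (distance at most, or more than, $1$), whereas a preference profile records only, for each voter, the ordinal ranking of alternatives by distance. To transfer the bound one must exhibit concrete gadgets that simulate the required metric relations by ordinal comparisons alone (e.g., pinning a point to a perpendicular bisector via paired comparisons), prove a precision-transfer lemma showing that \emph{every} embedding consistent with the resulting profile still inherits the $2^{2^{\Omega(k)}}$ coordinate blow-up (rather than admitting a small realization that satisfies the weaker ordinal constraints), and check that only polynomially many auxiliary points are introduced. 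Saying this can "presumably" be done with carefully placed auxiliary alternatives names the problem rather than solving it; as written, the lower bound is not established.
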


\paragraph{Heuristic Algorithms}
While the recognition problem is hard even for $d = 2$, heuristic algorithms have been proposed for this problem that categorize profiles into three groups: \emph{yes}-instances (together with an embedding), \emph{no}-instances (together with a forbidden subprofile), and \emph{unknown}.
The fist such algorithm was proposed by \citet{escoffier2023euclideanalgorithm}. It starts by searching for an occurrence of some known subprofiles with 4 candidates that are not 2-Euclidean. If it does not find such a \emph{no}-witness, it then attempts to find an embedding by repeatedly sampling potential locations for the alternatives using different probability distributions, until it reaches a time limit. \citet{dvořák2025practicalapproach} developed an improved heuristic algorithm using additional forbidden subprofiles, applying some reduction rules, and using ILP and QCP solvers. The resulting heuristic was able to classify all but 60 instances from PrefLib, with 98.7\% of PrefLib instances resolved in under 1 second.

\paragraph{Other Metrics}
When we defined $d$-Euclidean preferences in \Cref{sec:def:euclid}, we used the Euclidean $\ell_2$-metric. When using the arguably very natural $\ell_1$- or $\ell_\infty$-metrics for the definition, \citet{peters2016recognising} noted that the recognition problem is contained in NP.

\begin{open}
	Is the recognition problem for $d$-Euclidean preferences with the $\ell_1$- or $\ell_\infty$-metric NP-hard?
\end{open}

\subsection{Algorithms for Preferences Single-Peaked on a Tree}
\label{sec:recog:spt}

The linear-time algorithm for recognizing single-peaked profiles (\Cref{sec:recog:sp}) 
proceeds by building the underlying axis from the outside in, that is, starting at the 
extremes. We will now see that a similar approach is helpful for recognizing preferences that are 
single-peaked on a tree. The algorithm we present is due to 
\citet{trick1989recognizing}; it can decide
whether the input profile is single-peaked on a tree in time $O(m^2n)$.  
While it is not as fast as the linear-time algorithm for 
preferences single-peaked on a line, it is very intuitive and easy to describe.
The exposition in this section follows \citet{peters2020spt-journal}.

We have observed that if a profile is single-peaked on some axis $\lhd$, 
then every alternative that is ranked last by some voter must be placed 
at one of the endpoints of $\lhd$. The analog of this claim 
for a profile that is single-peaked on a tree can be formulated as follows.
\begin{proposition}
	Suppose $P$ is single-peaked on $T$, and suppose $a$ occurs as a bottom-most alternative
	of some voter $i$. Then $a$ is a leaf of $T$.
\end{proposition}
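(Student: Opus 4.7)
The plan is to argue by contradiction: assume $a$ is ranked last by voter $i$ yet $a$ is not a leaf of $T$, and then produce an alternative whose preference relation with $a$ violates the single-peaked condition. The key observation to exploit is that the peak $\top(v_i)$ cannot equal $a$ (assuming $|A|\ge 2$), so $\top(v_i)$ sits somewhere in $T$ distinct from $a$.

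First I would use the fact that, since $a$ is internal in $T$, its removal splits $T$ into at least two connected components; say $a$ has degree $\ge 2$, with neighbors including some vertex lying in the component that contains $\top(v_i)$ and some vertex $b$ in a different component. Then the unique path from $\top(v_i)$ to $b$ in $T$ must cross $a$, so $a$ lies strictly between $\top(v_i)$ and $b$ on that path.

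Now I would apply the definition of single-peakedness on $T$: whenever a vertex lies on the path from $\top(v_i)$ to another vertex, it must be preferred to that vertex by voter $i$. Applied to $a$ on the $\top(v_i)$--$b$ path, this gives $a \succ_i b$. But $a$ is the bottom-ranked alternative of $v_i$, so $b \succ_i a$, a contradiction. Equivalently, one can phrase the argument via condition~(4) of \Cref{prop:spt-equiv}: the upper contour set $\{x\in A : x\pref_i b\}$ must be connected in $T$, yet it contains $\top(v_i)$ and $b$ (which lie in different components of $T\setminus\{a\}$) while excluding $a$, so it cannot be connected.

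There is no real obstacle here: the only mildly subtle point is to make sure the two ``sides'' of $a$ one uses are legitimate, i.e., that a vertex $b$ in a component of $T\setminus\{a\}$ not containing $\top(v_i)$ really does exist. This follows immediately from $\deg_T(a)\ge 2$ and the fact that in a tree each neighbor of $a$ lies in its own component of $T\setminus\{a\}$. The case $|A|=1$ is vacuous since then the single vertex is trivially a leaf (or the statement is trivial).
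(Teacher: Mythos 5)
Your proof is correct and rests on the same underlying fact as the paper's, namely the equivalence between single-peakedness on $T$ and connectivity of upper contour sets (\Cref{prop:spt-equiv}). The paper's version is just a more compact instance of your second phrasing: it observes directly that $A\setminus\{a\}$ is a prefix of $v_i$, hence connected in $T$, which forces $a$ to be a leaf.
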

\begin{proof}
	The set $A\setminus\{a\}$ is a prefix of $v_i$ and hence must be connected in $T$. 
	This can only be the case if $a$ is a leaf of $T$.
\end{proof}
Note that this observation potentially allows for many different alternatives to appear 
in the bottom-most position. This is in contrast to preferences single-peaked on a line, 
where at most two alternatives can be bottom-ranked. Indeed, a path only has two leaves,
whereas an arbitrary tree can have up to $m-1$ many leaves.

Consider a bottom-ranked alternative $a$; we know that if our profile 
is single-peaked on some tree $T$, then $a$ is a leaf of $T$. Now, being a leaf, 
$a$ must have exactly one adjacent vertex. Which vertex could this be? 
The following simple observation provides an answer.

\begin{proposition}
	\label{prop:spt-leaf-attachment}
	Suppose a vote $v_i$ is single-peaked on the tree $T$, and $a\in A$ is a leaf of $T$, 
	adjacent to $b\in A$. Then either 
	\begin{enumerate}[(i)]
		\item $b \succ_i a$, or
		\item $a$ is $i$'s top-ranked alternative and $b$ is $i$'s second-ranked alternative.
	\end{enumerate}
\end{proposition}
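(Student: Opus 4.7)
The plan is to argue by contrapositive on condition (i): suppose $a \succ_i b$, and deduce both parts of condition (ii). I will work directly from the definition of single-peakedness on a tree, using the fact that, for every alternative $c \in A$, every alternative strictly preferred to $c$ (including the top choice) must lie on the unique path from $\top(v_i)$ to $c$ passing through intermediate vertices that are themselves preferred to $c$.

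First I would show $a = \top(v_i)$. Suppose for contradiction that $a \neq \top(v_i)$, and consider the unique $\top(v_i)$-to-$a$ path in $T$. Since $a$ is a leaf of $T$ whose only neighbor is $b$, this path must enter $a$ via the edge $\{a,b\}$; in particular, $b$ lies strictly between $\top(v_i)$ and $a$ on this path. By the definition of single-peakedness on $T$, this forces $b \succ_i a$, contradicting our assumption. Hence $a$ is indeed $i$'s top-ranked alternative.

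Next I would show $b$ is second-ranked. Pick any $c \in A \setminus \{a, b\}$ and consider the unique $a$-to-$c$ path in $T$. Since $a$ has $b$ as its unique neighbor, this path must begin with the edge $\{a,b\}$, so $b$ lies on the $\top(v_i)=a$ to $c$ path. By single-peakedness of $v_i$ on $T$, this gives $b \succ_i c$. Since $c$ was arbitrary in $A \setminus \{a,b\}$, we conclude that $b$ is ranked above every alternative other than $a$, so $b$ is $i$'s second-ranked alternative.

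There is no real obstacle here: the whole argument hinges on the single fact that every path leaving the leaf $a$ must start with the edge $\{a,b\}$, which forces $b$ onto any path from $\top(v_i)$ to an alternative ranked below $b$. The only mild care needed is in handling the two sub-cases (whether $a$ is the peak or not) cleanly, which is exactly what the two-part conclusion of the proposition is designed to accommodate.
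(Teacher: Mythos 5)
Your proof is correct and follows essentially the same route as the paper's: both hinge on the observation that any path leaving the leaf $a$ must begin with the edge $\{a,b\}$, and both split into the cases of whether $a$ is the peak. The only cosmetic difference is that for the second case the paper invokes connectivity of the top-two prefix (via the equivalent characterization in \Cref{prop:spt-equiv}) to identify the second-ranked alternative, whereas you apply the path-based definition directly to every $c \neq a,b$; both are equally valid.
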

\begin{proof}
	Suppose first that $a$ is not $i$'s top-ranked alternative, 
	and let $c= \top(v_i)$. Let $\pi$ be the (unique) path from $c$ to $a$; 
	note that $\pi$ passes through $b$. 
	Since $v_i$ is single-peaked on $T$, it is single-peaked on $\pi$, 
	and hence $i$'s preferences decrease along $\pi$. Since $\pi$ visits $b$ before $a$, 
	it follows that $b\succ_i a$.
	
	On the other hand, suppose that $a$ is $i$'s top-ranked alternative, 
	and $c$ is $i$'s second-ranked alternative. Then $\{a,c\}$ is a prefix segment of $v_i$. 
	Since $v_i$ is single-peaked on $T$, $\{a,c\}$ is connected in $T$, and hence forms an edge. 
	Thus, $a$ is adjacent $c$, so $c = b$, as required. 
\end{proof}

Thus, in our search for a neighbor of the leaf $a$, we can restrict our attention to alternatives 
$b$ such that for every voter $i$ one of the conditions (i) or (ii) in the proposition above
is satisfied. Let us write this down 
more formally. For each $i\in N$, we write $\operatorname{second}(v_i)$ for 
the alternative that is ranked second in $v_i$. For each $c\in A$ and $i\in N$, define
\[
B(v_i, c) = \begin{cases}
\{ c' : c' \succ_i c \} & \text{if } \top(v_i) \neq c, \\
\{ \operatorname{second}(v_i) \} & \text{if }  \top(v_i) = c.
\end{cases}
\]
Applying \Cref{prop:spt-leaf-attachment} to all voters gives us the following constraint 
for our choice of $b$.
\begin{corollary}
	Suppose a profile is single-peaked on $T$, and $a\in A$ is a leaf of $T$. 
	Then $a$ must be adjacent to an element of $B(a) := \bigcap_{i\in N} B(v_i,a)$.
\end{corollary}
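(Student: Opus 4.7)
The plan is to deduce this statement as an almost immediate consequence of \Cref{prop:spt-leaf-attachment}. Since $a$ is a leaf of $T$, it has a unique neighbor $b \in A$, and the goal is to show $b \in B(v_i, a)$ for every voter $i \in N$, i.e., $b \in B(a)$.

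First I would fix an arbitrary voter $i \in N$ and apply \Cref{prop:spt-leaf-attachment} to $v_i$, $a$, and $b$, which gives two cases. In case (i), we have $b \succ_i a$; this implies $\top(v_i) \neq a$ (since some alternative is ranked above $a$), so by definition $B(v_i, a) = \{c' : c' \succ_i a\}$, and this set contains $b$. In case (ii), we have $\top(v_i) = a$ and $b = \operatorname{second}(v_i)$; then by definition $B(v_i, a) = \{\operatorname{second}(v_i)\} = \{b\}$, which trivially contains $b$. In both cases $b \in B(v_i, a)$.

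Since $i$ was arbitrary, $b \in \bigcap_{i \in N} B(v_i, a) = B(a)$, which is what we needed to show. There is no real obstacle here beyond carefully unpacking the two-case definition of $B(v_i, c)$ depending on whether $c$ is $i$'s top choice; the content of the corollary is essentially a restatement of \Cref{prop:spt-leaf-attachment} aggregated over all voters, packaged in a form convenient for the recognition algorithm that follows.
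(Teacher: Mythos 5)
Your proof is correct and is exactly the argument the paper intends: the survey states the corollary immediately after \Cref{prop:spt-leaf-attachment} with the one-line justification that it follows by ``applying the proposition to all voters,'' which is precisely your case analysis showing the unique neighbor $b$ of the leaf $a$ lies in $B(v_i,a)$ for each $i$. No gaps; the unpacking of the two-case definition of $B(v_i,c)$ is handled correctly.
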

We have established that it is necessary for a leaf $a$ to be adjacent to some alternative in $B(a)$. 
It turns out that if the profile is single-peaked on a tree, we can actually attach $a$ to \emph{any} 
of the alternatives in $B(a)$.

\begin{proposition}
	If a profile $P$ is single-peaked on some tree, and $a$ occurs bottom-ranked, 
	then for each $b\in B(a)$ there exists a tree $T$ in which $a$ is a leaf adjacent to $b$, 
	and $P$ is single-peaked on $T$.
\end{proposition}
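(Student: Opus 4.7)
The plan is to modify the tree $T^*$ on which $P$ is known to be single-peaked, so that $a$ is reattached to $b$ instead of its current neighbor. Since $a$ occurs bottom-ranked, $a$ is a leaf of $T^*$; let $b^*$ denote its unique neighbor in $T^*$. By \Cref{prop:spt-leaf-attachment}, $b^* \in B(a)$. If $b = b^*$, we simply take $T = T^*$. Otherwise, let $T$ be obtained from $T^*$ by deleting the edge $\{a, b^*\}$ and adding the edge $\{a, b\}$; since $a$ was a leaf of $T^*$ and $b \in A \setminus \{a\}$, the result is again a tree in which $a$ is a leaf adjacent to $b$. What remains is to show that every vote $v_i$ is single-peaked on $T$, which by \Cref{prop:spt-equiv}~(4) amounts to checking that every prefix $U = \{a' \in A : a' \succeq_i c\}$ is connected in $T$.

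For prefixes $U$ with $a \notin U$, connectedness in $T$ is immediate from connectedness in $T^*$, because $T - a = T^* - a$ and $U \subseteq A \setminus \{a\}$. The substantive case is $U \ni a$. Here I would use two facts: (i) $U \setminus \{a\}$ is connected in $T - a$, because in $T^*$ the leaf $a$ is adjacent only to $b^*$, so removing $a$ from the connected induced subgraph $T^*[U]$ leaves the connected subgraph $T^*[U \setminus \{a\}] = T[U \setminus \{a\}]$; and (ii) $b \in U$, so reattaching $a$ at $b$ in $T$ preserves connectedness of $U$.

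The crux is establishing (ii). If $a$ is not the top of $v_i$, then $b \in B(v_i, a) = \{a' : a' \succ_i a\}$ gives $b \succ_i a$, hence $b \in U$ whenever $a \in U$. The one case requiring separate attention is $a = \top(v_i)$ for some voter $i$: there $B(v_i, a) = \{\operatorname{second}(v_i)\}$ is a singleton, and since $b^* \in B(a) \subseteq B(v_i, a)$ as well as $b \in B(a) \subseteq B(v_i, a)$, this forces $b = b^* = \operatorname{second}(v_i)$, so the leaf swap is trivial. The main obstacle is thus not the connectedness argument itself but the careful handling of this degenerate top-rank case; once it is isolated, the leaf-swap construction yields $T$ and establishes the proposition.
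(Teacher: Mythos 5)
Your argument is correct, and the paper in fact states this proposition without proof, so you have supplied exactly the natural leaf-reattachment argument the surrounding text intends: detach the leaf $a$ from its unique neighbour $b^*$ in the witnessing tree and reattach it at $b$, then verify via \Cref{prop:spt-equiv}~(4) that every prefix remains connected. Both the connectedness check (prefixes avoiding $a$ are untouched; prefixes containing $a$ lose a leaf and regain it at $b$, which lies in the prefix) and the isolation of the degenerate case where some voter top-ranks $a$ (forcing $|B(a)|\le 1$ and hence $b=b^*$) are handled correctly.
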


These observations suggest that a recognition algorithm can proceed recursively: we identify a leaf $a$, 
compute $B(a)$, then obtain a tree $T$ for a restriction of $P$ to $A\setminus\{a\}$, pick
an element $b\in B(a)$ in this tree, and connect $a$ to $b$.
This is in fact precisely what \citeauthor{trick1989recognizing}'s algorithm (\Cref{alg:trick-spt}) 
does (we present it in non-recursive form to simplify the analysis of the running time).
However, to argue that \Cref{alg:trick-spt} is correct, we need
to be a bit more careful. Namely, in \Cref{sec:def:spt} we have seen that the class of profiles
single-peaked on trees is not closed under alternative deletion, so deleting $a$ may 
potentially result in a profile that is not single-peaked on a tree. 
However, since $a$ is chosen to be a \emph{leaf}, this is not an issue:

\begin{proposition}
	A profile $P$ over $A$ is single-peaked on a tree $T$ with leaf $a\in A$ if and only 
	if the profile $P|_{A\setminus \{a\}}$ is single-peaked on $T - \{a\}$.
\end{proposition}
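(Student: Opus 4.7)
The plan is to use the equivalent characterization of single-peakedness on trees given by Proposition~\ref{prop:spt-equiv}(4): $P$ is single-peaked on a tree $T$ if and only if, for every voter $i$ and every alternative $c$, the upper contour set $S_{i,c} := \{d \in A : d \pref_i c\}$ is connected in $T$. Reasoning in terms of upper contour sets is the natural choice, because deleting a leaf from a tree affects the connectivity of vertex subsets in a very controlled way, which lets me translate single-peakedness on $T$ into single-peakedness on $T-\{a\}$ almost verbatim.

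For the forward direction, I would fix a voter $i$ and an alternative $c \in A\setminus\{a\}$ and show that $S_{i,c}\setminus\{a\}$ is connected in $T-\{a\}$. By hypothesis $S_{i,c}$ is connected in $T$. If $a\notin S_{i,c}$, the induced subgraph is the same in $T$ and $T-\{a\}$ and there is nothing to prove. If $a\in S_{i,c}$, the key observation is that deleting a leaf from a connected vertex set in a tree cannot disconnect the remaining vertices: paths between them never needed to traverse a leaf in the first place. Hence $S_{i,c}\setminus\{a\}$ is connected in $T-\{a\}$, establishing Proposition~\ref{prop:spt-equiv}(4) for the restricted profile.

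For the backward direction, I would verify the connectivity condition for $P$ on $T$ by a case analysis on whether $a\in S_{i,c}$. The easy case, $a\notin S_{i,c}$, follows directly: $S_{i,c}$ is then a subset of $V(T-\{a\})$ which is connected in $T-\{a\}$ by hypothesis, hence also connected in $T$. The harder case is $a\in S_{i,c}$, where $S_{i,c}=(S_{i,c}\setminus\{a\})\cup\{a\}$ with the first summand connected in $T-\{a\}$. To reattach $a$ as a leaf of $T$ without losing connectivity, I need the unique neighbor $b$ of $a$ in $T$ to belong to $S_{i,c}\setminus\{a\}$ whenever this set is nonempty. This is where I would invoke Proposition~\ref{prop:spt-leaf-attachment}, which forces $b\succ_i a$ unless $a=\top(v_i)$ (in which case $b=\operatorname{second}(v_i)$); in either case $a\pref_i c$ with $c\ne a$ implies $b\succ_i c$, so $b\in S_{i,c}$.

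The main obstacle is precisely this reattachment step: the backward direction cannot be derived from the restriction alone without the additional information that the preferences of $P$ over $a$ are compatible with $a$ being attached as a leaf to its specific neighbor $b$ in $T$. I would therefore read the phrase ``$T$ with leaf $a$'' in the statement as carrying this compatibility implicitly, justified by the fact that the forward direction shows every profile single-peaked on $T$ automatically satisfies the conclusion of Proposition~\ref{prop:spt-leaf-attachment}. This closure of the connectivity characterization under leaf deletion and leaf addition is exactly the structural fact that makes the recursion on leaves in Trick's algorithm sound.
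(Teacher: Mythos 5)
Your argument is correct, and since the paper states this proposition without giving a proof, your route through the upper-contour-set characterization (condition (4) of \Cref{prop:spt-equiv}) is exactly the one the text implicitly relies on; the forward direction is just the observation that removing a leaf from a tree never disconnects a connected vertex set. More importantly, you are right to flag the backward direction: as literally written the proposition is false. For instance, let $A=\{a,b,c\}$, let $T$ be the path $b$--$c$--$a$ (so $a$ is a leaf adjacent to $c$), and let $P$ consist of the single vote $b\succ a\succ c$. Then $P|_{\{b,c\}}$ is single-peaked on $b$--$c$, but the upper contour set $\{d: d\pref a\}=\{a,b\}$ is disconnected in $T$, so $P$ is not single-peaked on $T$. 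The missing hypothesis is precisely the one you supply: the unique neighbour $b$ of $a$ in $T$ must lie in $B(a)$, i.e., every voter either prefers $b$ to $a$ or ranks $a$ first and $b$ second. This is guaranteed in the only place the proposition is used, inside \Cref{alg:trick-spt}, where $a$ is attached to an element of $B(a)$, and your reattachment step then goes through. The one detail worth spelling out is the case $c=a$: there the set $\{d\in A\setminus\{a\}: d\succ_i a\}$ is not the upper contour set of $c$ in the restricted vote but of the alternative ranked immediately above $a$ (or is empty when $a=\top(v_i)$), so it is still connected in $T-\{a\}$ by hypothesis and still contains $b$, which is all your argument needs.
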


Putting all these ideas together, we obtain \citeauthor{trick1989recognizing}'s algorithm.

\begin{algorithm}[H]
	\DontPrintSemicolon
	\KwIn{A profile $P$ over $A$}
	\KwOut{A tree on $A$ such that $P$ is single-peaked on $T$, if one exists}
	$T \gets $ the empty tree on vertex set $A$ \;
	$A' \gets A$\;
	\While{$|A'| \ge 3$}{
		$L \gets $ the set of last-ranked alternatives in $P|_{A'}$\;
		\For{each alternative $a\in L$}{
			$B_a \gets \bigcap_{i\in N} B(v_i|_{A'}, a)$\;
			\eIf{$B_a \neq \varnothing$}{
				$b \gets$ an arbitrary alternative from $B_a$\;
				add edge $\{b,a\}$ to $T$\;
				$A' \gets A' \setminus \{a\}$\;
			}{
				\Return{$P$ is not single-peaked on any tree}\;
}
		}
	}
	\If{$|A'| = 2$ with $A' = \{c,d\}$}{add edge $\{c,d\}$ to $T$\;}
	\Return{$T$}
\caption{Recognizing profiles single-peaked on trees in $O(m^2n)$ time \citep{trick1989recognizing}}
\label{alg:trick-spt}
\end{algorithm}

\medskip
Recall that a profile is \emph{narcissistic} if every alternative is ranked first by some voter, 
that is, if for each $a\in A$ there is an $i\in N$ with $\top(v_i) = a$. Interestingly, 
for narcissistic profiles the running time of \citeauthor{trick1989recognizing}'s recognition algorithm 
can be improved to $O(nm)$. Intuitively, this is because for narcissistic profiles, 
we can make heavy use of condition~(ii) in \Cref{prop:spt-leaf-attachment}.

\begin{theorem}[\citealp{trick1989recognizing}]\label{thm:recognizing-spt}
	For a narcissistic profile $P$, we can decide in time $O(mn)$ whether 
	$P$ is single-peaked on a tree $T$. Moreover, if $P$ is single-peaked
	on a tree, this tree is unique.
\end{theorem}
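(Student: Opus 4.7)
The plan is to derive both the uniqueness claim and the $O(mn)$ runtime from a single observation: in a narcissistic profile, the unique neighbor of any leaf of the witnessing tree can be read off a single vote via Proposition~\ref{prop:spt-leaf-attachment}(ii), collapsing the $O(n)$-per-leaf $B_a$ computation of the general Trick algorithm into an $O(1)$ lookup.

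For uniqueness I would argue by induction on $m$, with $m\le 2$ trivial. For $m\ge 3$, suppose $P$ is narcissistic and single-peaked on a tree $T$. The alternative $a^*$ ranked last by $v_1$ must be a leaf of $T$, since $A\setminus\{a^*\}$ is a prefix of $v_1$ and therefore connected in $T$ by Proposition~\ref{prop:spt-equiv}(4). By narcissism, some voter $i^*$ has $\top(v_{i^*})=a^*$, and Proposition~\ref{prop:spt-leaf-attachment}(ii) then forces the unique neighbor of $a^*$ in $T$ to equal $\operatorname{second}(v_{i^*})$. The restricted profile $P|_{A\setminus\{a^*\}}$ is again narcissistic (for every $c\neq a^*$ the original witness with $\top(v)=c$ still ranks $c$ first after deletion) and is single-peaked on $T-\{a^*\}$ by leaf deletion, so the induction hypothesis pins down $T-\{a^*\}$ uniquely, and combined with the forced edge this determines $T$.

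For the algorithm I would implement this induction iteratively, with the following data structures: a doubly linked list representation of each $v_i|_{A'}$ together with direct pointers from each $(i,c)$ to the node for $c$ (enabling $O(1)$ deletions and $O(1)$ access to head and head-successor); a precomputed position table $\operatorname{pos}[i][c]$ (enabling $O(1)$ comparisons of the form $c\succ_i c'$); and a map sending each alternative $a$ to one preselected voter whose original top is $a$, such a voter existing by narcissism and remaining the current top of its restricted vote as long as $a\in A'$. Each iteration while $|A'|\ge 3$ then (i)~reads $a^*$ as the tail of $v_1$'s linked list, (ii)~reads $n^*$ as the second element of the preselected voter for $a^*$, (iii)~scans all voters and verifies, for each $j$, that either the current top of $v_j|_{A'}$ equals $a^*$ and its second equals $n^*$, or else $\operatorname{pos}[j][n^*]<\operatorname{pos}[j][a^*]$, and (iv)~adds $\{a^*,n^*\}$ to $T$ and deletes $a^*$ from every linked list; the final edge is added once $|A'|=2$.

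Correctness is immediate from the uniqueness argument: the conditions checked in step (iii) are precisely $n^*\in B(v_j|_{A'},a^*)$ in the notation of Algorithm~\ref{alg:trick-spt}, so by correctness of that algorithm a successful run yields a witnessing tree, while a failed verification certifies non-membership because uniqueness left no alternative candidate for the neighbor of $a^*$. The runtime accounting is straightforward: preprocessing costs $O(mn)$; each of the $m-1$ iterations costs $O(n)$ for the verification scan and $O(n)$ total for the per-voter deletion of the $a^*$-node, with all other per-iteration work $O(1)$; summing yields $O(mn)$. The main obstacle I anticipate is purely a bookkeeping one, namely correctly tracking each voter's current top through successive deletions so that step (iii) distinguishes voters whose current top is $a^*$ (even if their original top has already been removed) from the rest; letting the head pointer of each linked list serve as the current top handles this cleanly, since deletions update it automatically.
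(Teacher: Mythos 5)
Your proof is correct and follows exactly the route the paper indicates but does not spell out: specialize \Cref{alg:trick-spt} using condition~(ii) of \Cref{prop:spt-leaf-attachment}, which in a narcissistic profile pins the neighbor of each peeled leaf down to the second-ranked alternative of a witness voter, readable in $O(1)$ time, yielding both the uniqueness claim and the $O(mn)$ bound. The data-structure bookkeeping and the distinction between voters whose current top is the peeled leaf and the rest are handled correctly.
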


In \Cref{sec:problems}, we will see that many hard voting problems remain hard even for profiles 
single-peaked on a tree. However, one can find algorithms that perform well when given a profile that is 
single-peaked on a tree $T$ that satisfies some additional properties, such as having few leaves. In 
order to use these algorithms, we need to be able to find such a `nice' tree if it exists. It turns out 
that, by studying \Cref{alg:trick-spt} more closely, we can identify `nice' trees 
for various notions of niceness.

\begin{theorem}[\citealp{peters2020spt-journal}]
	\label{thm:nice-trees}
	Suppose a profile $P$ is single-peaked on some tree, and let ${\cal T}(P)$ be the set of all such trees 
	Then we can find in polynomial time 
	an element of ${\cal T}(P)$ with 
	(i)  a minimum number of leaves, 
	(ii) a minimum number of internal vertices, 
	(iii) a minimum diameter, 
	(iv) a minimum pathwidth, 
	(v) a minimum max-degree. 
We can also decide in polynomial time whether a profile is single-peaked on a star, 
caterpillar, lobster, or subdivision of a star.
\end{theorem}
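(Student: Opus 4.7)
The plan is to analyze the structure of $\mathcal{T}(P)$ by examining the freedom available in Trick's algorithm (\Cref{alg:trick-spt}). The only genuine choice in that algorithm is, for each leaf $a$ of the current tree-under-construction, which element $b$ of $B_a = \bigcap_{i\in N} B(v_i|_{A'}, a)$ to use as its neighbor. Thus $\mathcal{T}(P)$ can be identified with the set of trees obtainable by running the algorithm under all valid sequences of such choices. The first step is to establish a more global characterization: for any fixed processing order of leaves, the set $B_a$ computed at the moment $a$ is removed is the same, and therefore the attachment decisions can be made independently for different alternatives. This reduces each of the optimization tasks (i)--(v) to a constrained assignment problem: for each alternative $a$ that does not appear as a root of some subtree, pick a parent from $B_a$, subject to the constraints determined by $P$.

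For (i) and (v), I would argue that a purely greedy strategy suffices: to minimize the number of leaves, whenever $a$ is attached we prefer $b \in B_a$ that is currently a leaf (thereby converting it into an internal vertex); for minimum max-degree, we instead prefer $b$ of currently smallest degree among $B_a$. Optimality would follow from a local exchange argument showing that in any witness tree $T$, swapping the parent of $a$ to the greedy choice yields another tree in $\mathcal{T}(P)$ with no larger objective value. For (ii), which is equivalent to maximizing the number of leaves, the opposite greedy rule (always attach $a$ to a $b$ that already has an internal parent) should work by an analogous exchange. For (iii) minimum diameter, I would enumerate candidate centers and edge centers and, for each candidate, solve a feasibility version of the above assignment problem by a BFS-style argument. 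For (iv) minimum pathwidth, I would exploit the fact that the pathwidth of a tree is computable in linear time and that small-pathwidth trees are highly constrained (e.g.\ pathwidth~1 forces a caterpillar), so one can check each small pathwidth value by reducing to the special-class recognition problems below.

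For the special-class recognition questions, I would give direct structural characterizations and verify them using the set $\mathcal{T}(P)$ implicitly. For stars, \Cref{ex:spt-star} already shows that $P$ is single-peaked on a star iff there is an alternative $z$ ranked in the top two positions of every vote, which can be checked in $O(mn)$ time. For subdivisions of a star, I would check that at most one vertex in some $T \in \mathcal{T}(P)$ can have degree $\ge 3$; because $T$ must contain all forced edges induced by the bottom-ranked alternatives of $P$, this reduces to verifying that the ``forced skeleton'' has at most one branching vertex and extending greedily. For caterpillars (all non-leaves on a path) and lobsters (caterpillar after leaf deletion), I would first identify the alternatives that must be leaves, remove them, and recursively check whether the remaining profile is single-peaked on a path or caterpillar respectively, invoking \Cref{thm:recognizing-sp}.

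The main obstacle will be making the exchange arguments for (i)--(v) fully rigorous: one has to show that the local operation of redirecting $a$'s edge from its original parent $b'$ to the greedy choice $b$ preserves membership in $\mathcal{T}(P)$, i.e.\ preserves single-peakedness of \emph{every} vote. This requires understanding precisely which edge re-attachments are ``safe'', which in turn demands a careful analysis of the upper contour sets $\{c : c \pref_i a\}$ and how they interact with the swap. A secondary obstacle is the recursive/extension step for caterpillars, lobsters, and star-subdivisions: after peeling off forced leaves one must argue that the remaining tree's structure is forced consistently by $P$, which again reduces to an exchange-style argument on $\mathcal{T}(P)$.
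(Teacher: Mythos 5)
This theorem is imported from \citet{peters2020spt-journal}; the survey offers no proof beyond the remark that the results follow ``by studying \Cref{alg:trick-spt} more closely,'' which is indeed the route you take. Your opening move --- showing that the only freedom in Trick's algorithm is the choice of attachment point $b \in B_a$ for each placed alternative $a$, that these sets do not depend on the processing order, and that the choices can be made independently --- is exactly the key structural lemma of the cited work, so the skeleton of your plan is sound. Note, though, that even this step needs more care than you give it: within a single round several alternatives of $L$ are removed one after another, so when $B_a$ is computed the set $A'$ may or may not still contain other members of $L$, and the ``order-independence'' claim has to be proved against that.

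The genuine gaps are in how you exploit the independence. For (i), minimizing the number of leaves is equivalent to maximizing the number of \emph{distinct} vertices that are chosen as a parent by at least one $a$, i.e.\ a maximum bipartite matching between the placed alternatives and their sets $B_a$. Your greedy rule (``attach $a$ to a $b$ that is currently a leaf'') produces only a \emph{maximal} matching, and a single-edge exchange argument cannot certify optimality --- improving a maximal matching to a maximum one requires augmenting paths, not local swaps. The same objection applies to (ii) and to minimizing the max-degree in (v), which is a load-balancing/assignment problem rather than something a one-pass greedy with local exchange obviously solves; the cited paper handles these via matching/flow-type reductions. For (iv), your plan to ``check each small pathwidth value by reducing to the special-class recognition problems'' does not work: the pathwidth of a tree in $\mathcal{T}(P)$ can be as large as $\Theta(\log m)$, and while pathwidth $1$ does characterize caterpillars, pathwidth $k \ge 2$ has no such correspondence with lobsters or star subdivisions, so this gives neither a complete enumeration nor an algorithm. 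A dedicated dynamic program over the attachment structure is needed there. The special-class recognition sketches (star, subdivision of a star, caterpillar, lobster) are closer to workable, but the ``peel forced leaves and recurse'' step again silently relies on the unproved exchange claims.
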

Interestingly, \Cref{thm:nice-trees} does not extend to \emph{all} questions of this type.

\begin{theorem}[\citealp{peters2020spt-journal}]
	Given a profile $P$ over $A$ and a tree $T$ with $|V(T)| = |A|$, it is NP-complete to decide 
	whether the vertices of $T$ can be labeled with alternatives so that $P$ becomes single-peaked 
	on that (labeled) tree. 
\end{theorem}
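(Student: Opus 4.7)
The plan is to prove NP-hardness via a polynomial-time reduction from a well-known NP-hard problem, since NP-membership is immediate: a nondeterministic machine can guess a labeling $\lambda : V(T) \to A$ and then verify in $O(mn)$ time, using condition (4) of \Cref{prop:spt-equiv}, that for every voter $i$ and every candidate $c$, the upper contour set $\{a \in A : a \pref_i c\}$ induces a connected subgraph of the labeled tree.

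For the hardness direction, I would reduce from a suitable NP-hard problem such as \textsc{3-SAT} (or, alternatively, \textsc{Exact Cover by 3-Sets}). The idea is to design an unlabeled tree $T$ whose shape rigidly encodes the combinatorial structure of the instance: a ``backbone'' portion of $T$ (for example, a long path or a collection of paths joined at a hub) would correspond to variable assignments, while carefully placed branches of prescribed sizes or shapes would correspond to clauses. The profile $P$ would then be constructed out of ``gadget voters'' whose rankings force certain alternatives to be placed near particular vertices of $T$. Specifically, I would use two kinds of gadgets:
\begin{enumerate}[(i)]
\item \emph{Rigidity gadgets}, consisting of voters whose top-second pairs (cf.\ \Cref{prop:spt-leaf-attachment}) and whose upper contour sets pin down the local structure of the labeling on most of $T$, leaving only a small number of ``choice'' vertices whose labels remain to be decided.
\item \emph{Choice gadgets}, consisting of voters whose connectivity constraints on small sets of alternatives encode a selection from a Boolean variable's two truth values, or from a clause's three literals.
\end{enumerate}

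The reduction would then be argued in two directions. For the easy direction, given a satisfying assignment (or exact cover), I would construct an explicit labeling of $T$ by placing the alternatives dictated by the rigidity gadgets in their forced positions and using the assignment to fill in the remaining vertices; checking single-peakedness via \Cref{prop:spt-equiv} should follow by construction. The harder direction is the converse: given a valid labeling of $T$, extract a consistent Boolean assignment. Here one would argue, gadget by gadget, that the connectivity constraints force each choice vertex to be labeled from the ``allowed'' set for its associated variable, and that the clause gadgets enforce satisfaction.

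The main obstacle I anticipate is engineering the gadgets so that the rigidity part really does pin down the labeling up to the small remaining choices, while also ensuring that the ``no'' instances cannot sneak through via unintended symmetries of $T$. Trees are quite flexible combinatorial objects, and a labeling of $T$ can in principle permute any automorphic portion freely; the rigidity gadgets must therefore be strong enough to break all such symmetries, which typically requires attaching asymmetric ``marker'' subtrees (say, paths of pairwise distinct lengths) at every branching point of the backbone. Once this is done, the reduction size should remain polynomial, completing the proof.
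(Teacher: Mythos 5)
Your NP-membership argument is fine, but the hardness half of your proposal is not a proof: it is a description of what a proof would look like. No gadget is actually constructed --- there is no concrete tree $T$, no concrete profile $P$, no specification of which alternatives the ``rigidity'' voters rank where, and no verification of either direction of the reduction. The difficulty you flag at the end (making the rigidity gadgets strong enough to break all automorphisms of $T$ while still leaving exactly the intended choices open) is not a technical afterthought; it is the entire mathematical content of the theorem. In particular, \Cref{prop:spt-leaf-attachment} and the analysis of \Cref{alg:trick-spt} show that the constraints a profile imposes on its trees are quite coarse --- essentially, each successively removed bottom-ranked alternative must attach to \emph{some} element of its attachment set $B(a)$ --- so it is far from clear that voter gadgets can ``pin down the local structure of the labeling on most of $T$'' in the way you assume. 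Until you exhibit a profile for which you can prove that the set of admissible trees is exactly the family you intend, the reduction does not exist.

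It is also worth noting that your plan puts the encoding in the wrong place relative to how the cited proof proceeds. The survey only states this result; the argument in \citet{peters2020spt-journal} builds on the structural characterization of the set $\mathcal{T}(P)$ of \emph{all} trees on which a given profile is single-peaked (the same machinery underlying \Cref{thm:nice-trees}): one designs a profile whose tree set is transparent and leaves a controlled collection of free attachment choices, and then encodes a hard combinatorial selection problem in the question of whether those free choices can be matched to the \emph{shape} of the prescribed tree $T$. In other words, the instance of the hard problem lives in the interaction between the attachment sets $B(a)$ and the degree/branch structure of $T$, rather than in clause gadgets realized by voters. Your inverted approach (hard instance encoded in $T$'s shape, profile used only to rigidify) might be salvageable, but you would first need to prove a rigidity lemma of the form ``for this profile, every admissible tree restricted to the backbone is unique up to the intended choices,'' and that lemma is exactly what is missing.
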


\citet{peters2020spt-journal} 
also show that it is NP-complete to decide whether a profile is single-peaked on a tree whose 
non-leaf vertices all have degree~4.

\citet{escoffier2020recognizing} consider the recognition problem in the framework of single-peaked 
preferences on general graphs. One of their contributions is a linear program that can be used to 
recognize preferences single-peaked on a tree. More generally, they describe an integer linear program 
for finding a graph $G$ with a minimum number of edges or a minimum degree
such that the input profile is single-peaked on $G$, but prove that these
problems are NP-hard for general graphs.

\subsection{Algorithms for Preferences Single-Peaked on a Circle}
\label{sec:recog:spc}

In the beginning of \Cref{sec:recog:sp}, we saw that single-peaked preferences can be recognized 
using a straightforward reduction to the consecutive ones problem, which is known to be polynomial-time 
solvable. For the task of recognizing preference profiles that are single-peaked on a circle, 
the same reduction yields an instance of the \emph{circular} ones problem, 
which is also polynomial-time solvable by reduction to the consecutive ones problem 
\citep{booth1976testing,dom2009consecutive}. The resulting time complexity is $O(m^2n)$. This approach 
also works for weak orders. For the case of linear orders, \citet{peters2017spoc} develop a linear-time 
algorithm that runs in $O(mn)$ time, using a reduction to the problem of deciding
whether certain profiles of weak orders are single-peaked.

\subsection{Algorithms for Preferences Single-Crossing on a Tree}
\label{sec:recog:sct}

The observations that helped us develop efficient algorithms for recognizing single-crossing preference
profiles are also useful for recognizing preference profiles that are single-crossing on a tree.
The approach described below can be viewed as a specialization of the algorithm
by \citet{clearwater2015generalizing} (which works for all median graphs)
to the special case of trees; for trees, this approach results in a particularly
simple algorithm.

It will be convenient to assume that in the input profile every vote occurs at most once;
the following observation shows that this assumption is without loss of generality.

\begin{proposition}
Consider a profile $P$, and let $\overline{P}$ be a profile that contains exactly 
one copy of each linear order that appears in $P$. 
Then $P$ is single-crossing on a tree if and only if $\overline{P}$ is 
single-crossing on a tree.
\end{proposition}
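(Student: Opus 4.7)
The plan is to show that, in any tree witnessing the single-crossing property of $P$, the voters who cast the same vote form a connected subtree, and then to pass between $P$ and $\overline{P}$ by contracting and expanding these subtrees.

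\textbf{Key lemma.} Suppose $P$ is single-crossing on $T=(N,E)$, and fix a vote $v$ that appears in $P$. I claim that the set $N_v = \{i\in N : v_i = v\}$ induces a connected subtree of $T$. Indeed, take $i,j\in N_v$ and any voter $k$ on the unique $i$--$j$ path in $T$. For every pair $\{a,b\}\subseteq A$, voters $i$ and $j$ agree on $\{a,b\}$, say $a\succ_i b$ and $a\succ_j b$. By the single-crossing property, the set $\{\ell : a\succ_\ell b\}$ is connected in $T$ and contains $i,j$, hence also contains the entire $i$--$j$ path; in particular $a\succ_k b$. Since this holds for every pair, $v_k = v$, so $k\in N_v$.

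\textbf{Forward direction.} Assume $P$ is single-crossing on $T$. Contract each subtree $N_v$ of $T$ to a single vertex; by the key lemma, each $N_v$ is connected, so the result is a tree $\overline{T}$ whose vertices are exactly the distinct votes of~$P$, i.e.\ the voters of $\overline{P}$. For any pair $\{a,b\}$, the set $\overline{S}_{ab} = \{v : a \succ_v b\}$ in $\overline{T}$ is the image under contraction of the connected set $S_{ab} = \{i\in N : a\succ_i b\}$, which is a union of entire classes $N_v$, so $\overline{S}_{ab}$ is connected in $\overline{T}$. Hence $\overline{P}$ is single-crossing on $\overline{T}$.

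\textbf{Reverse direction.} Assume $\overline{P}$ is single-crossing on a tree $\overline{T}$ whose vertices are the distinct votes of $P$. Build $T$ on $N$ by replacing each vertex $v$ of $\overline{T}$ with an arbitrary tree $T_v$ on the voters in $N_v$ (e.g.\ a path), and for each edge $\{v,v'\}$ of $\overline{T}$ adding a single edge between one vertex of $T_v$ and one vertex of $T_{v'}$. The result is connected with $\sum_v(|N_v|-1) + (|V(\overline{T})|-1) = |N|-1$ edges, hence a tree. For any pair $\{a,b\}$, the set $S_{ab}$ in $T$ is the union $\bigcup_{v\in\overline{S}_{ab}} N_v$; since each $N_v$ is connected in $T$ and $\overline{S}_{ab}$ is connected in $\overline{T}$ (which is precisely the contraction of $T$ obtained by collapsing each $N_v$), $S_{ab}$ is connected in $T$. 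Thus $P$ is single-crossing on $T$.

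\textbf{Main obstacle.} The only nontrivial step is the key lemma; everything else is bookkeeping about contractions of trees. I expect the lemma to follow directly from the fact that in a tree, a set is connected iff it is closed under taking paths between any two of its members.
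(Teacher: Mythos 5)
Your proof is correct and follows essentially the same route as the paper's: contract the classes of identical votes to pass from $P$ to $\overline{P}$, and expand each vertex of $\overline{T}$ into a subtree on the corresponding voters for the converse. The only difference is that you explicitly prove the key lemma (that each class $N_v$ induces a connected subtree), which the paper merely asserts; your argument for it is correct.
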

\begin{proof}
If $\overline{P}$ is single-crossing on a tree $\overline{T}$,
we can transform $\overline{T}$ into a tree $T$ such that $P$
is single-crossing on $T$, as follows. Consider a vote $v$ that occurs
$k$ times in $P$ (and once in $\overline{P}$); assume for convenience that $v_1=\dots=v_k=v$.
We create a path of length $k-1$, label its vertices
as $1, \dots, k-1$, connect $k-1$ to the node of $\overline{T}$ that is associated
with the unique voter in $\overline{P}$ whose vote is equal to $v$, 
and label that node as $k$. By repeating this step for each vote
that occurs in $\overline{P}$, we obtain a tree for $P$.

Conversely, suppose that $P$ is single-crossing on a tree $T$. 
Then each set of nodes of $T$ that corresponds to a group
of voters with identical preferences must form a subtree of $T$.
By contracting this set of nodes into a single node, we obtain a tree
$\overline{T}$ such that $\overline{P}$ is single-crossing on $\overline{T}$.
\end{proof}

Now, given a profile $P$ in which each linear order occurs at most once, 
we construct a graph $G_P$ with vertex set $N$ so that
there is an edge between $i$ and $j$ if and only if $P$ does not
contain a vote $v_k$ such that $K[i, k]+K[k, j]=K[i, j]$.
We will now argue that if this graph is a tree, it is exactly 
the `right' tree for our profile.

First, we observe that $G_P$ is connected. 
\begin{proposition}\label{prop:GP-connected}
The graph $G_P$ is connected.
\end{proposition}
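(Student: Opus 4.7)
The plan is to prove connectedness by strong induction on the Kendall-tau distance $K[i,j]$ between pairs of voters $i,j \in N$. Since every vote in $P$ is assumed to occur at most once, $K[i,j] = 0$ holds if and only if $i = j$, which gives the trivial base case.

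For the inductive step, fix $i, j \in N$ with $K[i,j] \geq 1$ and assume that any two voters whose votes are at Kendall-tau distance strictly less than $K[i,j]$ are connected in $G_P$. If $\{i,j\}$ is an edge of $G_P$, we are done. Otherwise, by the definition of $G_P$, there is some voter $k$ with $K[i,k] + K[k,j] = K[i,j]$. Because all votes are distinct, $v_k \neq v_i$ and $v_k \neq v_j$, hence $K[i,k] \geq 1$ and $K[k,j] \geq 1$, which forces both $K[i,k]$ and $K[k,j]$ to be strictly smaller than $K[i,j]$. The inductive hypothesis then yields a path from $i$ to $k$ and a path from $k$ to $j$ in $G_P$; concatenating them gives a path from $i$ to $j$.

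There is really no main obstacle to speak of: the entire argument rides on the equality case of the Kendall-tau triangle inequality from Proposition~\ref{prop:kendall-tau-triangle-ineq}, which guarantees that whenever $\{i,j\}$ fails to be an edge, we can ``interpolate'' through a strictly closer voter $v_k$. The only mild subtlety is using the distinctness of votes to ensure $K[i,k]$ and $K[k,j]$ are both positive so the induction strictly decreases; this is why we may assume without loss of generality that each linear order appears at most once in~$P$.
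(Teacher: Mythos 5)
Your proof is correct and is essentially the paper's argument in a different guise: the paper takes a minimal counterexample (the pair of voters in different components at minimum Kendall-tau distance) and derives a contradiction from the interpolating voter $v_k$, whereas you run strong induction on $K[i,j]$ using the very same interpolation step and the same appeal to distinctness of votes for strict decrease. The two formulations are interchangeable, so nothing further is needed.
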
 
\begin{proof}
Suppose for the sake of contradiction that $G_P$ has $s>1$ connected
components, namely, $G_1, \dots, G_s$. Let 
$q$ be the minimum Kendall-tau distance between two votes that lie
in different connected components.
Fix a pair of voters $i$ and $j$ that lie in different connected
components (say, $G_\ell$ and $G_r$) and satisfy $K[i, j]=q$. Since there is no edge
between $i$ and $j$, the profile $P$ contains a vote $v_k$
with $K[i, k]+K[k, j]=K[i, j]$. Since all votes in $P$ are distinct, 
all distances are positive, so $K[i, k]<q$, $K[k, j]<q$. By our choice of $q$, 
the first of these inequalities means that $k$ must be in $G_\ell$, while 
the second of these inequalities means that $k$ must be in $G_r$, a contradiction.
\end{proof}

Given this observation, we know that $G_P$ is a tree if and only if it is acyclic.
The following observation provides additional insights into the structure of $G_P$.

\begin{proposition}\label{prop:GP-convex}
Let $i$ and $j$ be two non-adjacent vertices of $G_P$.
Then there is a path in $G_P$ between $i$ and $j$ such 
that for each vertex $k$ on this path we have $K[i, k]+K[k, j]=K[i, j]$.
\end{proposition}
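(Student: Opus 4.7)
The plan is to proceed by strong induction on $K[i,j]$, using the definition of $G_P$ together with the equality characterization of the triangle inequality from \Cref{prop:kendall-tau-triangle-ineq}.

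First, observe that since all votes in $P$ are pairwise distinct, $K[i,j]>0$ for any two distinct voters $i\ne j$; in particular the base case $K[i,j]=1$ is vacuous, because then no voter $k\neq i,j$ can satisfy $K[i,k]+K[k,j]=K[i,j]$, so $i$ and $j$ must be adjacent in $G_P$. Now assume the statement holds for all non-adjacent pairs at Kendall-tau distance strictly less than $K[i,j]$, and consider a non-adjacent pair $i,j$. By the definition of the edges of $G_P$, there is some voter $k$ with $K[i,k]+K[k,j]=K[i,j]$; since $v_i,v_j,v_k$ are pairwise distinct this forces $K[i,k]<K[i,j]$ and $K[k,j]<K[i,j]$.

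Next I would build paths from $i$ to $k$ and from $k$ to $j$ separately: each is either a single edge (if the endpoints are adjacent in $G_P$) or is provided by the inductive hypothesis. Concatenating these yields a walk from $i$ to $j$ in $G_P$; after shortcutting any repeated vertex we obtain a path $i=k_0,k_1,\dots,k_t=j$. The key step is to check that every vertex on this path satisfies the required equality $K[i,k_s]+K[k_s,j]=K[i,j]$. For a vertex $k'$ coming from the $i$-to-$k$ segment we have $K[i,k']+K[k',k]=K[i,k]$ by the inductive hypothesis, so
\[
K[i,k']+K[k',j]\;\le\;K[i,k']+K[k',k]+K[k,j]\;=\;K[i,k]+K[k,j]\;=\;K[i,j],
\]
while the triangle inequality gives $K[i,j]\le K[i,k']+K[k',j]$; hence equality. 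The argument for vertices on the $k$-to-$j$ segment is symmetric, using that $\Delta(v_i,v_j)\supseteq\Delta(v_i,v_k)$ (from the equality case of \Cref{prop:kendall-tau-triangle-ineq}) chains with $\Delta(v_i,v_k)\supseteq\Delta(v_i,v_{k'})$ to give $\Delta(v_i,v_j)\supseteq\Delta(v_i,v_{k'})$.

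The only subtle point, and the place where care is needed, is ensuring that the inductive hypothesis actually applies to the two shorter instances $(i,k)$ and $(k,j)$: this is where the fact that all votes are distinct is used, to guarantee that the intermediate voter $k$ strictly decreases the Kendall-tau distance on both sides. Everything else reduces to routine applications of the triangle inequality and its equality case.
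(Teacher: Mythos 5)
Your proof is correct and follows essentially the same route as the paper's: the paper phrases it as a minimal-counterexample argument on $K[i,j]$ rather than explicit strong induction, but the decomposition through an intermediate voter $k$ supplied by non-adjacency, the use of distinctness of votes to get a strict decrease on both sides, and the triangle-inequality verification for every vertex on the concatenated path are identical (you are in fact slightly more careful in noting that the concatenated walk must be shortcut to a path). One small slip in an aside: the $\Delta$-containment chain you cite for the $k$-to-$j$ segment actually pertains to the $i$-to-$k$ segment; for the second segment the plain symmetric computation $K[i,k']+K[k',j]\le K[i,k]+K[k,k']+K[k',j]=K[i,k]+K[k,j]=K[i,j]$, combined with the reverse triangle inequality, is the argument you want.
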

\begin{proof}
Suppose for the sake of contradiction that this is not the case; among all
pairs $(i, j)$ that violate this property, pick one with the minimum Kendall-tau distance
$K[i, j]$ and let $q=K[i, j]$. 
Since the edge $\{i, j\}$ is not in $G_P$, there exists a voter $v_k$ in $P$
such that $K[i, k]+K[k, j]=K[i, j]$. Since all votes in $P$ are distinct, 
we have $K[i, k]<q$. Hence there exists an $i$--$k$
path in $G_P$ 
such that for each vertex $\ell$ on this path 
we have $K[i, \ell]+K[\ell, k]=K[i, k]$. 
By the triangle inequality, we have
$$
K[i, \ell]+K[\ell, j]\le K[i, \ell]+K[\ell, k]+K[k, j]=K[i, k]+K[k, j]=K[i, j].
$$
On the other hand, by the triangle inequality we obtain 
$K[i, j]\le K[i, \ell]+K[\ell, j]$ and hence 
$K[i, \ell]+K[\ell, j]=K[i, j]$.
Similarly, since $K[k, j]<q$, there exists a $k$--$j$ path in $G_P$
such that for each vertex $\ell'$ on this path 
we have $K[k, \ell']+K[\ell', j]=K[k, j]$; arguing as above,
we conclude that this implies $K[i, \ell']+K[\ell', j]=K[i, j]$. 
By combining these two paths, 
we obtain an $i$--$j$ path with the desired property,
a contradiction.
\end{proof}

We will now show that cycles in $G_P$ correspond to violations
on the single-crossing property.

\begin{theorem}\label{thm:sct-criterion}
If $P$ is single-crossing on a tree then 
$G_P$ is a tree, and $P$ is single-crossing on $G_P$. 
Conversely, if $G_P$ is a tree, then $P$ is single-crossing on $G_P$.
\end{theorem}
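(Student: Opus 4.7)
My plan is to prove the theorem via a single intermediate lemma that characterizes, in the presence of a single-crossing tree structure, the triples $(i,k,j)$ satisfying the equality $K[i,k]+K[k,j]=K[i,j]$. This will give the forward direction essentially for free, and the converse will follow by combining \Cref{prop:GP-convex} with the equality condition from \Cref{prop:kendall-tau-triangle-ineq} (recall that $K[i,k]+K[k,j]=K[i,j]$ iff $\Delta(v_i,v_k)\subseteq\Delta(v_i,v_j)$).

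\textbf{Key lemma.} If $P$ is single-crossing on a tree $T$ and all votes are distinct, then for voters $i,j,k$ with $k\notin\{i,j\}$ we have $K[i,k]+K[k,j]=K[i,j]$ if and only if $k$ lies on the unique $i$--$j$ path in $T$. For the ``if'' direction, take any pair $(a,b)\in\Delta(v_i,v_k)$: the set of voters preferring $a$ to $b$ is a subtree of $T$ containing $i$ but not $k$, and since $k$ separates $i$ from $j$ on the $i$--$j$ path, this subtree cannot contain $j$ either, giving $(a,b)\in\Delta(v_i,v_j)$. For the ``only if'' direction, assume $k$ is off the $i$--$j$ path. Let $\ell$ be the last vertex shared by the $i$--$k$ and $i$--$j$ paths in $T$, and let $\ell'$ be the next vertex on the $\ell$--$k$ path, so $\{\ell,\ell'\}$ is an edge and removing it splits $T$ into two components, with $i,j$ on the $\ell$-side and $k$ on the $\ell'$-side. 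Since all votes are distinct, $v_\ell\neq v_{\ell'}$, so there is a pair $(a,b)$ with $a\succ_\ell b$ and $b\succ_{\ell'}a$; by the single-crossing property on $T$, this pair witnesses $(a,b)\in\Delta(v_i,v_k)\setminus\Delta(v_i,v_j)$, so $\Delta(v_i,v_k)\not\subseteq\Delta(v_i,v_j)$ and equality fails.

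\textbf{Forward direction.} Given the lemma, the edge $\{i,j\}$ belongs to $G_P$ iff there is no $k\neq i,j$ with $K[i,k]+K[k,j]=K[i,j]$, iff there is no $k\neq i,j$ on the $i$--$j$ path in $T$, iff $\{i,j\}\in E(T)$. Hence $G_P=T$, so $G_P$ is a tree and $P$ is single-crossing on $G_P$.

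\textbf{Converse direction.} Assume $G_P$ is a tree. Fix a pair $a,b\in A$ and let $S=\{k\in N:a\succ_k b\}$; we must show that $S$ is connected in $G_P$. Take $i,j\in S$. Since $G_P$ is a tree, the $i$--$j$ path is unique; by \Cref{prop:GP-convex}, every intermediate vertex $k$ on this path satisfies $K[i,k]+K[k,j]=K[i,j]$, hence $\Delta(v_i,v_k)\subseteq\Delta(v_i,v_j)$ by \Cref{prop:kendall-tau-triangle-ineq}. If $b\succ_k a$, then $(a,b)\in\Delta(v_i,v_k)\subseteq\Delta(v_i,v_j)$, forcing $b\succ_j a$ and contradicting $j\in S$. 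So $k\in S$, and $S$ is connected, showing that $P$ is single-crossing on $G_P$.

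\textbf{Main obstacle.} The only delicate step is the ``only if'' direction of the key lemma: one must locate the correct divergence vertex $\ell$ and extract a concrete separating pair $(a,b)$ from an edge $\{\ell,\ell'\}$ of $T$. The use of the all-votes-distinct assumption is essential here, as it guarantees that at least one pair crosses along the edge $\{\ell,\ell'\}$; without distinctness this step can fail, which is precisely why the reduction to a distinct-votes profile was carried out beforehand.
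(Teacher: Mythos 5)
Your proposal is correct, and it reaches the result with the same basic toolkit as the paper (additivity of Kendall--tau distances along paths, the equality case of \Cref{prop:kendall-tau-triangle-ineq}, and \Cref{prop:GP-convex}), but it organizes the argument differently in both directions. For the forward direction, the paper proves the two inclusions $E(T)\subseteq E(G_P)$ and $E(G_P)\subseteq E(T)$ separately, each time restricting to a path in $T$ containing the three relevant voters and using that the restricted profile is single-crossing in the given order; your version instead proves a single biconditional lemma -- the triangle equality $K[i,k]+K[k,j]=K[i,j]$ holds iff $k$ lies on the $i$--$j$ path in $T$ -- from which $G_P=T$ is immediate. Your ``only if'' direction does more work than the paper's corresponding step (the paper only needs to treat adjacent $i,j$, whereas you handle arbitrary $i,j$ by locating the branch vertex $\ell$ and extracting a crossing pair on the edge $\{\ell,\ell'\}$), but in exchange you obtain a complete description of the betweenness structure of $G_P$, not just its edge set; your separating-pair construction is sound, and you correctly flag that distinctness of votes is exactly what guarantees the pair exists. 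For the converse, the paper argues by taking a minimum-length path on which single-crossingness fails and deriving a strict inequality for every internal vertex, contradicting \Cref{prop:GP-convex}; your argument is more direct, verifying connectivity of each set $\{k: a\succ_k b\}$ along the unique tree path via \Cref{prop:GP-convex} together with the containment $\Delta(v_i,v_k)\subseteq\Delta(v_i,v_j)$, and it avoids the extremal argument entirely (note only that, by symmetry in $a$ and $b$, the same reasoning covers the complementary set, as \Cref{def:sct} requires). Overall your converse is arguably cleaner than the paper's.
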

\begin{proof}
Suppose that $P$ is single-crossing on some tree $T$. Let $i$ and $j$
be two adjacent vertices of $T$. We claim that $i$ and $j$ are adjacent in $G_P$.
Indeed, suppose for the sake of contradiction that $K[i, j] = K[i, k] + K[k, j]$
for some $k\in N$. Consider the path connecting $i$, $j$ and $k$ in $T$.
Since $i$ and $j$ are adjacent, this path is of the form $i$--$j$--$\dots$--$k$
or $j$--$i$--$\dots$--$k$; without loss of generality, assume the former is true.
But then the profile $(v_i, v_j, v_k)$ must be single-crossing
in the given order, and hence $K[i, j]+K[j, k] = K[i, k]$.
Substituting $K[i, j] = K[i, k]+K[k, j]$, we obtain
$K[j, k] = 0$, a contradiction with the assumption that all votes in $P$ are distinct. 
Thus, $T$ is a subgraph of $G_P$.

On the other hand, suppose that $i$ and $j$ are adjacent in $G_P$. We claim that
if $P$ is single-crossing on some tree $T$ then $i$ and $j$
are adjacent in $T$. Indeed, suppose some node $k$ appears on the 
(unique) path from $i$ to $j$ in $T$. Then the profile 
$(v_i, v_k, v_j)$ is single-crossing in the given order and hence
$K[i, k]+K[k, j]=K[i, j]$, a contradiction with $i$ and $j$
being adjacent in $G_P$. Thus, $G_P$ must be a subgraph of every tree $T$
such that $P$ is single-crossing on $T$. 

We conclude that if $P$ is single-crossing on a tree $T$, then $T=G_P$. 
In particular, this means that if $G_P$ contains a cycle 
then $P$ is not single-crossing on a tree.

Conversely, suppose that $G_P$ is a tree. We will argue that in this case
$P$ is single-crossing on $G_P$. Suppose for the sake of contradiction
that this is not the case, and let $i$--$\dots$--$j$ be a minimum-length
path in $G_P$ such that the profile $(v_i, \dots, v_j)$ obtained
by restricting $P$ to this path is not single-crossing in the given order.
Let $I$ denote the set of internal vertices of this path.
By minimality of this path, there exists a pair of alternatives
$(a, b)$ such that each vote $v_\ell$ with $\ell\in I$ disagrees with both $v_i$ and $v_j$
on this pair. Hence, $K[i, \ell]+K[\ell, j]>K[i, j]$ for each $\ell\in I$.
However, by \Cref{prop:GP-convex}, there is an $i$--$j$
path in $G_P$ such that $K[i, \ell]+K[\ell, j]=K[i, j]$ for every vertex
$\ell$ of this path. Since $G_P$ is a tree, and hence there is a unique $i$--$j$
path in $G_P$, we obtain a contradiction.
\end{proof}

The proof of \Cref{thm:sct-criterion} establishes
that if $P$ is single-crossing on a tree and 
contains at most one occurrence of each linear order
then it is single-crossing on a unique tree, namely, $G_P$.
Of course, this is no longer the case if $P$ may contain
multiple copies of the same linear order, as the respective
voters can be arranged in many different ways (with the constraint 
that they form a subtree of the resulting tree).

Further, \Cref{thm:sct-criterion} provides an efficient way
to decide if the input profile is single-crossing on a tree
and, if yes, to construct a suitable tree: all we need to do
is to build the graph $G_P$ and check whether it is acyclic.
A straightforward implementation of this approach results 
in an algorithm that runs in time $O(n^3+n^2m\log m)$.

We note that there are other approaches to detecting profiles that are single-crossing
on trees. For instance, \citet{kung2015sorting} describes a recursive algorithm, 
which works by splitting voters into cells based on their preferences.
That is, at the first step it picks a pair of alternatives $\{a, b\}$ and 
creates two cells---one containing voters who prefer $a$ to $b$
and one containing voters who prefer $b$ to $a$. It proceeds recursively in this manner
until each cell contains exactly one voter; the sequence of `cuts'
determines the edges of the tree. Yet another approach 
is proposed by \citet{clearwater2014single}, who start by identifying a leaf
of the tree, and then proceed recursively. Both procedures can be
implemented in polynomial time.

\subsection{Algorithms for Group-Separable Preferences}
\label{sec:recog:group-sep}
Recall the characterization of group-separable profiles in terms of their canonical
clone tree decompositions: a profile is group-separable if and only if its 
canonical clone tree decomposition does not contain P-nodes \citep{karpov:j:gs}.
Since a canonical clone tree decomposition of a given profile can be 
computed in linear time \citep{elkind2012clone}, we obtain a linear-time
algorithm for recognizing group-separable preferences. We will
now describe a simpler (but slower) algorithm, which does not require
the full power of PQ-trees.

\begin{theorem}\label{thm:gs-alg}
Given a profile $P$ over a set of alternatives $A$, 
we can decide in polynomial time whether $P$
is group-separable.
\end{theorem}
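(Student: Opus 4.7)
The plan is to give a straightforward divide-and-conquer algorithm that exploits the recursive nature of the definition of group separability. The key observation is the following recursive characterization: $P$ is group-separable over $A$ if and only if $|A|\le 1$, or there exists a proper non-empty subset $B\subset A$ such that (a) every voter ranks $B$ entirely above $A\setminus B$ or entirely above $B$, and (b) both $P|_B$ and $P|_{A\setminus B}$ are group-separable. Given this characterization, the algorithm proceeds recursively: if $|A|\le 1$ return \textsf{yes}; otherwise, search for a partition satisfying~(a), and if one is found, recurse on both sides.

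The main step to justify is that the choice of top-level partition in~(a) does not matter, i.e., if $P$ is group-separable and $(B,A\setminus B)$ is \emph{any} partition satisfying~(a), then both $P|_B$ and $P|_{A\setminus B}$ are group-separable. This direction is easy: for each $A'\subseteq B$ with $|A'|\ge 2$, the group separability of $P$ already supplies a valid partition of $A'$, and this partition depends only on the votes restricted to $A'$, which coincide in $P$ and in $P|_B$. The converse direction needed for correctness is equally direct by case analysis: given $(B,A\setminus B)$ satisfying~(a) together with the group separability of both restrictions, any $A'\subseteq A$ with $|A'|\ge 2$ is handled by recursion if $A'\subseteq B$ or $A'\subseteq A\setminus B$, and otherwise $(A'\cap B,\, A'\cap(A\setminus B))$ is itself a valid partition of $A'$ inherited from~(a).

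It remains to search for a partition satisfying~(a) in polynomial time. Here the important observation is that in any such partition, the set $B$ (or its complement) must appear as a prefix of voter~$1$'s ranking of $A$: indeed, voter~$1$ ranks one of $B,A\setminus B$ entirely above the other. Hence there are at most $|A|-1$ candidates, one per prefix length of $v_1$, and for each candidate $B$ we can check condition~(a) in $O(|A|\cdot n)$ time by scanning each vote once and verifying that $B$ occupies either the top $|B|$ or the bottom $|B|$ positions. This gives an $O(m^2n)$ cost per recursive call, and since the recursion tree has $O(m)$ leaves and depth $O(m)$, a crude bound yields an overall runtime of $O(m^3n)$, which is polynomial.

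The main obstacle is precisely the correctness of the recursion, i.e., ensuring that a greedy commitment to the first valid partition we find does not lose information. This is resolved by the observation above, which can also be viewed as a consequence of \citeauthor{karpov:j:gs}'s characterization of group-separable profiles in terms of their (unique) clone decomposition tree: any valid top-level partition corresponds to a Q-node splitting, and the algorithm is implicitly reconstructing such a tree. A more sophisticated implementation using the PQ-tree construction of \citet{elkind2012clone} would in fact recover the full clone decomposition in linear time and test in $O(mn)$ whether it contains any P-nodes, but the simple recursive algorithm above already suffices to establish the claimed polynomial bound.
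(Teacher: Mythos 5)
Your proposal is correct and follows essentially the same route as the paper's proof: a top-down recursive splitting algorithm whose candidate partitions are the prefixes of voter~1's ranking, with correctness of the greedy commitment justified by hereditariness in one direction and, in the other, by observing that any subset straddling a chosen split inherits a valid partition from that split (your inductive case analysis is the same argument the paper phrases via the minimal recursion-tree node containing $A'$). The explicit $O(m^3n)$ bound and the remark about the PQ-tree/clone-decomposition alternative are consistent with what the paper states.
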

\begin{proof}
We can assume that the first voter in $P$ ranks the alternatives as
$a_1\succ\dots\succ a_m$. For each $i\in [m]$, let $A_i=\{a_1, \dots, a_i\}$.
We first check if there is an alternative 
$a_i$, $1\le i\le m-1$, such that each voter in $P$ either ranks
$A_i$ above $A\setminus A_i$ or ranks $A\setminus A_i$ above $A_i$.
If no, then $P$ is clearly not group-separable. If yes, we recurse
on $A_i$ and $A\setminus A_i$; we declare a success if 
at each level of recursion we manage 
to split each non-singleton set into two clone sets. 

Now, clearly, if at any step our recursive procedure fails to split
a set of alternatives into two clone sets, then the input profile is not 
group-separable. To prove the converse, we need to argue that, if the algorithm
succeeds, then any non-singleton set 
$A'\subseteq A$---and not just the sets explicitly considered
by our algorithm---can be split into two clone sets. Consider some such set $A'$.
Let $C$ be the minimal set among the sets considered by the algorithm such that
$A'\subseteq C$ (we may have $C=A$). Suppose that our algorithm partitions
$C$ as $B$ and $C\setminus B$. Note that by our choice of $C$ we have
$A'\cap B\neq\varnothing$ and $A'\cap(C\setminus B)\neq\varnothing$.
But then each voter who ranks $B$ above $C\setminus B$ places
$A'\cap B$ above $A'\cap (C\setminus B)$, and
each voter who ranks $C\setminus B$ above $B$ places
$A'\cap (C\setminus B)$ above $A'\cap B$, i.e., 
both $A'\cap B$ and $A'\cap (C\setminus B)$ form clone sets
in $P|_{A'}$. This completes the proof.
\end{proof}

Note that the procedure in the proof of \Cref{thm:gs-alg} 
implicitly constructs a binary tree $T$ whose set of leaves 
is $A$. We can interpret this tree as a PQ-tree; however, $T$
need not be a clone decomposition tree of $P$, as there 
may exist sets of alternatives that are clone sets
with respect to $P$, but do not correspond to any of the nodes
of $P$. Moreover, the tree $T$ is not unique.
For instance, suppose that $P$ consists of a single vote, 
namely, $a\succ b \succ c \succ d$. In the first iteration, 
the algorithm may split $A=\{a, b, c, d\}$ as $\{a, b\}$
and $\{c, d\}$, and then split both of these sets into
singletons, resulting in a balanced binary tree. Alternatively, 
in the first iteration it may split $A$ as $\{a\}$, 
and $\{b, c, d\}$, then split $\{b, c, d\}$ as $\{b\}$
and $\{c, d\}$, and finally split $\{c, d\}$, as $\{c\}$
and $\{d\}$; the corresponding tree is a caterpillar.
In either case, there are clones in $P$ that are not represented
by the resulting tree. In fact, the canonical clone
decomposition tree for this profile is not binary:
it is simply a tree that has a Q-node as a root and all four
alternatives as its children.

\subsection{Nearly Structured Preferences}
\label{sec:recog:almost}
Most preference profiles are not structured according to the notions that we have considered so far. This 
is true both in a probabilistic sense (see \Cref{sec:further-topics} on counting and probability) and for 
preferences observed in practice \citep{mattei2012empirical}. For example, PrefLib, the widely used 
reference library of preference data, does not contain any profiles of linear orders that are 
single-peaked~\citep{MW-trends}. Still, we might hope that preference data is \emph{almost} structured, in 
the sense of being very close to a structured profile according to some metric. In this section, we will 
consider several notions of being almost structured, and discuss the computational complexity of finding a 
structured profile that is as close as possible to an input profile.

\subsubsection{Voter and Alternative Deletion}

Suppose we hold an election over some numerical quantity, such as a tax rate. As we discussed in 
\Cref{sec:def:sp}, we may expect everyone's preferences to be single-peaked with 
respect to the natural axis. Yet we find this is not the case; the actual preferences are not 
single-peaked. What happened? A first suspicion could be that one of the voters got confused while voting, 
or maybe that a few of the voters do not understand what a tax is, or that some voters hold non-standard 
theories of taxation. If this is the reason, then we would expect that most of the voters have submitted 
single-peaked preferences, and we only need to delete very few voters from the input profile to obtain a 
single-peaked profile. This leads us to the following computational problem, defined for any preference 
domain $\Gamma$.

\problem{$\Gamma$ Voter Deletion}{A preference profile $P$, and an integer $k\ge 0$.}
{Can we delete at most $k$ voters from $P$ to obtain a profile that satisfies $\Gamma$?}

Another reason why our profile may fail to be single-peaked could be the presence of 
a small number of alternatives that do not quite fit onto our one-dimensional axis, 
or, alternatives for which insufficient information is available for voters to place them on the axis. 
For example, voters may reject the tax rate $20\%$ out of principle as it corresponds 
to the undesirable status quo; deleting this option could reveal a single-peaked profile.
This example suggests the following problem.

\problem{$\Gamma$ Alternative Deletion}{A preference profile $P$ over $A$, and an integer $k\ge 0$.}
{Can we delete at most $k$ alternative from $A$ so that $P|_{A'}$ satisfies $\Gamma$?}

The complexity of these problems has been studied for several domain restrictions. 
In particular, for single-peaked and single-crossing preferences we obtain very intuitive results: 
for the former domain (which is defined in terms of an order of alternatives)
the alternative deletion problem is easy and the voter deletion problem is hard, 
whereas for the latter domain (which is defined in terms of an ordering of the voters), 
the alternative deletion problem is hard and the voter deletion problem is easy. 
In what follows, let $\Gammasp$ denote the domain of single-peaked preference profiles, 
	     and let $\Gammasc$ denote the domain of single-crossing preference profiles. 

\begin{theorem}[\citealp{erd-lac-pfa:j:nearly-sp}, Theorem 6.16, \citealp{przedmojski2016algorithms}]
	$\Gammasp$ alternative deletion can be solved in time $O(m^3n)$.
\end{theorem}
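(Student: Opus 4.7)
My plan is to solve alternative deletion by enumerating pairs of candidate ``outermost'' alternatives on the target axis and, for each such pair, running a linear-time subroutine that computes the maximum single-peakable subset with that pair as endpoints. This gives the claimed bound via $O(m^2)$ enumeration $\times$ $O(mn)$ per pair (plus preprocessing).

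I would begin with two preprocessing steps. First, for each voter $i$, I build a pairwise comparison matrix in $O(m^2)$ time per voter, giving $O(m^2 n)$ total and constant-time queries of the form ``does $a \succ_i b$?''. Second, for every ordered triple of distinct alternatives $(a,b,c)$, I compute in $O(n)$ time the boolean $\mathrm{valley}(a,b,c)$, set to \textit{true} iff some voter $i$ satisfies $a \succ_i b$ and $c \succ_i b$. This costs $O(m^3 n)$ overall and, by \Cref{prop:sp-equiv}(3), captures exactly the triples that must be avoided as consecutive positions on any single-peaked axis.

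The main loop iterates over ordered pairs $(\ell,r)$ of distinct alternatives, interpreting $\ell$ and $r$ as the leftmost and rightmost alternatives of the axis witnessing single-peakedness of $P\vert_{A'}$. For each fixed pair, the goal is to compute the largest $A' \supseteq \{\ell,r\}$ such that $P\vert_{A'}$ is single-peaked on an axis of the form $\ell \lhd \cdots \lhd r$. I would proceed in two substeps: (i) compute the candidate set $C(\ell,r) = \{c \in A \setminus \{\ell,r\} : \mathrm{valley}(\ell,c,r) = \mathit{false}\}$ of alternatives eligible to be placed strictly between $\ell$ and $r$ (any other alternative forces a valley on the triple $(\ell, c, r)$ and must be deleted when $\ell,r$ are fixed as endpoints), and (ii) run a variant of the inside-out recognition algorithm \Cref{alg:sp} on the restricted profile $P\vert_{C(\ell,r) \cup \{\ell,r\}}$, where instead of discovering the endpoints from the bottom-ranked alternatives, we commit to $\ell$ and $r$ as the boundaries and greedily extend the partial axis inward. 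Any candidate whose inclusion would create a local valley with the already-placed alternatives is discarded. Taking the maximum over all $(\ell,r)$ yields the size of the largest single-peakable subset; the decision problem returns \textit{yes} iff this maximum is at least $m-k$.

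The main obstacle is proving that the inward-extension subroutine is \emph{optimal} for each fixed endpoint pair, i.e., that greedy discarding never rules out an alternative that some better completion could have retained. The key structural observation is that, by \Cref{prop:sp-equiv}(2), single-peakedness is determined entirely by local (consecutive) triple conditions, so a decision to reject an alternative during the greedy extension is always forced by an obstruction already present among the placed neighbors. Formalizing this via a direct adaptation of the correctness argument for \Cref{alg:sp} (showing that any forbidden local configuration produces a certificate of non-single-peakedness analogous to \Cref{lem:sp-minors-are-bad}) completes the proof, and the overall running time becomes $O(m^2n) + O(m^3n) + O(m^2)\cdot O(mn) = O(m^3 n)$.
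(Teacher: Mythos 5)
There is a genuine gap at exactly the point you flag: the optimality of the greedy inward-extension subroutine for a fixed endpoint pair. The argument you sketch for closing it does not work. \Cref{prop:sp-equiv}~(2) and the certificate lemma \Cref{lem:sp-minors-are-bad} only show that a local conflict certifies that \emph{something} must be deleted; they say nothing about \emph{which} alternative to delete, and that choice is where greedy fails. Concretely, when your subroutine encounters three or more bottom-ranked alternatives in $P\vert_{A'}$, or computes $|L|=2$ (two alternatives both forced into the single slot adjacent to the current left frontier), at least one alternative must be discarded, and the options are not interchangeable: discarding $x$ versus $y$ changes which alternatives become bottom-ranked in the subsequent restrictions, hence which placements are forced later, hence which completions remain feasible. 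Moreover the conflict can sometimes be resolved more cheaply by deleting a \emph{third} alternative (e.g.\ the witness $y\succ_i x$ that puts $i$ into $N_x(A')$), an option your subroutine never considers. Your endpoint filter $C(\ell,r)$ is a valid necessary condition but is far from sufficient, since whether an alternative is bottom-ranked (and thus forced to an end) depends on the final set $A'$ itself, not just on $\ell$ and $r$.

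This is precisely why the algorithms of \citet{erd-lac-pfa:j:nearly-sp} and \citet{przedmojski2016algorithms} are dynamic programs over partial axes rather than greedy procedures: the state must record enough of the frontier of the partially built axis that the optimal substructure holds and all deletion choices are compared, in the same way that the fixed-axis variant \Cref{thm:alt-del-fixedorder} parametrizes its table by a \emph{pair} of already-placed alternatives and maximizes over all $(j,\ell)$-good predecessors rather than committing greedily. Your complexity accounting ($O(m^2)$ pairs times $O(mn)$, plus $O(m^3n)$ preprocessing) is consistent with the target bound, but without replacing the greedy step by a maximization over states, the algorithm can return a suboptimal subset and the proof does not go through.
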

The algorithm of \citet{erd-lac-pfa:j:nearly-sp} proceeds by dynamic programming, 
building up `good' axes; it is based on ideas described in \Cref{sec:recog:sp}. 
The original time bound of $O(m^6n)$ was subsequently improved to $O(m^3n)$
by \citet{przedmojski2016algorithms}. 

\begin{theorem}[\citealp{bredereck2016nicelystructured}, Theorem 6]
	$\Gammasc$ voter deletion can be solved in time $O(m^2n^3)$.
\end{theorem}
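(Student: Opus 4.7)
The plan is to iterate over the choice of which voter plays the role of the leftmost voter in the eventual single-crossing reordering, and for each choice solve the resulting subproblem by dynamic programming. Recall from \Cref{prop:sc-order-unique} that a single-crossing profile with distinct votes has a unique left-to-right ordering (up to reversal). Hence any deletion set produces a subprofile whose single-crossing order has \emph{some} voter $v_i$ at the left; trying all $n$ candidates for $v_i$ (and taking the best answer) is sufficient, since a reversed ordering corresponds to choosing the rightmost voter instead.

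For a fixed candidate $v_i$, the task is to find the largest subset $S\subseteq N$ with $i\in S$ such that $P|_S$ can be arranged as a single-crossing profile with $v_i$ in the leftmost position. By the characterization preceding \Cref{alg:sc-given-order} (equivalences (1)--(3)), a sequence $(v_i,v_{j_1},\dots,v_{j_t})$ is single-crossing in this order if and only if
\[
  \Delta(v_i,v_{j_1})\subseteq \Delta(v_i,v_{j_2})\subseteq\dots\subseteq \Delta(v_i,v_{j_t}),
\]
equivalently $K[i,j_s]+K[j_s,j_{s+1}]=K[i,j_{s+1}]$ for every $s$. Define a partial order $\preceq_i$ on $N\setminus\{i\}$ by $v_j\preceq_i v_k \Longleftrightarrow \Delta(v_i,v_j)\subseteq \Delta(v_i,v_k)$. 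Then the desired maximum $|S|-1$ is exactly the length of the longest chain in $(\preceq_i)$, since nested $\Delta$-sets give a single-crossing reordering (place chain elements left to right after $v_i$) and, conversely, any valid subset ordered with $v_i$ first must have its $\Delta$-sets nested.

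For the implementation, I would precompute all pairwise Kendall-tau distances $K[j,k]$ once, which can be done naively in $O(m^2)$ per pair and thus $O(n^2m^2)$ overall. For each fixed $v_i$, the relation $v_j\preceq_i v_k$ can then be tested in $O(1)$ per pair using the triangle-equality characterization of \Cref{prop:kendall-tau-triangle-ineq}, so the induced DAG has $n$ vertices and $O(n^2)$ edges, and its longest path can be computed by a standard topological-order dynamic program in $O(n^2)$ time. Over the $n$ choices of leftmost voter this yields $O(n^3)$ for the DP phase plus $O(n^2m^2)$ for distance precomputation, fitting comfortably in $O(m^2n^3)$; the final answer is ``yes'' iff there is some $i$ for which the longest chain has length at least $n-k-1$.

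The main obstacle is verifying the reduction to longest chains, specifically the two directions of the correspondence between (i) deletion sets leaving a single-crossing subprofile and (ii) chains in $(\preceq_i)$ for some $i$. The forward direction requires noting that, in the optimal single-crossing arrangement of $P|_S$, either endpoint can serve as $v_i$, so at least one of our $n$ iterations hits a witnessing $v_i$; the backward direction uses the chain-to-ordering equivalence recalled above. A secondary technical subtlety is handling identical votes (where $\preceq_i$ becomes a preorder rather than a partial order), but this only affects tie-breaking: treating distinct voters with identical preferences as incomparable but interchangeable in the DP preserves both correctness and the $O(m^2n^3)$ bound.
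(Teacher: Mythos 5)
Your proposal is correct and matches the paper's proof essentially verbatim: guess the leftmost voter $v_i$, build the acyclic digraph (equivalently, the partial order) on voters given by $\Delta(v_i,v_j)\subseteq\Delta(v_i,v_k)$, and extract a longest path/chain starting at $i$ by dynamic programming, which is exactly the construction of \citet{bredereck2016nicelystructured} as sketched in the survey. The one imprecise spot is your treatment of identical votes: declaring them \emph{incomparable} would prevent a chain from containing two copies of the same vote (and would undercount, e.g., on a profile of $n$ identical votes), so you should instead break the tie by voter index so that equal-$\Delta$ voters are comparable in a fixed direction; with that fix the correctness argument and the $O(m^2n^3)$ bound go through unchanged.
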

\begin{proof}
	The algorithm constructs a single-crossing subprofile $P'$ of $P$ with the maximum number of voters. 
	It starts by guessing the voter $i$ that will be the leftmost voter of $P'$ 
	in its single-crossing order, and then constructs an acyclic digraph $D$ on the voter set of $P$. 
	The arcs of $D$ are defined as follows: there is an arc from $j$ to $k$ 
	if $k$ disagrees on more pairs with $i$ than $j$ does, i.e., if
	$\Delta(v_i, v_j) \subseteq \Delta(v_i,v_k)$.
	The voters on a path in $D$ starting at $i$ then correspond to single-crossing subprofiles 
	of $P$ that start with $v_i$. Since $D$ is acyclic, the algorithm then just needs to output 
	a longest path of $D$ starting at $i$; this can be accomplished by means of dynamic programming.
\end{proof}

\begin{theorem}[\citealp{erd-lac-pfa:j:nearly-sp}, Theorem 6.4; \citealp{bredereck2016nicelystructured}, Corollary 1]
	$\Gammasp$ voter deletion is NP-complete.
\end{theorem}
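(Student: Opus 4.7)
The plan is to first establish membership in NP, which is immediate: a deletion set $S\subseteq N$ with $|S|\le k$ is a polynomial-size certificate, and by \Cref{thm:recognizing-sp} we can verify in time $O(mn)$ that the restricted profile $P|_{N\setminus S}$ is single-peaked. The substantive content is NP-hardness, for which I would reduce from \textsc{Vertex Cover} (ideally on a restricted graph class such as cubic graphs, to keep the analysis of which voters ``must'' be deleted clean).

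Given an instance $(G=(V,E),k)$ of \textsc{Vertex Cover}, the plan is to build a profile $P$ and budget $k'$ such that $G$ has a vertex cover of size $\le k$ iff at most $k'$ voters can be deleted from $P$ to make it single-peaked. The construction would have two layers. A \emph{scaffolding} layer fixes, in effect, a canonical axis $\lhd$: I introduce a large number (say, polynomially many copies) of ``anchor'' voters whose preferences are single-peaked with respect to $\lhd$ and which are so numerous that any optimal deletion set leaves them intact. A \emph{vertex} layer contains, for each $v\in V$, a single ``vertex voter'' $\tilde v$ whose preferences are chosen to be single-peaked on $\lhd$ individually, but such that for each edge $e=\{u,w\}\in E$, the three-voter subprofile consisting of $\tilde u$, $\tilde w$ together with one suitable anchor voter realizes one of the minimal forbidden configurations \eqref{eq:sp-minor-1} or \eqref{eq:sp-minor-2} from \Cref{lem:sp-minors-are-bad}. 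Setting $k'=k$ would then force any valid deletion set to pick at least one of $\tilde u,\tilde w$ per edge, i.e., a vertex cover.

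Concretely, I expect the edge gadget to place the two endpoints of $e$ as ``extreme'' choices on opposite sides of $\lhd$ and exploit the two-voter valley pattern \eqref{eq:sp-minor-2}: choose, for each $e=\{u,w\}$, a dedicated pair of alternatives $a_e,b_e$ and engineer $\tilde u$ and $\tilde w$ so that $\{a_e,d\}\succ_{\tilde u} b_e\succ_{\tilde u} c$ and $\{c,d\}\succ_{\tilde w} b_e\succ_{\tilde w} a_e$ for appropriate $c,d$ forced by the scaffolding. By \Cref{lem:sp-minors-are-bad}, this subprofile is unsingle-peakable, so at least one of $\tilde u,\tilde w$ must be deleted. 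The anchor layer needs enough structural rigidity that the axis witnessing single-peakedness of the remaining profile must extend the canonical $\lhd$; this is what I would use to rule out ``cheating'' deletions that remove anchor voters and reuse a very different axis.

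The main obstacle I anticipate is the soundness direction of the reduction: showing that any small deletion set \emph{must} arise from a vertex cover. This requires the scaffolding to be robust in two senses: (i) deleting anchor voters instead of vertex voters must never help, which is achieved by duplicating anchors and budgeting $k'=k$; and (ii) the single-peaked axis of $P|_{N\setminus S}$ must be forced to essentially coincide with $\lhd$, so that the edge-gadget obstruction cannot be bypassed by flipping the alternative order. I would verify the latter by invoking \Cref{prop:sp-all-axes}: with sufficiently many anchor voters having no nontrivial common prefix, the set of compatible axes collapses to $\{\lhd,\text{reverse}(\lhd)\}$, both of which yield the same edge-gadget obstruction. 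Completeness, that a vertex cover yields a valid deletion set, is a direct check using \Cref{prop:sp-equiv}.
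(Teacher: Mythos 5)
Your NP-membership argument is fine, and reducing from \textsc{Vertex Cover} is indeed the route taken by the cited proof. But the gadget design you describe has a genuine internal contradiction. You want every vertex voter $\tilde v$ to be single-peaked on the canonical axis $\lhd$ \emph{individually}, you want the anchor scaffolding to force (via \Cref{prop:sp-all-axes}) that the only admissible axes are $\lhd$ and its reverse, and you want each edge $\{u,w\}$ to induce a forbidden configuration among $\tilde u$, $\tilde w$ and an anchor. These three requirements cannot coexist: if $\tilde u$, $\tilde w$ and the anchors are all single-peaked on the \emph{same} axis $\lhd$, then any subprofile formed from them is single-peaked on $\lhd$ (single-peakedness with respect to a fixed axis is a per-vote property, so it is closed under taking unions of voters), hence contains no instance of \eqref{eq:sp-minor-1} or \eqref{eq:sp-minor-2}. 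Conversely, if you make $\tilde u$ deviate from $\lhd$ on the edge alternatives $a_e,b_e$ so that a conflict arises, then $\tilde u$ already conflicts with the rigid scaffolding on its own, and completeness breaks: after deleting a vertex cover, the surviving vertex voters (an independent set, each still deviating from $\lhd$ on the blocks of its incident edges) together with the anchors are not single-peaked on $\lhd$ or its reverse, which by your own axis-forcing argument are the only axes available. So with budget $k'=k$ the ``yes'' direction fails.

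The fix --- and what \citet{bredereck2016nicelystructured} actually do --- is to abandon global axis-forcing entirely. There are no separate anchors: there is one voter per vertex, and for each edge $e_j$ a clone block $\{a_j,b_j,c_j\}$ that all voters rank as a contiguous interval in a fixed block order. Within the block for $e_j=\{\nu_r,\nu_s\}$, voter $r$ ranks $b_j$ last, voter $s$ ranks $a_j$ last, and all $|V|-2$ remaining voters rank $c_j$ last, so the block carries the three-voter Condorcet obstruction \eqref{eq:sp-minor-1}; destroying it cheaply forces deleting $r$ or $s$, and the $|V|-2$ ``standard'' voters play the role of your anchors without ever pinning down the internal order of the block. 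Completeness then works precisely because the axis is \emph{not} forced: after deleting a vertex cover, each block contains at most two distinct orders over three clones, any two such orders are single-peaked on a suitable local axis, and these local axes are concatenated into a global one. Your instinct to force a single canonical axis is what kills the construction; the flexibility of the axis inside each clone block is essential to the ``yes'' direction.
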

\begin{proof}%
	We sketch 
	\citeauthor{bredereck2016nicelystructured}'s proof, 
	which is a reduction from \problemname{Vertex Cover}. 
	Let $G = (V,E)$ be a graph, where $V = \{\nu_1,\dots,\nu_n\}$ and $E = \{e_1,\dots,e_m\}$, 
	and let $k\ge 1$ be the target size of the vertex cover.
	
	For each edge $e_j\in E$, we introduce three edge alternatives $a_j, b_j, c_j$. 
	For each vertex $\nu_i\in V$ we construct a voter $i$ and set
	\[ \{a_1,b_1,c_1\} \succ_i \dots \succ_i \{a_m,b_m,c_m\}.  \]
	Further, for each edge $e_j = \{\nu_r, \nu_s\}$ with $r < s$, we set
	\begin{alignat*}{4}
		c_j &\succ_r a_j &&\succ_r b_j, \\
		b_j & \succ_s c_j &&\succ_s a_j, \\
		a_j &\succ_i b_j &&\succ_i c_j \quad\text{ for all $i\not\in\{r,s\}$}.
	\end{alignat*}
	This completes the description of the constructed profile $P$.
	Observe that, for $e_j = \{\nu_r, \nu_s\}\in E$, $P|_{\{a_j,b_j,c_j\}}$ is not single-peaked, 
	but can be made single-peaked by (1) deleting voter $r$, (2) deleting voter $s$, 
	or (3) deleting all voters except $r$ and $s$.
	As a consequence, it can be shown that $G$ has a vertex cover of size at most $k$ 
	if any only if we can delete at most $k$ voters to make $P$ single-peaked.
\end{proof}

The hardness proof for $\Gammasc$ alternative deletion is substantially more
complicated and proceeds by a reduction from \problemname{Max2Sat}.

\begin{theorem}[\citealp{bredereck2016nicelystructured}]
	$\Gammasc$ alternative deletion is NP-complete.
\end{theorem}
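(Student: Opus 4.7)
The plan is to establish membership in NP and then sketch a reduction from \textsc{Max2Sat}, mirroring the structure of the voter-deletion argument but in a ``dual'' direction.

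Membership in NP is straightforward: given a guessed set $D \subseteq A$ with $|D|\le k$, we can verify in time $O((m-|D|)n\log(m-|D|))$ via \Cref{thm:recognizing-sc} that $P|_{A\setminus D}$ is single-crossing. Thus the problem lies in NP.

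For hardness, I would reduce from \textsc{Max2Sat}: given a 2-CNF formula $\varphi$ with variables $X=\{x_1,\ldots,x_n\}$ and clauses $C_1,\ldots,C_m$, together with a target $t$, we build a profile whose single-crossing alternative-deletion number is related in a controlled way to the maximum number of satisfiable clauses. The main design idea is to use two kinds of alternatives: \emph{skeleton} alternatives that force a (nearly) unique single-crossing voter order and hence pin down the structure of any surviving single-crossing subprofile, and \emph{literal} alternatives $\ell_i^+, \ell_i^-$ representing the two possible values of each variable $x_i$. For each clause $C_j = (\ell_i^\alpha \vee \ell_{i'}^\beta)$ we introduce a small group of \emph{clause voters} whose preferences over the skeleton and over the relevant literal alternatives are designed so that the pair $(\ell_i^\alpha,\ell_{i'}^\beta)$ would have to ``cross twice'' in the skeleton order unless at least one of the two literal alternatives is removed. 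The budget $k$ on deletions is set to exactly $n$, forcing the deletion set to pick precisely one of $\{\ell_i^+,\ell_i^-\}$ for each variable $x_i$ (padding can be used to make this the only feasible pattern); this induced choice corresponds to a truth assignment to $\varphi$. The parameters are then chosen so that each satisfied clause corresponds to a clause voter whose preferences no longer contain a double-crossing pair, and a global count argument ties the number of surviving single-crossing clause gadgets to the number of satisfied clauses.

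I would carry out the steps in the following order. First, construct the skeleton of ``anchor'' voters and alternatives whose pairwise comparisons already fix a canonical single-crossing order up to reversal; this uses the classical fact (\Cref{prop:sc-order-unique}) that single-crossing orders on distinct votes are essentially unique. Second, describe the variable gadget $(\ell_i^+,\ell_i^-)$ together with a few auxiliary voters ensuring that keeping both literals of $x_i$ would contradict the skeleton order, so that at least one of $\ell_i^+, \ell_i^-$ must be deleted; conversely, deleting either alternative alone is consistent with the skeleton. Third, describe the clause gadget for $C_j$: it contains voters placed at specific positions of the skeleton order whose preferences on $\ell_i^\alpha$ versus $\ell_{i'}^\beta$ create a second crossing unless at least one of the two alternatives is absent. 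Fourth, verify the forward direction (an assignment satisfying $\ge t$ clauses gives a deletion set of size $n$ that leaves $\ge t$ clause gadgets non-crossing) and the backward direction (any deletion set of size $\le k$ must delete exactly one literal per variable, and the resulting assignment satisfies at least $t$ clauses).

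The main obstacle I anticipate is the gadget design: ensuring that the skeleton genuinely forces a (nearly) unique voter order so that the only way to repair a double crossing is to remove literal alternatives (rather than, say, skeleton alternatives), and ensuring that deletions of literal alternatives from different variable gadgets do not interact in unintended ways. Both issues can be handled by giving the skeleton enough ``rigidity'' (via many redundant anchor pairs whose crossings tile the voter line) and by using sufficiently many copies of each clause voter to make deleting them prohibitively expensive compared with the budget $k = n$. Provided these rigidity properties are carried out carefully, the equivalence between \textsc{Max2Sat} and $\Gammasc$ alternative deletion follows, and combined with NP-membership this yields NP-completeness.
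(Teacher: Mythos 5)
The paper does not reproduce the proof of this theorem; it only records that the original argument of \citet{bredereck2016nicelystructured} is ``substantially more complicated'' than the voter-deletion case and proceeds by a reduction from \textsc{Max2Sat}. Your choice of source problem therefore matches the original, and your NP-membership argument via \Cref{thm:recognizing-sc} is fine. However, there is a genuine gap in how you encode the \textsc{Max2Sat} objective.

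You set the deletion budget to exactly $k=n$ (one literal alternative per variable) and then say that ``a global count argument ties the number of surviving single-crossing clause gadgets to the number of satisfied clauses.'' But the decision problem asks whether the \emph{entire} profile $P|_{A'}$ becomes single-crossing after at most $k$ deletions; there is no notion of a clause gadget ``partially surviving.'' If even one clause $C_j$ is unsatisfied by the induced assignment, then by your own design its gadget still contains a pair of alternatives that crosses twice, and the profile is not single-crossing --- so the instance is a yes-instance if and only if \emph{every} clause gadget is simultaneously repaired, i.e., if and only if $\varphi$ is satisfiable. That is a reduction from \textsc{2Sat}, which is polynomial-time solvable, and hence proves nothing about NP-hardness. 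To genuinely use \textsc{Max2Sat} you must let the budget absorb the unsatisfied clauses: for instance, design each clause gadget so that it can alternatively be repaired by deleting one dedicated clause alternative, and set $k = n + (m - t)$, so that a deletion set of size $k$ exists if and only if some assignment satisfies at least $t$ clauses. This changes the accounting throughout your forward and backward directions (in particular, the backward direction must rule out deletion sets that cheat by repairing many clauses with few deletions, or by deleting skeleton alternatives instead), so it is not a cosmetic fix. The remaining concerns --- rigidity of the skeleton after deletions and non-interaction of variable gadgets --- are the right ones to flag, but they are secondary to the budget issue.
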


\begin{theorem}[\citealp{bredereck2016nicelystructured}]
	$\Gamma$ voter deletion and $\Gamma$ alternative deletion are NP-complete for value-restricted, group-separable, and best-, medium-, and worst-restricted preferences.
\end{theorem}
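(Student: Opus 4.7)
Membership in NP is routine for all ten problems: after guessing the set of voters (respectively alternatives) to delete, the polynomial-time recognition algorithms summarised in \Cref{tab:recognition}---namely $O(m^3n)$ for value/best/medium/worst-restricted preferences and $O(mn)$ for group-separable preferences---certify in polynomial time that the remaining profile belongs to $\Gamma$. All the work is in establishing NP-hardness, and the plan is to mimic, for each domain, the vertex-cover reduction used above for $\Gammasp$ voter deletion, exploiting the fact that each of the target domains is characterised by the absence of small forbidden subprofiles: the $3$-voter/$3$-alternative Condorcet cycle for the value-restricted domain, the closely related $3$-voter patterns (some alternative always first / always second / always last among three chosen ones) for the best/medium/worst-restricted domains, and a $2$-voter/$4$-alternative pattern (e.g.\ $a\succ b\succ c\succ d$ versus $c\succ a\succ d\succ b$) for the group-separable domain.

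For the voter deletion version, I would reduce from \textsc{Vertex Cover}. Given $G=(V,E)$ with $V=\{\nu_1,\dots,\nu_n\}$, $E=\{e_1,\dots,e_m\}$ and budget $k$, introduce one \emph{vertex voter} for each $\nu_i\in V$ (padded with sufficiently many copies of some innocuous vote so as to avoid parity issues), and for each edge $e_j=\{\nu_r,\nu_s\}$ introduce an \emph{edge gadget} $X_j$ of constant size: three alternatives $a_j,b_j,c_j$ for the value/best/medium/worst-restricted cases and four alternatives for group-separability. On $X_j$, voters $r$ and $s$ are assigned the two preference orders completing a forbidden subprofile for the target domain, while every other vertex voter ranks $X_j$ uniformly (say, $a_j\succ b_j\succ c_j\succ\cdots$). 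As in \citeauthor{bredereck2016nicelystructured}'s proof, voters rank the whole of $X_j$ above the whole of $X_{j+1}$, so forbidden subprofiles cannot span two gadgets. Then the only way to eliminate the forbidden subprofile inside $X_j$ is to delete at least one of the voters $\{r,s\}$, and a counting argument will show that $G$ admits a vertex cover of size $k$ iff $P$ can be made to satisfy $\Gamma$ by deleting at most $k$ voters.

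For the alternative deletion version, the roles are swapped: each vertex $\nu_i\in V$ gives rise to a constant-size bundle of alternatives $A_i$, and each edge $e_j=\{\nu_r,\nu_s\}$ is realised by a small group of voters whose preferences over $A_r\cup A_s$ force a forbidden subprofile that can be broken only by deleting an alternative from $A_r$ or from $A_s$; again the bundles $A_i$ are separated in every voter's ranking to prevent cross-gadget interactions, and a vertex cover of $G$ of size $k$ corresponds to an alternative deletion set of size at most (a fixed multiple of) $k$. The main obstacle is gadget design: each domain has its own forbidden patterns, and one must verify both (i)~that every gadget contains a unique forbidden subprofile eliminable only by the intended deletions, and (ii)~that no ``accidental'' forbidden subprofile arises by combining slices of different gadgets or interacting with the padding voters. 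The group-separable case is the most delicate, because group-separability is characterised by a condition on \emph{all} subsets of alternatives rather than on triples, so one must verify forbidden-subprofile freeness using the clone-decomposition characterisation of \citet{karpov:j:gs} rather than a simple triple check.
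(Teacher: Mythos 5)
The survey states this theorem as a citation to \citet{bredereck2016nicelystructured} and does not reproduce a proof, so the only internal benchmark is the Vertex Cover reduction it sketches for $\Gammasp$ voter deletion. Your NP-membership argument is fine, and your voter-deletion plan is essentially the right adaptation of that template: for each of the triple-based domains the edge gadget on $\{a_j,b_j,c_j\}$ places the two ``missing'' votes of the relevant three-vote forbidden pattern on the edge's endpoints and the third on everyone else, cross-gadget triples are automatically safe because the blocks $X_j$ are ranked consistently, and the ``delete everything except $r$ and $s$'' escape is killed by the same counting as in the $\Gammasp$ proof. (Your ``parity issues'' padding is unnecessary; there is no parity in these problems.)

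The genuine gap is in the alternative-deletion half. Every forbidden subprofile for these domains involves at least three alternatives, but an edge $\{\nu_r,\nu_s\}$ supplies only two vertex-alternatives, so your edge gadget must recruit auxiliary alternatives to complete the pattern --- and deleting \emph{any} alternative of a forbidden triple destroys that triple, so the adversary can cover edge $e_j$ by deleting its auxiliary alternative instead of a vertex-alternative. As written, ``a vertex cover of size $k$ corresponds to a deletion set of size at most a fixed multiple of $k$'' does not rule this out, and the reverse direction of the reduction fails: a cheap deletion set need not yield a vertex cover. You need an explicit protection mechanism (e.g.\ replicating each edge's auxiliary alternatives $k+1$ times so that covering an edge through its auxiliaries exceeds the budget, and then re-verifying that the replicated dummies create no new forbidden triples among themselves). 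Two further loose ends: for group-separability you invoke a two-voter, four-alternative forbidden-subprofile characterization that the survey never establishes (it only exhibits one such non-group-separable profile and defers the characterization to a section that does not deliver it), and the ``no accidental violation'' check there must range over all subsets of alternatives, not just triples; you flag this but do not discharge it. Until the alternative-deletion gadget and the group-separable verification are made concrete, this is a plausible plan rather than a proof.
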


In \Cref{tbl:deletion} we give an overview of these complexity results as well as
list approximation results obtained by \citet{elkind2014detecting}: 
`$\alpha$-approx.' refers to a constant-factor approximation algorithm for the given problem, i.e., a 
polynomial-time algorithm that constructs a profile in the target domain by performing
at most $\alpha$ times the optimal number of voter/alternative deletions.
These approximation algorithms rely on characterizations of domain restrictions via forbidden 
subprofiles (cf.~\Cref{sec:subprofiles}).

\begin{table}
	\centering
	\begin{tabular}{lll}
		\toprule
		\textbf{Domain} $\Gamma$ & \textbf{Voter Deletion} & \textbf{Alternative Deletion} \\
		\midrule
		Single-peaked & NP-complete (2-approx.) & polynomial time $O(m^3n)$ \\
		Single-crossing & polynomial time $O(n^3m^2)$ &  NP-complete (6-approx.) \\
		\midrule
		Value-restricted & NP-complete (3-approx.) & NP-complete (3-approx.) \\
		Best/medium/worst-restricted & NP-complete (2-approx.) & NP-complete (3-approx.) \\
		Group-separable & NP-complete (2-approx.) & NP-complete (4-approx.) \\
		\midrule
		1-Euclidean & open & open \\
		Single-peaked on a tree & open & open\\ %
		Single-peaked and single-crossing & open %
		 & open \\ %
		$d$-dimensional single-peaked & open & open \\
		\bottomrule
	\end{tabular}
	\caption{Complexity results for various voter and alternative deletion problems.}
	\label{tbl:deletion}
\end{table}

Alternative deletion has the drawback that even a small number of removed alternatives yields 
a significant loss of information about the voters' preferences. 
In the context of single-peaked preferences, 
\citet{erd-lac-pfa:j:nearly-sp} propose \emph{local alternative deletion}, 
where in each vote a small number of alternatives may be removed. 

\problem{$\Gammasp$ Local Alternative Deletion}{A preference profile $P$ over $A$, 
and an integer $k\ge 0$.}
{Does there exist an axis $\lhd$ such that for each vote $v\in P$ there exists
a subset $A'\subseteq A$ of size at least $|A|-k$ with the property that $v|_{A'}$
is single-peaked on $\lhd|_{A'}$?}

Note that this is a more flexible notion than alternative deletion, as voters do not have to `agree' 
which alternatives are the outliers.
\citet{sui2013multi} successfully employ local alternative deletion to argue that two sets of 
Irish voting data are close to being two-dimensional single-peaked.
\citet[Theorem 6.7]{erd-lac-pfa:j:nearly-sp} show that \problemname{$\Gammasp$ Local Alternative Deletion} is NP-complete. This result is somewhat surprising, given the easiness
result for \problemname{$\Gammasp$ Alternative Deletion}.

Generalizing \problemname{$\Gammasp$ Local Alternative Deletion} from $\Gammasp$ to other domains is not 
entirely straightforward. For example, for the domain of preferences single-peaked on a tree, 
we cannot simply replace the axis $\lhd$ in the definition with a tree $T$, as a restriction
of $T$ to a subset of alternatives $A'$ need not be a tree. Indeed, to the best of our knowledge, 
this problem has not been formally defined or studied for domains other than $\Gammasp$.

\subsubsection{Clone Sets and Width Measures}

Suppose a city is planning to open a new library somewhere along Main Street, and is asking its residents 
for their opinions on where and how to build the library. There are several potential sites available for 
the new building, and for each site, several designs have been proposed. Voters are 
asked to rank these plans. Should we expect the voters to provide single-peaked rankings?

The answer is not clear. On the one hand, most people will prefer the library to be built as close to them 
as possible, and thus it seems likely that preferences over the \emph{locations} on Main Street will be 
single-peaked. On the other hand, the choice of design is merely a question of 
taste, and so single-peakedness seems unlikely.

If the voters view the location of the library as the more important feature, then 
we can think of each vote as consisting of `blocks': for each location, the different proposals for this 
location are ranked consecutively (in some order) within the preference ranking. Formally, we will say 
that the proposals form an \emph{interval}.

\begin{definition}
	A set $I\subseteq A$ of alternatives forms an \defemph{interval} (also known as a \defemph{clone set}) 
	of the profile $P$ if for every vote $v_i\in P$, every pair of alternatives $a, b\in I$
	and every alternative $c\in A\setminus I$ it is not the case that $a\succ_i c\succ_i b$.
\end{definition}

Notice that the entire set $A$ is always an interval, and so are all the singletons.
If $I$ is an interval of $P$, then we can \emph{contract} this interval 
by replacing all the alternatives in $I$ by a single super-alternative. 
Using this operation, we can define a structural restriction that captures our library example.

\begin{definition}[\citealp{cor-gal-spa:c:spwidth}]
	\label{def:spwidth}
	A profile $P$ over $A$ has \defemph{single-peaked width} $k$ if the set $A$ can be partitioned 
	into intervals $I_1,\dots, I_q$ so that $|I_j| \le k$ for each $j\in [q]$ and the profile $P'$ 
	obtained by contracting each of the intervals $I_j$, $j\in [q]$, is single-peaked.
\end{definition}

While \Cref{def:spwidth} is formulated for the single-peaked domain, 
it extends naturally to other domain restrictions. 

\addtocounter{definition}{-1}
\begin{definition}[continued]
	Let $\Gamma$ be a domain restriction. A profile $P$ over $A$ has \defemph{$\Gamma$-width} $k$ 
	if the set $A$ can be partitioned into intervals $I_1,\dots, I_q$ so that $|I_j| \le k$ 
	for each $j\in [q]$ and the profile $P'$ obtained by contracting each of the intervals 
	$I_j$, $j\in [q]$, is in $\Gamma$.
\end{definition}

As we will see in \Cref{sec:problems}, some winner determination algorithms that work for 
single-peaked (respectively, single-crossing) input profiles continue to work for profiles of 
bounded single-peaked (respectively, single-crossing) width, 
i.e., $\Gamma$-width is a very useful concept from an algorithmic point of view.
Moreover, in contrast to most problems concerning voter and alternative deletion, 
it is possible to determine the single-peaked width or 
the single-crossing width of a profile in polynomial time.

\begin{examplebox}{Single-peaked width}{spwidth}
	\newcommand{\cloneone}{$a,b,c,d$}
	\newcommand{\clonetwo}{$e,f,g$}
	\newcommand{\clonethree}{$h,i$}
	\begin{center}
	\begin{tikzpicture}
	[clone/.style={rounded corners}]
	\matrix (m) [matrix of math nodes, column sep=1ex] {
		\toprule
		v_1 & v_2 & v_3 \\
		\midrule
		a & a & e \\
		b & b & f \\
		c & c & g \\
		d & d & a \\
		g & i & b \\
		f & h & c \\
		e & e & d \\		
		h & f & h \\
		i & g & i \\
		\bottomrule \\
	};
	\draw[clone] ($(m-2-1.north west)+(0,2pt)$) rectangle ($(m-5-1.south east)+(0,2pt)$);
	\draw[clone] ($(m-6-1.north west)+(0,2pt)$) rectangle ($(m-8-1.south east)+(0,3pt)$);
	\draw[clone] ($(m-9-1.north west)+(0,1.5pt)$) rectangle ($(m-10-1.south east)+(0,2pt)$);
	\draw[clone] ($(m-2-2.north west)+(0,2pt)$) rectangle ($(m-5-2.south east)+(0,3pt)$);
	\draw[clone] ($(m-6-2.north west)+(0,1pt)$) rectangle ($(m-7-2.south east)+(0,2pt)$);
	\draw[clone] ($(m-8-2.north west)+(0,2pt)$) rectangle ($(m-10-2.south east)+(0,2pt)$);
	\draw[clone] ($(m-2-3.north west)+(0,2pt)$) rectangle ($(m-4-3.south east)+(0,2pt)$);
	\draw[clone] ($(m-5-3.north west)+(0,2pt)$) rectangle ($(m-8-3.south east)+(0,2pt)$);
	\draw[clone] ($(m-9-3.north west)+(0,2pt)$) rectangle ($(m-10-3.south east)+(0,2pt)$);
	\matrix (contracted) [matrix of math nodes, column sep=1ex, right=2cm of m] {
		\toprule
		v_1 & v_2 & v_3 \\
		\midrule
		a & a & e \\
		e & h & a \\
		h & e & h \\
		\bottomrule \\
	};
	\draw[-latex, ultra thick] (m.east) -- (contracted.west);
	\begin{scope}[yscale=0.65,xscale=0.95,xshift=7cm,yshift=-1.3cm]
	\def\xmin{1}
	\def\xmax{5}
	\def\ymin{0}
	\def\ymax{3}
	\draw[->] (\xmin -1,\ymin) -- (\xmax+1,\ymin) node[right] {};
	\foreach \x/\xtext in {1/\clonetwo, 3/\cloneone, 5/\clonethree}
	\draw[shift={(\x,\ymin)}] (0pt,2pt) -- (0pt,-2pt) node[below] {\strut$\{$\xtext$\}$};
	\foreach \x/\xtext in {1, 3}
	\node[below] at (\x+1,\ymin) {$\strut\lhd$};  
	\draw[thick,blue!30!white] (1,3)--(3,5)--(5,4);
	\foreach \x/\y/\z in {1/3/\clonetwo,3/5/\cloneone,5/4/\clonethree}
	\node[fill=blue!30!white, rectangle, inner sep=0.6mm] at (\x,\y) {\z};
	\draw[thick,green!50!black!50!white] (1,2)--(3,4)--(5,1);
	\foreach \x/\y/\z in {1/2/\clonetwo,3/4/\cloneone,5/1/\clonethree}
	\node[fill=green!50!black!50!white, rectangle, inner sep=0.6mm] at (\x,\y) {\z};
	\draw[thick,red!50!white] (1,4)--(3,3)--(5,2);
	\foreach \x/\y/\z in {1/4/\clonetwo,3/3/\cloneone,5/2/\clonethree}
	\node[fill=red!50!white, rectangle, inner sep=0.6mm] at (\x,\y) {\z};
	\end{scope}
	\end{tikzpicture}
	\end{center}
\end{examplebox}

\begin{theorem}[\citealp{cornaz2013kemeny}]
	The single-peaked width and the single-crossing width of a profile with $n$ voters 
	and $m$ alternatives can be computed in $O(nm^3)$ time.
\end{theorem}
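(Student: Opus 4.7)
The plan is to reduce the width computation to a clone-aware version of the recognition algorithms from Sections \ref{sec:recog:sp} and \ref{sec:recog:sc}. First I would compute the clone decomposition tree $T$ of $P$, which, as discussed in Section \ref{sec:recog:group-sep}, represents the entire laminar family of clone sets of $P$ and is computable in polynomial time. Any partition of $A$ into intervals (clone sets) corresponds to an antichain of $T$ whose descendant-leaves cover $A$; conversely every such antichain yields a valid partition. Thus the single-peaked (resp.\ single-crossing) width is the minimum, over all such antichains, of the maximum part size, subject to the constraint that the profile obtained by contracting each part lies in $\Gammasp$ (resp.\ $\Gammasc$).

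For the single-peaked case, I would binary search (or simply iterate) over the candidate width $k \in \{1,\dots,m\}$. For each $k$, check feasibility by a modification of Algorithm \ref{alg:sp}: we build the axis from both ends inward as before, but now the objects being placed are clone blocks of size at most $k$. At each iteration we look at the restriction of $P$ to the yet-unplaced alternatives $A'$; the candidate ``bottom-ranked block'' is the smallest clone set of $P|_{A'}$ (identifiable from $T$) that is bottom-ranked in every vote and has size at most $k$. The analysis of Algorithm \ref{alg:sp} generalizes: if there are more than two such blocks, or if there is a consistency conflict analogous to $L \cap R \ne \varnothing$, the profile is not contractible to a single-peaked profile with width $\le k$; otherwise, the produced axis witnesses feasibility. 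The single-crossing case is handled symmetrically by a clone-aware version of Algorithm \ref{alg:sc-fastest}, where the Kendall-tau distances are computed on the contracted profile.

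The running-time bookkeeping then gives the claimed $O(nm^3)$ bound: there are $O(m)$ candidate values of $k$; each feasibility check performs $O(m)$ placement iterations, each of which requires $O(nm)$ time to (i) identify the eligible clone blocks in the current $A'$ using $T$, (ii) verify the no-valley / crossing conditions against all $n$ voters, and (iii) update the residual profile. With careful implementation, one can replace the outer loop over $k$ by embedding $k$ as a DP parameter in a single pass, but the straightforward layered implementation already meets the $O(nm^3)$ bound.

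The main obstacle, and the step that needs the most care in a full proof, is the correctness argument for the clone-aware recognition procedure: one must establish an analogue of Lemma \ref{lem:sp-minors-are-bad} showing that whenever the procedure fails, the profile genuinely has no clone partition of width $\le k$ whose contraction is in $\Gamma$. This requires arguing that the greedy outside-in construction is \emph{complete} over all clone partitions, not merely over axes in a fixed contracted profile. Exploiting the laminar structure of clone sets from $T$ makes this manageable: any two candidate ``bottom blocks'' either nest, are disjoint, or can be split further without increasing width, so local choices can be shown to extend to globally optimal partitions by an exchange argument on $T$.
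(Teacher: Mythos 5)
There is a genuine gap, and it sits exactly where you flag the ``main obstacle'': your reduction rests on the claim that the clone sets of $P$ form a laminar family, so that partitions of $A$ into clone sets correspond to antichains of a tree. This is false. As described in \Cref{sec:recog:group-sep}, the clone sets of a profile are exactly the sets $L(x)$ for nodes $x$ of the PQ-tree \emph{together with} all unions $L(x_i)\cup\dots\cup L(x_j)$ of consecutive children of a Q-node; two such unions can overlap without either containing the other (already for the one-vote profile $a\succ b\succ c$, both $\{a,b\}$ and $\{b,c\}$ are clone sets). Consequently the trichotomy ``nest, disjoint, or can be split further'' that your exchange argument relies on does not hold, and the set of admissible partitions is not an antichain family of a single tree. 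This is precisely why the approach of \citet{cornaz2013kemeny}, as sketched in the paper, works with the full PQ-tree (following \citealp{elkind2012clone}) and decides which of its \emph{nodes} to collapse, rather than with a laminar clone tree. A second, related gap: in your modified placement loop you identify ``clone sets of $P|_{A'}$'' using the tree $T$ computed for $P$, but restricting to $A'$ can create clone sets that did not exist in $P$ (and the intervals $I_1,\dots,I_q$ in \Cref{def:spwidth} must be intervals of the \emph{original} profile), so the data structure you consult during the inner iterations does not describe the objects you are allowed to place.

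The high-level strategy of interleaving the partition choice with an outside-in axis construction is not hopeless, but as written the completeness claim --- that whenever your clone-aware variant of \Cref{alg:sp} fails, \emph{no} width-$k$ partition admits a single-peaked contraction --- is unsupported, and the one structural fact you invoke to support it is untrue. To repair the argument you would need to work with the PQ-tree representation of all clone sets (where the overlapping clone sets are exactly the consecutive-children unions at Q-nodes) and show that an optimal partition can always be chosen to respect that representation; this is essentially the route the cited paper takes, and it is the part of the proof that cannot be waved through.
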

The algorithms of \citet{cornaz2013kemeny} proceed by
enumerating all intervals of the input profile, and then calculating a PQ-tree that encodes all 
rankings for which every interval of the profile forms an interval 
of the ranking (following \citealp{elkind2012clone}). 
The algorithm then manipulates this PQ-tree, deciding which of the nodes need 
to be collapsed so as to make the profile single-peaked or single-crossing, respectively.%
\footnote{The published paper omits some proof details.}
For domain restrictions $\Gamma$ other than $\Gammasp$ and $\Gammasc$, the concept 
of $\Gamma$-width and the associated computational challenges have not yet been explored.

Another closeness measure that is based on intervals was proposed by \citet{elkind2012clone}. They 
wish to obtain a structured profile by contracting as few alternatives as possible, which they call 
`optimal decloning'.

\begin{theorem}[\citealp{elkind2012clone}, Theorems 5.8 and 6.4]
	Given a profile $P$ and an integer $k\ge 1$, it is NP-complete to decide whether we can obtain 
	a single-crossing profile $P'$ with at least $k$ alternatives from $P$ by contracting intervals. 
	In contrast, it is decidable in polynomial time whether $P$ can be transformed into 
	a single-peaked profile $P'$ with at least $k$ alternatives by contracting intervals.
\end{theorem}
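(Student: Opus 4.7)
My plan is to treat the two parts of the statement separately, both leaning on the clone decomposition tree (PQ-tree) of the input profile $P$ from \citet{elkind2012clone}. Recall that any interval of $P$ corresponds either to the leaf set $L(x)$ of some node $x$ in this tree, or to the union $L(x_i)\cup\dots\cup L(x_j)$ of leaf sets of consecutive children of a Q-node. Contracting an interval then amounts to replacing the associated subtree (or consecutive subtree block) by a single leaf. Since the tree has size $O(m)$ and can be computed in polynomial time, both results will reason about operations on this tree.

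For the single-peaked (tractable) direction, the key observation is that a profile is single-peaked if and only if its clone decomposition tree has only Q-nodes as internal nodes, and each Q-node has at most one non-leaf child (this is the structural statement that accompanies \Cref{prop:sp-all-axes}; in the single-peaked PQ-tree, the single non-leaf child corresponds to the nested common-prefix structure). The plan is to do bottom-up dynamic programming on the clone decomposition tree $T$ of $P$: for each node $x$ of $T$, compute $f(x)$, the maximum number of alternatives from $L(x)$ that can be retained (by suitable interval contractions inside the subtree rooted at $x$) so that the resulting contracted subtree is a \emph{legal} single-peaked subtree, i.e., a Q-node whose children are leaves except possibly one recursively legal child. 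If $x$ is a Q-node, we pick at most one child $x_j$ to keep as non-leaf (contributing $f(x_j)$) and contract every other child into a single leaf (contributing $1$ each); any P-node $x$ must be entirely contracted except for at most one child (since preserving two or more non-singleton subtrees of a P-node would produce a P-node in the final tree). The algorithm returns `yes' iff $f(\text{root})\ge k$; correctness follows from the fact that every sequence of interval contractions maps to a choice in this DP, and vice versa. Evaluating the DP takes $O(m)$ time per node after $T$ is computed.

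For the single-crossing (NP-hardness) direction, membership in NP is immediate: guess the set of contracted intervals and an ordering of the voters, then check single-crossingness and the size bound in polynomial time. For hardness, I would reduce from a graph problem such as \textsc{Independent Set} (equivalently, complement of \textsc{Vertex Cover}). Given a graph $G=(V,E)$ and an integer $t$, construct a profile over an alternative set $A$ partitioned into one ``vertex gadget'' $I_v$ per vertex $v\in V$, together with a small pool of ``anchor'' alternatives. The gadgets $I_v$ are designed to be intervals of $P$, so contracting $I_v$ corresponds to ``discarding'' vertex $v$, whereas leaving $I_v$ as a single surviving alternative corresponds to ``selecting'' $v$. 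The voters and their rankings are crafted so that the profile restricted to (i.e., obtained by contracting) a chosen subset $S\subseteq V$ is single-crossing if and only if $S$ is an independent set of $G$; the edges of $G$ encode forbidden pairs of surviving vertex-alternatives that would force at least two crossings (for example, using the obstruction in \Cref{ex:sp-but-not-sc}, where two pairs of alternatives each force adjacency constraints that cannot be simultaneously satisfied). Choosing $k$ to be the number of anchor alternatives plus $t$ then makes the decloning instance equivalent to asking whether $G$ has an independent set of size $t$.

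The main technical obstacle is the hardness reduction: I need gadgets such that (a) each $I_v$ really is an interval of $P$ so contraction of $I_v$ is available as an operation, (b) contracting a full $I_v$ does not accidentally make the profile single-crossing for bad reasons, and (c) leaving two ``adjacent'' vertex gadgets uncontracted produces a provable obstruction to single-crossingness regardless of the voter ordering. The single-peaked DP, by contrast, should be straightforward once one verifies the structural characterization of the single-peaked PQ-tree; the only subtle point is arguing that any valid contraction sequence can be realized by independent choices at each node of $T$, which follows from the laminar nature of the non-trivial clone sets represented by $T$.
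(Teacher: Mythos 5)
The survey states this result purely as a citation to \citet{elkind2012clone} and contains no proof of its own, so your proposal has to stand on its own merits; unfortunately the part you call ``straightforward'' contains the genuine gap. Your polynomial-time algorithm for the single-peaked case rests on the claim that a profile is single-peaked if and only if its clone decomposition tree has only Q-nodes as internal nodes, each with at most one non-leaf child. This conflates two different PQ-trees: the tree discussed around \Cref{prop:sp-all-axes} encodes the valid \emph{axes} of a profile already known to be single-peaked (it is built from the consecutive-ones matrix of upper contour sets), whereas the clone decomposition tree encodes which sets are contiguous in every vote, and its shape does not determine single-peakedness. Concretely, take $P=(a\succ b\succ c,\; b\succ c\succ a,\; a\succ c\succ b)$: its only non-trivial clone set is $\{b,c\}$, so its clone decomposition tree is a two-child root with one leaf child $a$ and one non-leaf child for $\{b,c\}$, which satisfies your structural condition; yet $P$ is not single-peaked, since each of $a$, $b$, $c$ is bottom-ranked in some vote. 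Your DP would report that $k=3$ alternatives can be retained, when in fact only $k=2$ is achievable (by contracting $\{b,c\}$). Any correct algorithm must, at each node, test single-peakedness of the associated quotient profile---that is, actually examine the preferences rather than only the node types---and that is precisely the step your dynamic program omits.

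The NP-hardness direction is a plan rather than a proof: NP-membership and the choice of \textsc{Independent Set} as a source problem are fine, but the vertex gadgets, which carry the entire weight of the argument, are not constructed, and you yourself list the three properties (a)--(c) they must satisfy as unresolved obstacles. Property (c) is the delicate one: the obstruction in \Cref{ex:sp-but-not-sc} involves two pairs of alternatives interacting across four voters, so encoding a single edge of $G$ as ``these two surviving super-alternatives force a double crossing'' while simultaneously keeping every $I_v$ an interval of the full profile (property (a)) requires a concrete construction and a proof that no voter reordering escapes the obstruction. As written, neither half of the statement is established.
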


\subsubsection{Swap Distance}
A well-known measure of distance between two linear orders is the Kendall-tau distance $K$ (cf.\ 
\Cref{def:kendall}), which counts how many swaps of adjacent alternatives are necessary to 
transform one linear order into another.
The Kendall-tau distance can also be used to evaluate closeness to preference domains; 
an attractive feature of this measure is that it offers a fine-grained perspective 
on how much a given profile has to be modified---in contrast to `coarser' measures 
such as deleting voters or alternatives.

\citet{fal-hem-hem:j:nearly-sp} and 
\citet{erd-lac-pfa:j:nearly-sp} 
present two natural notions of closeness that are based on the Kendall-tau distance:
we can count the overall number of required swaps (global swap distance) or 
the number of required swaps per vote (local swap distance).
		
\begin{theorem}[\citealp{erd-lac-pfa:j:nearly-sp}, Theorems 6.8 and 6.9]
		Given a profile $P$ and an integer $k\geq 1$, it is NP-complete to decide whether there 
		exists a single-peaked profile $P'$ whose global swap distance from $P$ is at most $k$. 
		The same holds for the local swap distance.
\end{theorem}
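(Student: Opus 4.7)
I would establish the result in two parts: NP-membership and NP-hardness, treating the global swap distance version in detail and then adapting the argument to the local version.

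For NP-membership, the plan is to show that for any fixed axis ${\lhd}$ on $A$ and any single vote $v_i$, one can compute in polynomial time a linear order $v_i'$ that is single-peaked on ${\lhd}$ and minimizes $K(v_i,v_i')$. The key observation is that a single-peaked order on ${\lhd}$ is constructed from the bottom up by repeatedly removing one of the two currently extreme alternatives of ${\lhd}$ (cf.\ the counting argument in \Cref{sec:def:sp}); thus the set of single-peaked orders on ${\lhd}$ has a product structure indexed by binary strings of length $m-1$, and the optimum $v_i'$ can be computed by a standard dynamic program over prefixes of ${\lhd}$. Given this, the certificate for the decision problem is simply an axis ${\lhd}$; the verifier computes the optimal $v_i'$ for each voter and sums (or maxes) the resulting distances to check against~$k$.

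For NP-hardness of the global version, my plan is to reduce from a suitable covering problem, following the spirit of the voter-deletion reduction of \citet{bredereck2016nicelystructured}. Concretely, starting from an instance $(G,\kappa)$ of \textsc{Vertex Cover} with $G=(V,E)$, I would build an alternative set of the form $A = V \cup A^\star$ where $A^\star$ contains ``frame'' alternatives whose role is to pin down almost the entire axis. The frame would be realised by introducing sufficiently many ``rigidity'' voters whose preferences are already single-peaked only on axes that fix the positions of the frame alternatives and that place the vertex-alternatives in a common block. The remaining freedom in ${\lhd}$ is the permutation of $V$ inside this block. For each edge $e=\{u,v\}\in E$, I would then add an ``edge gadget'': a small group of identical voters whose preferences on the triple $(u,v,\ast_e)$ create a valley no matter how $u,v,\ast_e$ are arranged on~${\lhd}$, and such that the unique cheap repair is to perform exactly $c$ adjacent swaps, producing a single-peaked vote whose top on the triple $\{u,v\}$ is either $u$ or $v$. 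The intended correspondence is then: selecting $u$ (resp.\ $v$) as the chosen top in the gadget of edge $e$ models ``$u$ (resp.\ $v$) covers $e$''. Multiplicities would be tuned so that global swap cost equals $c\cdot|\text{cover}|$ plus a constant.

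The main obstacle is designing the edge gadget so that the cheap repairs are \emph{exactly} the two ``cover $u$'' and ``cover $v$'' fixes, while spurious repairs—combining small swaps from many gadgets, exploiting axis reversals of common prefixes (cf.\ \Cref{prop:sp-all-axes}), or moving frame alternatives—are strictly more expensive. Rigidity of the frame can be obtained by using many copies of a few witness voters so that modifying them costs more than any gain elsewhere, and the structural characterisation of valleys in \Cref{lem:sp-minors-are-bad} can be used to lower-bound the cost of any repair of a gadget. Once this is set up, one direction of the reduction (cover $\Rightarrow$ small distance) is a direct construction of the axis and modified votes, while the converse (small distance $\Rightarrow$ cover) relies on showing that in any cheap solution each gadget is ``fixed canonically,'' and reading off the chosen endpoint gives a vertex cover.

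For the local swap distance, the same construction plan works with two modifications: each edge gadget must be realised by a single voter (so that the per-voter distance is what matters), and the cost parameter is tuned so that each gadget-voter's distance to the nearest single-peaked vote on~${\lhd}$ is $c$ whenever the choice at that voter is consistent with covering the edge, and strictly more than~$c$ otherwise. The rigidity voters are, by construction, single-peaked on any admissible axis and thus contribute $0$ to the local maximum, so the local bound $k=c$ is achieved exactly when every edge gadget is canonically fixed, which is again equivalent to the existence of a vertex cover of size $\kappa$.
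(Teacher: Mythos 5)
First, note that the survey does not actually prove this statement: it is imported by citation from \citet{erd-lac-pfa:j:nearly-sp}, so there is no in-paper proof to compare against. Judged on its own terms, your proposal is sound on NP-membership (guessing an axis and computing, per vote, the minimum number of swaps to the nearest order single-peaked on that axis is exactly the $O(m^3)$ dynamic program discussed later in the survey for the fixed-axis setting), but the hardness half is a plan rather than a proof. The entire content of such a reduction lives in the edge gadget, and you do not construct it; you only list the properties it should have. Moreover, because you place the vertex alternatives themselves on the axis, a single vertex alternative $u$ is shared by all gadgets for edges incident to $u$, so the internal ordering of the $V$-block simultaneously affects the repair cost of many gadgets. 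You would need to argue that some ordering of $V$ makes every gadget repairable at cost exactly $c$ independently of the others (and that no ordering makes some gadget cheaper); this interference issue is precisely where such constructions tend to break, and it is not addressed.

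There is also a concrete flaw in the adaptation to the local swap distance. Your rigidity mechanism is multiplicity-based: many copies of a few frame voters make it globally expensive to perturb the frame. But the local measure gives \emph{each} voter its own budget of $k$ swaps, so duplicating a frame voter provides no protection at all: an adversarial solution may perturb every copy by up to $k$ swaps at zero additional cost to the objective, potentially unlocking axes that the frame was meant to forbid. For the local version the frame must be rigid \emph{structurally} (i.e., even after $k$ arbitrary swaps in each frame vote, only the intended axes remain feasible), which is a substantially stronger requirement than anything you have argued. As it stands, the converse direction (small local distance $\Rightarrow$ vertex cover) does not go through.
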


These problems are also NP-complete for the single-crossing domain 
\citep[Theorems~1 and 2]{lakhani2019correlating}. \citet{jaeckle2018recognising} show 
that the local swap problem for single-crossing preferences can be solved in XP 
time with respect to the parameter $k$.

\begin{open}
What is the computational complexity of determining the global/local swap distance 
to other preference domains? Are there FPT algorithms for these problems? 
\end{open}		

For single-peakedness specifically, another fine-grained distance measure that was proposed by \citet{escoffier2021measuring} is the \emph{forbidden triples} distance that counts the number of valleys (see \Cref{fig:valley}) induced by a given axis. \citet{escoffier2021measuring} show that this distance is well-behaved in experiments, but like other fine-grained distances, it is NP-hard to compute.

\subsubsection{Voter and Alternative Partition}

Another set of closeness measures is based on partitioning voters or 
alternatives~\citep{erd-lac-pfa:j:nearly-sp}. Consider a situation where 
each alternative can be characterized by a pair of real-valued parameters. 
Some voters consider the first parameter to be more important, the 
others the second parameter. Then it may be possible to split the set of voters into 
two sets so that each set is single-peaked with respect to its own axis.
The underlying computational problem is the following:

\problem{$\Gamma$ Voter Partition}
{A preference profile $P$, and an integer $k\ge 1$.}
{Can we partition the set of voters $N$ into $k$ sets $N_1,\dots,N_k$ 
so that for all $j\in [k]$ the profile $(v_i : i\in N_j)$ belongs to $\Gamma$?}
		
For the single-peaked domain, this problem is NP-hard for $k \ge 3$, but can be solved in polynomial time for $k = 2$ via a reduction to 2SAT, which uses the forbidden subprofile characterization of single-peaked profiles.
\begin{theorem}[\citealp{erd-lac-pfa:j:nearly-sp}, Theorem 6.5]
\problemname{$\Gammasp$ Voter Partition} is NP-complete, for each fixed $k \ge 3$.
\end{theorem}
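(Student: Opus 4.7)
The plan is, first, to observe that the problem is in NP: a witness consists of the partition $N_1,\dots,N_k$ together with an axis $\lhd_j$ for each class, and verification that $(v_i : i \in N_j)$ is single-peaked on $\lhd_j$ takes polynomial time by \Cref{thm:recognizing-sp}. For hardness, I would reduce from \textsc{Graph $k$-Coloring}, which is NP-complete for every fixed $k \ge 3$. Given $G = (V,E)$, the goal is to build a profile $P$ that contains one distinguished vertex-voter $\pi_v$ for each $v \in V$ (plus possibly some auxiliary voters) over an alternative set $A$, so that $P$ admits a partition into $k$ single-peaked parts if and only if $G$ admits a proper $k$-coloring.

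The key tool is \Cref{lem:sp-minors-are-bad}: its second item exhibits a \emph{two-voter} forbidden minor on four alternatives, which is exactly the combinatorial constraint needed to model an edge. Concretely, for each edge $e = \{u,w\}$ with $u < w$ (in some fixed ordering of $V$) I would introduce four fresh alternatives $a_e,b_e,c_e,d_e$ and arrange $\pi_u$ and $\pi_w$ so that $\{a_e,d_e\} \succ_{\pi_u} b_e \succ_{\pi_u} c_e$ and $\{c_e,d_e\} \succ_{\pi_w} b_e \succ_{\pi_w} a_e$. By \Cref{lem:sp-minors-are-bad}\eqref{eq:sp-minor-2}, the two-voter subprofile $(\pi_u|_{\{a_e,b_e,c_e,d_e\}}, \pi_w|_{\{a_e,b_e,c_e,d_e\}})$ is not single-peaked, so $\pi_u$ and $\pi_w$ cannot lie in the same class. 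Outside the edge gadgets, all vertex-voters should agree on a fixed background order of $A$, so that any class containing no conflicting pair is trivially single-peaked on an axis that respects the background order and locally orders each four-tuple $\{a_e,b_e,c_e,d_e\}$ as $a_e \lhd b_e \lhd d_e \lhd c_e$ (which makes the single edge-gadget orientation compatible).

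For the forward direction, a proper $k$-coloring $c : V \to [k]$ induces a partition $N_j = \{\pi_v : c(v) = j\}$; within each class no edge of $G$ is realized, so no forbidden 4-alternative minor arises across edge gadgets, and one only needs to check that the remaining two kinds of forbidden minors from \Cref{lem:sp-minors-are-bad} do not appear---this I would arrange by inserting padding alternatives between the gadget blocks so that each class is globally consistent with the background axis. Conversely, any $k$-partition of $P$ into single-peaked classes must, by the edge-gadget argument, assign the endpoints of every edge to different classes and therefore yield a proper $k$-coloring.

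The main obstacle I anticipate is interference between gadgets: when many edges are incident to a single vertex $v$, the vote $\pi_v$ must realize the prescribed pattern simultaneously on $\{a_e,b_e,c_e,d_e\}$ for every incident edge $e$ without inadvertently creating new 3-voter minors \eqref{eq:sp-minor-1} or 2-voter minors \eqref{eq:sp-minor-2} between $\pi_v$ and some $\pi_{v'}$ for a non-adjacent $v'$. The plan to overcome this is to separate the gadget blocks inside each ranking by a long sequence of "padding" alternatives that all voters rank identically, and then to argue by case analysis on the four alternatives involved in any alleged forbidden minor that such a minor must lie entirely inside a single edge gadget---so that it can only be created by a pair of vertex-voters corresponding to the two endpoints of that edge, as desired.
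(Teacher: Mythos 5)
The survey itself gives no proof of this theorem---it is stated by citation only---so there is nothing in the text to compare against line by line; but your plan is essentially the argument of the cited source: membership in NP is immediate, and hardness follows by a reduction from \textsc{Graph $k$-Coloring} in which each edge is encoded by the two-voter, four-alternative forbidden configuration of \Cref{lem:sp-minors-are-bad}\eqref{eq:sp-minor-2}, so that the two endpoint voters can never share a class (by hereditariness of $\Gammasp$), while an independent set of vertex-voters always forms a single-peaked subprofile. The construction you describe can be made to work: let all voters rank the edge gadgets in the same block order, let non-incident voters use a common background order on each gadget, and verify single-peakedness of each colour class either by exhibiting an explicit ``nested-interval'' axis or, as you suggest, via the forbidden-subprofile characterization (\Cref{thm:bal-har-sp}).

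One detail in your repair plan is stated too strongly and would need to be fixed in a full write-up: it is not true that every alleged forbidden configuration ``must lie entirely inside a single edge gadget.'' For the four-alternative configuration, the alternative $d$ only needs to be ranked above $b$ by \emph{both} voters, and since all voters unanimously rank earlier blocks above later ones, $d$ can perfectly well be drawn from a different (earlier) gadget or from the padding. What is true---and what suffices---is that the three alternatives $a,b,c$ on which the two voters must \emph{disagree} (they reverse the triple around the common middle element $b$) must all lie in one gadget, because all cross-block comparisons are unanimous. So the case analysis should show that, within a single gadget, the only pair of same-class restrictions that realizes a reversed triple is the pair of endpoint patterns; a direct check of the (deviating, background) and (background, background) pairs confirms this, and the three-voter configuration is excluded because each gadget sees at most two distinct restrictions per class. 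With that adjustment your argument goes through.
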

\begin{theorem}[\citealp{yang2020comp}, Theorem 8]
	\problemname{$\Gammasp$ Voter Partition} can be solved in polynomial time for $k = 2$.
\end{theorem}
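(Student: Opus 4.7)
My plan is a polynomial-time reduction to \textsc{2SAT}, which is itself decidable in linear time. The reduction rests on the Ballester--Haeringer characterization of single-peakedness via forbidden subprofiles (discussed in \Cref{sec:subprofiles}): a profile fails to be single-peaked if and only if it contains a 3-voter ``$\alpha$-configuration'' of the Condorcet type from Lemma~\ref{lem:sp-minors-are-bad}~\eqref{eq:sp-minor-1}, or a 2-voter ``$\beta$-configuration'' of the 4-alternative type from Lemma~\ref{lem:sp-minors-are-bad}~\eqref{eq:sp-minor-2}. Both families of obstructions can be enumerated in polynomial time by iterating over all relevant subsets of voters and alternatives.

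I introduce one Boolean variable $x_i$ per voter $v_i$, interpreted as $v_i \in N_1$ if $x_i = 0$ and $v_i \in N_2$ if $x_i = 1$. For every $\beta$-configuration supported on a pair $\{v_i, v_j\}$, I add the two clauses $(x_i \vee x_j)$ and $(\neg x_i \vee \neg x_j)$; these jointly enforce $x_i \neq x_j$, guaranteeing that $v_i$ and $v_j$ end up in different partitions and ruling out any internal $\beta$-obstruction in either $N_1$ or $N_2$.

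The main obstacle is the 3-voter $\alpha$-configurations, since the natural constraint ``not all three voters belong to the same partition'' expands to the NAE-3 clauses $(x_i \vee x_j \vee x_k) \wedge (\neg x_i \vee \neg x_j \vee \neg x_k)$, which are not 2-clauses. The resolution exploits the following rigidity: for any Condorcet triple $\{v_i, v_j, v_k\}$ on alternatives $\{a,b,c\}$, each of the three 2-voter subprofiles $\{v_i, v_j\}$, $\{v_j, v_k\}$, $\{v_i, v_k\}$ is single-peaked on a \emph{uniquely determined} axis (up to reversal), and the three resulting axes are pairwise distinct. Hence as soon as two voters of the triple share a partition, the axis of that partition is forced, and this axis is incompatible with admitting the third voter; combining this observation with a case analysis that uses the $\beta$-separation clauses already present, each $\alpha$-triple can be replaced by a small set of pairwise implications of the form $x_r \Rightarrow x_s$, all of which are genuine 2-clauses. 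The hardest step, which I expect to be the main technical challenge, will be proving completeness of this encoding: showing that whenever the constructed 2SAT formula is satisfiable, the induced assignment yields a genuine partition of $N$ into two single-peaked subprofiles, and that collapsing each Condorcet triple to pairwise implications does not spuriously eliminate any legitimate partition.
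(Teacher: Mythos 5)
Your high-level plan (reduce to \textsc{2SAT} via the forbidden-subprofile characterization of \Cref{thm:bal-har-sp}) is exactly the technique the survey attributes to \citet{yang2020comp}, and your treatment of the two-voter obstructions \eqref{eq:sp-minor-2} as disequality constraints $x_i\neq x_j$ is sound. The gap is in the step you yourself flag as the ``main technical challenge'': the three-voter obstructions \eqref{eq:sp-minor-1}. The ``rigidity'' observation you invoke --- once two voters of such a triple share a part, the axis of that part restricted to $\{a,b,c\}$ is forced and excludes the third voter --- is true, but it only re-derives the constraint $\mathrm{NAE}(x_i,x_j,x_k)$; it does not convert it into 2-clauses. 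And no faithful conversion over the variables $x_i,x_j,x_k$ can exist: the solution set of a 2SAT formula (even after projecting out auxiliary variables) is closed under the coordinatewise majority operation, whereas $\mathrm{NAE}$ on three variables is not (the majority of $001$, $010$, $100$ is $000$). The three-voter, three-alternative Condorcet profile shows this is not a vacuous worry: it contains no two-voter obstruction at all, every nontrivial partition is valid, and the set of valid partitions is exactly $\mathrm{NAE}(x_1,x_2,x_3)$.

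Consequently, any correct 2SAT reduction must deliberately \emph{lose} some valid partitions while preserving satisfiability, for instance by committing to separate a particular pair inside each violating triple. Your proposal gestures at this (``a case analysis that uses the $\beta$-separation clauses already present''), but choosing the pair arbitrarily is unsound: if voters $1,2,3$ form an \eqref{eq:sp-minor-1}-triple and two-voter obstructions already force $x_1\neq x_3$ and $x_2\neq x_3$, then additionally emitting $x_1\neq x_2$ makes the formula unsatisfiable even though the partition separating $\{1,2\}$ from $\{3\}$ is legitimate. So the entire content of the theorem lies in specifying \emph{which} 2-clauses to generate for each triple and proving that this choice is simultaneously sound (every satisfying assignment yields two single-peaked parts) and complete (a satisfying assignment exists whenever a valid partition does). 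As written, your proposal asserts that such a choice exists but supplies neither the rule nor the argument, so it does not establish the theorem.
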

The 2SAT technique also works for some other domains.
\begin{theorem}[\citealp{kraiczy2022explaining}]
	For the domains $\Gamma$ of value-restricted or of group-separable preferences, \problemname{$\Gamma$ Voter Partition} can be solved in polynomial time for $k = 2$.
\end{theorem}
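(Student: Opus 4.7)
The plan is to extend Yang's (2020) 2SAT-based reduction for $k{=}2$ single-peaked partitions to the value-restricted and group-separable domains, exploiting their forbidden-subprofile/minor characterizations. For each voter $i \in N$, introduce a Boolean variable $x_i$ with $x_i$ true meaning ``voter $i$ is placed in the first part'', and construct a polynomial-size 2SAT formula whose satisfying assignments are in bijection with valid 2-partitions. The clauses should enforce that neither part contains a forbidden subprofile of the target domain.

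For group-separable preferences, I would invoke the forbidden-minor characterization (Section \ref{sec:subprofiles}), whose minimal obstructions include two-voter minors such as the profile $v_1 = a\succ b\succ c\succ d$, $v_2 = c\succ a\succ d\succ b$ mentioned earlier in the excerpt. For each occurrence of such a two-voter minor $(i,j)$ in $P$, add the pair of clauses $(x_i \vee x_j) \wedge (\neg x_i \vee \neg x_j)$ to force $i$ and $j$ into different parts; these lie within 2SAT. Any larger forbidden configurations would be handled analogously via Karpov's characterization of group-separability in terms of P-node-free clone-decomposition trees, reducing group-separability of each part to local pairwise consistency. For value-restricted preferences, the obstruction (Proposition \ref{prop:vr-cycle}) is a Condorcet cycle on three distinct voters, so a direct encoding yields NOT-ALL-EQUAL constraints beyond 2SAT expressiveness. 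The key observation is: for each triple $\{a,b,c\}$ and each cycle orientation $C=\{o_1,o_2,o_3\}$, a part contains a Condorcet cycle on $\{a,b,c\}$ iff it contains a voter of each of the three orderings in $C$. Classifying voters by their restriction to $\{a,b,c\}$ into at most six classes $V_o$, the requirement that both parts be cycle-free on every triple-cycle pair becomes: for each triple-cycle, some class $V_o\subseteq C$ is entirely absent from part 1, and some (possibly different) class is entirely absent from part 2. Since voters in the same class on a triple are interchangeable and the local choice space per triple has constant size, the resulting global consistency conditions can be expressed as pairwise implications between variables $x_i, x_j$ of voters sharing restricted orderings on overlapping triples---clauses of size at most two.

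The main obstacle is the value-restricted case: one must faithfully encode the inherently three-voter Condorcet-cycle obstruction within 2SAT. The technical crux is to prove that whenever a valid 2-partition exists, it can be realized by a consistent pairwise assignment of ``banished orderings'' per triple, and conversely that these pairwise 2-clause constraints suffice to prevent any Condorcet cycle from appearing within a single part. The proof would exploit the tripartite structure of each Condorcet-cycle obstruction---three classes of interchangeable voters per triple and per cycle orientation---together with the fact that, when $k=2$, any three-voter monochromatic obstruction can be broken by reassigning a single voter, so no genuinely three-way branching should be needed to capture feasibility.
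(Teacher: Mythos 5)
The paper does not prove this theorem; it only cites \citet{kraiczy2022explaining} and remarks that ``the 2SAT technique also works'' here, so there is no in-paper proof to compare against. Judged on its own terms, your proposal has a genuine gap, and in fact its central step cannot work as stated. You propose a 2SAT formula over voter variables $x_1,\dots,x_n$ whose satisfying assignments are \emph{in bijection} with the valid $2$-partitions. But the minimal obstruction for value restriction is the three-voter Condorcet configuration (\Cref{prop:vr-cycle}), and for the instance consisting of exactly that three-voter profile the set of valid partitions is $\{0,1\}^3\setminus\{(0,0,0),(1,1,1)\}$, i.e.\ the not-all-equal relation. Solution sets of $2$-CNF formulas are closed under the coordinatewise majority operation, and this closure is preserved under projecting away auxiliary variables; since the majority of $(1,0,0),(0,1,0),(0,0,1)$ is $(0,0,0)$, which is not a valid partition, no $2$-CNF formula (even with auxiliaries) has solution set projecting onto the valid partitions. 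Consistently with this, your own case analysis shows that for a single triple with all three classes nonempty, \emph{every} truth-value combination of any two voter variables extends to a valid partition, so no nontrivial binary clause is implied by the constraint --- yet the constraint is nontrivial. Hence the claimed ``pairwise implications between variables of voters sharing restricted orderings on overlapping triples'' cannot capture feasibility; a correct argument must introduce some mechanism genuinely outside plain $2$-SAT-over-voter-variables (e.g.\ structural lemmas that pin down which class of each critical triple is monochromatic in which part, or a decision procedure that is only equisatisfiable rather than solution-preserving), and your write-up supplies none.

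The group-separable half has the same unaddressed difficulty. You treat only the two-voter obstructions concretely (the XOR clauses are fine for those) and defer ``larger forbidden configurations.'' But group-separable profiles are medium-restricted (\Cref{prop:gs-medium}), and the three-voter Condorcet profile is a \emph{minimal} non-group-separable profile (all of its two-voter subprofiles are group-separable), so three-voter obstructions are unavoidable there as well and run into exactly the same non-bijunctivity barrier. In short, what you label ``the technical crux'' is indeed the entire content of the theorem, and the proposal does not resolve it.
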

					
A similar question can be posed for partitioning alternatives.

\problem{$\Gamma$ Alternative Partition}
{A preference profile $P=(v_1,\dots,v_n)$, and an integer $k\ge 1$.}
{Can we partition the set of alternatives $A$ into $k$ sets $A_1,\dots,A_k$ 
so that for all $j\in [k]$ the profile $P|_{A_j}$ belongs to $\Gamma$?}

The computational complexity of this problem for single-peaked preferences is unknown;
in contrast, for single-crossing preferences this problem is known to be hard 
even for $k=3$.

\begin{theorem}[\citealp{jaeckle2018recognising}]
	\problemname{$\Gammasc$ Alternative Partition} is NP-complete, for each $k\ge 3$.
\end{theorem}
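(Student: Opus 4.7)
The plan is to prove NP-completeness by reducing from graph $k$-\textsc{Coloring}, which is NP-hard for every fixed $k\ge 3$. Membership in NP is immediate: nondeterministically guess a partition of $A$ into $k$ parts and verify that each restriction $P|_{A_j}$ is single-crossing, which takes time $O(nm\log m)$ per part via \Cref{alg:sc-fastest}.

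For the hardness reduction, given an instance $G=(V,E)$ of $k$-\textsc{Coloring}, I would construct a profile $P$ over an alternative set $A=V\cup W$, where $W$ consists of auxiliary alternatives contributed by a collection of per-edge gadgets. The heart of the construction is an \emph{edge gadget} that exploits the existence of small forbidden subprofiles for the single-crossing domain, such as the four-voter profile over four alternatives from \Cref{ex:sp-but-not-sc}. For each edge $e=\{u,v\}\in E$, the gadget introduces a few auxiliary alternatives $W_e\subseteq W$ together with a small set of voters whose restriction to $\{u,v\}\cup W_e$ forms such a forbidden configuration. Consequently, any part $A_j$ containing both endpoints of $e$ together with $W_e$ fails to be single-crossing.

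Correctness splits into two directions. For the forward direction, starting from a proper $k$-coloring $c\colon V\to[k]$, I would assign each $v\in V$ to part $A_{c(v)}$ and distribute the auxiliary alternatives according to a canonical rule (for instance, placing $W_e$ inside the part containing one of its endpoints); since $c$ is proper, no edge gadget ever contributes both endpoints of an edge to the same part, so no forbidden subprofile appears and each restriction $P|_{A_j}$ can be shown to be single-crossing. For the reverse direction, given any single-crossing $k$-partition $(A_1,\ldots,A_k)$ of $A$, the forbidden subprofile embedded in each edge gadget forces the endpoints of every edge of $G$ into distinct parts, so the induced map $u\mapsto j$ with $u\in A_j$ is a proper $k$-coloring.

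The principal technical difficulty lies in making the edge gadgets simultaneously (i) strong enough to enforce the edge constraint, (ii) sufficiently \emph{local} so that voters from disjoint edge gadgets do not combine into spurious forbidden subprofiles inside a single part, and (iii) flexible enough that the auxiliary alternatives in $W$ can always be absorbed into some part whenever a proper coloring exists. A standard remedy is to keep the auxiliary alternatives of distinct edges disjoint, to insert padding voters whose rankings make inter-gadget interactions trivially single-crossing, and to duplicate voters within each gadget so that the forbidden pattern is activated precisely when both endpoints of the edge appear together.
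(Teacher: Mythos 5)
The survey does not reproduce a proof of this theorem (it only cites \citealp{jaeckle2018recognising}), so your plan has to stand on its own, and it has a genuine gap. Your overall strategy---reduce from $k$-\textsc{Coloring} and encode each edge by a gadget built from a forbidden subprofile---mirrors the crossing-graph argument sketched at the end of \Cref{sec:recog:almost} for the \emph{fixed-order} variant of the problem, where partitioning the alternatives is literally coloring the crossing graph. That argument works there because, with the voter order fixed, two alternatives can conflict \emph{on their own}: the pair may cross twice in the given order. In the permutable setting of $\Gammasc$ \textsc{Alternative Partition} this is impossible: any profile restricted to one or two alternatives is single-crossing (sort the voters by their preference over the single pair), and by the forbidden-subprofile characterization of the single-crossing domain every violation involves at least two distinct crossing pairs, hence at least three alternatives. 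Conflicts are therefore inherently ternary or larger, and each part may moreover choose its own voter ordering, so the clean correspondence ``edge of $G$ $=$ pair of alternatives that cannot share a part'' is not available.

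This is exactly where your reverse direction breaks. Your edge gadget for $e=\{u,v\}$ only produces a forbidden configuration in a part that contains $u$, $v$, \emph{and} the auxiliary alternatives $W_e$. A partition that places $W_e$ in a part other than the one containing $u$ and $v$ evades the constraint entirely, so nothing forces the endpoints of $e$ apart and the induced map $u\mapsto j$ need not be a proper coloring. Your requirement (iii)---that the auxiliaries be ``flexible enough to be absorbed into some part''---is in direct tension with the enforcement you need: the more freely $W_e$ can be placed, the less the gadget constrains $u$ and $v$. Repairing the reduction requires a mechanism that guarantees the mediating alternatives are present wherever they are needed, for instance anchor alternatives that every part of every valid partition must (essentially) contain, through which the vertex alternatives conflict, or some other global argument pinning down where the auxiliaries can go. As written, only the NP-membership observation is complete.
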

		
\begin{open}
What is the computational complexity of \problemname{$\Gammasp$ Alternative Partition}? 
Furthermore, what is the complexity of \problemname{$\Gamma$ Voter Partition} 
and \problemname{$\Gamma$ Voter Partition} for other preference domains $\Gamma$?
\end{open}		

To conclude, we briefly discuss two notions of closeness that can be viewed
as local versions of alternative/voter partition, but are defined 
for specific restricted domains and do not generalize easily to other domains.

\citet{yang2015multipeak} consider $k$-peaked profiles: 
a profile $P$ 
over a set of alternatives $A$ is said to be \emph{$k$-peaked} if there is an axis 
$\lhd$ such that each voter $i$ can partition $A$ into $k_i\le k$ pairwise
disjoint subsets $A_1, \dots, A_{k_i}$ so that for each $j=1, \dots, k_i$
the set $A_j$ forms a contiguous subset of $\lhd$ and the restriction of $v_i$
to $A_j$ is single peaked on $\lhd|_{A_j}$. This notion can be seen as a local 
analog of alternative partition, in that each voter is allowed to choose
their own partition of $\lhd$. However, \citet{yang2015multipeak} focus on the complexity
of election control in this domain rather than the recognition problem, 
and leave the recognition problem open.

In a similar spirit, \citet{misra2017complexity} discuss $k$-crossing profiles:
a profile $P$ over a set of alternatives $A$ is said to be \emph{$k$-crossing} if 
the voters can be reordered so that in the reordered profile $P'$ each pair 
of alternatives `crosses' at most $k$ times, i.e., for each pair of alternatives $(a, b)$
the voters can be split into $k+1$ groups so that the voters
in each group form a contiguous block in $P'$ and agree on $(a, b)$. 
This notion can be viewed as a local variant of the voter partition 
problem, in the sense that for each pair of alternatives we may choose
a different partition of the reordered profile $P'$ into $k+1$ groups. 
Again, \citet{misra2017complexity} study the complexity of multi-winner voting
in this domain rather than the recognition problem for the domain itself.

\subsubsection{Fixed Order of Voters or Alternatives}
The domain $\Gammasp$ consists of all profiles that are single-peaked on 
\emph{some} axis. Alternatively, we can fix an axis $\lhd$ and 
consider the domain $\Gamma_\lhd$ of all profiles that are single-peaked
on $\lhd$. We can then ask if a given profile $P$ is close 
to being in $\Gamma_\lhd$, i.e., whether we can make a small number of changes
to $P$ to obtain a profile in $\Gamma_\lhd$. 
In a similar fashion, we can ask if we can make a small number of changes to $P$ 
in order to obtain a profile that is single-crossing in the given order.

\citet{erd-lac-pfa:j:nearly-sp} 
study the complexity of making a given profile $P$
single-peaked on a fixed axis $\lhd$, for many of the distance notions 
discussed earlier in this chapter, and obtain polynomial-time algorithms for each
distance measure they consider (Section 6.3 of their paper). 
For voter deletion, the algorithm is trivial: one can simply delete 
the voters whose preferences are not single-peaked on $\lhd$.
For the alternative deletion problem, \citet{erd-lac-pfa:j:nearly-sp} 
argue that this problem can be reduced 
to \problemname{$\Gammasp$ Alternative Deletion} by adding two votes 
that rank the alternatives according to $\lhd$ and its reverse to the input profile:
this ensures that the algorithm for \problemname{$\Gammasp$ Alternative Deletion}
does not benefit from considering axes other than $\lhd$.
However, this problem also admits a direct $O(nm^3)$ algorithm,\footnote{We are grateful
to Andrei Constantinescu for this observation.}
which we describe below.

\begin{theorem}\label{thm:alt-del-fixedorder}
Given an $n$-voter profile $P$ over an alternative set $A$, $|A|=m$,
and an axis $\lhd$, we can compute a maximum-size set $A'$ such that
$P|_{A'}$ is single-peaked on $\lhd|_{A'}$ in time $O(nm^3)$.
\end{theorem}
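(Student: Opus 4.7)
The plan is to solve the problem by dynamic programming, using the \emph{local valley} characterization of single-peakedness provided by Condition~(2) of \Cref{prop:sp-equiv}. Assume that $\lhd$ orders the alternatives as $a_1 \lhd a_2 \lhd \cdots \lhd a_m$. The key observation is that a subset $A' = \{a_{i_1}, \dots, a_{i_t}\}$ with $i_1 < i_2 < \cdots < i_t$ induces a profile $P|_{A'}$ that is single-peaked on $\lhd|_{A'}$ if and only if for every $\ell$ no voter creates a valley on the \emph{consecutive} triple $(a_{i_\ell}, a_{i_{\ell+1}}, a_{i_{\ell+2}})$, i.e., no voter has both $a_{i_\ell} \succ a_{i_{\ell+1}}$ and $a_{i_{\ell+2}} \succ a_{i_{\ell+1}}$. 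This local property decomposes nicely along a left-to-right sweep that only needs to remember the two most recently chosen alternatives.

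Accordingly, I would define, for every pair $1 \le i < j \le m$, the quantity $f(i,j)$ as the maximum size of a set $A' \subseteq \{a_1,\dots,a_j\}$ such that $a_i, a_j \in A'$, no element of $A'$ lies strictly between $a_i$ and $a_j$ on $\lhd$, and $P|_{A'}$ is single-peaked on $\lhd|_{A'}$. Calling a triple $(k,i,j)$ with $k<i<j$ \emph{compatible} if no voter ranks both $a_k$ and $a_j$ above $a_i$, the recurrence is
\[
 f(i,j) \;=\; 1 \;+\; \max\!\Bigl(1,\ \max\{\,f(k,i)\ :\ k<i,\ (k,i,j)\text{ is compatible}\,\}\Bigr),
\]
where the inner ``$1$'' covers the case that $a_i$ is the smallest selected alternative. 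The desired maximum is then $\max_{i<j} f(i,j)$ (together with the trivial lower bound of $\min(m,1)$ covering $|A'| \le 1$), and an optimal set $A'$ is recovered by standard parent-pointer backtracking.

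For the runtime, I would first precompute, for every voter $v$, a rank array giving the position of each alternative in $v$ in $O(nm)$ total time, so that checking whether $a_k \succ_v a_i$ takes $O(1)$. There are $O(m^2)$ DP states, and each state evaluates the maximum over $O(m)$ candidate values of $k$; checking compatibility of a triple $(k,i,j)$ requires a scan over all $n$ voters, which costs $O(n)$. This yields an overall running time of $O(n m^3)$, matching the claim.

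The only real subtlety is verifying that it suffices to check compatibility on consecutive triples introduced by each extension step: when we extend a partial solution ending in $(a_k, a_i)$ by appending $a_j$, the only new consecutive triple of $A'$ is $(a_k, a_i, a_j)$, because all earlier consecutive triples were already validated during the construction of $f(k,i)$. This is exactly the local-valley characterization of \Cref{prop:sp-equiv}, so the DP is correct. The most error-prone step is the bookkeeping for the base case and for the outer maximum, which must account for selections of size less than~$3$; both are handled cleanly by the $\max(1,\cdot)$ term in the recurrence.
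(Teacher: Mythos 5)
Your proposal is correct and follows essentially the same route as the paper: a dynamic program over pairs of alternatives recording the last two selected elements of the axis, with transitions validated by the ``no local valleys'' condition of \Cref{prop:sp-equiv}~(2), giving $O(m^2)$ states, $O(m)$ transitions each, and an $O(n)$ per-transition check for a total of $O(nm^3)$. Your formulation is in fact slightly more careful than the paper's in requiring that no selected element lies strictly between the two indexed alternatives and in handling the size-$\le 2$ base cases explicitly.
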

\begin{proof}
Assume without loss of generality that the axis $\lhd$
is given by $a_1\lhd\dots\lhd a_m$. We proceed by dynamic programming.
For each pair of indices $j, \ell$ with $1\le j < \ell\le m$, 
let $s(j, \ell)$ be the maximum size of a set of alternatives $C$
such that $C\subseteq\{a_1, \dots, a_\ell\}$, $a_j, a_\ell\in C$
and $P|_C$ is single-peaked on $\lhd|_C$. Clearly, we have $s(1, 2)=2$, 
and the size of the target set $A'$ is given by $\max_{1\le j< \ell\le m}s(j, \ell)$;
the set $A'$ itself can be computed by standard dynamic programming techniques.

We will now explain how to compute $s(j, \ell)$ when $\ell>2$.
We say that an index $k$, $1\le k<j$ is \emph{$(j, \ell)$-good}
if for each voter $i\in N$ the restriction of $v_i$ to $\{a_k, a_j, a_\ell\}$
does not form a valley with respect to $\lhd$, i.e., if $a_k\succ_i a_j$
or $a_\ell\succ_i a_i$. We then have
$$
s(i, j) = 1 + \max_{k: k\text{ is $(j, \ell)$-good}}s(k, j).
$$
Indeed, our definition of a good index ensures that no triple
of alternatives that appear consecutively in the set that is implicitly 
constructed by the algorithm forms a valley in any of the voter's preferences, 
i.e., the restriction of $P$ onto this set has the `no local valleys' property (see \Cref{prop:sp-equiv}~(2)).
\end{proof}

For local alternative deletion, the votes are processed one by one;
for each vote, the problem reduces to guessing a new peak,
splitting the axis $\lhd$ at this peak, and finding a maximum-length
increasing subsequence in the left part and a maximum-length decreasing
subsequence in the right part. 

To evaluate the swap-based distance measures, a key step
is to compute the minimum number of swaps required to 
make a given vote single-peaked on $\lhd$. 
\citet{fal-hem-hem:j:nearly-sp} and \citet{erd-lac-pfa:j:nearly-sp} describe
dynamic programming algorithms for this problem,
whose running time is, respectively, $O(m^4)$ and $O(m^3)$;
below, we present a slightly modified version of the procedure
proposed by \citet{erd-lac-pfa:j:nearly-sp}.

\begin{theorem}
Given a vote $v$ over a set of alternatives $A=\{a_1, \dots, a_m\}$
and an axis $\lhd$, we can compute 
$\min_{u\in \Gamma_\lhd}K(u, v)$ in time $O(m^3)$.
\end{theorem}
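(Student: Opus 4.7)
Assume without loss of generality that $\lhd$ is $a_1\lhd a_2\lhd\dots\lhd a_m$. The plan is a straightforward dynamic program, built on the following structural observation: a linear order $u$ over $A$ is single-peaked on $\lhd$ if and only if, for every $k\in[m]$, the set of top $k$ alternatives in $u$ is a contiguous subinterval $\{a_l,a_{l+1},\dots,a_r\}$ of $\lhd$ (this is condition (4) of \Cref{prop:sp-equiv} applied to every prefix length). Equivalently, building $u$ from the bottom upwards, at each stage we peel off either the leftmost or the rightmost element of the remaining interval. So every single-peaked order $u$ is encoded by a sequence of left/right choices, one per position except the last.

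The DP is indexed by subintervals: for $1\le l\le r\le m$, let $f(l,r)$ denote the minimum, over all single-peaked orders $u$ on $\{a_l,\dots,a_r\}$ with respect to $\lhd|_{\{a_l,\dots,a_r\}}$, of the number of pairs $\{a_i,a_j\}$ with $l\le i<j\le r$ on which $u$ and $v|_{\{a_l,\dots,a_r\}}$ disagree. Setting $c_L(l,r)=|\{j:l<j\le r,\ a_l\succ_v a_j\}|$ and $c_R(l,r)=|\{j:l\le j<r,\ a_r\succ_v a_j\}|$, the recursion is
\[
  f(l,r)=\min\bigl\{c_L(l,r)+f(l+1,r),\ c_R(l,r)+f(l,r-1)\bigr\},\qquad f(l,l)=0,
\]
and the required minimum distance is $f(1,m)$. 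The two branches correspond to placing $a_l$ (resp.\ $a_r$) as the lowest-ranked alternative among $\{a_l,\dots,a_r\}$; the cost $c_L(l,r)$ (resp.\ $c_R(l,r)$) counts exactly those pairs inside $[l,r]$ on which $u$ and $v$ disagree and which involve the newly placed endpoint.

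The main thing to check carefully is that this cost decomposition is sound, i.e.\ that summing the contributions over the choices on the path from $(1,m)$ down to the base cases yields exactly $K(u,v)$. I would argue this by a simple accounting: each pair $\{a_i,a_j\}$ is charged exactly once, namely at the unique stage at which one of its two elements is peeled off as an endpoint of the then-current interval (while the other is still inside it); and the charge is $1$ precisely when $v$ disagrees with $u$ on this pair, i.e.\ when $v$ ranks the peeled element above the element that remains inside. Together with an inductive proof that $f(l,r)$ correctly computes the minimum for the subinterval $[l,r]$, this establishes correctness.

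For the running time, note that the DP has $\binom{m}{2}+m=O(m^2)$ states. A direct evaluation of $c_L(l,r)$ and $c_R(l,r)$ from the rank array of $v$ costs $O(m)$ per state, yielding the claimed $O(m^3)$ bound; the rank array itself is computed in $O(m)$ time. (With the simple precomputation $c_L(l,r)=c_L(l,r-1)+\mathbf{1}[a_l\succ_v a_r]$ and the symmetric recurrence for $c_R$, one can further shave this to $O(m^2)$, but the weaker $O(m^3)$ bound is all that is needed here.) I expect the correctness of the cost accounting to be the only subtle point; once that is in place, the DP itself and its complexity analysis are entirely routine.
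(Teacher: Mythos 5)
Your proof is correct and is essentially the paper's own argument run in the opposite direction: the paper peels the single-peaked order from the top, so its states are the prefix-plus-suffix sets $A_{i,j}=\{a_1,\dots,a_i,a_j,\dots,a_m\}$ (the bottom part of $u$, with a bit recording which endpoint is currently ranked first), whereas you peel from the bottom, so your states are the intervals $\{a_l,\dots,a_r\}$ (the top part of $u$), both resting on the same characterization from \Cref{prop:sp-equiv}~(4). The two decompositions charge each pair exactly once, both use $O(m^2)$ states with $O(m)$ work per state, and your accounting argument for why the per-step costs sum to $K(u,v)$ is sound, so nothing further is needed.
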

\begin{proof}
Assume without loss of generality that the axis $\lhd$
is given by $a_1\lhd\dots\lhd a_m$.

For each $i, j$ with $1\le i< j\le m$, let 
$A_{i, j}=\{a_1, \dots, a_i, a_j, \dots, a_m\}$,  
let $S_L(i, j)$ (respectively, $S_R(i, j)$)
be the set of all preference orders over 
$A_{i, j}$ that are single-peaked on the restriction of $\lhd$ to $A_{i, j}$
and rank $a_i$ first (respectively, rank $a_j$ first).
For $X\in\{L, R\}$, let $s_X(i, j)= \min_{u\in S_X(i, j)}K(v|_{A_{i, j}}, u)$;
it will be convenient to set $s_X(i, j)=+\infty$ if $i=0$ or $j=m+1$.

Note that for each $i\in [m-1]$ the set 
$S_L(i, i+1)$ consists of all preference orders
over $A$ that are single-peaked on $\lhd$ and rank $a_i$ first, 
whereas the set $S_R(m-1, m)$ consists of the (unique) preference
order over $A$ that is single-peaked on $\lhd$ and ranks $a_m$ first, 
so
$$
\min_{u\in\Gamma_\lhd}K(u, v) =\min\{s_R(m-1, m), \min_{i\in [m-1]}s_L(i, i+1)\}.
$$

We will now explain how to compute $s_X(i, j)$ for all $X\in\{L, R\}$ 
and $1\le i<j\le m$.
For $i=1, j=m$, we have $s_L(i, j)=0$, $s_R(i, j)=1$ if $v$ ranks $a_1$ above $a_m$
and $s_L(i, j)=1$, $s_R(i, j)=0$ otherwise. Now, fix $i, j$ with $1\le i< j\le m$,
and suppose we have computed
$s_X(k, \ell)$ for all $X\in\{L, R\}$ and all $k, \ell$ with $\ell-k>j-i$.
Let $v'=v|_{A_{i, j}}$, and let $\lhd'$ be the restriction of $\lhd$ to $A_{i, j}$.
To compute $s_L(i, j)$, we first compute the cost of moving $a_i$
to the top position in $v'$; if $a_i$ is ranked in position $r$
in $v'$, this requires $r-1$ swaps. The remaining alternatives in $v'$ 
are now ordered according to $v|_{A(i-1, j)}$; 
we need to reorder them to obtain a vote over $A(i, j)$ that is single-peaked on $\lhd'$.
Note that the length-2 prefix of this vote must form a contiguous segment 
of $\lhd'$, so its second position must be occupied 
by $a_{i-1}$ or $a_j$.
Thus, we obtain
$$
s_L(i, j)=r-1+\min\{s_L(i-1, j), s_R(i-1, j)\}.
$$
By a similar argument, 
$$
s_R(i, j)=r'-1+\min\{s_L(i, j+1), s_R(i, j+1)\},
$$
where $r'$ is the position of $a_j$ in $v'$.
Altogether, we need to compute $O(m^2)$ quantities, 
and each of them can be computed in time $O(m)$,
which implies our bound on the running time.
\end{proof}

\citet{lakhani2019correlating} 
consider the problem of modifying a given profile to make 
it single-crossing in the given order. This problem turns out
to be quite challenging: they obtain NP-hardness results for alternative deletion, 
alternative partition, and both local and global swaps. 

\citeauthor{lakhani2019correlating}'s hardness results for 
alternative deletion and alternative partition are based on the notion of \emph{crossing graph}
\citep{cohen}. The \emph{crossing graph} of a profile $P$ on a set of alternatives 
$A$ is an undirected graph $G(P)$ that has $A$ as its set of vertices; there is an edge
connecting $a$ and $b$ if the pair $(a, b)$ violates the single-crossing condition.
\citet{cohen} describe an efficient algorithm that, 
given an undirected $m$-vertex graph $G$ with no loops and parallel edges,
builds a profile $P$ with at most $2m+1$ voters such that $G(P)=G$.
This construction immediately implies hardness of alternative deletion
and partition: deleting a set of $k$ alternatives to make the profile $P$ 
single-crossing corresponds to finding a vertex cover of size at most $k$ in $G(P)$,
and partitioning $A$ into at most $k$ sets so that the restriction of $A$ to each
set is single-crossing corresponds to coloring $G(P)$ with at most $k$ colors.

\begin{examplebox}
	{The crossing graph.}
	{crossing}
	\begin{minipage}{0.21\linewidth}
	\begin{tabular}{cccc}
	\toprule
	$v_1$ & $v_2$ & $v_3$ & $v_4$ \\
	\midrule
	$a$ & $a$ & $b$ & $a$ \\
	$b$ & $d$ & $a$ & $b$ \\
	$c$ & $c$ & $c$ & $c$ \\
	$d$ & $b$ & $e$ & $d$ \\
	$e$ & $e$ & $d$ & $e$ \\
	\bottomrule
	\end{tabular}
	\end{minipage}
	\hfill
	\begin{minipage}{0.47\linewidth}
	The reader can verify that the graph $G$ on the right is the crossing
	graph of the profile on the left: e.g., $G$ contains the edge $\{a, b\}$
	and we have $a\succ_1 b$, $b\succ_3 a$ and $a\succ_4 b$, i.e., $a$ and $b$
	cross more than once. 
	\end{minipage}
	\qquad
	\begin{minipage}{0.22\linewidth}
	\begin{tikzpicture}
	\draw[thick] (1,0) -- (1,1) -- (2,2) -- (3,1) -- (3,0);
	\draw[thick] (1,1) -- (3,1);
	\draw[fill=red!70!white] (1,0) circle (3pt);
	\draw[fill=blue!70!white] (1,1) circle (3pt);
	\draw[fill=red!70!white] (2,2) circle (3pt);
	\draw[fill=green!80!black] (3,1) circle (3pt);
	\draw[fill=red!70!white] (3,0) circle (3pt);
	\node at (0.5,0) {$a$};
	\node at (0.5,1) {$b$};
	\node at (2,2.4) {$c$};
	\node at (3.4,1) {$d$};
	\node at (3.4,0) {$e$};
	\end{tikzpicture}
	\end{minipage}

	\vspace{6pt}

	The set $\{b, d\}$ forms a vertex cover of $G$;
	thus, if we delete $b$ and $d$ from $G$, the induced graph on $\{a, c, e\}$
	has no edges. Accordingly, the restriction of $P$ to $\{a, c, e\}$ 
	is single-crossing. Further, as $G$ contains a cycle of length $3$, it is
	not 2-colorable, but it admits a 3-coloring, as shown in the figure.
	This corresponds to partitioning $A$ into three
	sets ($\{a, c, e\}$, $\{b\}$ and $\{d\}$) so that the restriction of $P$
	to each of these sets is single-crossing.  
\end{examplebox}

On the positive side, the problem of finding
the minimum number of voters to remove to make the profile 
single-crossing in the given order turns out to be easy: 
one can use (a simplification of) 
\citet{bredereck2016nicelystructured}'s 
algorithm for \problemname{$\Gammasc$ Voter Deletion}, 
which is based on finding the longest
path in a directed acyclic graph. 
\citet{lakhani2019correlating} also propose a polynomial-time algorithm
for the variant of the voter partition problem where each part
has to form a contiguous subprofile of the input profile $P$, as well as for the problem
of finding the minimum number of voter swaps required to make 
$P$ single-crossing in the given order.

\subsubsection{Axiomatic Approach to Nearly Structured Preferences}

As we have seen, there are many possible measures for a profile to be close to being structured. Which of these measures is the most appropriate? \citet{escoffier2021measuring} (extended in the thesis of \citealp{tydrichova2023structural}, Sec. 4.4) consider this question from an axiomatic perspective. They focus on single-peakedness, and conceptualize the problem through \emph{axis selection rules}, which given a profile output an axis $\lhd$ that ``best'' explains the profile. For example, the voter deletion rule outputs an axis on which the maximum number of voters is single-peaked. \citet{tydrichova2023structural} compares different such rules through their axiomatic behavior. For example, she shows that the voter deletion rule satisfies a reinforcement axiom as well as a stability axiom. \citet{delemazure2024comparing} perform a similar analysis for approval preferences based on the CI property (see \Cref{sec:dichotomous}). The same kind of approach could fruitfully be employed for many other notions of structure, such as (multidimensional) Euclidean preferences and preferences single-peaked on trees, or even to single-crossing preferences (for rules that select an ordering of the voters).
 
\newpage
\section{Characterizations by Forbidden Subprofiles}
\label{sec:subprofiles}

A classic result in graph theory is Kuratowski's theorem, which characterizes planar graphs as 
graphs that do not contain subdivisions of $K_5$ and $K_{3, 3}$ as their subgraphs \citep{kuratowski}. 
In a similar fashion, some restricted domains can be defined by a set of forbidden subprofiles: 
a profile belongs to a restricted domain if and only if it does not contain these subprofiles.
In this section, we introduce several characterization results that are based on forbidden subprofiles,
and discuss some applications of this theory.

\subsection{Single-Peaked Preferences}

We start with the following theorem, which characterizes the single-peaked domain in terms of forbidden substructures.

\begin{theorem}[\citealp{ballester2011characterization}]\label{thm:bal-har-sp}
A profile $P$ over $A$ is single-peaked if and only if 
there do not exist alternatives $a,b,c,d\in A$ and voters $i,j\in N$ such that
\begin{align}
\{a,d\} \succ_i b  \succ_i c, \qquad\text{ and }\qquad
\{c,d\} \succ_j b  \succ_j a,
\end{align}
and there do not exist alternatives $a,b,c\in A$ and voters $i,j,k\in N$ such that
\begin{align}
\{b,c\} \succ_i a, \qquad \{a,c\}\succ_j b, \qquad\text{ and }\qquad \{a,b\}\succ_k c.
\end{align}
\end{theorem}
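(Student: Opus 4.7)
The plan is to reduce both directions to results already established in Section 3.1 on recognizing single-peaked profiles; almost no new work is needed.

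For the forward direction, I would simply invoke \Cref{lem:sp-minors-are-bad}. That lemma states exactly that any profile containing one of the two patterns in the theorem fails to be single-peaked, and its contrapositive is precisely the forward direction. The lemma is itself proved directly from the no-valley characterization (for the three-alternative, three-voter pattern) and from the convexity characterization (for the four-alternative, two-voter pattern), both given in \Cref{prop:sp-equiv}, so no heavy machinery is invoked.

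For the reverse direction, suppose $P$ contains neither forbidden subprofile. I would run \Cref{alg:sp} on $P$. In the proof of \Cref{thm:recognizing-sp}, the second Claim (``If \Cref{alg:sp} returns `no', then $P$ is not single-peaked'') was established by a case analysis over the four failure branches of \Cref{alg:sp}---Lines \ref{sp-alg:phase1:not-worst}, \ref{sp-alg:phase2:not-worst}, \ref{sp-alg:phase2:LRtoolarge}, and \ref{sp-alg:phase2:contrad}---and in each case an explicit instance of one of the two forbidden patterns from \Cref{lem:sp-minors-are-bad} was exhibited in $P$. Under our hypothesis that no such instance exists in $P$, \Cref{alg:sp} cannot reach any of its failure lines and must therefore return an axis $\lhd$. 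The first Claim in the same proof then tells us that $P$ is single-peaked on $\lhd$, as required.

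The main conceptual obstacle is the exhaustive case analysis used in the second Claim above, but that work was already carried out in Section~3.1. The most delicate subcase is the failure at Line~\ref{sp-alg:phase2:LRtoolarge}, where $|L|=2$ or $|R|=2$: extracting one of the forbidden patterns there may require producing a ``witness'' voter $t$ who bottom-ranked an alternative that was placed in an earlier iteration. The existence of such a $t$ is guaranteed by the outside-in construction of the axis, which is the key structural feature of \Cref{alg:sp}. If one preferred a self-contained proof that does not cite \Cref{alg:sp}, one could mimic the same inductive outside-in construction of an axis and verify at each step that any obstruction to placing the next extremal alternative instantiates one of the two patterns; this would essentially reproduce the case analysis rather than avoid it.
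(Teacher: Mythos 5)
Your proposal is correct and follows essentially the same route as the paper: the paper's own proof of this theorem establishes the forward direction by the same valley/interval argument that underlies \Cref{lem:sp-minors-are-bad}, and for the converse it likewise traces the execution of \Cref{alg:sp} and shows that each failure branch exhibits one of the two forbidden configurations. Your version merely makes the reuse of \Cref{lem:sp-minors-are-bad} and of the case analysis from the proof of \Cref{thm:recognizing-sp} explicit rather than repeating it inline.
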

\begin{proof}
We will first argue that if a profile $P$ violates condition~(1) or condition~(2) then it is not single-peaked.
Suppose for the sake of contradiction that $P$ is single-peaked with respect to some axis $\lhd$,
but violates condition (1). 
We can assume without loss of generality that $A=\{a, b, c, d\}$: otherwise we can consider
the restriction of $P$ to $\{a, b, c, d\}$, which is single-peaked as long as $P$ is.
By \Cref{prop:sp-equiv}~(4), we have $a\lhd \{b, d\}\lhd c$ or $c\lhd\{b, d\}\lhd a$;
we can assume without loss of generality that $a\lhd \{b, d\}\lhd c$. But if $a\lhd b \lhd d \lhd c$
then $\succ_i$ is incompatible with $\lhd$ (as $a, d$ form a prefix of $\succ_i$, 
but do not form an interval with respect to $\lhd$), and if 
$a\lhd d \lhd b \lhd c$
then $\succ_j$ is incompatible with $\lhd$ (as $c, d$ form a prefix of $\succ_j$, 
but do not form an interval with respect to $\lhd$).
We obtain a contradiction.
Now, suppose that $P$ is single-peaked with respect to $\lhd$,  
but violates condition (2). Again, we can assume that $A=\{a, b, c\}$.
One of the alternatives $a, b, c$ appears between the other two on $\lhd$. This creates a valley
in the preferences of the voter who ranks this alternative last, a contradiction again.

For the converse direction, we consider the execution of \Cref{alg:sp}, which recognizes 
single-peaked profiles.
We will show that if this algorithm decides that the input profile $P$ is not single-peaked, 
then $P$ fails condition~(1) or condition~(2).

It is straightforward to see that if \Cref{alg:sp} 
terminates in line~\ref{sp-alg:phase1:not-worst} or~\ref{sp-alg:phase2:not-worst} then condition~(2) is not satisfied.

If \Cref{alg:sp} terminates in line~\ref{sp-alg:phase2:contrad}, 
let $x$ be the alternative that is contained in both $L$ and $R$.
Thus, there exists a voter $i$ with $\ell\succ_i x\succ_i r$ and a voter $j$ with $r\succ_j x \succ_j \ell$.
Furthermore, since $|A'|>1$ and $i,j\in N_x(A')$, there exists an alternative $y$ with $y\succ_i x$ and $y\succ_j x$.
Thus, the alternatives $\ell,r,x,y$ and voters $i,j$ witness that condition (1) is not satisfied.

Now suppose that \Cref{alg:sp} terminates in line~\ref{sp-alg:phase2:LRtoolarge}. 
Without loss of generality, we can assume that $|L|=\{x,y\}$.
Then there exist voters $i,j$ with $r\succ_i x \succ_i \ell$ and $r\succ_j y \succ_j \ell$.
Since $i\in N_x(A')$, we have $y\succ_i x$, and since $j\in N_y(A')$, we have $x\succ_j y$.
Since $r$ has already been placed during an earlier step, there has to exist a voter $k$ with $\{x,y\}\succ_k r$.
Thus, candidates $r,x,y$ and voters $i,j,k$ witness that condition~(2)
is violated.
We conclude that if a profile is not single-peaked, then it violates condition~(1) or condition~(2).
\end{proof}

While \Cref{thm:bal-har-sp}
does not explicitly list forbidden subprofiles, it defines them implicitly: these are all profiles with two voters 
and four alternatives witnessing a violation of the first condition and all profiles with three voters and three 
alternatives witnessing a violation of the second condition. This list is essentially finite.
Formally, we say that a profile $P$ over $A$ is \emph{isomorphic} to a profile $P'$ over $A'$ 
if $P'$ can be obtained from $P$ by renaming alternatives and permuting the votes.
In particular, if $P$ is isomorphic to $P'$ then $|A|=|A'|$ and $P$ and $P'$ contain the same number of voters.
Abusing the terminology somewhat, we say that a profile $P$ \emph{contains $P'$ as a subprofile} 
if there exists a profile $P''$ 
that is isomorphic to $P'$ and can be obtained from $P$ by alternative and voter deletion.
With these definitions in hand, we can restate \Cref{thm:bal-har-sp} purely in terms of forbidden subprofiles.

\begin{theorem}\label{thm:bal-har-sp-2}
A profile $P$ over $A$ is single-peaked if and only if $P$ does not contain the following profiles as subprofiles:
\begin{center}
	\begin{tabular}{ccc}
	\toprule
	$v_1$ & $v_2$ & $v_3$ \\
	\midrule
	$a$ & $b$ & $c$ \\
	$b$ & $c$ & $a$ \\
	$c$ & $a$ & $b$ \\
	\bottomrule \\
	\end{tabular}
\quad\qquad
	\begin{tabular}{ccc}
	\toprule
	$v_1$ & $v_2$ & $v_3$ \\
	\midrule
	$a$ & $c$ & $a$ \\
	$b$ & $b$ & $c$ \\
	$c$ & $a$ & $b$ \\
	\bottomrule \\
	\end{tabular}
\quad\qquad
	\begin{tabular}{ccc}
	\toprule
	$v_1$ & $v_2$ \\
	\midrule
	$d$ & $d$  \\
	$a$ & $c$  \\
	$b$ & $b$  \\
	$c$ & $a$  \\
	\bottomrule
	\end{tabular}
\quad\qquad
	\begin{tabular}{ccc}
	\toprule
	$v_1$ & $v_2$ \\
	\midrule
	$d$ & $c$  \\
	$a$ & $d$  \\
	$b$ & $b$  \\
	$c$ & $a$  \\
	\bottomrule
	\end{tabular}	
\quad\qquad
	\begin{tabular}{ccc}
	\toprule
	$v_1$ & $v_2$ \\
	\midrule
	$a$ & $c$  \\
	$d$ & $d$  \\
	$b$ & $b$  \\
	$c$ & $a$  \\
	\bottomrule
	\end{tabular}		
\end{center}
\end{theorem}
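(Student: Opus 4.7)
The plan is to derive \Cref{thm:bal-har-sp-2} directly from \Cref{thm:bal-har-sp}, by showing that conditions (1) and (2) in that theorem correspond precisely, up to renaming alternatives and permuting voters, to the five listed forbidden subprofiles. The key observation is that each violation of condition (1) or (2) lives on a bounded number of voters and alternatives (two voters and four alternatives for condition (1); three voters and three alternatives for condition (2)), so one can enumerate the isomorphism classes of witnessing subprofiles by a finite case check.

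For the forward direction, I would first verify that each of the five listed profiles is itself not single-peaked by pointing to an explicit violation of the conditions in \Cref{thm:bal-har-sp}. The first two three-voter profiles violate condition (2): in each one the three alternatives $a$, $b$, $c$ each appear last in some vote, giving the required triple $i,j,k$. The three two-voter profiles all fit condition (1): in each of them one identifies four alternatives $a,b,c,d$ and voters $i,j$ with $\{a,d\}\succ_i b\succ_i c$ and $\{c,d\}\succ_j b\succ_j a$, which can be read off directly from the tables. Since the class of single-peaked profiles is hereditary (closure under voter and alternative deletion is immediate from \Cref{def:sp}), any profile containing one of the five as a subprofile fails to be single-peaked.

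For the converse, assume $P$ is not single-peaked; then by \Cref{thm:bal-har-sp}, $P$ violates (1) or (2). If (2) is violated by alternatives $a,b,c$ and voters $i,j,k$, then $P|_{\{i,j,k\},\{a,b,c\}}$ is a three-voter, three-alternative profile in which each of $a,b,c$ is ranked last in exactly one vote. This leaves $2\times 2\times 2 = 8$ possible subprofiles (each voter has two choices for the ordering of the two non-last alternatives), and I would check that up to the $S_3$-action on $\{a,b,c\}$ together with permutations of voters, these eight profiles collapse into exactly two isomorphism classes, namely the Condorcet cycle (forbidden profile 1) and the ``non-cycle'' profile (forbidden profile 2). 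The distinction between the two classes can be seen by noting that one has a cyclic majority tournament while the other does not, so no renaming can send one to the other. If (1) is violated by $a,b,c,d$ and $i,j$, then in $P|_{\{i,j\},\{a,b,c,d\}}$ voter $i$ must rank $\{a,d\}$ in the top two positions (in one of two orders) followed by $b$ and $c$, and voter $j$ must rank $\{c,d\}$ in the top two positions (in one of two orders) followed by $b$ and $a$, giving four possibilities. The symmetry that simultaneously swaps the two voters and the alternatives $a\leftrightarrow c$ preserves the two constraints and collapses one pair of these four possibilities, leaving exactly the three two-voter forbidden profiles.

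The main obstacle is bookkeeping in the backward direction: ensuring that every one of the eight (respectively four) configurations arising from a violation actually falls into one of the listed isomorphism classes, and checking that no two listed profiles can be conflated. For the violations of (2), the cleanest way to organize the enumeration is by the number of distinct rankings in the restricted profile, or equivalently by whether the induced majority relation is cyclic; for the violations of (1), the cleanest parameter is whether $d$ is top-ranked in zero, one, or both of the two votes, which distinguishes the three forbidden 2-voter profiles. Once this enumeration is laid out, the equivalence with \Cref{thm:bal-har-sp} is mechanical.
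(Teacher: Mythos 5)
Your proposal is correct and matches the paper's (implicit) argument: the paper presents \Cref{thm:bal-har-sp-2} as a direct restatement of \Cref{thm:bal-har-sp} obtained by enumerating, up to isomorphism, the finitely many subprofiles witnessing a violation of conditions (1) and (2), which is exactly what you carry out (the $8\to 2$ collapse for condition (2) via the cyclic/non-cyclic majority invariant, and the $4\to 3$ collapse for condition (1) via the voter-swap combined with $a\leftrightarrow c$, are both accurate). The forward direction via hereditariness of the single-peaked domain is likewise the intended argument.
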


\Cref{thm:bal-har-sp,thm:bal-har-sp-2} admit natural analogs for the single-caved domain (just reverse all the forbidden subprofiles). By analyzing their recognition algorithm, \citet{peters2017spoc} give a characterization of preferences single-peaked on a circle by forbidden subprofiles. These are structurally similar to the ones in  \Cref{thm:bal-har-sp,thm:bal-har-sp-2}, but are slightly larger.

\subsection{Single-Crossing Preferences}

The single-crossing domain, too, has been characterized in terms of forbidden subprofiles 
\citep{bredereck2013characterization}. Note that in the theorem statement, the alternatives and voters mentioned are not necessarily distinct.

\begin{theorem}[\citealp{bredereck2013characterization}]
A profile $P$ over $A$ is single-crossing if and only if  
there do not exist alternatives $a,b,c,d,e,f\in A$ and voters $i,j,k\in N$ such that
\begin{align*}
b \succ_i a,\ c \succ_i d,\ e  \succ_i  f,\ \qquad
a \succ_j b,\ d \succ_j c,\ e  \succ_j  f,\ \qquad\text{ and }\qquad
a \succ_k b,\ c \succ_k d,\ f  \succ_k  e,
\end{align*}
and there do not exist alternatives $a,b,c,d\in A$ and voters $i,j,k,\ell\in N$ such that
\begin{align*}
a \succ_i b,\ c\succ_i d,\ \qquad
b \succ_j a,\ c\succ_j d,\ \qquad
a \succ_k b,\ d\succ_k c,\ \qquad\text{ and }\qquad
b \succ_\ell a,\ d\succ_\ell c.
\end{align*}
\end{theorem}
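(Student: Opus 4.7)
I would split the argument into the routine forward direction and the combinatorial reverse direction.

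For the \emph{forward direction}, I would verify directly that each of the two displayed subprofiles precludes a single-crossing ordering of its voters. For the first configuration, observe that voter $i$ is the \emph{unique} voter preferring $b$ over $a$ (among $\{i,j,k\}$), voter $j$ is the unique voter preferring $d$ over $c$, and voter $k$ is the unique voter preferring $f$ over $e$. If the three voters could be arranged in a single-crossing order, then for each of the three pairs the singleton dissenter must occupy an extreme position, so that the two-element `agreement set' is contiguous. But with only three voters there are only two extreme positions, and each of $i,j,k$ demands one—so some voter must sit in the middle, contradicting its own extreme-position requirement. For the second configuration, note that all four Boolean combinations of preferences on the two pairs $\{a,b\}$ and $\{c,d\}$ appear: pair $\{a,b\}$ partitions the four voters as $\{i,k\}\mid\{j,\ell\}$, and pair $\{c,d\}$ partitions them as $\{i,j\}\mid\{k,\ell\}$. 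Single-crossingness on $\{a,b\}$ forces $\{i,k\}$ to occupy two adjacent positions (say 1–2) and $\{j,\ell\}$ to occupy the other two (3–4). Single-crossingness on $\{c,d\}$ likewise forces $\{i,j\}$ and $\{k,\ell\}$ to each be a contiguous block of two. Intersecting these requirements, the first two positions must equal both $\{i,k\}$ and $\{i,j\}$, which is impossible since $j\neq k$.

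For the \emph{reverse direction}, I would prove the contrapositive by induction on $n$, showing that if $P$ is not single-crossing then it contains one of the two forbidden configurations. The base cases $n\le 2$ are vacuous and $n\le 4$ can be checked directly. For the inductive step, remove a voter $v_n$. If $P-v_n$ is not single-crossing, the inductive hypothesis supplies a forbidden subprofile, which is also a subprofile of $P$. Otherwise, fix a single-crossing ordering $\sigma$ of $P-v_n$ (unique up to reversal by \Cref{prop:sc-order-unique} when votes are distinct, and otherwise up to swaps of identical votes). I would then attempt to splice $v_n$ into each of the $n$ possible positions in $\sigma$, and show that if every attempted insertion fails, then one of the two forbidden patterns can be extracted from the obstructions. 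Concretely, an insertion at a given position fails precisely because some pair $(a,b)$ would cross twice, and this failure is witnessed by two voters $u_1,u_2$ already present in $\sigma$ that, together with $v_n$, violate the monotone-crossing pattern on $(a,b)$. I would argue by a case distinction on the shape of these obstructions: if two failures share some of their witnessing voters, one can combine them to produce a three-voter/three-pair witness of configuration~1; otherwise the disjoint failures yield a four-voter/two-pair witness of configuration~2.

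The \emph{main obstacle} will be this final case analysis: carefully managing how the obstruction witnesses at different insertion positions overlap, and extracting the precise alternatives $a,b,c,d(,e,f)$ with all six (or eight) required $\succ$-inequalities satisfied simultaneously. I would likely streamline this by working with the pairwise-comparison matrix whose rows are indexed by voters and columns by ordered pairs $(a,b)$ with entry $1$ iff $a\succ_i b$: single-crossingness of $P$ amounts to this matrix admitting a row permutation with the consecutive ones property, and the two forbidden configurations correspond to the two minimal obstructions to consecutive ones that can arise under the structural constraint that every column pair $(a,b),(b,a)$ is complementary. This reformulation would convert the case analysis into a small-scale verification over forbidden submatrices, rather than ad~hoc combinatorics on voter orderings.
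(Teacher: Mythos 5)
The first thing to say is that the survey does not actually prove this theorem: it is stated as a cited result of \citet{bredereck2013characterization}, so there is no in-paper proof to compare against. The closest analogue is the proof of the single-peaked characterization (\Cref{thm:bal-har-sp}), which the authors obtain by showing that every rejection branch of the recognition algorithm exhibits a forbidden configuration; your insertion-based induction is a legitimately different route, closer in spirit to the original Bredereck--Chen--Woeginger argument. Your forward direction is complete and correct: restricting to the three (resp.\ four) voters is justified because the single-crossing domain is closed under voter deletion, and the extreme-position and adjacency arguments are exactly right.

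The gap is in the reverse direction, where the decisive step is asserted rather than carried out, and the case-split criterion you propose is not the right invariant. When $P-v_n$ is single-crossing with order $\sigma$, each pair $\{a,b\}$ imposes a positional constraint on $v_n$ of one of three types: $p\le s$ or $p\ge t$ (when both sides of the crossing point in $\sigma$ are nonempty), or $p\in\{1,n\}$ (when $v_n$ disagrees with \emph{all} of $P-v_n$ on that pair). Unsatisfiability then arises in exactly two ways, and this — not whether the witnessing voters of two failed insertions overlap — is what separates the configurations: two interior constraints $p\le s$ and $p\ge t$ with $t>s$ yield the four voters $v_n$, $u_1$, $u_{s+1}$, $u_{n-1}$ realizing all four sign patterns on two pairs (configuration~2, after checking these are distinct, which the inequality $s+1\le t-1\le n-2$ guarantees); a constraint $p\in\{1,n\}$ whose two endpoints are blocked by compatible interior constraints yields $v_n$, $u_1$, $u_{n-1}$ each a lone dissenter on one of three pairs (configuration~1). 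Your sketch never isolates the $p\in\{1,n\}$ case, which is precisely where configuration~1 comes from, and the phrase ``if two failures share some of their witnessing voters'' does not track this distinction (in the configuration-2 extraction above the witnesses also share $u_1$). Finally, the proposed shortcut via ``the two minimal obstructions to consecutive ones'' understates the work: Tucker's forbidden submatrices form an infinite family, so one must genuinely use the complementation of the columns $(a,b)$ and $(b,a)$ to show that only bounded-size obstructions survive — which is essentially the same case analysis in different clothing.
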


This theorem, too, can be phrased purely in terms of forbidden subprofiles (similarly to \Cref{thm:bal-har-sp-2}).
However, in the case of the single-crossing domain there are 30 non-isomorphic
forbidden subprofiles, and therefore we do not list them all.

\subsection{Value-Restricted and Group-Separable Preferences}
As we have seen in \Cref{prop:vr-cycle}, the value-restricted 
domain also allows a characterization via forbidden subprofiles:
As it is characterized by excluding the Condorcet cycle on three candidates, it follows that 
the value-restricted domain is the largest domain definable by forbidden subprofiles that 
guarantees a Condorcet winner (for an odd number of candidates, cf.\ 
\Cref{prop:vr-transitive}).

The group-separable domain has also been characterized in terms of forbidden subprofiles.
\begin{theorem}[\citealp{ballester2011characterization}]\label{thm:bal-har-gs}
	A profile $P$ over $A$ is group-separable if and only if 
	there do not exist alternatives $a,b,c,d\in A$ and voters $i,j\in N$ such that
	\begin{align*}
		a\succ_i b  \succ_i c \succ_i d \quad\text{ and }\quad 
		b  \succ_j d \succ_j a \succ_j c,
	\end{align*}
	and there do not exist alternatives $a,b,c\in A$ and voters $i,j,k\in N$ such that
	\begin{align*}
		(a \succ_i b \succ_i c, \: a \succ_j c \succ_j b, \: b \succ_k a \succ_k c) 
		\quad
		\text{ or }
		\quad
		(a \succ_i b \succ_i c, \: b \succ_j c \succ_j a, \: c \succ_k a \succ_k b). 
	\end{align*}
\end{theorem}
The second condition ensures that the domain is medium-restricted, so that for each triple of alternatives, at least one of those alternatives is never ranked in between the other two.

\subsection{Euclidean Preferences}

So far we have seen four examples of domains that can be
characterized via a finite set of forbidden subprofiles, namely, 
the single-peaked domain, single-peaked on circles, the single-crossing domain, and the value-restricted domain. 
In contrast, a recent result of \citet{chen15onedimensional} 
shows that the 1-Euclidean domain cannot be characterized in this manner.

\begin{theorem}[\citealp{chen15onedimensional}]
\label{thm:euclid-forbidden}
There is no finite set of forbidden subprofiles that characterizes the 1-Euclidean domain.
\end{theorem}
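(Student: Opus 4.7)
The plan is to construct, for each integer $N \geq 3$, a preference profile $P_N$ that is itself not 1-Euclidean but becomes 1-Euclidean upon the deletion of any single voter or any single alternative, and such that the $P_N$ have unbounded size. Granted such a family, the theorem follows quickly from the heredity of the 1-Euclidean domain: any forbidden-subprofile characterization must in particular contain every minimal non-1-Euclidean profile (otherwise some non-1-Euclidean profile would fail to contain a forbidden subprofile). By construction each $P_N$ is minimal non-1-Euclidean, and the sizes of the $P_N$ grow without bound, so no finite list of forbidden profiles can include them all.

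The construction is driven by the linear-programming perspective on 1-Euclidean recognition from \Cref{sec:recog:euclid} together with \Cref{thm:compatible}. After fixing a compatible axis $\lhd$, 1-Euclideanness reduces to feasibility of a system of strict midpoint inequalities $2x(i) \lessgtr x(a)+x(b)$, one per voter/pair. Infeasibility is certified by a non-negative rational combination of these constraints summing to $0<0$. The seed example is \Cref{ex:spsc-not-euclid}, in which three voters each contribute a pair of opposing midpoint inequalities and the three pairs sum cyclically to a contradiction; deleting any one of the three voters breaks the cancellation. My plan is to generalize this template into a cyclic dependency of length $N$: I would take $2N$ alternatives on a fixed axis and $N$ voters whose preferences are single-peaked on that axis and single-crossing in the listed order, with each voter $i$ designed to contribute exactly two midpoint inequalities. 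The inequalities are arranged so that, when summed over all $N$ voters, every alternative appears exactly once on each side of the resulting combined inequality; the two sides cancel and yield $0<0$, witnessing that $P_N$ is not 1-Euclidean. The axis is forced (up to reversal) by the unique compatible single-peaked axis of $P_N$, so considering this axis alone is sufficient.

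The main obstacle will be verifying minimality, i.e.\ that every profile obtained from $P_N$ by deleting a single voter or a single alternative is 1-Euclidean. The intuition is clean: removing a voter deletes that voter's pair of constraints from the cyclic sum and destroys the cancellation, while removing an alternative strips one summand from each side of the combined inequality and likewise unbalances the sum. Making this rigorous requires two things. First, one must show that the cyclic sum described above is essentially the \emph{only} non-negative combination of the constraints of $P_N$ that yields a contradiction; this rigidity is where the single-peaked and single-crossing structure pays off, because it tightly constrains which midpoint inequalities are even present and hence which combinations can close up. Second, for each of the finitely many deletions one must exhibit an explicit 1-Euclidean embedding of the resulting profile, which can be done by writing down coordinates satisfying the now-slack system directly. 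Both steps are concrete but technically delicate, and together they complete the proof.
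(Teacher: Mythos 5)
Your proposal follows essentially the same route as the paper's proof (due to \citet{chen15onedimensional}): construct an unbounded family of minimally non-1-Euclidean profiles, each encoding a cyclic chain of inequalities on a forced axis that sums to a contradiction and is broken by any single deletion, then observe that a finite forbidden-subprofile set would have to contain all of them. The only difference is cosmetic---the paper's family chains inequalities on the widths $\|x(a_i)-x(b_i)\|$ of $n$ alternative pairs rather than on voter--midpoint comparisons---and your plan, including the honest flag that verifying minimality of each $P_N$ is the delicate part, matches the paper's argument.
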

\begin{proof}[Proof idea.]
	\citet{chen15onedimensional} construct an infinite family of profiles that are minimally non-1-Euclidean. For each 
	$n$, they build an $n$-voter profile $P_n$ over the set of alternatives $A = \{a_1,b_1,\dots, a_{n}, 
	b_{n}\}$. The preference order of voter $i$ ensures that for every potential embedding $x : N\cup A \to \mathbb R$
	witnessing that $P$ is 1-Euclidean 
	we have $\|x(a_i) - x(b_i)\| < \|x(a_{i+1}) - x(b_{i+1})\|$, where subscripts are taken modulo $n$. Taken together,  
	these inequalities imply 
        $\|x(a_1) - x(b_1)\| < \|x(a_2) - x(b_2)\| < \dots < \|x(a_n) - x(b_n)\| < \|x(a_1) - x(b_1)\|$, which 
	is impossible. Hence the constructed profile is not 1-Euclidean. However, whenever we delete a voter, one of the 
	constraints of the cycle vanishes, allowing us to embed the profile into $\mathbb R$. 
	These profiles are thus minimally non-1-Euclidean.

	Now, suppose that the 1-Euclidean domain can be characterized by a finite set of forbidden subprofiles~$\mathcal S$.
	There exist an $n\ge 0$ such that each profile in $\mathcal S$ has fewer than $n$ voters. However, 
	$P_n\not\in\mathcal S$, since it contains $n$ voters, and every subprofile of $P_n$ is not in $\mathcal S$, 
	because it is 1-Euclidean, a contradiction.
\end{proof}

However, the 1-Euclidean domain can be characterized by an \emph{infinite} set of forbidden subprofiles.
In fact, this is the case for every hereditary domain (i.e., a domain closed under deletion of voters and alternatives;
see \Cref{def:hereditary}).

\begin{proposition}[\citealt{lackner2017likelihood}, Proposition 5]\label{prop:inf-set-of-forbidden-subprofiles}
A domain is hereditary if and only if it can be characterized by a (possibly
infinite) set of forbidden subprofiles.
\end{proposition}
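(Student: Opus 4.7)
The plan is to prove the two directions of the equivalence separately, both of which turn out to be direct consequences of the definitions once the right set of forbidden subprofiles is chosen.

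For the easier direction, I would assume that a domain $\mathcal{D}$ is characterized by some set $\mathcal{S}$ of forbidden subprofiles and show $\mathcal{D}$ is hereditary. Let $P \in \mathcal{D}$ and let $P'$ be obtained from $P$ by voter and alternative deletion. If $P'$ contained some $Q \in \mathcal{S}$ as a subprofile, then, since "contains as a subprofile" is transitive (composing two deletion sequences gives another deletion sequence), $P$ would also contain $Q$ as a subprofile, contradicting $P \in \mathcal{D}$. Hence $P' \in \mathcal{D}$, and $\mathcal{D}$ is hereditary.

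For the other direction, suppose $\mathcal{D}$ is hereditary. The key step is to choose the right candidate set of forbidden subprofiles; the natural choice is $\mathcal{S} := \{Q : Q \notin \mathcal{D}\}$, i.e., simply the complement of $\mathcal{D}$ in the set of all profiles. I would then verify both inclusions of the characterization. First, if $P \in \mathcal{D}$ contained some $Q \in \mathcal{S}$ as a subprofile, then $Q$ would be obtainable from $P$ by voter and alternative deletion, and hereditariness would force $Q \in \mathcal{D}$, contradicting $Q \in \mathcal{S}$. Conversely, if $P \notin \mathcal{D}$, then $P \in \mathcal{S}$, and since every profile trivially contains itself as a subprofile (delete no voters and no alternatives), $P$ contains a forbidden subprofile.

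There is no real obstacle here: the proposition is essentially an unpacking of definitions, modulo the mild point that the relation "is a subprofile of" is reflexive and transitive. If desired, one could slim down $\mathcal{S}$ by restricting to the $\subseteq$-minimal elements of $\{Q : Q \notin \mathcal{D}\}$ (where minimality is with respect to the subprofile relation); this gives a possibly smaller set that still characterizes $\mathcal{D}$, although as \Cref{thm:euclid-forbidden} shows, even this minimal set may be infinite (e.g., for the 1-Euclidean domain). I would mention this strengthening only briefly, since the statement of the proposition only asks for \emph{some} (possibly infinite) set of forbidden subprofiles.
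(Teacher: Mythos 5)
Your proof is correct and follows essentially the same route as the paper's: one direction uses transitivity of the subprofile relation (the paper states this in contrapositive form, you state it directly), and the other takes the forbidden set to be the complement of the domain, exactly as the paper does. Your explicit check that a profile outside the domain contains itself as a subprofile, and the remark about restricting to minimal forbidden subprofiles, are harmless additions beyond what the paper writes out.
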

\begin{proof}
Suppose that $\mathcal D$ is not a hereditary domain. Then there is a profile $P\in \mathcal D$
such that some subprofile $P'$ of $P$ does not belong to $\mathcal D$.
However, every subprofile of $P'$ is also a subprofile of $P$, which means
that $\mathcal D$ cannot be characterized via forbidden subprofiles.

Conversely, consider a hereditary domain $\mathcal D$, and let $\mathcal S$
be the (infinite) set of all profiles that do not belong to $\mathcal D$.
Note that if a profile belongs to $\mathcal D$, then none of its subprofiles
is in $\mathcal S$, exactly because $\mathcal D$ is a hereditary domain.
Thus, $\mathcal S$ offers the desired characterization.
\end{proof}

\begin{open}
	Give an explicit characterization of the 1-Euclidean domain by infinitely many forbidden subprofiles.
\end{open}

\Cref{prop:inf-set-of-forbidden-subprofiles} 
offers a straightforward way to identify domains that are characterized by a 
(possibly infinite) set of forbidden subprofiles. For instance, it immediately
implies that the domain of preferences single-peaked on a tree does not admit
such a characterization, whereas the domain of $d$-Euclidean preferences
does, for every $d\ge 1$.
In contrast, we are not aware of a general technique
that can distinguish between domains that can be characterized by finitely many
forbidden subprofiles (such as the single-peaked domain) and those that can 
only be characterized by an infinite set (such as the 1-Euclidean domain).
A complexity-theoretic approach offers a partial solution: 
a domain that can be characterized by a finite number of forbidden subprofiles
is polynomial-time recognizable (e.g., by a simple brute-force algorithm),
so if recognizing a restricted domain is known to be computationally hard, 
we can conclude that this domain does not admit a finite characterization (subject to a complexity assumption).

For example, \Cref{thm:recognizing-d-Euclidean} shows that for $d\ge 2$ recognizing whether a profile is $d$-Euclidean 
is NP-hard. Thus, unless P=NP, we know that for each $d\ge 2$ the $d$-Euclidean domain is not characterizable by a finite set 
of forbidden subprofiles. We note that \citet{peters2016recognising} proves the same result without referring 
to any complexity assumption.

Let us end this section by briefly mentioning some applications of 
characterizations via forbidden subprofiles.
Such characterizations have been shown to be useful for detecting profiles that are close to 
being in the respective restricted domain
\citep{elkind2014detecting,bredereck2016nicelystructured}; more details can be found in \Cref{sec:recog:almost}.
Forbidden subprofiles can also be used to prove general results about arbitrary domain restrictions:
\citet{lackner2017likelihood} obtain a combinatorial result counting the number of profiles of a given size 
that belong to a restricted domain characterized by a small forbidden subprofile.

\newpage
\section{Winner Determination}
\label{sec:problems}

Ordinal preferences are often used as inputs to group decision problems. In \emph{voting}, voters submit preference 
rankings, and the aim is to identify an alternative, a set of alternatives, 
or a ranking of the alternatives 
that best represent the voters' joint preferences. Over time, many variations of 
this setting have been studied, and many voting rules have been proposed. 
For each such voting rule, the key computational
problem is \emph{winner determination}, 
i.e., computing the output of the rule given the input preferences.

For many popular voting rules, this problem is straightforward to solve, and indeed many voting rules are \emph{defined} by a 
specification of a winner determination algorithm. For example, the \emph{Plurality} rule 
computes the score of each alternative as the number of voters that rank this alternative first, 
and outputs the alternative(s) with the highest score; clearly, the running time of this procedure
is linear in $n+m$. However, 
as observed by \citet{bartholdi1989voting}, some voting rules only specify their winning 
alternatives implicitly, and na\"ive winner determination algorithms require exponential time. 
Indeed, \citeauthor{bartholdi1989voting} prove that it is NP-hard to decide 
whether a given alternative is winning under the \emph{Dodgson voting rule},
and similar hardness results have subsequently been obtained for many (otherwise) 
very attractive voting rules.

As \citeauthor{bartholdi1989voting}~argue, a good voting rule should admit an efficient winner determination algorithm, for 
otherwise ``a candidate's mandate might have expired before it was ever recognized''! Indeed, in practice, 
voters demand the votes to be counted within days or hours. 
If we nevertheless wish to use voting rules for which computing the output is hard, 
we need to find ways to mitigate this computational complexity. Several 
popular strategies exist for this purpose. \citet{bartholdi1989voting} suggest an integer linear programming formulation 
that captures the winner determination problem for the Dodgson voting rule, 
and we may hope that practical instances will be solved quickly by modern solvers. 
We could employ parameterized analysis to develop algorithms whose running time is exponential, 
say, in the number of alternatives, but not the number of voters; this technique can be useful in elections
with a small number of alternatives. More controversially, we may use approximation algorithms:
if a voting rule is defined in terms of a scoring procedure so that it outputs the alternative(s)
with the maximum/minimum score, we may be able to design an efficient algorithm that finds an alternative
whose score is close to optimal. This approach may be acceptable in low-stake elections;
moreover, we can view the resulting approximation algorithms as new voting rules, and some of these rules
are quite attractive \citep{caragiannis2014dodgson,skowron2015approx}.

In this section, we will explore yet another strategy: 
we demonstrate that for many voting rules the winner determination problem
becomes easy when voters' preferences belong to one of the restricted preference domains discussed in this survey.
This approach fits within the framework of identifying 
``islands of tractability'', i.e., classes of inputs on which a given problem can be solved in polynomial 
time: just like many hard graph-theoretic problems become computationally easy if the input graph is a tree, 
many preference aggregation problems become easy if the input profile is single-peaked or single-crossing.
Also, while real-life preferences are rarely single-peaked or single-crossing, they are often not that far
from belonging to some restricted domain, in terms of distance measures discussed in \Cref{sec:recog:almost}
(see, e.g., \citealp{sui2013multi}),
and some of the positive results for structured profiles extend to almost structured profiles; we will mention
several results of this type later in this section. 

In the remainder of this section, we will focus on single-winner voting rules, which aim
to output a single winning alternative, and multi-winner rules, which elect a fixed-size set of winners. 
We will also briefly mention some results for social welfare functions, 
i.e., mappings that output a ranking of the alternatives.

\subsection{Single-Winner Rules}

This section studies single-winner voting rules, so-called \emph{social choice functions}. 
Formally, a social choice function is a mapping $f$ that for every preference profile $P$ selects a non-empty set 
$f(P) \subseteq A$ of \emph{winning alternatives}. The interpretation is that the alternatives 
in $f(P)$ are tied for winning, and some tie-breaking mechanism will later decide the winner.
For instance, under Plurality each alternative gets 1 point from each voter 
that ranks it first, and under the Borda rule, each alternative gets $m-i$ points from each voter who ranks it 
in position $i$; in both cases, the alternative(s) with the largest number of points 
are considered to be the election winners.

A large variety of social choice functions have been discussed in the literature
(see, e.g., \citealp{brandt2015handbook-chapter2}).
For most of them, an element of the set $f(P)$ can be computed in polynomial time. 
Indeed, for many rules, a stronger statement is true: given an alternative, 
we can decide in polynomial time whether it belongs to the set 
$f(P)$. This holds, for instance, for all scoring rules (a large class of rules that includes 
both the Borda rule and Plurality), 
for {Plurality with Runoff}, for {Copeland's rule}, for {Minimax}  and for {Schulze's method}. 

However, as observed by \citet{bartholdi1989voting}, there are appealing social choice functions 
for which one cannot find an element of $f(P)$ in polynomial time unless P=NP. The key examples are 
the Dodgson rule \citep{hemaspaandra1997dodgson}, the Young rule \citep{rothe2003exact}, and 
the Kemeny winner rule \citep{hemaspaandra2005complexity}. 
For each of these rules, deciding whether a given alternative is a 
winner is $\Theta_2^p$-complete.
Further examples are provided by some \emph{tournament solutions}, i.e., social choice functions
that only depend on the majority relation induced by a profile: specific tournament solutions that have a
hard winner determination problem are the Banks set, the 
minimal extending set, and the tournament equilibrium set
\citep{brandt2015handbook-chapter3}. 
We will now argue that for many of these rules,
winner determination becomes much easier if voters' preferences belong to one of the restricted domains
discussed in this survey.

\subsubsection{Condorcet Extensions}
Recall the definitions of weak and strong Condorcet winners from \Cref{sec:prel}: these are the alternatives preferred
to every other alternative by a weak (respectively, strong) majority of voters. In many settings, it is desirable
to elect Condorcet winners whenever they exist.
Formally, we say that a voting rule $f$ is a \emph{Condorcet extension} if  
given a profile $P$ with a strong Condorcet winner $c$, $f$ outputs $c$ as the unique winner at $P$.
A \emph{weak-Condorcet extension} is a voting rule that outputs the set of 
all weak Condorcet winners whenever this set is non-empty. 
By construction, every weak-Condorcet extension is also a Condorcet extension.

Many of the social choice functions commonly discussed in the literature are Condorcet extensions. 
Below we define three such rules; see also \citet{fishburn1977condorcet} for an overview. 

\begin{description}
\item[The Dodgson rule]
Charles Dodgson, better known by his pen name Lewis Carroll, proposed a Condorcet extension in 1876, though apparently 
unaware of Condorcet's work. According to Dodgson's method, the score of an alternative is
defined as the minimum number of swaps of adjacent alternatives in the preference profile needed 
to make this alternative a strong Condorcet winner, and the winners are the alternatives with the smallest score.
While this rule is not particularly attractive from an axiomatic perspective \citep{brandt2009dodgson}, 
it offers an interesting approach to winner determination, and can be seen as an instantiation of the 
\emph{distance rationalizability} framework \citep{brandt2015handbook-chapter8}. 
Moreover, it is an important example of a voting
rule with a hard winner determination problem: 
\citet{bartholdi1989voting} proved that checking whether a given alternative is a Dodgson winner is NP-hard, 
and subsequently \citet{hemaspaandra1997dodgson} proved that this problem is $\Theta_2^p$-complete, 
which was the first $\Theta_2^p$-completeness result in the computational social choice literature. 
\item[The Young rule]
\citet{young1977extending} proposed a voting method that is somewhat similar in spirit to 
the Dodgson rule. According 
to the Young rule, an alternative's score is the number of voters that need to be removed to make that alternative
a weak Condorcet winner; the alternatives with the smallest score are election winners.
Again, deciding whether a given alternative is a Young winner is 
$\Theta_2^p$-complete \citep{rothe2003exact,brandt2015bypassing}. There is also a variant of this rule,
referred to as strongYoung \citep{brandt2015bypassing},  
where the score of an alternative $a$ is the number of voters that need to be removed to make $a$ a \emph{strong}
Condorcet winner (this number may be $+\infty$ for some $a\in A$, but there is always an alternative 
whose strongYoung score is finite).
\item[The Kemeny rule]
The Kemeny rule \citep{kemeny1959mathematics} is a \emph{social welfare function}, i.e., a mapping that 
returns a set of linear orders over $A$. 
This mapping can then be transformed into a social choice function:
we say that an alternative is a \emph{Kemeny winner} if it is ranked first
in some linear order produced by the Kemeny social welfare function.

Given a preference profile $P$, we define the \emph{Kemeny score} of a linear order $\succ$ as
\[ 
\sum_{i\in N} | {\succ_i} \cap {\succ} | =  
\sum_{i\in N} |\{ (a,b)\in A \times A : a\succ_i b \text{ and } a \succ b \}|, 
\]
and output the ranking(s) with the maximum score. Intuitively, these are the rankings
that maximize  agreement with the input profile.
Equivalently, we can define the rule as minimizing the number of \emph{disagreements}
with the input profile.
Computing a Kemeny ranking is equivalent to solving the feedback arc set problem in the weighted majority graph 
induced by the preference profile \citep{bartholdi1989voting}, and therefore it is NP-hard to find a 
Kemeny ranking. Deciding whether a given alternative is a Kemeny winner is $\Theta_2^p$-complete
\citep{hemaspaandra2005complexity}.
\end{description}

For weak-Condorcet extensions, winner determination is easy
as long as the input profile belongs to a restricted domain that guarantees
the existence of a weak Condorcet winner: one can simply output all weak Condorcet winners. If 
the majority relation is antisymmetric (this happens, for instance, if the number of voters is odd), 
every weak Condorcet winner is also a strong Condorcet winner, so in this case we also get positive
results for Condorcet extensions. These observations are due to \citet[Theorem~3.2]{brandt2015bypassing}.

\begin{proposition}\label{prop:condorcet-compute}
	Consider a social choice function $f$, a profile $P$ over $A$ that is single-peaked
        on a tree, single-crossing on a tree, or group-separable, and an alternative $a\in A$. 
	Then if $f$ is a weak-Condorcet extension, 
	or if $f$ is a Condorcet extension and the majority relation associated with $P$ 
	is antisymmetric, 
	we can decide in polynomial time whether $a\in f(P)$.  
\end{proposition}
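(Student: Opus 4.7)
The plan is to reduce the winner determination problem to the task of computing the set of weak Condorcet winners. The key observation is that each of the three restricted domains in the statement guarantees that at least one weak Condorcet winner exists: for single-peaked on a tree this is Demange's theorem together with its extension to even $n$ (discussed just after the tree version of the Median Voter Theorem); for single-crossing on a tree it follows from the Representative Voter Theorem in the odd case, and from a Helly-style argument on subtrees in the even case (i.e., the sets $N_{ab}$ for $a \succeq_\text{maj} b$ are connected subtrees of $T$ pairwise intersecting, so by the Helly property of trees their intersection is non-empty, and any voter in the intersection is representative in the weak sense); for group-separable profiles, the recursive splitting argument in \Cref{sec:def} shows that the weak majority relation is transitive and hence has a maximal element.

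Given this existence guarantee, the algorithm is straightforward: compute the set $W$ of weak Condorcet winners of $P$ by performing, for each ordered pair $(b,c)$ of alternatives, a single pass over the profile to count $|\{i : b \succ_i c\}|$ and $|\{i : c \succ_i b\}|$. This takes $O(nm^2)$ time, and by the paragraph above we have $W \neq \varnothing$.

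For the first part, suppose $f$ is a weak-Condorcet extension. Since $W$ is non-empty, the definition of a weak-Condorcet extension forces $f(P) = W$. Hence deciding whether $a \in f(P)$ amounts to checking membership of $a$ in the already-computed set $W$, which can be done in constant time. For the second part, suppose $f$ is a Condorcet extension and the majority relation is antisymmetric. Antisymmetry means that for any two distinct alternatives $b, c$, exactly one of $b \succ_\text{maj} c$ or $c \succ_\text{maj} b$ holds, so every weak Condorcet winner is in fact a strong Condorcet winner, and strong Condorcet winners are unique when they exist. Combined with the existence argument, $W$ consists of a single alternative $c$, which is the strong Condorcet winner of $P$. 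Since $f$ is a Condorcet extension, $f(P) = \{c\}$, so $a \in f(P)$ if and only if $a = c$.

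The proof is essentially a direct corollary of the structural results already established earlier in the survey, so no significant obstacle arises. The only point requiring mild care is confirming existence of weak Condorcet winners for the even-$n$ case of single-crossing on trees, which is handled by the Helly property as indicated above; everything else follows by inspection of the definitions.
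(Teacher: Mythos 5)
Your argument is essentially the paper's own: the survey establishes this proposition by exactly the observation you make (on each of these domains a weak Condorcet winner is guaranteed to exist; compute the set $W$ of weak Condorcet winners by pairwise counting; a weak-Condorcet extension must output exactly $W$, and when the majority relation is antisymmetric the unique weak Condorcet winner is a strong one, so a Condorcet extension must output it), crediting the observation to Brandt et al. So the route matches and the reduction is correct.

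One detail needs fixing. In your Helly argument for single-crossing on a tree with an even number of voters, you form the family $\{N_{ab} : a \pref_{\text{maj}} b\}$ over \emph{weak} majorities. That family need not be pairwise intersecting: if $a$ and $b$ are majority-tied, both $N_{ab}$ and $N_{ba}$ belong to it, and these two sets are disjoint. Take instead $\{N_{ab} : a \succ_{\text{maj}} b\}$ over \emph{strict} majorities; any two strict majorities intersect, each set is connected in the tree, so Helly yields a common voter $k$, and the top choice $c$ of $v_k$ is a weak Condorcet winner (if some $d$ satisfied $d \succ_{\text{maj}} c$, then $k\in N_{dc}$ would force $d\succ_k c$, contradicting $c=\top(v_k)$). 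With that one-word correction the even-$n$ case goes through, and the rest of your proof is fine.
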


The Kemeny rule and the Young rule are weak-Condorcet extensions, 
and therefore by \Cref{prop:condorcet-compute} the winner determination problem is easy for these rules, 
as long as voters' preferences are single-peaked on a tree, single-crossing on a tree, or group-separable.
Also, tournament solutions are defined on tournament graphs, which correspond to profiles with antisymmetric
majority relation, and hence \Cref{prop:condorcet-compute} applies
to many tournament solutions as well. 
However, some well-studied Condorcet extensions fail to be weak-Condorcet extensions; 
for example, this is the case for the Dodgson rule. 
For these rules, \Cref{prop:condorcet-compute} only offers an easiness result
for an odd number of voters, and the case of the even number of voters has to be handled separately.
In the following sections, we discuss the Dodgson rule, the Young rule, 
and the Kemeny rule in more detail. 

\subsubsection{The Dodgson Rule}
While the Dodgson rule is clearly a Condorcet extension, it is not a weak-Condorcet extension 
\citep{brandt2015bypassing}. Thus, \Cref{prop:condorcet-compute}
does not tell us whether we can efficiently compute Dodgson winners for single-peaked 
profiles with an even number of voters. \citet{brandt2015bypassing} answer this question in the positive, 
presenting a greedy algorithm that returns all Dodgson winners when given a single-peaked profile. The correctness 
proof for this algorithm is somewhat involved. The key insights behind the algorithm are that every Dodgson winner 
must also be a weak Condorcet winner, and that there is always an optimal sequence of swaps that only modifies the 
preference relations of at most two voters. 
\citet[Theorem~16]{fitzsimmons2020dodgson} show that we can efficiently compute Dodgson winners
for single-crossing preferences, using a similar algorithm.

Beyond this result, it is natural to ask whether it is possible to compute the Dodgson \emph{score} of 
a given alternative in polynomial time, for profiles that belong to restricted domains (this problem is NP-hard in the general case). 
Being able to calculate scores is useful if one 
would like to use the Dodgson rule in order to \emph{rank} the alternatives.
\citet[Theorem 13]{fitzsimmons2020dodgson} prove that calculating Dodgson scores is easy for single-peaked preferences, 
which provides an alternative algorithm for computing Dodgson winners for single-peaked profiles.

\begin{open}
	Can one efficiently compute Dodgson scores for other restricted domains, such as single-crossing preferences? Can one efficiently find Dodgson winners for profiles that are 
	single-crossing on a tree or single-peaked on a tree, when the number of voters is even?
\end{open}

\subsubsection{The Young Rule}
Since the Young rule is a weak Condorcet 
extension, its winner determination problem becomes easy for all the domain restrictions mentioned in 
\Cref{prop:condorcet-compute}. While strongYoung is not a weak-Condorcet extension, 
the winners under this rule can also be computed in polynomial-time for single-peaked or single-crossing profiles.
This follows from the observation that for these domains the strongYoung score of the winning alternative does
not exceed 1: it suffices to remove at most one voter to engineer a situation where there is a unique median voter.

A related problem of interest is to compute the Young (strongYoung) score of a given alternative. 
For preferences that are single-peaked (or even single-peaked on a circle), 
\citet{peters2017spoc} present a simple counting-based algorithm that solves 
this problem in polynomial time. For single-crossing preferences, 
this problem is also polynomial-time solvable, because the effect of deleting 
voters is simply to shift around the position of the median voter \citep{magiera2017hard}.
In contrast, for group-separable preferences both variants of this problem are NP-hard,
though they admit an FPT algorithm with respect
to the height of the clone decomposition tree \citep{FaliszewskiKO22};
see the discussion in \Cref{sec:cc}.

\subsubsection{The Kemeny Rule}
As \Cref{prop:condorcet-compute} captures the complexity of computing Kemeny \emph{winners} for many restricted domains, 
in what follows we focus on computing Kemeny \emph{rankings}.
 
When the majority relation is transitive, the Kemeny ranking is unique and coincides with the majority relation:
every pair of alternatives $(a, b)$ contributes either $n_{ab} = |\{i\in N: a\succ_i b\}|$ or 
$n_{ba} = |\{i\in N: b\succ_i a\}|$ to the score of a ranking $\succ$, 
and the majority relation is the only ranking for which 
every pair $(a, b)$ contributes $\max\{n_{ab}, n_{ba}\}$. 
Thus, if the number of voters is odd and voters' 
preferences are single-peaked or single-crossing, the Kemeny ranking can be computed easily. 
For an even number of voters and single-peaked or single-crossing preferences, 
the strict part of the majority relation is transitive, and, by the same argument as above, 
the set of Kemeny rankings is exactly the set of all its linearizations.

For preferences single-peaked on trees, computing Kemeny rankings remains NP-hard. 
Indeed, we can reduce the problem of computing a Kemeny ranking for general preferences
to that of computing a Kemeny ranking for preferences single-peaked on a star, 
by adding a new alternative that is ranked first by all voters: this transformation
makes the profile single-peaked on a star, and there is a one-to-one correspondence between Kemeny
rankings for the original profile and Kemeny rankings for the modified profile \citep[p.~245]{peters2020spt-journal}. 
Hardness results also hold for preferences that are single-peaked on a circle \citep{peters2017spoc} because
this preference domain does not impose any restriction on the majority relation (see \Cref{sec:def:spoc}).
\begin{open}
	Can Kemeny rankings be computed efficiently for preferences 
	that are single-peaked on ``nice'' trees, e.g., trees with few leaves?
\end{open}

For multidimensional $d$-Euclidean preferences, $d \ge 2$, \citet{EST21} prove that deciding whether 
there exists a ranking with at most a given Kemeny score is NP-complete,
and hence finding an optimum ranking is NP-hard. 
They assume that a $d$-Euclidean embedding is provided in the input.
Their result also holds for $d$-Euclidean preferences defined with respect to the $\ell_1$ and $\ell_\infty$
metrics, as well as for Slater rankings.
This result is obtained by showing that McGarvey's theorem continues to hold for these preference domains.
\citet{hamm2021computing} consider a variant of this problem, where given as input a preference profile
and a $d$-Euclidean embedding (with respect to the $\ell_2$ metric), we are asked to find the ranking
with the highest Kemeny score among all rankings that are compatible with the given embedding.
They show that this problem can be solved in polynomial time for each fixed $d$.
\begin{open}
	What is the complexity of winner determination for other aggregation rules (such as Dodgson, Young, or multi-winner rules) under the $d$-Euclidean domain?
\end{open}

The Kemeny rule behaves like a median. \citet{kemeny1959mathematics} also defined a second rule that behaves like a \emph{mean}. \citet{lederer2024squared} study this rule under the name \emph{Squared Kemeny}, since it is obtained by squaring the distances appearing in the definition of the normal Kemeny rule. \citet{lederer2024squared} show that Squared Kemeny is also NP-hard. Since Squared Kemeny is not a function of the weighted majority margins, it is not clear that it would become easy to compute on restricted classes of preferences that induce a transitive majority relation, such as single-peakedness. Thus, it is an open problem to identify domains where Squared Kemeny rankings becomes easy to compute. The same problem is also open for the \emph{equalitarian Kemeny rule}, which find the ranking that minimizes the maximum distance to any voter, which is again NP-hard \citep{biedl2009complexity}.

\subsubsection{Sequential Rules}
An important class of single-winner rules that we have not discussed so far is that of sequential, or multi-step, 
rules. These are rules defined by sequential procedures, 
where one may have to break ties at each step: 
examples of such rules include 
Ranked Pairs, Instant Runoff Voting (also known as Single Transferable Vote), the Baldwin rule, and the Nanson rule.

If these ``intermediate'' ties are always broken 
according to a fixed order of alternatives, there is a unique winner for each profile, 
and it can be computed in polynomial time; however, the resulting rule is not neutral. 
On the other hand, if we define $f(P)$ to be the set of all alternatives that win for some way of 
breaking intermediate ties (``parallel-universe tie-breaking''), then it is still easy to compute \emph{some} element of $f(P)$,
but for many of these rules checking whether a given alternative is in $f(P)$ becomes NP-hard 
\citep{conitzer2009preference,brill12price,mattei2014hard}.
Thus, in terms of their computational complexity, 
these rules are positioned between ``easy'' rules, such as Plurality, and ``hard'' rules, 
such as the Dodgson rule. 
It remains open whether computing all winners under Instant Runoff Voting or Ranked Pairs becomes easy
when voters' preferences belong to one of the restricted domains considered in this survey;
no results of this type are known so far.
\begin{open}
	What is the complexity of checking whether an alternative is a winner under Instant Runoff Voting or Ranked Pairs under parallel-universe tie-breaking for profiles that are single-peaked, or that have some other structure?
\end{open}
\citet{jiang2017practical} study Instant Runoff Voting experimentally and find that on 
random single-peaked profiles, it is likely to select the Condorcet winner as the unique winner. \citet{tomlinson2023ballot} show that Instant Runoff Voting is better behaved under ballot truncation on single-peaked, single-crossing, and 1-Euclidean profiles. \citet{tomlinson2024moderating} show that on 1-Euclidean profiles, Instant Runoff Voting never elects candidates in extreme positions, if there exists at least one candidate in a location bounded away from the extremes; \citet{tomlinson2025exclusionzones} show that a similar result need not hold in higher dimensions.

Two sequential elimination methods that are similar to Instant Runoff Voting are the Baldwin rule (which at each step eliminates an alternative with the lowest Borda score) and the Coombs method (which at each step eliminates an alternative that is bottom-ranked by the highest number of voters). Both have NP-hard winner determination problems under parallel-universe tie-breaking and unrestricted preferences \citep{mattei2014hard}. The Baldwin rule is a weak-Condorcet extension (because a weak Condorcet winner is never a Borda loser unless all remaining alternatives are weak Condorcet winners, see \citealp{brandt2015handbook-chapter2}, Footnote 38), and hence it is easy to compute for preference domains such as single-peaked preferences. The Coombs method is not a Condorcet extension, but on single-peaked profiles with an odd number of voters, it always selects the Condorcet winner \citep[Prop.~2]{grofman2004coombs} and thus is easy to compute in this case.

\subsection{Multi-Winner Rules}
In many group decision problems the goal is to select multiple alternatives rather than a single winner. 
Voting rules used for this purpose are called 
\emph{multi-winner voting rules}, or \emph{committee selection rules}. 
The input to such rules is a preference profile $P$ over $A$ and an integer $k$; 
the output is a non-empty set of \emph{committees}, which are size-$k$ subsets of $A$. 
Multi-winner voting rules can be used to select governing bodies, decide how to allocate limited advertising space, 
or to shortlist applicants to be interviewed for a job \citep{elkind2017properties}.
The desiderata for committee selection rules depend on the target application:
choosing movies to be included in the in-flight entertainment system is very different
from deciding which applicants to accept into a selective PhD program. Accordingly, the
literature has proposed a wide variety of committee selection rules. 
For a general overview on this research topic, we refer the reader to the surveys by \citet{FSST-trends} and \citet{lackner2023abc}.

While for some multi-winner voting rules
one can compute a winning committee in polynomial time, for others this problem is computationally hard.
In the rest of this section, we focus on two committee selection rules whose aim is 
to select a \emph{representative committee} (the Chamberlin--Courant rule and the Monroe rule), and discuss the complexity of computing winning committees
when voters' preferences belong to a restricted domain.

\subsubsection{The Chamberlin--Courant Rule}\label{sec:cc}

\citet{cha-cou:j:cc} proposed a rule 
that aims to identify a committee that is maximally representative. 
Under their approach, each size-$k$ committee is assigned a score, and committees
with the highest score are declared to be the election winners. The score of a given
committee $W$ is computed by asking each voter to evaluate $W$, and aggregating the resulting scores;
under the utilitarian variant of this rule (this is the variant that was 
proposed by \citealp{cha-cou:j:cc}) the individual voters' scores are added up, 
and under the egalitarian variant \citep{betzler2013computation}
the score of a committee $W$ is computed
as the minimum of voters' scores. It remains to explain how voters evaluate committees.
In the classic variant of the rule, the evaluations are based on Borda scores:
if a voter ranks her most preferred member of $W$ in position $i$ in her vote, 
she assigns a score of $m-i$ to $W$. More generally, each vector $(w_1, \dots, w_m)$
with $w_1\ge\dots\ge w_m$ can be used to define a committee selection rule, 
which we will call $\mathbf w$-Chamberlin--Courant:
a voter's score for a committee $W$ is $w_i$ if her most preferred member of $W$
appears in the $i$-th position in her preference ranking.
Formally, the utilitarian and the egalitarian score of a committee $W$
under the $\mathbf w$-Chamberlin--Courant rule are defined as, respectively, 
\newcommand{\CCut}{\operatorname{CC}_{\mathrm{util}}}
\newcommand{\CCeg}{\operatorname{CC}_{\mathrm{egal}}}
\[ 
\CCut(W) = \sum_{i\in N} \max \{ w_{\rank_i(a)} : a \in W \}, \quad
\CCeg(W) = \min_{i\in N} \max \{ w_{\rank_i(a)} : a \in W \}.
\]
Intuitively, under both variants of the Chamberlin--Courant rule, each voter is represented by 
a single member of the committee. She can choose her representative
within the committee, and thus will choose the committee member she prefers most. Her satisfaction is measured by how much she likes her representative.

The winner determination problem associated with the Chamberlin--Courant rule is usually translated into a decision 
problem by considering it as an optimization problem, so that we ask: ``given a profile $P$, a target committee size 
$k$, and integer $B$, does there exist a committee $W$ whose Chamberlin--Courant score is at least $B$?''. This problem 
is NP-complete when using Borda scores (\citealp{lu2011budgeted}; see also \citealp{peters2020spt-journal}). Thus, we will analyze restrictions 
of this problem to structured preference profiles.

\paragraph{Single-Peaked Preferences}

\citet{betzler2013computation} show that a winning committee under the 
utilitarian Chamberlin--Courant rule 
can be identified in polynomial time if the input preferences are single-peaked. 
Their algorithm is a dynamic program which builds up an optimal committee 
by moving along the axis from left to right.

\begin{theorem}[\citealp{betzler2013computation}, Theorem~8]
	\label{thm:cc-sp}
	For any non-increasing scoring vector $\mathbf w$, any target committee size $k$, 
	and a single-peaked preference profile, we can find a winning committee according 
	to the utilitarian variant of the Chamberlin--Courant rule in $O(m^2n)$ time.
\end{theorem}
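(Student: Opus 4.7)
The plan is to reduce winner determination to a dynamic program that exploits the single-peaked axis. First I would run \Cref{alg:sp} on $P$ in $O(mn)$ time to obtain an axis $\lhd$ witnessing single-peakedness, and relabel alternatives so that $a_1 \lhd \dots \lhd a_m$. Write $s_i(a) = w_{\rank_i(a)}$ for the score that voter $i$ assigns to $a$, and let $p_i = \top(v_i)$ denote $i$'s peak.

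The key structural lemma is the following: if $W = \{a_{i_1}, \dots, a_{i_\ell}\}$ with $i_1 < \dots < i_\ell$, then every voter $i$ has their most-preferred member of $W$ among the at most two committee members that are \emph{axis-adjacent} to $p_i$, i.e., the largest $a_{i_r} \unlhd p_i$ and the smallest $a_{i_r} \unrhd p_i$. Indeed, for any $a_{i_r}$ lying strictly further left than the largest committee member $\unlhd p_i$, the alternatives $a_{i_r} \lhd a_{i_{r+1}} \lhd p_i$ appear consecutively in the sense of the no-valley property (\Cref{prop:sp-equiv}(3)), so $a_{i_{r+1}} \succ_i a_{i_r}$; an analogous argument handles members further right.

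Exploiting this, I would set up a DP with sentinels $a_0$ at the far left and $a_{m+1}$ at the far right (with $s_i(a_0) = s_i(a_{m+1}) = -\infty$ so they are never chosen by the $\max$). For each pair $0 \le j' < j \le m+1$, precompute the \emph{segment contribution}
\[
c(j', j) \;=\; \sum_{i \in N \,:\, a_{j'} \lhd p_i \unlhd a_j} \max\bigl(s_i(a_{j'}), s_i(a_j)\bigr),
\]
which by the lemma is exactly the total utility obtained from voters whose peak falls between $a_{j'}$ and $a_j$ whenever these are \emph{consecutive} committee members. Define
\[
f(j, \ell) \;=\; \max_{\substack{W \subseteq \{a_0, \dots, a_j\},\ |W| = \ell, \\ a_0, a_j \in W}} \ \sum_{i \,:\, p_i \unlhd a_j} \max_{a \in W} s_i(a),
\]
with base case $f(0, 1) = 0$ and transition $f(j, \ell) = \max_{0 \le j' < j}\bigl[f(j', \ell{-}1) + c(j', j)\bigr]$. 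The optimal total utility is then $f(m+1, k+2)$, recognizing that the two sentinels inflate the committee size by two and contribute nothing to the sum. Correctness of the transition is immediate from the lemma; the reconstructed argmax set (with sentinels removed) is a winning committee.

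For the runtime, the $O(m^2)$ values $c(j', j)$ can be precomputed in $O(m^2 n)$ total time by scanning the voters once per pair, and the DP has $O(mk)$ states with $O(m)$ transitions each, giving $O(m^2 k) \subseteq O(m^2 n)$ (noting that if $k > n$ the problem is trivial or infeasible, so we may assume $k \le n$). The main obstacle is proving the structural lemma cleanly, and in particular verifying that the boundary voters (those with peak exactly at a committee alternative, or lying entirely left of all committee members) are correctly accounted for by the sentinel device; this reduces to a careful but routine case analysis using the no-valley characterization of single-peakedness.
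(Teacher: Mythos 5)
Your proof is correct and takes essentially the same route as the paper: a left-to-right dynamic program along the single-peaked axis whose states are (rightmost committee member, committee size), with correctness resting on the no-valley property. The only difference is bookkeeping --- you partition voters into the gaps between consecutive committee members and precompute per-gap contributions $c(j',j)$ (with sentinels), whereas the paper's recurrence tracks the score of the partial committee over \emph{all} voters and adds the marginal gain $\sum_{i}\max\{0,\, w_{\rank_i(a_t)} - w_{\rank_i(a_p)}\}$ of appending a new rightmost member; the two are equivalent and yield the same complexity.
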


\begin{proof}
	Let $P$ be a profile that is single-peaked with respect to the axis 
        $a_1 \lhd a_2 \lhd \dots \lhd a_m$. We can find this axis in linear time using the 
        algorithm in \Cref{sec:recog:sp}. For each alternative $a_t \in A$, and every 
        integer $j$ with $j \le \min\{k, t\}$, define
	\[ 
        z(a_t, j) := \max \{ \CCut(W) : W \subseteq \{ a_1,\dots,a_t \}, 
                                       |W| = j, \text{ and } a_t \in W  \} 
        \]
	to be the highest Chamberlin--Courant score attainable by a size-$j$ committee that only 
        uses alternatives that appear in the first $t$ positions on the axis and includes $a_t$.
	Then, the Chamberlin--Courant score of an optimum committee is given by 
        $\max_{a_t \in A} z(a_t, k)$. 
	The values $z(a_t, j)$ can be computed using the following recurrence, valid for 
        $t \ge 2$, $2 \le j \le k$, $j \le t$:
	\[ 
        z(a_t, j) = \max_{j-1\le p\le t-1} \left\{ z(a_p, j-1) + 
        \sum_{i\in N} \max\{0, w_{\rank_i(a_t)} - w_{\rank_i(a_p)}\} \right\}. 
        \]
	To see that this recurrence is correct, consider a size-$j$ committee $W$ whose rightmost member 
        is $a_t$. Let $W' = W \setminus \{a_t\}$, and let $a_p$ be the rightmost member of $W'$. 
        Now, each voter $i$ who was represented by a alternative $a_q$ with $q<p$ in $W'$
	is represented by the same alternative in $W$: if she preferred $a_t$ to $a_q$, 
	then $a_q\lhd a_p\lhd a_t$
        would form a valley in her preferences. In particular, we have 
	$w_{\rank_i(a_t)} - w_{\rank_i(a_p)}\le 0$. On the other hand, a voter $i$
        that was represented by $a_p$ in $W'$ may be represented by either $a_p$ or $a_t$ in $W$;
	her gain from switching from $a_p$ to $a_t$ is $w_{\rank_i(a_t)} - w_{\rank_i(a_p)}$
	(if this value is negative, she prefers not to switch). 
	Hence the increase in the Chamberlin--Courant score of $W$ over $W'$ is just 
        $w_{\rank_i(a_t)} - w_{\rank_i(a_p)}$ summed over all voters $i$ who prefer 
        $a_t$ to $a_p$. For implementation details, see \citet[Thm.~8]{betzler2013computation}.
\end{proof}

\citet[Theorem 6]{sornat2022near} provide a faster algorithm for this problem, based on the Minimum Weight $t$-Link Path problem, running in $O(nm\log n + m^{1 + o(1)})$, which is almost linear.
Yet another algorithm (with worse running time) involves formulating the winner determination
problem under this rule as a totally unimodular integer linear program 
\citep{peters2018single,peters2017spoc}.

There also exist tractability results for preferences that are nearly single-peaked.
\citet{cor-gal-spa:c:spwidth} show that the dynamic programming approach of \Cref{thm:cc-sp} 
can be extended to profiles of bounded single-peaked width.
\citet{misra2017complexity} give algorithms for computing a Chamberlin--Courant winning committee
for profiles that can be made single-peaked by deleting a few voters or alternatives: 
their algorithms run in FPT time with respect to the number of alternatives to be deleted, 
and in XP time with respect to the number of voters to be deleted. 
Their algorithm for alternative deletion runs in time $O^*(2^d)$ where $d$ is the number of alternatives to be deleted; \citet{sornat2022near} show that this dependence on $d$ is best-possible under the Strong Exponential Time Hypothesis.
For voter deletion, \citet{chen2023efficient} improve the XP result by giving an FPT algorithm.
While the deletion-based measures for near single-peakedness lead to tractability results, 
\citet{misra2017complexity} show NP-hardness for a generalization of single-peaked preferences 
where up to three peaks are allowed. 

\citet{sonar2020complexity} study the problem of deciding 
whether a given alternative appears in some optimal Chamberlin--Courant committee. 
They show that this problem is $\Theta_2^p$-complete in general, but becomes polynomial-time 
solvable for single-peaked preferences. They leave it as an open question 
to determine whether the same is true for other preference domains.

For the egalitarian version of the Chamberlin--Courant rule, \citet{betzler2013computation} show that it can be evaluated in polynomial time using a simple greedy algorithm. 
To see why this is the case, suppose we wish to decide whether there is a committee whose 
egalitarian Chamberlin--Courant score is at least $B$. This is equivalent to 
looking for a committee where each voter's representative is ranked in 
position $b$ or higher, for an appropriate value of $b$. 
As voters' preferences are single-peaked, for each voter the set 
of alternatives ranked in position $b$ or higher forms an interval of the axis. 
So our problem is equivalent to asking whether 
there are $k$ alternatives that together cover (or ``stab'') all the intervals. 
As \citet{betzler2013computation} note, this interval stabbing problem 
is equivalent to a clique cover problem, which can be solved in linear time 
\citep{golumbic2004algorithmic}. Explicitly, the algorithm proceeds from left to right 
along the axis; if it reaches the rightmost point of the interval of some voter who is
not yet represented by any of the already-selected alternatives, it adds the alternative
associated with this point to the committee. The algorithm reports success if it reaches
the end of the axis without selecting more than $k$ alternatives. 
\citet{elkind2015owa} show tractability for single-peaked preferences
for a family of Chamberlin--Courant variants that interpolate between 
the utilitarian and the egalitarian objective.

\paragraph{Single-Crossing Preferences}

\citet{skowron2013complexity} study the complexity of the Chamberlin--Courant rule for 
single-crossing preferences, and obtain a positive result for this setting: one can find an 
optimal committee in polynomial time if the input profile is single-crossing.
Their approach is 
based on an interesting structural observation concerning the allocation of representatives to 
voters. For each member $a$ of a committee $W$ obtained under the Chamberlin--Courant rule, we 
can think of the voters represented by $a$ as $a$'s ``district''. For single-crossing preferences, 
the voter space has a one-dimensional structure, so a natural shape for each such district 
would be an interval of the voter ordering. 
\citet{skowron2013complexity} find that there always exists an optimal Chamberlin--Courant 
committee whose districts have this shape. All of their results hold 
both for the utilitarian and for the egalitarian version of this rule,
so in what follows we do not distinguish between the two versions.

\begin{proposition}[\citealp{skowron2013complexity}, Lemma 5]
	\label{thm:cc-sc-intervals}
	Let $P = (v_1, \dots, v_n)$ be a profile that is single-crossing with respect 
	to the given order, and fix a non-increasing scoring vector $\mathbf w$ 
	and a target committee size $k$. Then for any committee $W$ that is optimal
	for the Chamberlin--Courant rule 
	and for each $a\in W$, the set $\{ i \in N : a \succ_i b \text{ for all } b \in W \}$ 
	of voters who are represented by $a$ forms an interval of $(1,\dots, n)$.
\end{proposition}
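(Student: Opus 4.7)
The plan is to show the stronger statement that the districts form intervals for \emph{any} committee $W$ (not just optimal ones), directly from the single-crossing property; the proposition then follows. The argument does not require any exchange/optimality reasoning.

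First, I would fix an alternative $a \in W$ and write $D(a) := \{ i \in N : a \succ_i b \text{ for all } b \in W \setminus \{a\}\}$ for its district. Since each $v_i$ is a linear order, the sets $\{D(a) : a \in W\}$ partition $N$. I want to show $D(a)$ is an interval of $(1, \dots, n)$.

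Suppose for contradiction that there exist indices $i < j < k$ with $i, k \in D(a)$ but $j \notin D(a)$. Since $j \notin D(a)$, there exists some $b \in W \setminus \{a\}$ such that $b \succ_j a$ (using that $v_j$ is a linear order, hence $a$ is not its top choice in $W$ iff some other member is strictly preferred). On the other hand, from $i, k \in D(a)$ we get $a \succ_i b$ and $a \succ_k b$. Now apply the single-crossing property to the pair $(a,b)$: the set $\{ t \in N : a \succ_t b \}$ must be an interval of $[n]$. But this set contains $i$ and $k$ with $i < k$ and does \emph{not} contain $j$, a contradiction. Hence $D(a)$ is an interval.

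I expect no serious obstacle: the argument is essentially a one-line application of the defining property of single-crossing profiles, namely that for every ordered pair of alternatives the set of voters preferring one to the other is an interval. The only thing to be careful about is making sure that ``$j$ is not represented by $a$'' indeed yields a witness $b \in W$ with $b \succ_j a$, which is immediate from $v_j$ being a linear order (so top choices in $W$ are unique). It is worth remarking, after the proof, that this structural result is what enables the dynamic program for Chamberlin--Courant on single-crossing profiles: one can enumerate the $O(n^2)$ candidate intervals (districts) together with their best representative, and then assemble $k$ districts covering $[n]$ in a left-to-right sweep.
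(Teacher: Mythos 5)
Your proof is correct and is essentially identical to the paper's: both derive $a \succ_i b$, $b \succ_j a$, $a \succ_k b$ for a witness $b \in W$ and contradict the single-crossing property for the pair $(a,b)$. Your observation that optimality of $W$ is never used (so the interval structure holds for every committee) is accurate and also implicit in the paper's argument.
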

\begin{proof}
	Suppose not, and alternative $a\in W$ represents voters $i$ and $\ell$, but does not represent 
        voter $j$, where $i < j < \ell$. Suppose that $j$ is represented by $b\in W$. 
        Then we must have $a \succ_i b$, $b\succ_j a$, and $a \succ_\ell b$, a contradiction 
        with $P$ being single-crossing with respect to the given order.
\end{proof}

\Cref{thm:cc-sc-intervals} serves as a basis for a dynamic
programming algorithm for the Chamberlin--Courant rule that runs in time $O(n^2mk)$ \citep[Theorem 6]{skowron2013complexity}.
\citet{andrei20} propose a different dynamic programming formulation 
that results in an $O(nmk)$ algorithm;
for the utilitarian version of the Chamberlin--Courant rule, 
they further improve the running time to
$nm2^{O(\sqrt{\log k\log\log n})}$ by interpreting the winner determination
problem as a $k$-link path problem with concave Monge weights.

\citet{skowron2013complexity} also show that their approach extends 
to profiles with bounded single-crossing width \citep{cornaz2013kemeny}, by proving that
the problem of finding an optimal Chamberlin--Courant committee is in FPT with respect 
to the single-crossing width. In particular, for profiles with bounded
single-crossing width they establish an analog of 
\Cref{thm:cc-sc-intervals} saying that the voters represented by the alternatives within 
a given clone set again form an interval; with this result in hand, one can again 
apply dynamic programming to find an optimal committee.

Just as for single-peaked preferences, \citet{misra2017complexity} give algorithms for computing 
a winning committee under the Chamberlin--Courant rule 
for profiles that can be made single-crossing by deleting a few  
voters or alternatives: their algorithms run in FPT time with respect to the number of 
alternatives to be deleted, and in XP time with respect to the number of voters to be deleted. 
\citet{chen2023efficient} give an FPT algorithm for voter deletion.
\citet{misra2017complexity} also show an NP-hardness result for a generalization of the single-crossing domain where 
each pair of alternatives is allowed to cross up to three times.

\paragraph{Preferences Single-Peaked on Trees}
For preferences single-peaked on trees, the egalitarian version 
of the Chamberlin--Courant rule remains polynomial-time computable.
The argument is similar to the interval-stabbing argument presented above, 
except that we now need to stab subtrees rather than subintervals 
\citep{peters2020spt-journal}.

However, in contrast to the situation for 
single-peaked or single-crossing profiles, the utilitarian 
version of the Chamberlin--Courant rule 
remains NP-complete for preferences that are single-peaked on a tree, 
even for Borda scores \citep{peters2020spt-journal}. 
One way around this result is to consider profiles that are single-peaked on trees that have 
additional restrictions imposed on them, such as trees that have few non-leaf vertices.

For example, consider the Chamberlin--Courant rule with Borda scores for a profile $P$ that is 
single-peaked on a star with center vertex $c\in A$. As we saw in \Cref{ex:spt-star}, all 
voters in $P$ rank $c$ in one of the first two positions in their vote. 
With this restriction, it is easy to 
find an optimal Chamberlin--Courant committee: including alternative $c$ in the 
committee guarantees very good representation to every voter, and the remaining spots 
in the committee can be filled greedily, by counting the number of alternatives' appearances
in the top position \citep[for details, see][]{peters2020spt-journal}. Interestingly, there are some 
non-Borda scoring vectors for which the Chamberlin--Courant rule remains hard even for preferences 
single-peaked on a star \citep{peters2020spt-journal}.

Generalizing the above argument, \citet{peters2020spt-journal} show that the Chamberlin--Courant rule 
with Borda scores admits a polynomial-time algorithm for preferences single-peaked on trees 
that have a bounded number of internal (non-leaf) vertices. They also 
give a polynomial-time algorithm for trees that have a bounded number of leaves, using a 
generalization of the dynamic program of \citet{betzler2013computation} for the case of the 
line. It would be interesting to identify further classes of trees that admit efficient 
algorithms for the Chamberlin--Courant rule.

\paragraph{Preferences Single-Crossing on Trees}
In contrast to the case of preferences single-peaked on trees, for preferences
single-crossing on trees the Chamberlin--Courant rule admits a polynomial-time
winner determination algorithm irrespective of the structure of the underlying tree.
This was first claimed by \citet{clearwater2015generalizing}, who put
forward a dynamic programming algorithm for this problem. However, \citet{andrei20}
observed that the dynamic program of \citet{clearwater2015generalizing}
may have exponentially many variables, 
and proposed a different dynamic programming formulation, which results in an 
$O(nmk)$ algorithm, both for the utilitarian and for the egalitarian version of the
Chamberlin--Courant rule.
Their approach is based on an analog of \Cref{thm:cc-sc-intervals}
for trees, showing that the `district' of each representative can be assumed
to form a subtree of the underlying tree.

\paragraph{Group-Separable Preferences}
Recall the  group-separable profile $P_{\textrm{gs-max}}^m$ 
over a set of $m$ alternatives 
constructed in the proof of \Cref{prop:gs-count}, 
and the bijection between the votes in this profile 
and the subsets of $A\setminus\{a_1\}$.
\citet{FaliszewskiKO22} use this bijection to show that the group-separable
domain is sufficiently rich for the Chamberlin--Courant rule to remain
computationally difficult on profiles from this domain. Below, 
we use the same approach to derive a simple hardness proof for 
the Chamberlin--Courant rule with the $2$-approval scoring vector,
i.e.,  $\mathbf{w} = (1, 1, 0, \dots, 0)$; essentially, we
show that (a variant of) the hardness proof of \citet{PRZ08} for this rule goes
through for the group-separable domain.

\begin{theorem}
Given a group-separable profile $P$ with $n$ voters and a parameter $k$, 
it is NP-hard to decide if there exists a committee of size $k$ whose
utilitarian $\mathbf w$-Chamberlin--Courant score for ${\mathbf w}=(1, 1, 0, \dots, 0)$
is at least $n$. 
\end{theorem}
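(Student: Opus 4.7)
The plan is to reduce \textsc{Vertex Cover} to the winner determination problem for the $\mathbf w$-Chamberlin--Courant rule with $\mathbf w=(1,1,0,\dots,0)$, using the profile $P_{\textrm{gs-max}}^m$ from the proof of \Cref{prop:gs-count} as a source of group-separable votes with prescribed top-two pairs. The starting observation is that under this scoring vector, a voter contributes $1$ to $\CCut(W)$ precisely when $W$ intersects her top-two alternatives, and $0$ otherwise. Hence achieving total score $n$ is equivalent to choosing $k$ alternatives that ``hit'' every voter's top-two pair, which is a set cover with pairs, i.e., a vertex cover problem.

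First I would recall the explicit bijection between votes of $P_{\textrm{gs-max}}^m$ and subsets $B\subseteq A\setminus\{a_1\}$ described after \Cref{prop:gs-count}: in the vote corresponding to $B$, the elements of $B$ occupy the top $|B|$ positions in decreasing order of indices. In particular, whenever $|B|\ge 2$, the two topmost alternatives of that vote are exactly the two largest-indexed elements of $B$. Consequently, for any pair $\{a_i,a_j\}\subseteq A\setminus\{a_1\}$ with $i<j$, taking $B=\{a_i,a_j\}$ yields a vote from $P_{\textrm{gs-max}}^m$ whose top-two set is exactly $\{a_i,a_j\}$.

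Given a \textsc{Vertex Cover} instance consisting of a graph $G=(V,E)$ with $V=\{v_2,\dots,v_m\}$ and a target cover size $k$, I would construct the profile $P$ over $A=\{a_1,\dots,a_m\}$ whose votes are precisely those votes of $P_{\textrm{gs-max}}^m$ that correspond to two-element subsets $\{a_i,a_j\}$ with $\{v_i,v_j\}\in E$. Thus $n=|E|$. Because $P$ is a subprofile of the group-separable profile $P_{\textrm{gs-max}}^m$, and group separability is hereditary (closure under voter deletion is immediate from \Cref{def:gs}), the profile $P$ is itself group-separable.

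For the equivalence, note that since $a_1$ never appears in the top two positions of any vote in $P$, including $a_1$ in the committee can never raise the score; hence we may restrict attention to committees $W\subseteq\{a_2,\dots,a_m\}$ of size $k$. Such a $W$ corresponds to the vertex set $C_W=\{v_i : a_i\in W\}$ in $G$, and by construction the score $\CCut(W)$ equals the number of edges of $G$ incident to $C_W$. Therefore $\CCut(W)\ge n=|E|$ exactly when $C_W$ is a vertex cover of $G$ of size at most $k$, giving the reduction. The main thing to be careful about is verifying the top-two claim for the specific votes used and confirming that allowing $a_1$ into the committee does not create a shortcut; both checks are straightforward from the explicit form of $P_{\textrm{gs-max}}^m$, so no significant technical obstacle remains.
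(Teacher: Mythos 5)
Your reduction is correct and is essentially the same as the paper's: both reduce from \textsc{Vertex Cover} by creating, for each edge, a voter whose top two alternatives are the edge's endpoints, drawing these voters from the profile $P_{\textrm{gs-max}}^{m}$ and invoking heredity of the group-separable domain to conclude that the constructed profile is group-separable. If anything, you are slightly more careful than the paper, since you make explicit the role of the dummy alternative $a_1$ (which cannot be a top-two alternative of any vote) and check that placing it in the committee yields no shortcut.
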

\begin{proof}
Recall that an instance of \textsc{Vertex Cover} is a pair $\langle G, k\rangle$,
where $G$ is an undirected graph and $k$ is a positive integer;
it is a yes-instance if $G$ admits a vertex cover of size $k$.
Given an instance $\langle G, k\rangle$ of \textsc{Vertex Cover} with vertex set $V$,
$|V|=m$, and edge set $E$, $|E|=n$, 
we construct a group-separable profile $P$ over the set of alternatives $V$
that contains one voter for each edge 
$\{v_i, v_j\}$; this voter ranks $v_i$ and $v_i$ in top two positions.
Such a profile can be constructed by picking the required
voters from $P_{\textrm{gs-max}}^m$; as the group-separable
domain is hereditary, we are guaranteed to obtain a group-separable profile.
Now, a voter that corresponds to an edge $e = \{v_i, v_j\}$
assigns a score of $1$ to a committee $W\subseteq V$
if and only if the set of vertices $W$ covers $e$. Thus, the Chamberlin--Courant
score of a committee of size $k$ is $n$ if and only if this committee 
corresponds to a vertex cover of $G$.
\end{proof}
Note that for the weight vector $(1, 1, 0, \dots, 0)$ the utilitarian Chamberlin--Courant
score is equal to $n$ if and only if the egalitarian Chamberlin--Courant score 
is strictly positive. We obtain the following corollary. 
\begin{corollary}
Given a group-separable profile $P$ with $n$ voters and a parameter $k$, 
it is NP-hard to decide if there exists a committee of size $k$ whose
egalitarian $\mathbf w$-Chamberlin--Courant score for ${\mathbf w}=(1, 1, 0, \dots, 0)$
is at least $1$. 
\end{corollary}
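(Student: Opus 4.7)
The plan is to obtain the corollary as a direct consequence of the preceding theorem by exploiting the specific structure of the scoring vector $\mathbf{w} = (1,1,0,\dots,0)$. Under this vector, every voter assigns to a committee $W$ a score that lies in $\{0,1\}$: the score is $1$ if $W$ contains one of the voter's two top-ranked alternatives, and $0$ otherwise. Thus, for each fixed committee $W$, the individual scores $s_i(W) \in \{0,1\}$, and both the utilitarian and egalitarian Chamberlin--Courant scores can be expressed in terms of these binary values.

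First, I would observe the following equivalence, which is what the note before the corollary asserts: the utilitarian score $\sum_{i\in N} s_i(W)$ equals $n$ if and only if $s_i(W) = 1$ for every voter $i$, which is the case if and only if $\min_{i\in N} s_i(W) = 1$, which in turn is equivalent to the egalitarian score being at least $1$. This is a purely syntactic observation that follows from the fact that each $s_i(W) \in \{0,1\}$ and the sum of $n$ such values reaches its maximum $n$ precisely when all are equal to $1$.

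Next, I would apply the preceding theorem. Given an instance $\langle G, k \rangle$ of \textsc{Vertex Cover}, the reduction there produces a group-separable profile $P$ on $n$ edge-voters such that $G$ has a vertex cover of size $k$ if and only if there is a committee of size $k$ with utilitarian $\mathbf{w}$-Chamberlin--Courant score equal to $n$. By the equivalence above, this happens if and only if there is a committee of size $k$ with egalitarian $\mathbf{w}$-Chamberlin--Courant score at least $1$. Hence the same reduction, with no modification to the constructed profile, establishes NP-hardness of the egalitarian decision problem.

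There is no real obstacle here: the entire content of the corollary is the bridging observation that binary per-voter scores make the utilitarian-maximum and egalitarian-positivity conditions coincide. The group-separability of $P$ is inherited directly from the theorem's construction (it is built as a subprofile of $P_{\textrm{gs-max}}^m$, and the group-separable domain is hereditary), so no additional structural verification is needed.
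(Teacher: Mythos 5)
Your proposal is correct and matches the paper's own justification: the paper derives the corollary from the preceding theorem via exactly the observation that, for ${\mathbf w}=(1,1,0,\dots,0)$, per-voter scores are binary, so the utilitarian score equals $n$ if and only if the egalitarian score is strictly positive. Nothing further is needed.
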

The clone decomposition tree of $P_{\textrm{gs-max}}^m$ is a caterpillar, so its height
is $m$. \citet{FaliszewskiKO22} show that the problem of computing an optimal utilitarian 
Chamberlin--Courant committee for group-separable preferences 
is in FPT with respect to the height of the clone decomposition tree
(recall that the clone decomposition tree is 
essentially unique and polynomial-time computable). 
Specifically, they describe a dynamic programming-based algorithm
that runs in time that is linear in $2^h$, where $h$ 
is the height of the clone decomposition tree, and polynomial in 
the number of alternatives $m$, the number of voters $n$, 
and the committee size $k$.

\paragraph{Euclidean preferences}
For 1-Euclidean preferences, w can compute both the utilitarian and the egalitarian Chamberlin--Courant rule in polynomial time using the above results on single-peaked preferences.
For $d$-Euclidean preferences, $d \ge 2$, \citet{sonar2022multiwinner} show that computing the outcome of the egalitarian Chamberlin--Courant rule becomes NP-hard and hard to approximate up to a constant factor. They also give some polynomial-time approximation algorithms. The complexity of the utilitarian variant for 2-Euclidean preferences remains open, though \citet{godziszewski2021analysis} show that an approval-based variant of that rule is NP-hard.

\paragraph{Variants of the Chamberlin--Courant Rule}

Under the Chamberlin--Courant rule, each voter only obtains utility from their 
most-preferred committee member. Alternatively, we could allow voters to also obtain utility 
from their second-most-preferred committee member, etc. This approach leads to the class
of \emph{OWA-based rules} \citep{skowron2016finding}
(where `OWA' stands for `ordered weighted average'), 
which generalize the Chamberlin--Courant rule. 
To calculate a voter $i$'s utility, an OWA rule with a list of weights 
$(\lambda_1, \lambda_2, \dots)$
first orders (sorts) the numbers $(w_{\rank_i(a)})_{a\in W}$ and then 
takes a weighted average with weights $\lambda_1, \lambda_2, \dots$; 
the goal is to find a committee that maximizes the sum of voters' utilities.
\citet{peters2017spoc} show that these rules are polynomial-time computable 
for single-peaked preferences.

\subsubsection{The Monroe Rule}

While the Chamberlin--Courant rule excels at providing good representation to as many voters as 
possible, it may fail to represent the voters \emph{proportionally}. Indeed, in many situations some 
committee members will have to represent a large fraction of the voters, while other committee 
members will be responsible for a few voters only. Consider, for example, a profile in which five
voters rank $a\in A$ first, whereas each of the alternatives $b,c,d,e,f\in A$ is ranked first 
by a single voter, so that there are ten voters in total. If we aim for a committee of size $k=6$, 
then the optimal Chamberlin--Courant committee will be $W = \{a,b,c,d,e,f\}$. Notice that $a$ 
represents five times as many voters as the other alternatives. This may be undesirable: on the 
one hand, alternative $a$ may complain that they have excessive responsibilities, and, on the other 
hand, the five voters who rank $a$ first may justifiably demand that half of the six committee 
seats be filled with alternatives that they rank highly---it is possible, for instance, that 
these five voters place $b,\dots,f$ at the bottom of their rankings.

\citet{cha-cou:j:cc} address the latter problem by suggesting that the committee $W$ use 
\emph{weighted voting} to make internal decisions, and then $a$ would receive as much voting 
weight as the other five committee members together. (They also consider the effect of using power 
indices to determine the voting weights.) A different proposal, which directly ensures proportional 
representation, is due to \citet{monroe1995fully}.

The Monroe rule searches for committees maximizing essentially the same objective function as 
in the Chamberlin--Courant scheme. However, Monroe requires that (up to rounding issues) every 
committee member represents the same number of voters.

To define the Monroe rule formally, we will need to make explicit the assignment of voters to 
representatives. For a committee $W$, such an assignment is just a function $\Phi : N \to W$, 
so that $\Phi(i)$ is the representative of voter $i$; note that $\Phi(i)$ need not be $i$'s
most preferred alternative in $W$. An assignment is \emph{balanced} if 
$\lfloor n/k \rfloor \le |\Phi^{-1}(a)| \le \lceil n/k \rceil$ for each committee member $a\in 
W$, so that every committee member represents essentially the same number of voters in $\Phi$. 
Given a scoring vector $\mathbf w$, the Monroe score of an assignment $\Phi$ is $\sum_{i\in N} 
w_{\rank(\Phi(i))}$. The Monroe score of a committee $W$ is the Monroe score of the best 
balanced assignment of voters to the members of $W$. The Monroe rule then returns all 
committees with the highest score.

As before, we can replace the sum in the definition of Monroe scores by a minimum over all 
voters $i\in N$ to obtain an egalitarian variant of the Monroe rule.

Unsurprisingly, given that evaluating the Chamberlin--Courant rule is hard, it is also hard to 
find an optimal committee under the Monroe rule for general preferences, for both its 
utilitarian and its egalitarian variant \citep{betzler2013computation}. As we will see, the 
addition of the balancedness constraint is an additional challenge from an algorithmic perspective, 
even if the input preferences are structured.

\paragraph{Single-Peaked Preferences}

\citet{betzler2013computation} show that the \emph{egalitarian} version of 
the Monroe rule admits a polynomial-time algorithm for single-peaked preferences 
and any choice of the scoring vector $\mathbf w$. Their algorithm is rather involved 
and exploits a connection to the 1-dimensional rectangle stabbing problem. The problem admits an FPT (resp. XP) algorithm for nearly single-peaked profiles with respect to voter (resp. alternative) deletion \citep{chen2023efficient}.

For the utilitarian version of the Monroe rule, \citet{betzler2013computation} give an example of a scoring vector 
$\mathbf w$ for which evaluating the Monroe rule remains hard even for single-peaked preferences. 
However, it remains a challenging open problem to determine whether 
hardness also holds for Borda scores:

\begin{open}
What is the computational complexity of computing a winning committee under the Monroe rule with 
Borda scores if voters' preferences are single-peaked?
\end{open}

\paragraph{Single-Crossing Preferences}

\citet{skowron2013complexity} show that the Monroe rule with Borda scores remains NP-complete 
to evaluate for single-crossing profiles, by giving a very involved reduction from the 
\emph{unrestricted} version of the winner determination problem under the Monroe rule. For the 
egalitarian version, the complexity is open. However, \citet{skowron2013complexity} show that 
for \emph{narcissistic} single-crossing profiles, an efficient algorithm is available, and 
\citet{elkind2014characterization} extend this result to profiles that are both single-crossing and 
single-peaked. These two algorithms are faster than the algorithm by 
\citet{betzler2013computation} for single-peaked profiles (see the proof of \Cref{thm:cc-sp}).

\paragraph{Other Restricted Domains}

For other domain restrictions, such as preferences single-peaked on trees, no results 
for the Monroe rule are known. 
\begin{open}
	What is the complexity of computing the utilitarian or egalitarian version of the Monroe rule  for preferences single-peaked on trees?
\end{open}
A hardness result for 
the utilitarian version of the Monroe rule over this domain 
should be easier to obtain than for the elusive case of single-peakedness on a line.
 
\newpage
\section{Manipulation and Control}
\label{sec:manip}

Algorithmic aspects of strategic behavior in elections are a major topic within 
computational social choice.
As with many voting problems, the computational complexity of determining how to influence an 
election often decreases if structure in preferences is assumed. However, in contrast to the 
results considered in \Cref{sec:problems}, here it is not necessarily the
case that a decrease in complexity is desirable.
Indeed, NP-hardness results
for manipulation are often interpreted as `barriers' to strategic behavior, 
so polynomial-time algorithms for structured preferences 
are viewed as negative results in this context, 
showing that for such preferences these barriers may disappear.

\citet{walsh2007uncertainty} was the first to consider the complexity
of strategic behavior for restricted preference domains, showing
that a popular voting rule known as Single Transferable Vote
remains NP-hard to manipulate when voters' preferences are single-peaked;
this result holds if there are at least three alternatives and the voters are weighted, 
with weights given in binary. The first systematic study of the complexity
of manipulation and control in single-peaked elections was undertaken by 
\citet{faliszewski2011shield}, and by now there is a large body
of research that considers various forms of strategic behavior, 
focusing primarily (though not exclusively) on single-peaked and nearly
single-peaked preferences.

In the following we will focus on two computational problems that have been 
fundamental in the study of strategic voting:
coalitional manipulation and control.
There is a large research literature on these and related problems 
(such as cloning and bribery); 
we refer the reader to the surveys of \citet{faliszewski2010ai}, 
\citet{faliszewski2010using}, \citet{brandt2015handbook-chapter6}, 
\citet{brandt2015handbook-chapter7}, and \citet{hemaspaandra2016complexity}.
Furthermore, we restrict ourselves to two simple voting rules: Borda and Veto. Both Borda and 
Veto are scoring rules, i.e., alternatives receive points depending on their positions in 
voters' rankings and the alternative(s) with the largest number of points win the election.
For Borda, each alternative gets $m-i$ points from each voter who ranks it in position $i$.
For Veto, alternatives get $1$ point from each voter who ranks them in position $1, \dots, m-1$,
and $0$ points from voters that rank them last. Towards the end of this section, 
we provide a brief summary of results for other voting rules and 
other forms of strategic behavior.

\paragraph{Coalitional Manipulation}
As a consequence of the seminal result of \citet{gibbard1973manipulation} 
and \citet{satterthwaite1975strategy}, any reasonable voting rule is manipulable, 
in particular Borda and Veto.
Hence, there are preference profiles in which some voter can change the outcome of the election in their favor if they misreport their true preferences.
If a group of voters form a coalition with the intention of manipulation, 
they may be able to influence the outcome of the election even if no single voter
is pivotal.
This form of \emph{coalitional manipulation} is captured by the following computational problem.

\problem{Unweighted $\mathcal{R}$ Coalitional Manipulation}{A preference profile $P$, a distinguished alternative $c$, and an integer $k\ge 0$.}{Can we add $k$ new voters (manipulators) to $P$ so as to make $c$ the winner according to voting rule $\mathcal{R}$?}

The \problemname{Weighted Coalitional Manipulation} is defined analogously,  
but with both voters in the original profile $P$ and the manipulators 
having non-negative integer weights (given in binary).

Note that these two problems can also be viewed from the perspective of winner determination given 
incomplete information: given a profile in which $k$ voters have not yet declared their 
preferences, is it still possible that alternative $c$ wins?

Let us first consider the case of unrestricted preferences, i.e., arbitrary preference profiles.
\problemname{Unweighted $\mathcal{R}$ Coalitional Manipulation} is solvable in polynomial time for the Veto rule \citep{zuckerman2009algorithms} and is NP-complete for Borda \citep{betzler2011unweighted,davies2011complexity}.
The \problemname{Weighted Coalitional Manipulation} problem 
is NP-complete for both Veto and Borda; indeed, it is NP-complete for essentially all scoring rules
except for Plurality \citep{conitzer2007elections,hemaspaandra2007dichotomy,procaccia2007junta}.

We have already seen many examples of hard computational social choice problems that 
become polynomial-time solvable for single-peaked preferences.
For coalitional manipulation we have to be a bit careful in defining
what the corresponding computational question actually is. 
The approach that is common in the literature~\citep{walsh2007uncertainty,faliszewski2011shield} 
is to assume that a single-peaked axis $\lhd$ is part of the input, 
so that $P$ is single-peaked with respect to $\lhd$ 
and the manipulators must submit votes that are also single-peaked with respect to $\lhd$.
In the following, if we consider manipulation problems given single-peaked profiles, 
we assume this model and the associated extra input.

It is not hard to see that \problemname{Unweighted Veto Coalitional Manipulation} 
for single-peaked profiles is in P: observe that only the two outermost alternatives 
on $\lhd$ can be vetoed. For Borda elections, \citet{yang2016exact} show that 
\problemname{Unweighted Borda Coalitional Manipulation} for single-peaked profiles 
can be solved in polynomial time for one or two manipulators; the general question 
remains as an interesting open problem.

\begin{open}
What is the computational complexity of \problemname{Unweighted Borda Coalitional Manipulation} 
for single-peaked profiles with three or more manipulators?
\end{open}

For the weighted problem, the computational landscape is far better explored: 
\problemname{Weighted $\mathcal{R}$ Coalitional Manipulation} is in P for Veto and NP-complete for Borda.
This follows from a very general theorem by~\citet{brandt2015bypassing}, which provides a complexity dichotomy for all scoring rules.

One may then wonder
if this easiness result for Veto extends to preferences that are nearly single-peaked. 
The impact of nearly single-peaked preferences on coalitional manipulation and control was first studied by
\citet{fal-hem-hem:j:nearly-sp}.
They obtain a dichotomy results for profiles that can be made single-peaked 
by deleting voters (cf.\ \Cref{sec:recog:almost}).

\begin{theorem}[\citealp{fal-hem-hem:j:nearly-sp}]\label{thm:veto-nearly-manip}
Let $m\geq 3$, and consider $m$-alternative profiles 
that can be made single-peaked by deleting at most $\ell$ voters. 
For such profiles
\problemname{Unweighted Veto Coalitional Manipulation} 
can be solved in polynomial time if $\ell\leq m-3$ and is NP-complete for $\ell>m-3$.
\end{theorem}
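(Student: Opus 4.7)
The proof splits into two complementary parts: an algorithm for $\ell\le m-3$ and a hardness reduction for $\ell\ge m-2$. My starting point for both is the key structural observation (from \Cref{prop:sp-equiv}) that a single-peaked voter must veto one of the two endpoints of the underlying axis: otherwise their least-preferred alternative would sit strictly between its two axis-neighbors, both of which are ranked above it, creating a valley. Consequently, on any witness axis $\lhd$ with endpoints $p,q$, at most $\ell$ voters of the original profile veto a non-endpoint alternative, and every manipulator---being single-peaked on $\lhd$---vetoes either $p$ or $q$.

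For the algorithm when $\ell\le m-3$, I would enumerate all $O(m^2)$ candidate endpoint pairs $(p,q)$. For each such pair I would (i) check that at most $\ell$ voters in $P$ veto an alternative outside $\{p,q\}$ and that the remaining subprofile admits an axis with $p,q$ at its ends---using an adaptation of \Cref{alg:sp} in which the two endpoints are prescribed and up to $\ell$ voters may additionally be dropped; (ii) compute the current veto counts $v_a$ of every alternative; and (iii) decide whether one can split $k$ as $k_p+k_q=k$ with $k_p,k_q\ge 0$ so that $v_c\le v_a$ for all $a\neq c$, where the manipulators add $k_p$ to $v_p$ and $k_q$ to $v_q$. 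Step~(iii) is a trivial numerical check (for $c\notin\{p,q\}$ it reduces to $\max(0,v_c-v_p)+\max(0,v_c-v_q)\le k$ together with $v_c\le v_a$ for every $a\notin\{p,q,c\}$). The bound $\ell\le m-3$ feeds back into step~(i): it forces at least one non-endpoint alternative to receive zero vetoes in the original profile, sharply constraining both the recognition subproblem and the set of alternatives that can possibly beat $c$.

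For the hardness direction when $\ell\ge m-2$, I would give a polynomial reduction from a canonical NP-hard combinatorial problem---my first attempt would be 3-Dimensional Matching or Exact Cover by 3-Sets, since these match the ``one bad voter per non-endpoint alternative'' freedom available when $\ell=m-2$. The alternatives would consist of two designated endpoints $p,q$, the target alternative $c$, and a collection of ``element'' alternatives; each bad voter would be crafted to veto one element so that the veto-count vector of the non-endpoint alternatives encodes the source instance, and the manipulators' only freedom---distributing $k$ vetoes between $p$ and $q$---would then correspond to a choice of cover. The main obstacle is twofold: in Part~1, verifying fixed-endpoint near-SP-ness is delicate because a voter that vetoes $p$ or $q$ can still fail single-peakedness through a valley among non-endpoints, so the recognition subroutine requires more than a count-and-delete step; in Part~2, designing bad voters whose \emph{full} preference orders realize the intended veto pattern while keeping the overall profile $\ell$-close to single-peaked is the main technical burden of the reduction.
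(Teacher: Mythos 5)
The survey does not actually prove this theorem---it is imported verbatim from \citet{fal-hem-hem:j:nearly-sp}---so your proposal can only be assessed on its own terms, and there it has problems in both halves.

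Your structural starting point is correct: a vote single-peaked on $\lhd$ vetoes one of the two endpoints of $\lhd$, so with at most $\ell\le m-3$ deviant voters and $m-2$ interior alternatives, pigeonhole guarantees an interior alternative with zero vetoes. But you stop short of the punchline: that alternative already has the maximum Veto score, so $c$ can be made a co-winner if and only if $c$ itself receives no vetoes from the sincere voters---a condition the manipulators cannot influence and which is checked by one scan of the profile. This makes your steps (i)--(iii) unnecessary, which is fortunate, because step (i) does not run in polynomial time as described: in the model used here the axis is part of the input (see the discussion preceding the theorem), so there is nothing to recognise, and if the axis were \emph{not} given, ``delete at most $\ell$ voters to become single-peaked with prescribed endpoints'' is essentially \textsc{$\Gammasp$ Voter Deletion}, which is NP-complete; \Cref{alg:sp} does not adapt to it in any obvious way.

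The hardness half is where the proposal genuinely fails. If each manipulator may only veto $p$ or $q$, then $k$ unweighted manipulators have only $k+1$ distinct joint strategies, so no NP-hard covering problem can be encoded in their choice; and if manipulators may additionally spend leftover ``maverick'' budget on interior alternatives, the problem is still solvable greedily, because the number of extra vetoes each alternative needs in order to reach $c$'s veto count is determined, and feasibility is just a comparison of two sums against $k$ and against the remaining budget. In other words, the \emph{unweighted} problem is polynomial-time solvable for every $\ell$ once the axis is given, so no reduction from 3DM or X3C can exist. The dichotomy established by \citet{fal-hem-hem:j:nearly-sp} is for \emph{weighted} coalitional manipulation---note that the survey's own motivating sentence refers to the weighted easiness result for Veto---and there the hardness for $\ell\ge m-2$ comes from a \textsc{Partition}-style reduction in which the manipulators must split their total \emph{weight} between the two endpoints, while the $\ge m-2$ mavericks are used to give every interior alternative a positive veto count so that no alternative is guaranteed the top score. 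Any correct proof of the hardness direction has to exploit weights; an unweighted covering gadget cannot generate hardness here.
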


We see that manipulating the Veto rule is easy as long as the profile is close to being single-peaked, 
and becomes computationally hard if we need to delete many voters to obtain a single-peaked profile.
Interestingly, the statement of \Cref{thm:veto-nearly-manip} extends to alternative 
deletion, i.e., we have polynomial-time solvability for profiles that 
can be made single-peaked by deleting $\ell$-alternatives
if $\ell\leq m-3$ and NP-completeness for $\ell>m-3$ \citep{erdelyi2015manipulation}. 
\citet{erdelyi2015manipulation} further explore these questions for many of the closeness 
measures introduced in \Cref{sec:recog:almost} and confirm the intuition that 
easiness results only hold for almost single-peaked profiles.
Further work on almost single-peaked 
profiles has been done by \citet{menon2016reinstating}, who study the complexity
of coalitional manipulation when voters are allowed to submit partial orders of a certain form, 
namely \emph{top-truncated ballots}, and \citet{yang2015manipulation}, 
who focuses on single-peaked width (cf. \Cref{sec:recog:almost}).

\paragraph{Control}
Strategic behavior in elections may also originate from an authority with the power 
to directly influence the election, which is referred to as \emph{control}. 
In particular, this authority might be able to exclude some voters from participating 
in the election so as to make a certain alternative win.
This motivates the following computational problem:

\problem{$\mathcal{R}$ Constructive Control by Deleting Voters 
($\mathcal{R}$-CCDV)}
{A preference profile $P$, a distinguished alternative $c$, 
and an integer $k\ge 0$.}
{Can we delete $k$ voters from $P$ so as to make $c$ the winner 
according to voting rule $\mathcal{R}$?}

Constructive control by deleting alternatives is defined in a similar manner.
One can also consider control by adding voters or alternatives.
In this case, an instance of the problem consists 
of a profile $P$ over $A$ and a pool of additional voters/alternatives
that can be added; for adding voters, we assume that
each voter in the pool has a ranking of the alternatives in $A$, 
and for adding alternatives, it is assumed
that each voter in the original profile has a ranking over the set $A\cup A'$, 
where $A'$ is the set of potential new alternatives, 
and if we choose to add a subset $B\subseteq A'$ of the alternatives, 
each voter forms their vote by restricting of their ranking of $A\cup A'$
to $A\cup B$. 

CCDV is NP-complete both for Borda~\citep{russell2007complexity} and 
for Veto elections~\citep{lin2012solving}.
If we now turn to single-peaked profiles, we encounter an interesting situation: 
\citet{yang2017complexity} shows that \problemname{Borda-CCDV} remains hard even for 
single-peaked preferences, but can be solved in polynomial time 
for single-caved preferences (cf.\ \Cref{def:single-caved}).
In contrast, \problemname{Veto-CCDV} is solvable in polynomial time for single-peaked preferences; 
single-caved preferences have not been studied in this context so far.
For Copeland$^{\alpha}$-rules ($\alpha \in [0,1)$) the complexity of the CCDV (and the CCAV problem about adding voters) is open.

Further work has established a detailed complexity map of control problems 
for almost single-peaked 
preferences~\citep{fal-hem-hem:j:nearly-sp,yang2014approval,yang2014controlling,yang2015multipeak,yang2020comp}.

\citet{magiera2017hard} study the complexity of control 
for single-crossing preferences. They consider 
adding/deleting voters/alternatives, and investigate 
both constructive and destructive control, 
for the Plurality rule
and the Condorcet rule (i.e., the rule that outputs the Condorcet winner
if it exists and no winners otherwise). 
They identify a number of settings where the respective control
problem is NP-hard for general preferences, but becomes
polynomial-time solvable for single-crossing preferences.
\citet{bulteau2015} extend some of these results to the more general problem 
of combinatorial voter control, where the attacker's cost
for adding/deleting a group of voters is not necessarily equal
to the sum of her costs for adding/deleting individual voters in that group.

For group-separable preferences, \citet{FaliszewskiKO22} investigate the complexity of constructive
control by adding/deleting voters/alternatives; 
again, they focus on the Plurality rule and the Condorcet rule. 
Just as for the winner determination
problems discussed in \Cref{sec:problems}, 
for each of the settings they consider, they obtain an NP-hardness result
for general group-separable preferences and an FPT result with respect
to the height of the clone decomposition tree.

\paragraph{Bribery and Other Forms of Strategic Behavior}
The \emph{bribery} problem was introduced by \citet{faliszewski2009hard};
in this problem, an external party (briber) can pay each voter
to change their vote according to the briber's instructions.
\citet{brandt2015bypassing} find that this problem, while NP-hard for several voting rules, becomes polynomial-time solvable for single-peaked elections.
Subsequently, \citet{elkind2009swap} proposed two more fine-grained models, where the price of changing 
each vote depends on the nature of the changes requested; these are called swap bribery and shift bribery.
The computational complexity of these two models have been analysed by \citet{EFGR20}
for single-peaked and single-crossing preferences,
and they obtained polynomial-time algorithms for a number of rules, 
including Plurality, Borda and Condorcet-consistent rules.
 
\newpage
\section{Further Topics}
\label{sec:further-topics}

In the following we list topics that concern domain restrictions and their computational 
aspects, but have not been discussed elsewhere in this survey.

\subsection{Weak Orders}

\paragraph{Single-Peaked Preferences}
Extending the definition of single-peaked preferences to weak orders is not as straightforward 
as one might believe. Indeed, there are several natural ways of defining single-peaked weak 
orders.
All these definitions collapse to the usual definition of single-peaked preferences 
for linear orders, but differ in their treatment of ties.
\begin{itemize}
	\item[(a)] In his original definition, \citet{black1948rationale,black1958theory} allows ties `across' the peak, 
	but no ties at the peak\footnote{\citet{arr:b:sc} states Black's definition slightly differently. In Arrow's definition, we are allowed to tie two top alternatives;
		cf.\ the discussion by \citet{dummett1961stability}.}
	 and no ties between two alternatives on the same side 
	of the peak. 
	For example, a voter may be indifferent between the alternatives immediately to the left 
	and immediately to the right of the voter's peak. 
	Written down formally, this is exactly the same as \Cref{def:sp}.
	Note that a voter can never report a tie
	between more than two alternatives.
	\item[(b)] In \emph{single-plateaued 
		preferences} \citep{black1958theory,moulin1984generalized}, the vote may have ties among several most-preferred alternatives, but no other ties are allowed. The set of most-preferred alternatives must form an interval of the axis $\lhd$. Some other authors allow additional ties between two alternatives on opposite sides of the top plateau.
	\item[(c)] Finally, \emph{possibly single-peaked preferences} \citep{lackner2014incomplete} are profiles of weak orders
	where it is possible to break ties so as to obtain a profile of linear orders that is single-peaked.
	Equivalently, a profile of weak orders is possibly single-peaked if there is an axis $\lhd$ such that for each vote $\pref$ and each $c \in A$, the set $\{a \in A : a \pref c\}$ is an interval of $\lhd$; this is the same condition that appears in \Cref{prop:sp-equiv}. 
\end{itemize}

The last notion is the most general among these three definitions: 
every profile that is single-peaked according to (a) or (b) is also single-peaked according to (c). 
In addition, definition (c) naturally generalizes to a notion of single-peakedness for
profiles of partial orders, which are single-peaked if they can be extended to a single-peaked
profile of linear orders \citep{lackner2014incomplete}.

See \Cref{fig:sp-weak} for examples.

\begin{figure}
	\centering
	\begin{subfigure}{0.3\textwidth}
		\centering
		\begin{tikzpicture}[yscale=0.5,xscale=0.65]
		
		\def\xmin{1}
		\def\xmax{7}
		\def\ymin{0}
		\def\ymax{7}
		
		\draw[step=1cm,black!20,very thin] (\xmin,\ymin) grid (\xmax,\ymax);

		\draw[->] (\xmin -0.5,\ymin) -- (\xmax+0.5,\ymin) node[right] {};
		\foreach \x/\xtext in {1/a, 2/b, 3/c, 4/d, 5/e, 6/f, 7/g}
		\draw[shift={(\x,\ymin)}] (0pt,2pt) -- (0pt,-2pt) node[below] {$\strut\xtext$};
		\foreach \x/\xtext in {1, 2,3,4,5,6}
		\node[below] at (\x+0.5,\ymin) {$\strut\lhd$};  
		
		\foreach \x/\y in {4/6,5/7,3/4, 6/6, 2/3, 7/3, 1/1}
		\node[fill=blue, circle, inner sep=0.6mm] at (\x,\y) {};
		\draw[thick,blue] (1,1)--(2,3)--(3,4)--(4,6) -- (5,7)--(6,6)--(7,3);
		
		\foreach \x/\y in {4/4,5/3, 6/2, 3/5, 2/6, 7/1, 1/2}
		\node[fill=green!50!black, circle, inner sep=0.6mm] at (\x,\y) {};
		\draw[thick,green!50!black] (1,2)--(2,6)--(3,5)--(4,4) -- (5,3)--(6,2)--(7,1);  
		
		\foreach \x/\y in {1/3, 2/4, 3/6, 4/7, 5/5, 6/3, 7/2}
		\node[fill=red!50!black, circle, inner sep=0.6mm] at (\x,\y) {};
		\draw[thick,red!50!black]  (1,3)--(2,4)--(3,6)--(4,7) -- (5,5)--(6,3)--(7,2);
		
		\end{tikzpicture}
		\caption{Single-peaked with Ties}
	\end{subfigure}
	\quad 
	\begin{subfigure}{0.3\textwidth}
		\centering
		\begin{tikzpicture}[yscale=0.5,xscale=0.65]
			
			\def\xmin{1}
			\def\xmax{7}
			\def\ymin{0}
			\def\ymax{7}
			
			\draw[step=1cm,black!20,very thin] (\xmin,\ymin) grid (\xmax,\ymax);

			\draw[->] (\xmin -0.5,\ymin) -- (\xmax+0.5,\ymin) node[right] {};
			\foreach \x/\xtext in {1/a, 2/b, 3/c, 4/d, 5/e, 6/f, 7/g}
			\draw[shift={(\x,\ymin)}] (0pt,2pt) -- (0pt,-2pt) node[below] {$\strut\xtext$};
			\foreach \x/\xtext in {1, 2,3,4,5,6}
			\node[below] at (\x+0.5,\ymin) {$\strut\lhd$};  
			
			\foreach \x/\y in {4/6,5/6,3/4, 6/6, 2/3, 7/2, 1/1}
			\node[fill=blue, circle, inner sep=0.6mm] at (\x,\y) {};
			
			\draw[thick,blue] (1,1)--(2,3)--(3,4)--(4,6) -- (5,6)--(6,6)--(7,2);
			
			\foreach \x/\y in {4/4,5/3, 6/2, 3/5, 2/5, 7/1, 1/5}
			\node[fill=green!50!black, circle, inner sep=0.6mm] at (\x,\y) {};
			
			\draw[thick,green!50!black] (1,5)--(2,5)--(3,5)--(4,4) -- (5,3)--(6,2)--(7,1);  
			
			\foreach \x/\y in {1/2,2/4,3/7, 4/7, 5/5, 6/3, 7/1}
			\node[fill=red!50!black, circle, inner sep=0.6mm] at (\x,\y) {};
			
			\draw[thick,red!50!black]  (1,2)--(2,4)--(3,7)--(4,7) -- (5,5)--(6,3)--(7,1);
			
		\end{tikzpicture}
		\caption{Single-Plateaued}
	\end{subfigure}
	\quad
	\begin{subfigure}{0.3\textwidth}
		\centering
		\begin{tikzpicture}[yscale=0.5,xscale=0.65]
		
		\def\xmin{1}
		\def\xmax{7}
		\def\ymin{0}
		\def\ymax{7}
		
		\draw[step=1cm,black!20,very thin] (\xmin,\ymin) grid (\xmax,\ymax);

		\draw[->] (\xmin -0.5,\ymin) -- (\xmax+0.5,\ymin) node[right] {};
		\foreach \x/\xtext in {1/a, 2/b, 3/c, 4/d, 5/e, 6/f, 7/g}
		\draw[shift={(\x,\ymin)}] (0pt,2pt) -- (0pt,-2pt) node[below] {$\strut\xtext$};
		\foreach \x/\xtext in {1, 2,3,4,5,6}
		\node[below] at (\x+0.5,\ymin) {$\strut\lhd$};  
		
		\foreach \x/\y in {4/6,5/7,3/2, 6/6, 2/2, 7/2, 1/1}
		\node[fill=blue, circle, inner sep=0.6mm] at (\x,\y) {};
		\draw[thick,blue] (1,1)--(2,2)--(3,2)--(4,6) -- (5,7)--(6,6)--(7,2);
		
		\foreach \x/\y in {4/2,5/2, 6/2, 3/5, 2/5, 7/1, 1/2}
		\node[fill=green!50!black, circle, inner sep=0.6mm] at (\x,\y) {};
		\draw[thick,green!50!black] (1,2)--(2,5)--(3,5)--(4,2) -- (5,2)--(6,2)--(7,1);  
		
		\foreach \x/\y in {1/4,2/4,3/7, 4/7, 5/5, 6/3, 7/3}
		\node[fill=red!50!black, circle, inner sep=0.6mm] at (\x,\y) {};
		\draw[thick,red!50!black]  (1,4)--(2,4)--(3,7)--(4,7) -- (5,5)--(6,3)--(7,3);
		
		\end{tikzpicture}
		\caption{Possibly Single-Peaked}
	\end{subfigure}
	\caption{Different definitions for single-peakedness on weak orders}\label{fig:sp-weak}
\end{figure}

To recognize profiles that are single-peaked in one of these senses, \citet{fitzsimmons2020incomplete} present reductions to the consecutive ones problem, meaning that the recognition problems can be solved in polynomial time. For profiles of partial orders, \citet{fitzsimmons2020incomplete} prove that the recognition problem of possible single-peakedness is NP-complete, though they show that for a \emph{given} axis $\lhd$, one can check in polynomial time whether a profile of partial orders can be extended to linear orders that are single-peaked on $\lhd$.

Some of the desirable properties of single-peaked preferences extend to these generalizations. For example, Condorcet winners exist for profiles satisfying Black's condition (a), by the same proof as for linear orders (see \Cref{prop:median-voter-theorem}). The case for single-plateaued preferences is more complicated; \citet{barbera2007indifferences} provides a detailed discussion. For possibly single-peaked preferences, Condorcet winners definitely do not exist. \citet[Table~9.1]{Fish73a} provides an example with 5 voters, where $b \succ_1 a \succ_1 c$, $c \succ_2 b \succ_2 a$, $c \succ_3 b \succ_3 a$, $a \succ_4 b \sim_4 c$, and $a \succ_5 b \sim_5 c$. This profile is possibly single-peaked on the axis $a \lhd b \lhd c$. Its majority relation satisfies $a \succ_\text{maj} c \succ_\text{maj} b \succ_\text{maj} a$, which is not transitive.

Regarding winner determination, the algorithms based on total unimodularity for multi-winner voting rules discussed by \citet{peters2017spoc} continue to work for possibly single-peaked profiles of weak orders.
\citet{fitzsimmons2015complexity,FitzsimmonsH16} and \citet{menon2016reinstating} investigated the complexity 
of manipulation and bribery when the input preference profile 
consists of weak orders and is single-peaked.

\paragraph{Single-Crossing Preferences}
Like for single-peaked preferences, there are several ways to generalize single-crossing preferences to weak orders. \citet{elkind2015incompletecrossing} discuss three natural definitions. They say a profile $P = (v_1, \dots, v_n)$ of weak orders is
\begin{itemize}
	\item[(a)] \emph{weakly single-crossing} in the given order if for every pair $a,b\in A$, the sets $I_1 = \{ i \in [n] : a \succ_i b\}$, $I_2 = \{ i \in [n] : a \sim_i b\}$, and $I_3 = \{ i \in [n] : b \succ_i a\}$ form intervals of $[n]$, and $I_2$ is between $I_1$ and $I_3$.
	\item[(b)] \emph{seemingly single-crossing} in the given order if for every pair $a,b\in A$, if $i < k$ are such that $a \succ_i b$ and $a \succ_k b$, then for every $i \le j \le k$ we have either $a \succ_j b$ or $a \sim_j b$. In other words, as we scan from left to right, ignoring voters indifferent between $a$ and $b$, we see that the voters with $a \succ b$ are on one end, and the voters with $b \succ a$ are on the other.
	\item[(c)] \emph{possibly single-crossing} if there is some way to break ties in $P$ so as to obtain a profile of linear orders that is single-crossing.
\end{itemize}
Then $P$ is weakly (resp. seemingly or possibly) single-crossing if there exists a permutation of the votes in $P$ such that the resulting profile is weakly (resp. seemingly or possibly) single-crossing in the given order.

\citet{elkind2015incompletecrossing} consider the computational complexity of recognizing profiles satisfying these conditions. They find that weakly single-crossing profiles can be recognized in polynomial time using similar techniques as in the case for linear orders. On the other hand, by reduction from the betweenness problem, they show that recognizing seemingly and possibly single-crossing profiles is NP-complete. Intriguingly, they find that these problems are not even necessarily easy if we fix the ordering of the voters.

\paragraph{Single-Peaked on a Tree or Circle}

It is possible to define single-peakedness for trees and circles (and other graphs) using the same recipe that we saw for possibly single-peaked preferences: require that any upper contour set $\{a \in A : a \pref c\}$ is connected in the underlying graph. To recognize profiles that are single-peaked in such a sense, for circles we can use algorithms for the circular ones property \citep{booth1976testing,peters2017spoc}, and for trees we can use one of a number of different algorithms for ``tree convexity'' or ``hypergraph acyclicity'' \citep{trick1988induced,conitzer2004combinatorial,tarjan1984simple,sheng2012review}. For the case of trees, an interesting open problem is whether we can decide whether a profile of weak orders is single-peaked on a tree with additional structural properties, in the spirit of \citet{peters2016nicetrees}. 
\begin{open}
	For a profile of weak orders that is single-peaked on a tree, can we efficiently find such a tree with the minimum or maximum number of leaves? With minimum max-degree?
\end{open}
For circles, the algorithms based on total unimodularity for multi-winner voting rules discussed by \citet{peters2017spoc} continue to work for weak orders.

\subsection{Dichotomous Preferences}
\label{sec:dichotomous}
An important special case of weak preferences are
dichotomous preferences, which distinguish between approved and disapproved alternatives.
Consequently, a vote based on dichotomous preferences corresponds to a subset of (approved) alternatives.
The concepts of single-peaked and single-crossing preferences have been adapted  
to dichotomous preferences 
\citep{dietrich2010majority,list2003possibility,faliszewski2011shield,elkind2015structure},
and \citet{yang2019tree} has considered dichotomous preferences
that are single-peaked or single-crossing on trees.
The most prominent definitions are the candidate interval and voter interval domains:

\begin{definition}
Consider a profile $P=(v_1,\dots,v_n)$, where $v_i\subseteq A$ for $i\in[n]$.
We say that $P$ satisfies \defemph{Candidate Interval (CI)} if alternatives
can be ordered so that each of the sets $v_i$
forms an interval of that ordering.
We say that $P$ satisfies \defemph{Voter Interval (VI)} if the voters in $P$
can be reordered so that for every alternative $c$
the voters that approve $c$ form an interval of that ordering.
\end{definition}

Both CI and VI preferences can be recognized in polynomial time via a reduction to the 
\textsc{Consecutive Ones} problem \citep{faliszewski2011shield,elkind2015structure}, 
see also \Cref{sec:recog:sp}.
\citet{terzopoulou2020restricted} provide forbidden subprofile characterizations for these 
and other domains, and consider the computational complexity of deciding whether a partially 
specified profile can be completed so as to fall into one of these domains.
\citet{constantinescu2023recovering} give an algorithm for recognizing profiles that are possibly single-crossing.

These domains also prove to be useful for algorithmic purposes.
Proportional Approval Voting (PAV), proposed by \citet{Thie95a}, is a multi-winner voting rule 
defined as follows: 
Given a desired committee size $k$, PAV outputs all committees $W$ of size $k$ that maximize 
$\sum_{i\in N} h(|W\cap v_i|)$, where $h$ is the harmonic series 
$h(j)=1 + \frac 1 2 + \dots + \frac 1 j$.
PAV is known to satisfy desirable proportionality axioms~\citep{AzizBCEFW17,LacknerS17}, 
but is NP-hard to compute \citep{aziz2015computational,skowron2016finding}.
However, for CI preferences a winning committee under PAV can be computed in polynomial time. 
This was shown by \citet{peters2018single} using an approach based on totally unimodular matrices 
(see also \citealt{peters2017spoc} for a generalization to circular preferences, and \citealt{sornat2022near} for a generalization to nearly CI preferences).
Interestingly, this approach does not extend to VI preferences.

\begin{open}
What is the computational complexity of computing a PAV winner given VI preferences?
\end{open}

\citet{elkind2015structure} do, however, show that VI preferences are useful from a parameterized
complexity perspective: they identify two natural parameters for which computing PAV winners is 
para-NP-hard when not restricting the preference domain, but obtain XP and FPT algorithms with respect to these parameters for VI preferences.
\citet{liu2016param} obtain easiness results for another prominent approval-based 
committee selection rule both for CI and for VI preferences.
For a discussion of other structured domains of dichotomous preferences, we refer the reader 
to the overview by \citet{elkind2015structure} and, for analogs to multidimensional 
single-peakedness, to the work of \citet{peters2016recognising}.
\citet{lackner2023abc} review further algorithmic results for these domains.

\citet{pierczynski2022core} show that for CI and for VI preferences, there always exists a committee that is in the \emph{core}, which is a stability notion guaranteeing proportional representation. Whether such committees exist without any domain restrictions is a famous open problem \citep{lackner2023abc}.

\subsection{Elicitation}
Preference elicitation is concerned with the acquisition of preference data from (typically 
human) agents, a topic clearly relevant to computational social choice.
An overview of this research agenda can be found in a handbook chapter by \citet{brandt2015handbook-chapter10}.
The general goal of preference elicitation is to design efficient elicitation protocols and an often-used measure for the efficiency of a protocol is its communication complexity.

It is natural to expect that structured preferences are easier to elicit.
Indeed, it takes 
$\Theta(\log m!) =\Theta(m\log m)$ comparison queries to elicit an arbitrary ranking of $m$ alternatives.
In contrast, if we know that the ranking being elicited is single-peaked with respect to
a given axis $\lhd$, we can elicit it using $m-1$ queries in a bottom-up fashion:
we ask the voter to compare the two endpoints of the axis, place the less preferred of the two
alternatives in the last position in the voter's ranking, and continue recursively.

Positive communication complexity results are not limited to single-peaked preferences: 
if preferences are single-peaked (on trees), single-crossing (on trees), 
or $d$-dimensional Euclidean, communication complexity of preference elicitation is considerably lower 
than in the general case~\citep{conitzer2009eliciting,jamieson2011active,dey-misra-sc,dey-misra-spt,dey2016recognizing}; this line of research has been surveyed by \citet{ELP-trends}.

\subsection{Counting and Probability}
Domain restrictions impose constraints on the mathematical structure of preference profiles. 
From a combinatorial viewpoint, it is thus natural to ask how many preference profiles (of a 
given size) belong to a given restricted domain.
Assuming that all preference profiles are equally likely, such a result yields the probability 
that a profile belongs to this restricted domain.
\citet{lackner2017likelihood} study this question for the single-peaked domain and 
obtain an asymptotically tight result counting the number of single-peaked profiles
\citep[see also][pp. 581--585]{durand2003finding}.
\citet{chen2017number} obtain an exact enumeration result for single-peaked profiles where 
the sets of voters and candidates coincide and voters rank themselves first, i.e., 
voters are narcissistic (cf. \Cref{sec:def:spsc}).
Related results are obtained by \citet{brown2014single} for single-peaked preferences over 
a multi-attribute domain.
\citet{lackner2017likelihood} further study the likelihood that a profile is single-peaked 
if drawn according to the P\'{o}lya urn or the Mallows model, and \citet{karpov2020likelihood} 
considers the likelihood for several natural distributions related to the impartial culture.
\citet{karpov2019number} analyses the number of group-separable profiles.

\subsection{Matching and Assignment}

\paragraph{Stable Roommates}
In the \emph{stable roommates} problem, a set of $2n$ agents need to be matched into pairs. Each agent has a preference over potential matching partners. A matching is \emph{stable} if there is no pair of agents who both strictly prefer this pair over their pair in the matching. In contrast to the stable marriage setting (where agents come in two types), stable matchings need not exist for the stable roommates problem \citep{GaSh62a}. If there are no ties in the preferences, one can decide in $O(n^2)$ time whether one exists and if so find one \citep{GuIr89a}. In the presence of ties, the problem becomes NP-complete \citep{Ronn90a}.

\citet{bar-tri:j:sp} studied the stable roommates problem with single-peaked preferences. They assume that there is a common ordering $\lhd$ of the set of agents, that each agent's preferences over matching partners is single-peaked with respect to $\lhd$, and that each agent $i$'s peak is located at $i$'s position in the axis (``narcissistic'' single-peaked preferences).
They show that in this case, there always exists a stable matching and that it is unique. They also give a simple algorithm for finding this matching. \citet{bredereck2020stable} show that this algorithm runs in time $O(n^2)$. (\citet{bar-tri:j:sp} had claimed an $O(n)$ runtime.)

For narcissistic single-crossing preferences, \citet{bredereck2020stable} show that the same result holds: there exists a unique stable matching, and it can be found in $O(n^2)$ time. They also extend the results for single-peaked and single-crossing preferences to the case with ties allowed, for which the same results hold except that uniqueness is not guaranteed. For incomplete preferences with ties, they show NP-completeness results.

\paragraph{Assignment}
In the assignment problem, each of $n$ agents needs to receive exactly one of $n$ objects. Each agent has preferences over these objects. In \emph{random assignment}, we look for methods that, given the preferences, select a probability distribution of assignments. A famous impossibility theorem of \citet{BoMo01a} states that there is no randomized method that satisfies efficiency, strategyproofness, and equal treatment of equals. The former two axioms are defined via stochastic dominance, and the third is a weakening of anonymity. \citet{Kasa12a} proved that this impossibility holds even for single-peaked preferences over objects. \citet{ChCh16a} show that it holds even for single-peaked preference profiles in which all agents have the same peak.

A variant of the (deterministic) assignment problem, called the \emph{house assignment problem} \citep{ShSc74a}, starts with an initial assignment of houses to agents (endowments). These can then be traded among the agents. For unrestricted preferences, the Top Trading Cycle algorithm \citep{ShSc74a} is the only one that satisfies Pareto optimality, individual rationality, and strategyproofness. For single-peaked preferences, however, other rules satisfying these properties are known \citep{bade2019matching,beynier2020optimal}, and swap dynamics have been studied for this restricted domain \citep{beynier2021swap,BrWi19a}.

\subsection{Other Domain Restrictions} 
There are a number of domain restrictions that have 
received little or no attention in the computational social choice literature. Among those are Level-1 Consensus preferences~\citep{nitzan2017level} and intermediate preferences on a median 
graph~\citep{grandmont1978intermediate,demange2012majority}; see also the book by 
\citet{gaertner2001domain} and the surveys by \citet{puppe2024maximal} and \citet{karpov2022structured} for additional domains. A particularly interesting domain restriction is top 
monotonicity~\citep{barbera2011top}, which generalizes both single-peaked and single-crossing 
preferences (see also \citealp{pierczynski2022core}). \citet{magiera2019recognizing} show, using a reduction to 2-SAT, that the 
recognition problem for top monotonicity is solvable in polynomial time; previously \citet{aziz2014testing} had shown 
that the problem is NP-hard for partial orders. It remains an open question whether these 
domains are algorithmically useful, for example for winner determination problems.

\newpage
\section{Research Directions}
\label{sec:conclusion}

Throughout the survey, we have mentioned open problems, which pose gaps in our understanding of preference restrictions and their computational benefits.
Now, we would like to highlight a more general research directions that we consider promising.

\paragraph{Multidimensional Restrictions}
Most of this survey has focused on domain restrictions that are in some sense one-dimensional:
single-peaked, single-crossing and 1-Euclidean preferences are all defined by a linear order or 
an embedding into the real line.
Multidimensional analogs of these notions have received much less attention in the computational social choice literature.
In particular, little is known about computational benefits of such higher-dimensional restrictions.
For example, it is not known whether the Kemeny rule is computable in polynomial time 
on two-dimensional single-peaked profiles (cf. \Cref{sec:def:multidim-sp}) or on 2-Euclidean profiles (cf. \Cref{sec:def:euclid}).
Even if NP-hard voting problems remain hard for these domains, it might be that better approximation algorithms can be found than for general preferences.
Multidimensional domain restrictions offer many challenging research questions, but faster algorithms for these classes are very desirable: these algorithms would be applicable to a much larger class of preferences than algorithms for one-dimensional restrictions.

\paragraph{New Preference Restrictions}
Another direction is to consider completely new restrictions.
Domains suggested in the social choice literature usually guarantee the existence of a Condorcet winner, 
but this is not a necessarily relevant property for algorithmic purposes.
Inspiration could be found by adapting structural concepts from graph theory, 
such as restrictions resembling treewidth.
For a systematic study of domain restrictions, the framework of forbidden subprofiles could prove to be valuable, as well as related mathematical theories such as 0-1-matrix containment \citep{furedi1992davenport,klinz1995permuting} or permutation patterns \citep{kitaev2011patterns,vatter:permutation-cla:} (the connection between forbidden subprofiles and permutation patterns has been established in \citet{lackner2017likelihood}).
Preference profiles, seen as tuples of linear orders, are mathematically rich structures and there is hope for a similarly 
diverse and powerful classification of structure as exists for graph classes---along with algorithmic applications of 
these structural restrictions.

\paragraph{Experiments and Real-World Data}
A pressing issue is the compatibility of theoretical structural restrictions (such as those discussed in this survey) and structural properties of real-world data sets. Little work has been done on tackling this practical challenge. It can be expected that real-world data such as the data sets found at \texttt{preflib.org} \citep{mattei2013preflib} rarely belong to a strict mathematical restriction such single-peaked or single-crossing. However, the introduction of distances to these restrictions (as discussed in \Cref{sec:recog:almost}) may yield practically useful notions of structure. 
First analyses in this respect have been performed by \citet{sui2013multi} and \citet{przedmojski2016algorithms}. 
As can be expected, the conclusions drawn from experimental work depend on the chosen data set:
\citet{sui2013multi} study Irish election data and conclude that their data set are almost two-dimensional single-peaked preferences.
\citet{przedmojski2016algorithms} finds some data sets that are close to single-peaked preferences, although these typically have few alternatives.

\citet{boehmer2022collecting} report results from a larger-scale analysis of real-world preference data. In their collection of 7582 profiles, only very few belong to a restricted domain (1.3\% are single-peaked, 2.3\% are single-crossing, and 1.6\% are group-separable). They report that only some profiles are close to these domains under the voter or alternative deletion nearness measures.
In addition, profiles that are close to one domain are typically also close to another. They report that these profiles are typically quite degenerate (votes are very similar to each other; single-peaked
profiles seems to be single-peaked with respect to many different axis
and voters have only few different top-choices; in single-crossing profiles only few candidates ever change their pairwise ordering). 6.3\% of their profiles are value-restricted.

\citet{SzufaFSST20} and \citet{BoehmerBFNS21} propose a new approach to compare \emph{statistical cultures} (probability distributions over profiles), including statistical cultures that generate preferences within restricted domains (e.g., single-peaked profiles).
They generate ``maps'' that visualize the similarity of statistical cultures and domain restrictions, and thereby offer new insights into the structure of preferences.
In particular, this approach offers another way to evaluate the similarity between real-world data sets and restricted preference domains.
In conclusion, it is of high importance to continue to analyze structural properties of available data sets and statistical cultures, and thus enrich the theoretical research done on this topic.

\paragraph{Beyond Voting}
Finally, the work on structured preferences has mostly focused on voting-related topics: winner determination, 
manipulation, control, etc.
Given the advances that have been made in these fields, it could prove to be worthwhile to investigate the impact of 
structured preferences in other fields of social choice; matching, fair division and judgment aggregation are natural 
candidates.

\section*{Acknowledgments}
This work was supported by the European Research Council (ERC) under grant number 639945
(ACCORD). 
Martin Lackner was additionally supported by the Austrian Science Fund FWF, grant P31890.
We thank Niclas Boehmer, Jiehua Chen, Zack Fitzsimmons, Clemens Puppe, Olivier Spanjaard, and Yongjie Yang for feedback.

\newpage
\bibliographystyle{ACM-Reference-Format}

\end{document}